\newcommand{\norm}[1]{\left\| #1 \right\|}  
\newcommand{\scprd}[1]{\left\langle #1 \right\rangle}  
\newcommand{\ddt}[1]{\frac{d} {dt} #1}   
\newcommand{\ddtl}[1]{\frac{d^l}{dt^l} #1}  
\newcommand{\N}{\mathbb{N}}  
\newcommand{\Q}{\mathbb{Q}}
\newcommand{\R}{\mathbb{R}}
\newcommand{\C}{\mathbb{C}}
\newcommand{\rg}{\operatorname{rg}}
\newcommand{\rk}{\operatorname{rk}}
\newcommand{\supp}{\operatorname{supp}}
\newcommand{\spn}{\operatorname{span}} 
\newcommand{\dist}{\operatorname{dist}}
\DeclareMathOperator*{\esssup}{ess-sup}  
\newcommand{\eps}{\varepsilon}
\newcommand{\ol}{\overline}
\renewcommand{\Re}{\operatorname{Re}}
\renewcommand{\Im}{\operatorname{Im}}
\numberwithin{equation}{section}
\newtheorem{thm}{Theorem}[section]
\newtheorem{cor}[thm]{Corollary}
\newtheorem{prop}[thm]{Proposition}
\newtheorem{lm}[thm]{Lemma}
\newtheorem{cond}[thm]{Condition}
\theoremstyle{definition} \newtheorem{ex}[thm]{Example}
\theoremstyle{definition}
\title{Adiabatic theorems with and without spectral gap condition for non-semisimple spectral values}
\author{Jochen Schmid\\  
\small Fachbereich Mathematik, Universität Stuttgart, D-70569 Stuttgart, Germany\\
\small jochen.schmid@mathematik.uni-stuttgart.de}    
\date{}
\begin{document}

\maketitle

\begin{abstract}
\small{ \noindent 
We establish adiabatic theorems with and without spectral gap condition for general 
operators $A(t): D(A(t)) \subset X \to X$ with possibly time-dependent domains in a Banach space $X$. 
We first prove adiabatic theorems with uniform and non-uniform spectral gap condition (including a slightly extended adiabatic theorem of higher order). In these adiabatic theorems the considered spectral subsets $\sigma(t)$ have only to be compact -- in particular, they need not consist of eigenvalues. 
We then prove an adiabatic theorem without spectral gap condition 
for not necessarily (weakly) semisimple 
eigenvalues: in essence, it is only required there that the considered spectral subsets $\sigma(t) = \{ \lambda(t) \}$ consist of eigenvalues $\lambda(t) \in \partial \sigma(A(t))$
and 
that there exist projections $P(t)$ reducing $A(t)$ 
such that $A(t)|_{P(t)D(A(t))}-\lambda(t)$ is nilpotent and $A(t)|_{(1-P(t))D(A(t))}-\lambda(t)$ is injective with dense range in $(1-P(t))X$ for almost every~$t$. 
In all these theorems, the regularity conditions imposed on  $t \mapsto A(t)$, $\sigma(t)$, $P(t)$ are 
fairly mild.
We explore the strength of the presented adiabatic theorems in numerous 
examples. 
And finally, we apply the adiabatic theorems for time-dependent domains to obtain -- in a very simple way -- adiabatic theorems for operators $A(t)$ defined by symmetric sesquilinear forms.} 
\end{abstract}

{ \small \noindent \emph{2010 Mathematics Subject Classification:} 34E15, 34G10, 35Q41, 47D06
\\
\emph{Key words and phrases:} adiabatic theorems with and without spectral gap condition for general 
operators and not necessarily (weakly) semisimple spectral values, adiabatic theorems for operators with time-independent or time-dependent domains} 

\section{Introduction}


Adiabatic theory for general -- as opposed to skew self-adjoint -- operators is a quite young area of research which was initiated in~\cite{NenciuRasche92} and 
developed further 
in~\cite{AbouSalem07}, \cite{Joye07}, \cite{AvronGraf11}. 
It is concerned -- in general 
terms -- with densely defined linear operators $A(t): D(A(t)) \subset X \to X$ in a Banach space $X$ over $\C$, subsets $\sigma(t)$ of $\sigma(A(t))$ and bounded projections $P(t)$ in $X$ (for $t \in I := [0,1]$) such that
\begin{itemize}
\item $A(t)$ is closed for every $t \in I$ and the initial value problems
\begin{gather*}
x' = A(\eps s) x \quad (s \in [0,\frac 1 \eps]) \quad \text{or}  \quad x' = \frac 1 \eps A(t) x \quad (t \in [0,1]) 
\end{gather*}
with initial conditions $x(s_0) = y$ or $x(t_0) = y$ (and initial times $s_0 \in [0,\frac 1 \eps)$, $t_0 \in [0,1)$) 
are well-posed on the spaces $D(A(t))$ for every value of the slowness parameter $\eps \in (0,\infty)$  
(\textbf{Condition~1.1}).
\item  $\sigma(t)$ is a compact subset of $\sigma(A(t))$ for every $t \in  I$ 
(\textbf{Condition~1.2}).
\item $P(t)$ commutes with $A(t)$ and $A(t)|_{P(t)D(A(t))}$ resp.~$A(t)|_{(1-P(t))D(A(t))}$, in some natural sense (specified below),
is spectrally related to $\sigma(t)$ as much resp.~as little as possible for every $t \in I$ except possibly for some few $t$ (\textbf{Condition~1.3}). 
\end{itemize}


What one 
wants to know first of all in adiabatic theory is the following: when -- under which additional conditions on $A$, $\sigma$ and $P$ -- does the evolution $U_{\eps}$ generated by the operators $\frac 1 \eps A(t)$ (Condition~1.1) approximately follow the spectral subspaces $P(t)X$ related to the spectral subsets $\sigma(t)$ of $A(t)$ (Condition~1.3) as the slowness parameter $\eps$ tends to $0$?
Shorter (and more precisely): under which conditions is it true that
\begin{align}  \label{eq: aussage des adsatzes}
(1-P(t)) U_{\eps}(t) P(0) \longrightarrow 0 \quad (\eps \searrow 0)
\end{align}
(with respect to a certain operator topology) 
for all $t \in I$? 
\emph{Adiabatic theorems} are, by definition, theorems that give 
such conditions. 
And this way of speaking is, etymologically, quite appropriate since the word ``adiabatos'' simply means ``admitting no transitions'' and, after all, adiabatic theorems just state that for small $\eps$ there will be almost no more transitions from the space $P(0)X$ to the spaces $(1-P(t))X$. 
We distinguish adiabatic theorems \emph{with spectral gap condition (uniform or non-uniform)} and adiabatic theorems \emph{without spectral gap condition} depending on whether $\sigma(t)$ is isolated in $\sigma(A(t))$ for every $t \in I$ (uniformly or non-uniformly) or not (Section~2.1). 
\bigskip

%





In this paper, we are going to prove adiabatic theorems with and without spectral gap condition for not necessarily (weakly) semisimple spectral values $\lambda(t) \in \sigma(t)$ of general -- not necessarily skew self-adjoint --
linear operators $A(t)$ with time-independent domains (Section~3 and~4) or time-dependent domains (Section~5 and~6). 
We thereby carry further the adiabatic theory of~\cite{AbouSalem07}, \cite{Joye07}, \cite{AvronGraf11}, for instance, and develop a general and systematic adiabatic theory with a special emphasis on the case without spectral gap. 
%
In more detail, the contents 
of the present paper can be described 
as follows.
\smallskip


In the 
adiabatic theorems with (uniform or non-uniform) spectral gap condition (Section~3 and~5.1), 
the relation between $P(t)$ and $A(t)$, $\sigma(t)$ -- vaguely described in Condition~1.3 above -- 
will 
be assumed to be as follows: 
for all but countably many $t \in I$, $P(t)$ is a bounded projection 
commuting with $A(t)$, such that $P(t)D(A(t)) = P(t)X$ and 
\begin{align*}
\sigma(A(t)|_{P(t)D(A(t))}) = \sigma(t)  \text{ \, whereas \, }  \sigma(A(t)|_{(1-P(t))D(A(t))}) = \sigma(A(t)) \setminus \sigma(t).
\end{align*}
(We will, more briefly, call such projections \emph{associated with $A(t)$ and $\sigma(t)$}.) 
Such projections always exist in the case with spectral gap and they are uniquely determined by $A(t)$ and $\sigma(t)$. 
Indeed, they are, of course, just given by
\begin{align*}
P(t) = \frac{1}{2 \pi i} \int_{\gamma_t} (z-A(t))^{-1} \, dz,
\end{align*}
where $\gamma_t$ 
is a cycle in $\rho(A(t))$ 
with indices 
$\operatorname{n}(\gamma_t, \sigma(t)) = 1$ and $\operatorname{n}(\gamma_t, \sigma(A(t)) \setminus \sigma(t)) = 1$ (Riesz projection). 
In particular, in the special situation where $\sigma(t) = \{ \lambda(t) \}$ and $\lambda(t)$ is a pole of the resolvent of $A(t)$, these projections satisfy
\begin{align} \label{eq: gl 1.2}
P(t)X = \ker(A(t)-\lambda(t))^{k} \quad \text{and} \quad  (1-P(t))X = \rg(A(t)-\lambda(t))^{k}
\end{align}    
for all $k \in \N$ with $k \ge m(t)$, where $m(t)$ is 
the order of the pole. 
%
In~\cite{NenciuRasche92}, \cite{AbouSalem07}, \cite{Joye07}, \cite{AvronGraf11} -- which, so far, 
are 
the only works proving 
adiabatic theorems with spectral gap for not necessarily skew self-adjoint operators $A(t)$ -- 
the abovementioned special situation of singletons $\sigma(t) = \{ \lambda(t) \}$ consisting of poles 
is studied: 
\cite{NenciuRasche92}, \cite{AbouSalem07}, \cite{AvronGraf11} deal with semisimple eigenvalues $\lambda(t)$ 
-- which means 
that the order $m(t)$ of the pole in~\eqref{eq: gl 1.2} is equal to $1$ for every $t \in I$ --
and \cite{Joye07}, under analyticity conditions, deals with 
not necessarily semisimple eigenvalues $\lambda(t)$ of finite 
algebraic multiplicity $m_0$. 
In the present paper 
the spectral subsets $\sigma(t)$ are also allowed to consist of 
essential singularities $\lambda(t)$ of the resolvent of $A(t)$ (Example~\ref{ex: lambda wesentl sing}). 
\smallskip


In the 
adiabatic theorems without spectral gap condition (Section~4 and~5.2), it will be assumed that $\sigma(t) = \{\lambda(t)\}$ for eigenvalues $\lambda(t) \in \partial \sigma(A(t))$, and the relation between $P(t)$ and $A(t)$, $\sigma(t)$ -- vaguely described in Condition~1.3 above -- 
will, fairly 
naturally, 
be assumed to be as follows: 
for almost every $t \in I$, 
$P(t)$ is a bounded projection 
commuting with $A(t)$, such that $P(t)D(A(t)) = P(t)X$ and 
\begin{gather*}
A(t)|_{P(t)D(A(t))} - \lambda(t) \text{\: is nilpotent whereas \:} A(t)|_{(1-P(t))D(A(t))} - \lambda(t) \text{\: is injective} \\
\text{with dense range in } (1-P(t))X.
\end{gather*}
(We will, more briefly, call such projections \emph{weakly associated with $A(t)$ and $\lambda(t)$}.)
Such projections do not always exist in the case without spectral gap 
-- but in case of existence they are uniquely determined by $A(t)$ and $\lambda(t)$. 
%
%
%
Indeed (by the first theorem of Section~2.1), such projections satisfy
\begin{align} \label{eq: gl 1.3}
P(t)X = \ker(A(t)-\lambda(t))^{k} \quad \text{and} \quad  (1-P(t))X = \overline{\rg(A(t)-\lambda(t))^{k}}
\end{align}    
for all $k \in \N$ with $k \ge m(t)$, where $m(t)$ is the order of nilpotence. Apart from regularity conditions 
and some resolvent estimate, 
the 
assumption 
that $P(t)$ be weakly associated with $A(t)$ and $\lambda(t) \in \partial \sigma(A(t))$ is already everything that has to be required in the adiabatic theorems without spectral gap condition of the present paper.
%
In~\cite{AvronGraf11} and~\cite{dipl} -- which, so far, are 
the only works establishing adiabatic theorems without spectral gap for not necessarily skew self-adjoint operators $A(t)$ -- 
the more special situation is studied 
where the $\lambda(t)$ are weakly semisimple eigenvalues 
in the sense that the order $m(t)$ of nilpotence in~\eqref{eq: gl 1.3} is equal to $1$ for almost every $t \in I$.
In numerous other papers, adiabatic theorems without spectral gap for skew self-adjoint operators $A(t)$ have been established 
-- both for eigenvalues $\lambda(t)$ 
(as in \cite{AvronElgart99}, \cite{Bornemann98}, \cite{Teufel01}, \cite{Teufel03}, \cite{AbouSalemFröhlich05}) 
and for resonances $\lambda(t)$ (as in~\cite{AbouSalemFröhlich07}, \cite{ElgartHagedorn11}).
%
%
%
%
%
%
\smallskip

As has already been indicated, 
we will develop 
here -- in contrast to the existing literature~-- adiabatic theorems 
for general 
operators $A(t)$ both in the case of time-independent domains (Section~3 and~4) \emph{and} in the case of time-dependent domains (Section~5), 
which two kinds of adiabatic theorems are 
related as follows: if in the presented adiabatic theorems for time-independent domains the regularity conditions are strenghtened, then these theorems 
become special cases of the respective adiabatic theorems for time-dependent domains.
We organize 
the theory in such a way that -- apart from 
some few modifications -- the proofs carry over from the case of time-independent to the case of time-dependent domains. 
Section~5.3 is devoted to an 
extension of the adiabatic theorem of higher order (also called superadiabatic theorem) from~\cite{JoyePfister93} or~\cite{Nenciu93} 
to the case of time-dependent domains, which extension -- albeit a bit technical -- is not difficult given the preliminaries of Section~2.
%
\smallskip


And finally, in Section~6, we apply the general adiabatic theorems for time-dependent domains of Section~5 to obtain -- in a very simple way -- adiabatic theorems with and without spectral gap condition (Section~6.2 and~6.3) for the special 
case where 
the $A(t) = i A_{a(t)}$ are 
skew self-adjoint operators defined by 
symmetric sesquilinear forms $a(t)$ having a time-independent form domain. In particular, the adiabatic theorem from~\cite{Bornemann98} 
is a special case of the theorem in Section~6.3, 
but we 
obtain it in a completely different and also simpler way. 
\smallskip


All presented adiabatic theorems are furnished with fairly mild regularity conditions on $t \mapsto A(t), \sigma(t), P(t)$. Indeed, in the adiabatic theorems for time-independent domains, for instance, it suffices to require of $A$ that $t \mapsto A(t)$ be $W^{1,\infty}_*$-regular (Section~2.1) and $(M,0)$-stable (Section~2.2), which two things are satisfied if, for instance, $t \mapsto A(t)x$ is continuously differentiable for all $x \in D = D(A(t))$ and $A(t)$ for every $t \in I$ generates a contraction semigroup in $X$. In particular, 
these regularity conditions are more general 
than those 
of the previously known adiabatic theorems. 
See the discussion 
after Theorem~\ref{thm: Kato85} and also see the examples (especially 
Example~\ref{ex: A nur W^{1,infty}-reg und nur (M,0)-stabil}). 
\smallskip


We complement 
the presented adiabatic theorems with examples in order to explore their scope 
and to demonstrate that some selected hypotheses 
are 
essential. 
All these examples will be of a simple standard structure explained 
in Section~2.4.
\bigskip



A few words on the interdependence of the various sections of this paper seem appropriate: the sections about the case with spectral gap (Section~3.1, 3.2, 5.1, 6.2 (interdependent in that order)) are independent of the sections about the case without spectral gap (Section~4.1, 4.2, 5.2, 6.3 (interdependent in that order)), and Section~5.3 can be read independently of all sections other than Section~2. 
Section~2 provides the most important preliminaries, most importantly, those on spectral theory 
(Section~2.1), well-posedness and evolution systems (Section~2.2), and adiabatic evolution systems (Section~2.3), but also those 
related to the regularity conditions of our adiabatic theorems. At  first reading, however, one may 
well ignore the less common notions of $W^{m,\infty}_*$-regularity or $(M,0)$-stability 
(explained in Section~2.1 and 2.2)
and replace them by the simpler notions of $m$ times strong continuous differentiability and contraction semigroups.
\smallskip


In the entire paper, $X$, $Y$, $Z$ denote Banach spaces over $\C$ and NOT, SOT, WOT stand for the norm, strong, weak operator topology 
of the Banach space $L(X,Y)$ of bounded linear operators from $X$ to $Y$ or -- unless explicitly stated otherwise -- of the Banach space $L(X) = L(X,X)$.
We will abbreviate
\begin{align*}
I := [0,1] \quad \text{and} \quad \Delta := \{ (s,t) \in I^2: s \le t\},
\end{align*}
and for evolution systems $U$ defined on $\Delta$ we will 
write $U(t) := U(t,0)$ for all $t \in I$, while $U_{\eps}$ for $\eps \in (0,\infty)$ will always denote the evolution system for $\frac 1 \eps A$ on $D(A(t))$ provided it exists.
As far as notational conventions on general spectral theory 
are concerned (in particular, concerning the not completely universal notion 
of continuous and residual spectrum), we follow the standard textbooks~\cite{DunfordSchwartz}, \cite{TaylorLay80} or~\cite{Yosida80}. 
And finally, the notation employed in the examples will be explained in Section~2.4. 

\section{Some preliminaries}

In this section, we provide the most important preliminaries for 
the adiabatic theorems of the subsequent sections. 
%
Section~2.1 introduces the central 
spectral-theoretical notions of associatedness and weak associatedness 
including the notion of weak semisimplicity, 
properly defines uniform and non-uniform spectral gaps, 
explains the notion of $W^{m,\infty}_*$-regularity of operator-valued maps (including 
its well-behavedness under the formation of products and inverses),
and finally furnishes continually used lemmas centering on one-sided differentiability. 
%
Section~2.2 then recalls the concept of well-posedness and evolution systems for non-autonomous linear evolution equations as well as the notion of $(M,0)$-stability due to Kato.
%
And finally, Section~2.3 introduces the terminology of adiabatic evolution systems while Section~2.4 explains the standard structure of the examples of this paper.

\subsection{Spectral theory, uniform and non-uniform spectral gaps, regularity properties of operator-valued functions, and one-sided differentiability}

As is clear from the setting described in Section~1, a first important point in general adiabatic theory is to precisely define what it means for projections $P(t)$ to be spectrally related to $A(t)$ and $\sigma(t)$ in the sense of Condition~1.3.
%
In order to do so 
we introduce the following notion of associatedness (which is completely canonical) 
and weak associatedness (which --~for non-normal, or at least, non-spectral operators 
is not 
canonical). 
%
%
%
Suppose $A: D(A) \subset X \to X$ is a densely defined closed linear map with $\rho(A) \ne \emptyset$, $\sigma \ne \emptyset$ is a compact isolated subset of $\sigma(A)$, $\lambda$ a not necessarily isolated spectral value of $A$, and $P$ a bounded projection in $X$.
We then say, following~\cite{TaylorLay80}, that 
\emph{$P$~is associated with $A$ and $\sigma$} if and only if 
$P$ commutes with $A$ (in short: $P A \subset A P$), $P D(A) = P X$ and
\begin{align*}
\sigma(A|_{PD(A)}) = \sigma  \text{ \, whereas \, }  \sigma(A|_{(1-P)D(A)}) = \sigma(A) \setminus \sigma.
\end{align*}
We say that \emph{$P$~is weakly associated with $A$ and $\lambda$} if and only if 
$P$ commutes with $A$, $P D(A) = P X$ and 
\begin{gather*}
A|_{PD(A)} - \lambda \text{\: is nilpotent whereas \:} A|_{(1-P)D(A)} - \lambda \text{\: is injective and} \\
\text{has dense range in } (1-P)X.
\end{gather*}
If above the order of nilpotence is at most $m$, we will often, 
more precisely, speak of $P$ as being \emph{weakly $m$-associated with $A$ and $\lambda$}. 
Also, we call $\lambda$ a \emph{weakly semisimple eigenvalue of $A$} if and only if $\lambda$ is an eigenvalue and there is a projection $P$ weakly $1$-associated with $A$ and $\lambda$.
%
%
%
%
\smallskip

We point out that, for 
spectral values $\lambda$ of a densely defined operator $A$ with $\rho(A) \ne \emptyset$ and bounded projections $P$ commuting with $A$, it is fairly natural to take the vaguely described spectral relatedness of Condition~1.3 
to mean precisely that $P$ is weakly associated with $A$ and $\lambda$.
%
Indeed, it is more than natural to take this condition 
to mean at least that 
$A|_{P D(A)}$ is bounded with $\sigma(A|_{PD(A)}) = \{ \lambda \}$ (or in other words, that $P D(A) = PX$ and $A|_{PD(A)}-\lambda$ is quasinilpotent) 
and that $\lambda \notin \sigma_p(A|_{(1-P)D(A)})$. 
(It is important to notice here that, if $\lambda$ is non-isolated in $\sigma(A)$, then it must belong to $\sigma(A|_{(1-P)D(A)})$ by the closedness of spectra.)
And then it is natural to furhter require 
that $A|_{PD(A)}-\lambda$ be even nilpotent (instead of 
only quasinilpotent) and that $\lambda$ belong to the continuous (instead of the residual) spectrum of $A|_{(1-P)D(A)}$, which finally is nothing but the weak associatedness of $P$ with $A$ and $\lambda$. 
\smallskip

%
%
%
%
It is well-known that if $A$ is densely defined with $\rho(A) \ne \emptyset$ and $\sigma \ne \emptyset$ is compact and isolated in $\sigma(A)$, then there exists a unique projection $P$ associated with $A$ and $\sigma$, namely 
\begin{align*}
P := \frac{1}{2 \pi i} \int_{\gamma} (z-A)^{-1} \, dz
\end{align*}
(occasionally 
called the \emph{Riesz projection of $A$ onto $\sigma$}), where $\gamma$ is a cycle in $\rho(A)$ with indices $\operatorname{n}(\gamma, \sigma) = 1$ and $\operatorname{n}(\gamma, \sigma(A) \setminus \sigma) = 0$. (See, for instance, Theorem~2.14 and Proposition~2.15 of~\cite{dipl} or the standard textbook~\cite{GohbergGoldbergKaashoek} for detailed proofs of these well-known facts.) 
Also, if $\sigma = \{ \lambda \}$ consists of a pole of $(\,.\,-A)^{-1}$ 
of order $m$ and if $P$ is associated with $A$ and $\sigma$, then
\begin{align*}
PX = \ker(A-\lambda)^k \quad \text{and} \quad (1-P)X = \rg(A-\lambda)^k 
\end{align*}
for all $k \in \N$ with $k \ge m$ (Theorem~5.8-A of~\cite{Taylor58}). 
%
%
We now prove a similar theorem for weak associatedness 
revealing, in particular, that on the one hand, for a given operator $A$ and a spectral value $\lambda$ of $A$ no weakly associated projection need 
exist in general and that, on the other hand, if there exists any weakly associated projection then it is already unique. 
(See the third remark following the theorem.)  
This theorem complements 
Corollary~2.2 of~\cite{Lay70}, which covers only 
the case of isolated spectral values. 
It will be crucial 
in the presented adiabatic theorems without spectral gap condition. 

\begin{thm} \label{thm: typ mögl für PX und (1-P)X}
Suppose $A: D(A) \subset X \to X$ is a densely defined closed linear map with $\rho(A) \ne \emptyset$, 
$\lambda \in \sigma(A)$, and $P$ is a bounded projection in $X$. 
If $P$ is weakly $m$-associated with $A$ and $\lambda$, 
then
\begin{align*}
PX = \ker(A-\lambda)^k \quad \text{and} \quad (1-P)X = \overline{ \rg(A-\lambda)^k }
\end{align*}
for all $k \in \N$ with $k \ge m$
and, moreover, $\lambda$ is a pole of $(\,.\,-A)^{-1}$ if and only if $\rg(A-\lambda)^k$ is closed for some (all) $k \ge m$, in which case $P$ is the projection associated with $A$ and~$\lambda$. 
\end{thm}

\begin{proof}
We may clearly assume that $\lambda = 0$ 
because $P$, being weakly associated with $A$ and $\lambda$, is also weakly associated with $A-\lambda$ and $0$.
Set $M := PX$ and $N := (1-P)X$. 
We first show that $M = \ker A^k$ for all $k \ge m$. Since $A|_{PX} = A|_{PD}$ is nilpotent of order $m$, $A^k|_{PX} = ( A|_{PX} )^k = 0$ and hence $M = PX \subset \ker A^k$ for all $k \ge m$. And since $A|_{(1-P)D(A)}$ is injective, 
\begin{align*}
A^k|_{(1-P)D(A^k)} = ( A|_{(1-P)D(A)} )^k
\end{align*}
is injective as well and hence $\ker A^k \subset PX = M$ for all $k \in \N$.
We now show that $N = \overline{\rg A^k}$ for all $k \ge m$. As $PX = \ker A^k$ for $k \ge m$, we have
\begin{align*}
\rg A^k = A^k PD(A^k) + A^k (1-P)D(A^k) = (1-P) A^k D(A^k) \subset (1-P)X = N
\end{align*} 
and therefore $\overline{\rg A^k } \subset N$ for all $k \ge m$. It remains to show that the reverse inclusion $N \subset \overline{ \rg A^k }$ holds true for all $k \in \N$ and this will be done by induction over $k$. Since $A|_{(1-P) D(A)}$ has dense range in $(1-P)X = N$, the desired inclusion is clearly satisfied for $k = 1$. Suppose now that $N \subset \overline{ \rg A^k }$ is satisfied for some arbitrary $k \in \N$. Since 
\begin{align*}
\rg A|_{(1-P)D(A)} = A (1-P)D(A) = A (z_0 - A)^{-1} N
\end{align*}
and since $A (z_0-A)^{-1}$ is a bounded operator for every $z_0 \in \rho(A)$, it then follows by the induction hypothesis that $A (z_0-A)^{-1} N \subset \overline{ \rg A^{k+1} }$ and hence 
\begin{align*}
N = \overline{ \rg A|_{(1-P)D(A)} } \subset \overline{ \rg A^{k+1} },
\end{align*}
as desired.
%
In order to see the characterization of the pole property for $\lambda = 0$ 
(under the weak associatedness hypothesis), 
we have only to observe that, by what has just been shown, the ascent of $A$ is less than or equal to $m$ 
and to use Theorem~5.8-A and Theorem~5.8-D of~\cite{Taylor58} or Theorem~V.10.1 and Theorem~V.10.2 of~\cite{TaylorLay80}.
%
\end{proof}

%
Some remarks, 
answering in particular the existence and uniqueness question for weak associatedness, are in order.
\smallskip

1. We also have the following simple converse of the above theorem: if one has a direct sum decomposition
\begin{align*}
X = \ker(A-\lambda)^m \oplus \overline{ \rg(A-\lambda)^m }
\end{align*}
for some $m \in \N$ (where $A: D(A) \subset X \to X$ is densly defined with $\rho(A) \ne \emptyset$ and $\lambda \in \sigma(A)$) and 
and if the projection $P$ corresponding to the above decomposition of $X$ commutes with $A$, 
then 
$P$ is weakly $m$-associated with $A$ and $\lambda$.
(It should be noticed here that $P$ is automatically bounded because $(A-\lambda)^m$ is closed 
by $\rho(A) \ne \emptyset$.)
%
%
%
%
%
\smallskip

2. It is easy to see -- with the help of Volterra operators and shift operators (Section~2.4) as building blocks -- that the nilpotence, injectivity, and dense range requirements (encapsulated in the weak associatedness) 
are all 
essential for the conclusion of the above theorem. 
And -- as opposed to the case where $A$ is normal -- isolatedness of $\lambda$ in $\sigma(A)$ alone does not imply the closedness of the spaces $\rg (A-\lambda)^k$ 
in the theorem above. (Just take $A := \operatorname{diag}(0,V)$ on $X := L^2(I) \times L^2(I)$ with the Volterra operator $V$ defined by $(Vf)(t) := \int_0^t f(\tau) \,d\tau$, $\lambda := 0$, and $P$ the orthogonal projection onto $L^2(I) \times 0$.)
\smallskip

3. If $A: D(A) \subset X \to X$ is densely defined with $\rho(A) \ne \emptyset$ and $\lambda \in \sigma(A)$, then the above theorem implies the following two 
facts about weak associatedness. 
(i) In general, there exists 
no projection 
weakly associated with $A$ and $\lambda$. 
(Indeed, for $A := \operatorname{diag}(0,S)$ on $X := \ell^2(\N) \times \ell^2(\N)$ and $\lambda := 0$, where $S$ is the right shift operator on $\ell^2(\N)$, one has 
\begin{align*}
\ker(A-\lambda)^k + \overline{\rg(A-\lambda)^k} \ne X
\end{align*}
for all $k \in \N$. Also see the remarks after Example~\ref{ex: A_2(t) nicht diagb} and Example~\ref{ex: ablop}.) 
%
(ii) If, however, there exists a projection weakly associated with $A$ and $\lambda$, then there is only one such projection. 
(Indeed, if $P$ and $Q$ are two projections weakly associated with $A$ and $\lambda$ and if the orders of nilpotence of $A|_{PD(A)}-\lambda$ and $A|_{QD(A)}-\lambda$ are $m$ and~$n$ respectively, then 
\begin{gather*}
PX = \ker(A-\lambda)^m = \ker(A-\lambda)^n = QX, \\ (1-P)X = \overline{\rg(A-\lambda)^m} = \overline{\rg(A-\lambda)^n} = (1-Q)X
\end{gather*}
and therefore $P = Q$.) 
\smallskip

A rather well-understood general class of operators $A$ for which the existence of weakly associated projections is always guaranteed 
is furnished by the class of 
\emph{spectral operators of scalar type}  (Definition~XVIII.2.12 in~\cite{DunfordSchwartz}) 
or bounded \emph{spectral operators of finite type} (Definition~XV.5.3 in~\cite{DunfordSchwartz}). 
(Spectral operators of scalar type comprise, 
for instance, the normal operators 
or 
the generic one-dimensional periodic Schrödinger operators (Remark~8.7 of~\cite{GesztesyTkachenko09}).)
In both cases the weakly associated projections are just given by 
$P^{A}( \{ \lambda \} )$ where $P^{A}$ denotes the spectral measure -- or, in the terminology of~\cite{DunfordSchwartz}, the spectral resolution -- 
of the spectral operator $A$ (Definition~XV.2.5 and~XVIII.2.1 
of~\cite{DunfordSchwartz}). 
(In order to see this, use Theorem~XVIII.2.11 of~\cite{DunfordSchwartz} in the former case and in the latter case notice that
\begin{gather*}
A|_{PX} - \lambda \text{\: is quasinilpotent whereas \:} A|_{(1-P)X} - \lambda \text{\: is injective and} \\
\text{has dense range in } (1-P)X
\end{gather*}
by Corollary~XV.8.5 of~\cite{DunfordSchwartz} 
and that the quasinilpotence above is even a nilpotence by the finite type assumption.) 
We point out, however, that adiabatic theory is 
by no means confined to spectral operators $A(t)$ (Example~\ref{ex: A_2(t) nicht diagb}).
\smallskip

%
%
%
%
%
%
%
%
Associatedness and weak associatedness, for isolated 
spectral values $\lambda$ of an 
operator~$A$, 
are related as follows. 
If $\lambda$ is a pole of $(\,.\,-A)^{-1}$, then associatedness and weak associatedness coincide: a projection $P$ is then associated with $A$ and $\lambda$ if and only if it is weakly associated with $A$ and $\lambda$ (by Theorem~\ref{thm: typ mögl für PX und (1-P)X} and the first remark following it). 
If, however, $\lambda$ is 
an essential singularity of $(\,.\,-A)^{-1}$, then associatedness and weak associatedness have nothing to do with each other: 
a projection $P$ associated with $A$ and $\lambda$ can then not possibly be weakly associated with $A$ and $\lambda$, and vice versa. (Indeed, if a projection $P$ is both associated and weakly $m$-associated with $A$ and $\lambda$, then 
\begin{align*}
z \mapsto (z-A)^{-1} &= (z-A)^{-1}P + (z-A)^{-1}(1-P) \\
&= \sum_{k=0}^{m-1} \frac{ (A|_{PD(A)}-\lambda)^k }{ (z-\lambda)^{k+1} } \, P 
+ \big( z-A|_{(1-P)D(A)} \big)^{-1} (1-P) 
\end{align*}   
has a pole of order $m$ at $\lambda$.) 
%
%
%
Additionally, since semisimple spectral values of $A$ are, by definition, poles of $(\,.\,-A)^{-1}$ of order $1$ (and, in particular, eigenvalues),    
it follows 
that semisimplicity and weak semisimplicity coincide for isolated 
points $\lambda$ of $\sigma(A)$ that are poles. 
\smallskip

We close this paragraph on spectral theory by noting that in reflexive spaces weak associatedness carries over to the dual maps -- provided that some core condition is satisfied, which is the case for semigroup generators, for instance (Proposition~II.1.8 of~\cite{EngelNagel}).
In the presented adiabatic theorem without spectral gap condition for reflexive spaces, this will be important. 
Associatedness carries over to dual maps as well, of course (Section~III.6.6 of~\cite{KatoPerturbation80}) 
-- but this will not be needed in the sequel.

\begin{prop} \label{prop: schwache assoziiertheit, dual}
Suppose $A: D(A) \subset X \to X$ is a densely defined closed linear map in the reflexive space $X$ such that $\rho(A) \ne \emptyset$ and $D(A^k)$ is a core for $A$ for all $k \in \N$. If $P$ is weakly $m$-associated with $A$ and $\lambda \in \sigma(A)$, then $P^*$ is weakly $m$-associated with $A^*$ and $\lambda$. 
\end{prop}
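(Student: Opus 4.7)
The plan is to verify the defining requirements of weak $m$-associatedness for $(A^*, P^*, \lambda)$ one after another, each by dualizing the corresponding property of $(A, P, \lambda)$. The backbone observation is that $PD(A) = PX$ forces $PX \subset D(A)$, so the product $AP: X \to X$ is everywhere defined and closed, hence bounded by the closed graph theorem. From this, the elementary computation
\begin{align*}
\langle Ax, P^* \phi \rangle = \langle PAx, \phi \rangle = \langle APx, \phi \rangle \quad (x \in D(A),\, \phi \in X^*)
\end{align*}
simultaneously yields three things: (a) $P^*$ is a bounded projection commuting with $A^*$, that is, $P^* A^* \subset A^* P^*$; (b) reading the right-hand side as $\langle x, (AP)^* \phi \rangle$ for arbitrary $\phi \in X^*$, one sees $P^* \phi \in D(A^*)$ with $A^* P^* = (AP)^*$, so $P^* X^* \subset D(A^*)$ and hence $P^* D(A^*) = P^* X^*$; and (c) $A^* P^*$ is bounded on $X^*$ and maps into $P^* X^*$, since $P^*(AP)^* \phi = (AP)^* \phi$ by $P^2 = P$.

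For nilpotency, the preceding theorem gives $PX = \ker(A-\lambda)^m$, so the bounded operator $C := (A-\lambda)P$ on $X$ satisfies $C^m = (A-\lambda)^m P = 0$ (using that $A-\lambda$ maps $PX$ into itself). The same dualization identifies $C^* = (A^*-\lambda)P^*$ as a bounded operator on $X^*$ ranging into $P^* X^*$, which permits the iteration $(C^*)^k = (A^*-\lambda)^k P^*$ for all $k \in \N$; taking $k = m$ in $(C^*)^m = 0$ yields $(A^*-\lambda)^m P^* = 0$, so $A^*|_{P^* D(A^*)} - \lambda$ is nilpotent of order at most $m$. Injectivity of $A^*|_{(1-P^*) D(A^*)} - \lambda$ is then cheap: the standard identity $\ker(A^*-\lambda) = \rg(A-\lambda)^\perp$ for closed densely defined operators, combined with $\rg(A-\lambda) \supset \rg(A-\lambda)^m$ and $\overline{\rg(A-\lambda)^m} = (1-P)X$ (again by the preceding theorem), shows $\ker(A^*-\lambda) \subset ((1-P)X)^\perp = P^* X^*$, whose intersection with $(1-P^*) X^*$ is $\{0\}$.

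The remaining dense-range requirement is the crux of the proof. The plan is to identify $(A^*-\lambda)|_{(1-P^*) D(A^*)}$ with the adjoint of $S := (A-\lambda)|_{(1-P) D(A)}$, viewed as a closed densely defined operator on the closed, and hence reflexive, subspace $(1-P) X$, and then to invoke the general duality $\ker S = {}^\perp(\rg S^*)$: since $S$ is injective by hypothesis, $\rg S^*$ is weak-$*$ dense and hence, by reflexivity, norm dense in $((1-P)X)^* \cong (1-P^*) X^*$. The step I expect to be the main technical obstacle is the identification $S^* = (A^*-\lambda)|_{(1-P^*) D(A^*)}$ under this canonical isomorphism. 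One inclusion is routine from what has already been established; for the reverse, given $\phi \in D(S^*) \subset (1-P^*) X^*$ with associated $\psi \in (1-P^*) X^*$ satisfying $\phi(Sx) = \psi(x)$ for $x \in (1-P) D(A)$, I would decompose an arbitrary $x \in D(A)$ as $Px + (1-P)x$ and exploit that $\phi$ annihilates $PX$ while $APx \in PX$: this makes $\phi(APx)$ vanish and reduces $\phi(Ax)$ to $\psi(x) + \lambda \phi(x)$, whose continuity in $x$ exhibits $\phi$ as an element of $D(A^*)$ with $(A^*-\lambda)\phi = \psi$.
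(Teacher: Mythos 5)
Your proof is correct, and it takes a genuinely different route than the paper's. The paper first establishes the identity $(A^k)^* = (A^*)^k$ for all $k \in \N$ by a two-level induction that hinges on the hypothesis that each $D(A^k)$ is a core for $A$; it then reads off the annihilator relations $P^*X^* = (\overline{\rg A^m})^{\perp} = \ker (A^*)^m$ and $(1-P^*)X^* = (\ker A^m)^{\perp} = \overline{\rg (A^*)^m}$ (with $\lambda$ normalized to $0$), and concludes via the converse remark following the preceding theorem. You bypass the power-adjoint identity altogether: nilpotency is read off the bounded truncation $C = (A-\lambda)P$ via the iteration $(C^*)^k = (A^*-\lambda)^k P^*$, injectivity from $\ker(A^*-\lambda) = (\rg(A-\lambda))^{\perp} \subset ((1-P)X)^{\perp} = P^*X^*$, and dense range from the identification of $(A^*-\lambda)|_{(1-P^*)D(A^*)}$ with the adjoint of the reduced operator $S := (A-\lambda)|_{(1-P)D(A)}$ on the reflexive subspace $(1-P)X$, together with $\ker S = {}^{\perp}(\rg S^*)$, which is $S = S^{**}$ in disguise. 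What your route buys is economy of hypotheses: the core condition on $D(A^k)$ is never used -- reflexivity still enters in the dense-range step, but the power-adjoint identity, which is where the core condition is needed, is sidestepped entirely. What the paper's route buys is $(A^k)^* = (A^*)^k$ as an explicit by-product, which the paper flags as being of independent interest. The identification of $S^*$ is indeed the delicate point of your argument, but your sketch -- decompose $x = Px + (1-P)x$, use $(A-\lambda)Px \in PX$ and $\phi|_{PX} = 0$ for $\phi \in (1-P^*)X^*$ -- closes it correctly.
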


\begin{proof}
We begin by showing -- by induction over $k \in \N$ -- the preparatory statement that
\begin{align} \label{eq: indbeh 1}
(A^k)^* = (A^*)^k 
\end{align}
for all $k \in \N$, which might also be of independent interest (notice that $D(A^k)$ 
being a core for $A$ is dense in $X$, so that $(A^k)^*$ is really well-defined).
Clearly, \eqref{eq: indbeh 1} is true for $k = 1$ and, assuming that it is true for some arbitrary $k \in \N$, we now show that $(A^{k+1})^* = (A^*)^{k+1}$ holds true as well.
It is easy to see that $(A^*)^{k+1} \subset (A^{k+1})^*$ and it remains to see that $D((A^{k+1})^*) \subset D((A^*)^{k+1})$. So let $x^* \in D((A^{k+1})^*)$. We show that
\begin{align} \label{eq: zwbeh für k-indschritt}
x^* \in D((A^k)^*) \quad \text{and} \quad (A^k)^* x^* \in D(A^*),
\end{align}
from which it then follows -- by the induction hypothesis -- that $x^* \in D((A^*)^{k+1})$ as desired.
In order to prove that $x^* \in D((A^k)^*)$ we show that 
\begin{align*} 
x^* \in D((A^l)^*)
\end{align*}
for all $l \in \{ 1, \dots, k \}$ -- by induction over $l \in \{ 1, \dots, k \}$ and by working with suitable powers of $(A^*-z_0)^{-1} = ((A-z_0)^{-1})^*$, 
where $z_0$ is an arbitrary point of $\rho(A^*) = \rho(A) \ne \emptyset$ (Theorem~III.5.30 of~\cite{KatoPerturbation80}). 
In the base step of the 
induction, notice that for all $y \in D(A)$
\begin{align*}
&\big\langle (A^*-z_0)^{-k} (A^{k+1})^* x^*, y \big\rangle = \big \langle x^*, A^{k+1} (A-z_0)^{-k} y \big \rangle \\
&\qquad \qquad \qquad = \big \langle x^*, (A-z_0) y \big \rangle + \sum_{i = 0}^k \binom{k+1}{i} z_0^{k+1-i} \big \langle (A^*-z_0)^{-k+i} x^*, y \big \rangle, 
\end{align*}
from which it follows that $x^* \in D((A-z_0)^*) = D(A^*)$. 
In the inductive step, 
assume that $x^* \in D(A^*), \dots, D((A^l)^*)$ for some arbitrary $l \in \{1, \dots, k-1 \}$. Since for all $y \in D(A^{l+1})$
\begin{align*}
&\big \langle (A^*-z_0)^{-(k-l)} (A^{k+1})^* x^*, y \big \rangle = \big \langle x^*, A^{k+1} (A-z_0)^{-(k-l)} y \big \rangle \\
&\qquad \qquad = \big \langle x^*, (A-z_0)^{l+1} y \big \rangle + \sum_{i=k-l+1}^{k} \binom{k+1}{i} z_0^{k+1-i} \big \langle x^*, (A-z_0)^{-(k-l)+i} y \big
\rangle \\ 
&\qquad \qquad \quad + \sum_{i=0}^{k-l} \binom{k+1}{i} z_0^{k+1-i} \big \langle (A^*-z_0)^{-(k-l)+i} x^*,  y \big \rangle,
\end{align*}
it follows by the induction hypothesis of the $l$-induction and by applying the binomial formula to $(A-z_0)^{-(k-l)+i} y$ for $i \in \{ k-l+1, \dots, k+1 \}$ that $x^* \in D((A^{l+1})^*)$.
So the $l$-induction is finished and it remains to show that $(A^k)^* x^* \in D(A^*)$. Since $D(A^{k+1})$ by assumption is a core for $A$, there is for every $y \in D(A)$ a sequence $(y_n)$ in $D(A^{k+1})$ such that 
\begin{align*}
\big \langle (A^k)^* x^*, A y \big \rangle = \lim_{n \to \infty} \big \langle (A^k)^* x^*, A y_n \big \rangle = \lim_{n \to \infty} \big \langle x^*, A^{k+1} y_n \big \rangle 
= \big \langle (A^{k+1})^* x^*, y \big \rangle.
\end{align*}
It follows that $(A^k)^* x^* \in D(A^*)$ and this yields -- together with the induction hypothesis of the $k$-induction -- that $x^* \in D((A^*)^{k+1})$, which finally ends the proof the preparatory statement~\eqref{eq: indbeh 1}.
\smallskip

After this preparation we can now move on to the main part of the proof where we assume, without loss of generality, that $\lambda = 0$ and exploit 
the first remark after Theorem~\ref{thm: typ mögl für PX und (1-P)X} to show that $P^*$ is weakly $m$-associated with $A^*$ and $\lambda = 0$.
$A^*$ is densely defined (due to the reflexivity of $X$ (Theorem~III.5.29 of~\cite{KatoPerturbation80})) 
with $\rho(A^*) = \rho(A) \ne \emptyset$ (Theorem~III.5.30 of~\cite{KatoPerturbation80}) and 
\begin{align*}
P^* A^* \subset (AP)^* \subset (PA)^* = A^* P^*
\end{align*}
because $AP \supset PA$.
Since $(A^m)^* = (A^*)^m$ by~\eqref{eq: indbeh 1} and since $PX = \ker A^m$ and $(1-P)X = \overline{ \rg A^m }$ (by Theorem~\ref{thm: typ mögl für PX und (1-P)X}), we further have 
\begin{gather*}
P^*X^* = \ker(1-P)^* = ((1-P)X)^{\perp} = (\overline{\rg A^m})^{\perp} = \ker(A^m)^* = \ker (A^*)^m \\
\text{and} \\
(1-P^*)X^* = \ker P^* = (P X)^{\perp} = (\ker A^m)^{\perp} = ( \ker (A^m)^{**} )_{\perp} = \overline{\rg (A^m)^* } = \overline{\rg (A^*)^m },
\end{gather*}
where in the fourth equality of the second line the closedness of $A^m$ (following from $\rho(A) \ne \emptyset$) and the 
reflexivity of $X$ have been used. (In the above relations, we denote by $U^{\perp} := \{ x^* \in Z^*: \scprd{x^*, U^*} = 0 \}$ and $V_{\perp} := \{ x \in Z: \scprd{V, x} = 0 \}$ the annihilators of subsets $U$ and $V$ of a normed space $Z$ and its dual $Z^*$,  respectively.)
It is now clear from the first remark after Theorem~\ref{thm: typ mögl für PX und (1-P)X} that $P^*$ is weakly $m$-associated with $A^*$ and $\lambda = 0$ 
and we are done.
\end{proof}
\bigskip

We continue by properly defining what exactly we mean by uniform and non-uniform spectral gaps. 
Suppose that $A(t): D(A(t)) \subset X \to X$, for every $t$ in some compact interval $J$, is a densely defined closed linear map and that $\sigma(t)$ is a compact subset of $\sigma(A(t))$ for every $t \in J$. We then speak of a \emph{uniform resp.~non-uniform spectral gap for $A$ and $\sigma$} if and only if $\sigma(t)$ is isolated in $\sigma(A(t))$ for every $t \in J$ (in short: there is a \emph{spectral gap for $A$ and $\sigma$}) 
and $\sigma(\,.\,)$ is resp.~is not uniformly isolated in $\sigma(A(\,.\,))$. (Uniform isolatedness 
means that there is a $t$-independent constant $r_0 > 0$ such that
\begin{align*}
\overline{U}_{r_0}(\sigma(t)) \cap \sigma(A(t)) := \{ z \in \C: \dist(z,\sigma(t)) \le r_0 \} \cap \sigma(A(t)) = \sigma(t)
\end{align*}
for every $t \in I$, of course.)
Also, we say that \emph{$\sigma(\,.\,)$ falls into $\sigma(A(\,.\,)) \setminus \sigma(\,.\,)$ at the point $t_0 \in J$} if and only if there is a sequence $(t_n)$ in $J$ 
converging to $t_0$ such that
\begin{align*}
\dist(\sigma(t_n), \sigma(A(t_n)) \setminus \sigma(t_n)) \longrightarrow 0 \quad (n \to \infty).
\end{align*} 
It is clear that the set of points at which $\sigma(\,.\,)$ falls into $\sigma(A(\,.\,)) \setminus \sigma(\,.\,)$ is closed. 
Also, it follows by the compactness of $J$ that 
a spectral gap for $A$ and $\sigma$ is uniform if and only if $\sigma(\,.\,)$ at no point falls into $\sigma(A(\,.\,)) \setminus \sigma(\,.\,)$. 
The following proposition gives a criterion (in terms of some mild regularity conditions on $t \mapsto A(t)$, $\sigma(t)$, $P(t)$) 
for a spectral gap for $A$ and $\sigma$ to be even uniform.  
It is of some interest in the third remark at the beginning of Section~3.3 (and is crucial 
in the applied example to neutron transport theory presented in~\cite{dipl}). 
We refer to Section~IV.2.4 and Theorem~IV.2.25 of~\cite{KatoPerturbation80} for a definition and a characterization of \emph{convergence} (and hence, continuity) \emph{in the generalized sense} and to Section~IV.3 of~\cite{KatoPerturbation80} for the definition of \emph{upper and lower semicontinuity} of set-valued functions $t \mapsto \sigma(t)$. 

\begin{prop} \label{prop: zshg isoliert und glm isoliert}
Suppose that $A(t): D(A(t)) \subset X \to X$ is a closed linear map for every $t$ in a compact interval $J$ and that $t \mapsto A(t)$ is continuous in the generalized sense. Suppose further that $\sigma(t)$ for every $t \in J$ is a compact and isolated subset of $\sigma(A(t))$ such that $\sigma(\,.\,)$ falls into $\sigma(A(\,.\,)) \setminus \sigma(\,.\,)$ at $t_0 \in J$, and let $t \mapsto \sigma(t)$ be upper semicontinuous at $t_0$. Finally, for every $t \in J$ let $P(t)$ be the 
projection associated with $A(t)$ and $\sigma(t)$. Then $t \mapsto P(t)$ is discontinuous at $t_0$ and 
\begin{align*}
\limsup_{n \to \infty} \bigl( \rk P(t_n) \bigr) \le \rk P(t_0) - 1
\end{align*}
for all sequences $(t_n)$ such that $t_n \longrightarrow t_0$ and $\dist ( \sigma(t_n), \sigma(A(t_n)) \setminus \sigma(t_n) ) \longrightarrow 0$. 
\end{prop}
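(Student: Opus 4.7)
The plan is to compare the associated projection $P(t)$ with a Riesz-type projection $\tilde P(t)$ defined along a single fixed cycle around $\sigma(t_0)$, which behaves continuously under the generalized convergence of $A(t)$ and therefore serves as a norm-continuous ``envelope'' for $P(t)$ near $t_0$. Since $\sigma(t_0)$ is compact and isolated in $\sigma(A(t_0))$, I first choose a cycle $\gamma$ in $\rho(A(t_0))$ with $\operatorname{n}(\gamma,\sigma(t_0)) = 1$ and $\operatorname{n}(\gamma,\sigma(A(t_0))\setminus\sigma(t_0)) = 0$, and such that $\sigma(t_0)$ lies strictly inside $\gamma$ at a positive distance $\delta > 0$ from $\gamma$. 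Continuity of $A$ in the generalized sense (Theorem~IV.2.25 of~\cite{KatoPerturbation80}) then yields $\gamma \subset \rho(A(t))$ and $(z-A(t))^{-1} \to (z-A(t_0))^{-1}$ in norm, uniformly for $z \in \gamma$, as $t \to t_0$, so that
\[ \tilde P(t) := \frac{1}{2\pi i}\int_\gamma (z-A(t))^{-1}\,dz \]
is well defined for $t$ near $t_0$, is norm-continuous at $t_0$, and satisfies $\tilde P(t_0) = P(t_0)$ by the choice of $\gamma$.

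Next I would locate, for the sequence $(t_n)$ from the hypothesis, additional spectrum of $A(t_n)$ inside $\gamma$ that is separated from $\sigma(t_n)$. By upper semicontinuity of $t \mapsto \sigma(t)$ at $t_0$, $\sigma(t_n) \subset U_{\delta/4}(\sigma(t_0))$ for large $n$; by the distance hypothesis, for such $n$ there exists $\lambda_n \in \sigma(A(t_n))\setminus\sigma(t_n)$ at distance less than $\delta/4$ from $\sigma(t_n)$, so that $\lambda_n$ lies in $U_{\delta/2}(\sigma(t_0)) \subset \operatorname{int}(\gamma)$. Since $\sigma(t_n)$ is compact and isolated in $\sigma(A(t_n))$ and lies strictly inside $\gamma$, one can choose a second cycle $\gamma_n$ in $\operatorname{int}(\gamma)$ encircling $\sigma(t_n)$ with index $1$ but leaving $\lambda_n$ and any other points of $\sigma(A(t_n)) \cap \operatorname{int}(\gamma) \setminus \sigma(t_n)$ in its exterior. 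The standard Riesz-calculus splitting then yields the decomposition $\tilde P(t_n) = P(t_n) + Q(t_n)$ with $P(t_n)Q(t_n) = Q(t_n)P(t_n) = 0$, where $Q(t_n)$ is the Riesz projection of $A(t_n)$ onto the nonempty spectral subset $\sigma(A(t_n)) \cap \operatorname{int}(\gamma) \setminus \sigma(t_n)$; in particular $Q(t_n) \ne 0$, so $\|Q(t_n)\| \ge 1$, and $\rk(\tilde P(t_n)) = \rk(P(t_n)) + \rk(Q(t_n))$.

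The two assertions then fall out of this decomposition together with the norm-stability of projections. For the discontinuity: if $P(t_n) \to P(t_0) = \tilde P(t_0)$ held in norm, then $Q(t_n) = \tilde P(t_n) - P(t_n) \to 0$ in norm, contradicting $\|Q(t_n)\| \ge 1$. For the rank bound the nontrivial case is $\rk(P(t_0)) < \infty$; there, $\tilde P(t_n) \to P(t_0)$ in norm gives $\|\tilde P(t_n) - P(t_0)\| < 1$ for large $n$, which forces $\rk(\tilde P(t_n)) = \rk(P(t_0))$ (projections at norm distance less than one are similar, hence have isomorphic ranges), and combining with the rank identity above yields $\rk(P(t_n)) \le \rk(P(t_0)) - 1$; the infinite-rank case is vacuous under the convention $\infty - 1 = \infty$. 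The main technical point -- and essentially the only step requiring real care -- is the middle paragraph: one has to combine upper semicontinuity of $\sigma(\,.\,)$ with the distance hypothesis to guarantee that some spectral point of $A(t_n)$ outside $\sigma(t_n)$ actually falls inside the fixed cycle $\gamma$, and to verify that a separating cycle $\gamma_n$ legitimizing the Riesz decomposition indeed exists; everything else is routine Riesz-calculus bookkeeping and norm-perturbation stability of projections.
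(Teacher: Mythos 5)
Your proof is correct, and the structure is the natural (and essentially unavoidable) one: you compare the moving Riesz projection $P(t)$ with the ``frozen-cycle'' projection $\tilde P(t) = \frac{1}{2\pi i}\int_\gamma (z-A(t))^{-1}\,dz$, use generalized continuity of $t\mapsto A(t)$ to make $\tilde P$ norm-continuous at $t_0$ with $\tilde P(t_0)=P(t_0)$, locate extra spectrum of $A(t_n)$ inside $\gamma$ via upper semicontinuity plus the distance hypothesis, and then read off both claims from the orthogonal decomposition $\tilde P(t_n)=P(t_n)+Q(t_n)$ together with the fact that projections at norm distance $<1$ have isomorphic ranges. The paper itself delegates the proof to Proposition 5.3 of \cite{dipl}, so a direct textual comparison is not possible here, but your argument sits squarely inside the Kato framework (Theorems IV.2.25 and IV.3.15 of \cite{KatoPerturbation80}) that the surrounding discussion explicitly invokes, and I see no reason to expect the reference proof to differ in substance. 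Two small points you glossed over but that do check out: the containment $U_{\delta/2}(\sigma(t_0))\subset\operatorname{int}(\gamma)$ follows because any point of $U_{\delta/2}(\sigma(t_0))$ is joined to $\sigma(t_0)$ by a segment of length $<\delta/2$ that cannot meet $\gamma$ (which is at distance $\ge\delta$ from $\sigma(t_0)$), so winding numbers agree; and the separating cycle $\gamma_n$ exists because $\sigma(t_n)$ is a nonempty compact isolated subset of the compact set $\sigma(A(t_n))\cap\operatorname{int}(\gamma)$, whose complementary part is again compact and contains $\lambda_n$, hence is nonempty.
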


See~\cite{dipl} (Proposition~5.3) for a proof.
Clearly, one also has the following converse of the above proposition: if $t \mapsto A(t)$ is continuous in the generalized sense as above and $t \mapsto \sigma(t)$ is even \emph{continuous} (that is, 
upper and lower semicontinuous) then uniform isolatedness of $\sigma(\,.\,)$ in $\sigma(A(\,.\,))$ implies that $t \mapsto P(t)$ is continuous. (Use Theorem~IV.3.15 of~\cite{KatoPerturbation80}.)
\\

We now turn to regularity properties of vector-valued and, in particular, operator-valued functions. We start by briefly recalling those facts on vector-valued Sobolev spaces that will be needed in the sequel (see~\cite{ArendtBatty00} and~\cite{Amann95} for more detailed expositions). We follow the notational conventions of~\cite{ArendtBatty00}. In particular, \emph{($\mu$-)measurability} of a $Y$-valued map on a complete measure space $(X_0, \mathcal{A}, \mu)$ will not only mean that this map is $\mathcal{A}$-measurable  but also that it is $\mu$-almost separably-valued, whereas the notion of \emph{($\mu$-)strong measurability} will be reserved for operator-valued maps that are pointwise $\mu$-measurable. 
Suppose $J$ is a non-trivial interval and $p \in [1,\infty) \cup \{ \infty \}$. Then $W^{1,p}(J,X)$ is defined to consist of those (equivalence classes of) $p$-integrable functions $f: J \to X$ for which there is a $p$-integrable function $g: J \to X$ (called a weak derivative of $f$) such that 
\begin{align*}
\int_J f(t) \varphi'(t) \, dt = - \int_J g(t) \varphi(t) \, dt
\end{align*}
for all $\varphi \in C_c^{\infty}(J°,\C)$. As usual, \emph{$p$-integrability} of a function $f: J \to X$ (with $p \in [1,\infty) \cup \{ \infty \}$) means that $f$ is 
measurable and $\norm{f}_p:= (\int_{J} \norm{f(\tau)}^p \, d\tau)^{1/p}  < \infty$ if $p \in [1,\infty)$ or $\norm{f}_{p} := \esssup_{t \in J} \norm{f(t)} < \infty$ if $p = \infty$. If $f$ is in $W^{1,p}(J,X)$ and $g_1$, $g_2$ are two weak derivatives of $f$, then $g_1 = g_2$ almost everywhere, so that up to almost everywhere equality there is only one weak derivative of $f$ which is denoted by $\partial f$. It is well-known that $W^{1,p}(J,X)$ is a Banach space w.r.t.~the norm $\norm{\,.\,}_{1,p}$ whith $\norm{f}_{1,p} := \norm{f}_p + \norm{ \partial f }_p$ for $f \in W^{1,p}(J,X)$. It is also well-known that the space $W^{1,\infty}(J,X)$ -- just like 
$W^{1,p}(J,X)$ for $p \in [1,\infty)$ -- can be characterized by means of 
indefinite integrals: $W^{1,\infty}(J,X)$ consists of those (equivalence classes of) $\infty$-integrable functions $f: J \to X$ for which there is an $\infty$-integrable function $g$ such that for some (and hence every) $t_0 \in J$ 
\begin{align*}
f(t) = f(t_0) + \int_{t_0}^t g(\tau) \,d\tau \text{\, for all } t \in J,
\end{align*}
or, equivalently (by Lebesgue's differentiation theorem), $W^{1,\infty}(J,X)$ consists of (equivalence classes of) $\infty$-integrable functions $f: J \to X$ which are differentiable almost everywhere and whose (pointwise) derivative $f'$ is $\infty$-integrable such that for some (and hence every) $t_0 \in J$ 
\begin{align*}
f(t) = f(t_0) + \int_{t_0}^t f'(\tau) \,d\tau  \text{\, for all } t \in J.
\end{align*}
Additionally, the pointwise derivative $f'$ of an $f \in W^{1,\infty}(J,X)$ equals the weak derivative $\partial f$ almost everywhere.
It follows from this characterization that, in case $X$ is reflexive (or more generally, satisfies the Radon--Nikod\'{y}m property), $W^{1,\infty}(J,X)$ consists exactly of the (equivalence classes of) Lipschitz continuous functions (where one inclusion is completely trivial and independent of the Radon--Nikod\'{y}m property, of course).
\smallskip


We now move on to define -- following the introduction of Kato's work~\cite{Kato85} -- $W^{m,\infty}_*$-\emph{regularity} which shall be used 
in all our adiabatic theorems with time-independent domains. An operator-valued function $J \ni t \mapsto A(t) \in L(X,Y)$ on a compact interval $J$ is said to belong to $W^{0,\infty}_*(J, L(X,Y)) = L^{\infty}_*(J,L(X,Y))$ if and only if $t \mapsto A(t)$ is strongly measurable and $t \mapsto \norm{A(t)}$ is essentially bounded. And $t \mapsto A(t)$ is said to belong to $W^{1,\infty}_*(J, L(X,Y))$ if and only if there is $B \in L^{\infty}_*(J,L(X,Y))$ (called a $W^{1,\infty}_*$-\emph{derivative} of $A$) such that for some (and hence every) $t_0 \in J$ 
\begin{align*}
A(t)x = A(t_0)x + \int_{t_0}^t B(\tau)x \,d\tau \text{\, for all } t \in J \text{ and } x \in X.
\end{align*}
$W^{m,\infty}_*(J,L(X,Y))$ for arbitrary $m \in \N$ is defined recursively, of course.
\smallskip

We point out that the $W^{m,\infty}_*$-spaces (unlike the $W^{m,\infty}$-spaces), by definition, consist of functions (of operators) rather than equivalence classes of such functions.  
It is obvious from the characterization of $W^{1,\infty}(J,Y)$ by way of indefinite integrals that, if $t \mapsto A(t)$ is in $W^{1,\infty}_*(J, L(X,Y))$, then $t \mapsto A(t)x$ is (the continuous representative of an element) in $W^{1,\infty}(J,Y)$. 
In particular, $W^{1,\infty}_*$-regularity implies Lipschitz continuity w.r.t.~the norm topology, and furthermore, $W^{1,\infty}_*$-regularity can be thought of as being not much more than Lipschitz continuity (in view of the above remarks in conjunction with the Radon--Nikod\'{y}m property). 
A simple and important criterion for $W^{1,\infty}_*$-regularity is furnished 
by the following proposition.

\begin{prop} \label{prop: WOT-stet db impl W^{1,infty}-reg}
Suppose $J \ni t \mapsto A(t) \in L(X,Y)$ is SOT- or WOT-continuously differentiable, where $J$ is a compact interval.
Then $t \mapsto A(t)$ is in $W^{1,\infty}_*(J, L(X,Y))$.
\end{prop}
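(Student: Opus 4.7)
My plan is to verify the two ingredients of $W^{1,\infty}_*$-regularity in turn: essential boundedness of $t \mapsto A(t)$ and of a candidate derivative, and the integral representation of $A(t)$ in terms of that derivative. Denote by $A'(t)$ the SOT- or WOT-derivative of $A$ at $t \in J$ (which, by assumption, exists for every $t$).

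First I would apply the uniform boundedness principle twice: once to the SOT- or WOT-continuous family $\{A(t)\}_{t \in J}$ and once to $\{A'(t)\}_{t \in J}$. Since $J$ is compact and both families are pointwise bounded (continuity into a seminorm-separated topology on a compact set yields boundedness on every orbit), the UBP delivers $\sup_{t \in J} \norm{A(t)} < \infty$ and $\sup_{t \in J} \norm{A'(t)} < \infty$. In particular, $t \mapsto \norm{A'(t)}$ is essentially bounded.

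Next I would verify pointwise measurability of $t \mapsto A'(t)$. In the SOT case the orbit $t \mapsto A'(t)x$ is norm-continuous for every $x \in X$, so it is Bochner-measurable at once. In the WOT case this orbit is only weakly continuous, so I would invoke Pettis' measurability theorem: weak continuity gives weak measurability, and almost separable valuedness follows from the observation that, picking a countable dense set $\{t_n\} \subset J$, the weak closure of $\{A'(t_n)x\}$ contains the entire image $\{A'(t)x : t \in J\}$ by weak continuity, and by Mazur's theorem this weak closure is contained in the (norm-separable) closed linear span of $\{A'(t_n)x\}$. Thus $t \mapsto A'(t)x$ is Bochner-measurable, so $A' \in L^\infty_*(J, L(X,Y))$.

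Finally, I would prove the integral identity $A(t)x = A(t_0)x + \int_{t_0}^t A'(\tau)x\,d\tau$ for $t, t_0 \in J$ and $x \in X$. The Bochner integral exists by the preceding step. Testing against an arbitrary $y^* \in Y^*$ and using that the scalar function $\tau \mapsto \scprd{y^*, A(\tau)x}$ is continuously differentiable with derivative $\tau \mapsto \scprd{y^*, A'(\tau)x}$ (by the SOT/WOT differentiability hypothesis), the scalar fundamental theorem of calculus yields
\begin{align*}
\scprd{y^*, A(t)x - A(t_0)x} = \int_{t_0}^t \scprd{y^*, A'(\tau)x}\,d\tau = \scprd{y^*, \int_{t_0}^t A'(\tau)x\,d\tau}.
\end{align*}
Since $y^*$ was arbitrary, the identity follows, and $t \mapsto A(t)$ lies in $W^{1,\infty}_*(J, L(X,Y))$ with $W^{1,\infty}_*$-derivative $A'$. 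The only genuinely delicate point is the almost-separable-valuedness in the WOT case; all other steps are soft consequences of UBP and the scalar FTC.
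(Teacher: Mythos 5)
Your proof is correct and follows essentially the same route as the paper: Pettis' measurability theorem (with almost separable valuedness extracted from weak continuity) to place the orbit maps $t\mapsto A'(t)x$ in $L^\infty$, and a duality argument (testing against $y^*\in Y^*$, i.e.\ Hahn--Banach) to pass from the scalar fundamental theorem of calculus to the vector-valued integral identity. You fill in more of the details (the Mazur step for separable valuedness, the uniform boundedness argument) that the paper leaves implicit, but the underlying argument is the same.
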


\begin{proof}
 It is well-known that a weakly continuous map $J \to Y$ is almost separably valued, whence $t \mapsto A'(t)x$ is measurable 
(by Pettis' characterization of measurability). With the help of the Hahn--Banach theorem the conclusion readily follows. 
\end{proof}

It follows from Lebesgue's differentiation theorem that $W^{1,\infty}_*$-derivatives are essentially unique, more precisely: if $t \mapsto A(t)$ is in $W^{1,\infty}_*(J,L(X,Y))$ and $B_1$, $B_2$ are two $W^{1,\infty}_*$-derivatives of $A$, then one has for every $x \in X$ that $B_1(t)x = B_2(t)x$ for almost every $t \in J$. It should be emphasized that this last condition does \emph{not} imply that $B_1(t) = B_2(t)$ for almost every $t \in J$. (Indeed, take $J:= [0,1]$, $X := \ell^2(J)$ and define 
\begin{align*}
A(t):= 0 \text{\,  as well as \,} B_1(t)x := \scprd{ e_t, x} e_t \text{\, and \,} B_2(t)x := 0 
\end{align*}
for $t \in J$ and $x \in X$, where $e_t(s) := \delta_{s \, t}$.
Then, for every $x \in X$, $B_1(t)x$ is different from $0$ for at most countably many $t \in J$, and it follows that $B_1$ and $B_2$ both are $W^{1,\infty}_*$-derivatives of $A$, but $B_1(t) \ne B_2(t)$ for every $t \in J$.) 
\smallskip

In the presented adiabatic theorems for time-independent domains (Section~\ref{sect: adsätze mit sl} and \ref{sect: adsätze ohne sl}), we will make much use of the following lemma stating that $W^{1,\infty}_*$-regularity carries over to products and inverses. It is noted in the introduction of~\cite{Kato85} for separable spaces. We prove it here since it is not proved in~\cite{Kato85} and, more importantly, since it is not a priori clear 
-- (almost) separability being crucial for measurability -- 
that the separability assumption of~\cite{Kato85} is actually not needed 
for this lemma. 
An analogue of this lemma for SOT- and WOT-continuous differentiability is well-known (and easily proved with the help of the theorem of Banach--Steinhaus). 

\begin{lm} \label{lm: prod- und inversenregel}
(i) Suppose that $t \mapsto A(t)$ is in $W^{1,\infty}_*(J,L(X,Y))$ and that $t \mapsto B(t)$ is in $W^{1,\infty}_*(J,L(Y,Z))$ where $J = [a,b]$. Then $t \mapsto B(t)A(t)$ is in $W^{1,\infty}_*(J,L(X,Z))$ and $t \mapsto B'(t)A(t) + B(t)A'(t)$ is a $W^{1,\infty}_*$-derivative of $B A$ for every $W^{1,\infty}_*$-derivative $A'$, $B'$ of $A$ or $B$, respectively. \\
\noindent (ii) Suppose that $t \mapsto A(t)$ is in $W^{1,\infty}_*(J,L(X,Y))$ and that $A(t)$ is bijective onto $Y$ 
for every $t \in J$. 
Then $t \mapsto A(t)^{-1}$ is in $W^{1,\infty}_*(J,L(Y,X))$ and $t \mapsto - A(t)^{-1} A'(t) A(t)^{-1}$ is a $W^{1,\infty}_*$-derivative of $A^{-1}$ for every $W^{1,\infty}_*$-derivative $A'$ of $A$.
\end{lm}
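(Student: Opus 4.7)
The plan is to treat (i) and (ii) in parallel: in both cases the heart of the matter is strong measurability of the candidate derivative together with the integral characterization of $W^{1,\infty}_*$-regularity. For (i), I fix $x \in X$ and note that the orbit $t \mapsto A(t)x$, being in $W^{1,\infty}(J,Y)$, admits a continuous representative on the compact interval $J$ with compact (hence separable) range. By uniform continuity, $A(\cdot)x$ can be approximated uniformly by $Y$-valued simple functions $A_N(\cdot) = \sum_k \mathbf{1}_{J_{N,k}}(\cdot)\, A(t_{N,k})x$ attached to partitions of $J$. Then
\[
B(\cdot)A_N(\cdot) = \sum_k \mathbf{1}_{J_{N,k}}(\cdot)\, B(\cdot) A(t_{N,k})x
\]
is a finite sum of strongly measurable $Z$-valued functions, and the bound $\norm{B(t)A_N(t) - B(t)A(t)x} \le \norm{B(t)}/N$ shows that $B(\cdot)A(\cdot)x$ is an almost everywhere limit of measurable functions, hence measurable. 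The same device, with $B$ replaced by $B'$ or $A$ replaced by $A'$, yields measurability of $B'(\cdot)A(\cdot)x$ and $B(\cdot)A'(\cdot)x$; essential boundedness of the candidate derivative follows immediately from submultiplicativity.

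For the integral formula in (i), I decompose
\[
B(t)A(t)x - B(t_0)A(t_0)x = B(t)\bigl[A(t)x-A(t_0)x\bigr] + \bigl[B(t)-B(t_0)\bigr]A(t_0)x
\]
and pull $B(t)$ inside the first difference (which equals $\int_{t_0}^t A'(\tau)x\,d\tau$) to obtain $\int_{t_0}^t B(t)A'(\tau)x\,d\tau + \int_{t_0}^t B'(\tau)A(t_0)x\,d\tau$. Comparing this with the target $\int_{t_0}^t [B(\tau)A'(\tau) + B'(\tau)A(\tau)]x\,d\tau$ and substituting $B(t) - B(\tau) = \int_\tau^t B'(\sigma)\,d\sigma$ and $A(t_0) - A(\tau) = -\int_{t_0}^\tau A'(\sigma)\,d\sigma$, the error reduces to two iterated integrals of $B'(\sigma)A'(\tau)x$ over the triangles $\{t_0 \le \tau \le \sigma \le t\}$ and $\{t_0 \le \sigma \le \tau \le t\}$, which are identified by a Fubini swap. (Joint measurability of $(\sigma,\tau) \mapsto B'(\sigma)A'(\tau)x$ again follows from the same simple-function approximation of $\tau \mapsto A'(\tau)x$.)

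For (ii), $W^{1,\infty}_*$-regularity of $A$ gives Lipschitz continuity in norm; a Neumann-series argument at each $t_0 \in J$ combined with a finite subcover of $J$ yields $\sup_t \norm{A(t)^{-1}} < \infty$, and the identity $A(t)^{-1} - A(s)^{-1} = A(t)^{-1}[A(s)-A(t)]A(s)^{-1}$ then shows that $t \mapsto A(t)^{-1}$ is norm-Lipschitz and hence strongly measurable. Measurability and essential boundedness of $C(t) := -A(t)^{-1}A'(t)A(t)^{-1}$ follow as in (i). To prove the integral formula I start from
\[
A(t)^{-1}y - A(t_0)^{-1}y = -\int_{t_0}^t A(t)^{-1}A'(\tau)A(t_0)^{-1}y\,d\tau
\]
and use the Lipschitz bound $\norm{A(t)^{-1}-A(\tau)^{-1}} \le L|t-\tau|$ to estimate the error incurred by replacing $A(t)^{-1}$ and $A(t_0)^{-1}$ in the integrand by $A(\tau)^{-1}$: this error is $O((t-t_0)^2)$ on $[t_0,t]$, so telescoping over an $n$-fold equidistant partition of $[t_0,t]$ reduces the cumulative error to $O(1/n)$, which vanishes in the limit and yields the desired identity. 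The main obstacle throughout, as the author explicitly warns, is verifying measurability without separability of $X$, $Y$, $Z$; the uniform simple-function approximation of the orbits $A(\cdot)x$ (which have compact and hence separable range automatically) is the trick that replaces the separability hypothesis of \cite{Kato85}.
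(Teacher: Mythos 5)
Your proof is correct, and it takes a genuinely different route from the paper's. The paper regularizes: it extends $A$ and $B$ to a slightly larger interval, mollifies to produce SOT-continuously differentiable approximations $\tilde A_n$, $\tilde B_n$ (resp.\ $\tilde A_n^{-1}$), writes down the elementary product/inverse-rule integral identity for these, and then passes to the limit, with the harder integral convergence handled by a three-term decomposition exploiting the uniform continuity of the orbits $A(\cdot)x$. You instead prove the integral identities directly: in~(i) you subtract the target from the ``telescoped'' representation and collapse the error by a Fubini swap of the two iterated integrals of $(\sigma,\tau)\mapsto B'(\sigma)A'(\tau)x$ over complementary triangles; in~(ii) you telescope $A(t)^{-1}-A(t_0)^{-1}$ over an equidistant partition and kill the $O(1/n)$ cumulative error using norm-Lipschitz continuity of $t\mapsto A(t)^{-1}$. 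You also reprove the strong measurability of the relevant products from scratch, by approximating the continuous, compact-range orbits $A(\cdot)x$ by simple functions, rather than citing Kato's Lemma~A.4 as the paper does. Both methods require the same underlying insight --- that separability enters only through the orbits, which are automatically almost separably valued --- but yours is the more elementary and self-contained of the two, at the cost of the somewhat fussier Fubini/telescoping bookkeeping.

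Two small points deserve attention, though neither is a genuine gap. First, the displayed bound $\norm{B(t)A_N(t)-B(t)A(t)x}\le\norm{B(t)}/N$ should instead read $\norm{B(t)}\cdot\omega_x(1/N)$, where $\omega_x$ is the modulus of continuity of $A(\cdot)x$ (or else you should stipulate that the partitions are chosen so that the uniform error is $\le 1/N$); the conclusion is unchanged. Second, when you say ``the same device'' handles $B(\cdot)A'(\cdot)x$: unlike $A(\cdot)x$, the orbit $A'(\cdot)x$ is merely essentially bounded and measurable, hence only \emph{almost everywhere} (not uniformly) a limit of simple functions, so the approximation is pointwise a.e.\ rather than uniform. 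The measurability conclusion still holds, but the phrasing obscures that the mechanism is slightly different. The same remark applies to establishing joint measurability of $(\sigma,\tau)\mapsto B'(\sigma)A'(\tau)x$ before invoking Fubini; it would be worth making explicit that one factors through a.e.\ simple approximation of $A'(\cdot)x$ and uses Tonelli on the nonnegative majorant $\norm{B'(\sigma)}\,\norm{A'(\tau)x}$ to justify the swap.
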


\begin{proof}
We begin with some general preparatory considerations. Whenever an operator-valued map $t \mapsto C(t)$ is in $W^{1,\infty}_*(J,L(X,Y))$, 
we shall write $\tilde{C}$ for the map on $\tilde{J}:= [a-1,b+1]$ obtained from $C$ by trivially extending it by $C(a)$ to the left and by $C(b)$ to the right. 
Also we shall write 
\begin{align*}
\tilde{C}_n(t)x := \bigl(j_{\frac{1}{n}} * \tilde{C}(\,.\,)x \bigr)(t) = \int_{ \tilde{J} } j_{\frac{1}{n}}(t-r) \tilde{C}(r)x \,dr
\end{align*}
for all $t \in \tilde{J}$ and all $x \in X$. It is clear that $t \mapsto \tilde{C}(t)$ is in $W^{1,\infty}_*(\tilde{J},L(X,Y))$. 
In particular, $t \mapsto \tilde{C}(t)x$ is the continuous representative of an element of $W^{1,\infty}(\tilde{J},Y)$ and hence of $W^{1,1}(\tilde{J},Y)$ for every $x \in X$, 
from which we conclude -- using the usual facts on mollification (which are proved in exactly the same way as in the scalar case) -- that for every $x \in X$
\begin{align*}
\tilde{C}_n(\,.\,)x \big|_J      &\xrightarrow[{\norm{\,.\,}_{\infty}}]{}  \tilde{C}(\,.\,)x \big|_J = C(\,.\,)x \quad (n \to \infty), 
\end{align*}
%
\begin{gather*}
\big\| \tilde{C}_n(t)x \big\| \le \sup_{\tau \in \tilde{J}} \big\|  \tilde{C}(\tau)   \big\| \, \norm{x}  \quad (t \in J, n \in \N), \\
\big\| \tilde{C}_n'(t)x \big\| = \big\| \bigl(j_{\frac{1}{n}} * \tilde{C}'(\,.\,)x \bigr)(t)    \big\| \le \esssup_{\tau \in \tilde{J}} \big\|  \tilde{C}'(\tau)   \big\| \, \norm{x} \quad (t \in J, n \in \N),
\end{gather*}
\begin{gather*}
\tilde{C}_n(\,.\,)x \big|_J      \xrightarrow[{\norm{\,.\,}_{1,1}}]{}  \tilde{C}(\,.\,)x \big|_J = C(\,.\,)x \quad (n \to \infty),  
\end{gather*}
where $\tilde{C}'$ denotes an arbitrary $W^{1,\infty}_*$-derivative of $\tilde{C}$. 
\smallskip

(i) We fix arbitrary $W^{1,\infty}_*$-derivatives $A'$, $B'$ of $A$ and $B$ and prove that $t \mapsto B'(t)A(t) + B(t)A'(t)$ is in $L^{\infty}_*(J, L(X,Z))$ and that 
\begin{align*}
B(t)A(t)x = B(a)A(a)x + \int_a^t B'(\tau)A(\tau)x + B(\tau)A'(\tau)x \, d\tau
\end{align*}
for every $t \in J$ and $x \in X$. It is easy to see that $t \mapsto B'(t)A(t) + B(t)A'(t)$ is indeed in $L^{\infty}_*(J, L(X,Z))$ (by virtue of Lemma~A~4 of~\cite{Kato73} stating that strong measurability of operator-valued functions carries over to products). 
Additionally, it is clear from the SOT-continuous differentiability of $t \mapsto \tilde{A}_n(t)$ and $t \mapsto \tilde{B}_n(t)$ that 
\begin{align*}
\tilde{B}_n(t) \tilde{A}_n(t)x = \tilde{B}_n(a) \tilde{A}_n(a)x + \int_a^t \tilde{B}_n'(\tau) \tilde{A}_n(\tau)x + \tilde{B}_n(\tau) \tilde{A}_n'(\tau)x \, d\tau
\end{align*}
for every $t \in J$, 
 $x \in X$, and $n \in \N$. 
We now fix $t \in J$ and $x \in X$ and choose $\tilde{a}$, $\tilde{b}$ such that $\norm{A(\tau)}, \norm{A'(\tau)} \le \tilde{a}$ and $\norm{B(\tau)}, \norm{B'(\tau)} \le \tilde{b}$ for almost every $\tau \in J$. 
In virtue of the preparatory considerations above, we have 
\begin{gather*}  \label{eq: gl 1, prod- und inversenregel}
\int_a^t \tilde{B}_n(\tau) \tilde{A}_n'(\tau)x \, d\tau - \int_a^t B(\tau) A'(\tau)x \, d\tau \notag \\
= \int_a^t \tilde{B}_n(\tau) \bigl( \tilde{A}_n'(\tau)x - A'(\tau)x \bigr) \, d\tau + \int_a^t \bigl( \tilde{B}_n(\tau) - B(\tau) \bigr) A'(\tau)x \, d\tau
\longrightarrow 0 
\end{gather*}
as $n \to \infty$.
And since for every $\varepsilon > 0$ there is a partition $a=t_0 < t_1 < \dots < t_m = t$ of $[a,t]$ 
such that for all $i \in \{1, \dots, m\}$ one has $\sup_{\tau \in [t_{i-1},t_i]} \norm{A(\tau)x - A(t_i)x} < \eps / 2 \tilde{b}$ 
we see 
that the norm of
\begin{gather*}  \label{eq: gl 2, prod- und inversenregel}
\int_a^t \tilde{B}_n'(\tau) \tilde{A}_n(\tau)x \,d\tau - \int_a^t B'(\tau)A(\tau)x \, d\tau \notag \\
= \int_a^t \tilde{B}_n'(\tau) \bigl( \tilde{A}_n(\tau)x - A(\tau)x \bigr) \, d\tau 
+ \sum_{i=1}^m \int_{t_{i-1}}^{t_i} \bigl( \tilde{B}_n'(\tau) - B'(\tau) \bigr) \bigl( A(\tau)x - A(t_i)x \bigr) \, d\tau \notag \\
+ \sum_{i=1}^m \int_{t_{i-1}}^{t_i} \bigl( \tilde{B}_n'(\tau) - B'(\tau) \bigr) A(t_i)x \, d\tau
\end{gather*} 
is less than $3 \eps$ for sufficiently large $n$. Assertion~(i) thus follows.
\smallskip

(ii) We fix an arbitrary $W^{1,\infty}_*$-derivative $A'$ of $A$ and show that $t \mapsto A(t)^{-1} A'(t) A(t)^{-1}$ is in $L^{\infty}_*(J, L(Y,X))$ and that 
\begin{align*}
A(t)^{-1}y = A(a)^{-1}y  - \int_a^t  A(\tau)^{-1} A'(\tau) A(\tau)^{-1}y \, d\tau
\end{align*}
for every $t \in J$ and $y \in Y$.
As above, it follows from Lemma~A~4 of~\cite{Kato73} that $t \mapsto A(t)^{-1} A'(t) A(t)^{-1}$ is indeed in $L^{\infty}_*(J, L(X,Z))$. 
Since 
$\sup_{t \in J} \big\| \tilde{A}_n(t) - A(t) \big\| \longrightarrow 0$ as $n \to \infty$, there is an $n_0 \in \N$ such that $\tilde{A}_n(t)$ is invertible (bijective onto $Y$) for all $t \in J$ and $n \ge n_0$ and such that $\big\| \tilde{A}_n(t)^{-1} \big\|  \le \tilde{c} < \infty$ for all $t \in J$ and $n \ge n_0$. 
It follows from the SOT-continuous differentiability of $t \mapsto \tilde{A}_n(t)$ that $t \mapsto \tilde{A}_n(t)^{-1}$ 
is SOT-continuously differentiable as well, whence
\begin{align*}
\tilde{A}_n(t)^{-1}y = \tilde{A}_n(a)^{-1}y  - \int_a^t  \tilde{A}_n(\tau)^{-1} \tilde{A}_n'(\tau) \tilde{A}_n(\tau)^{-1}y \, d\tau
\end{align*}
for every $t \in J$, $y \in Y$ and $n \ge n_0$. 
We now fix $t \in J$ and $y \in Y$. As $\sup_{\tau \in J} \big\| \tilde{A}_n(\tau)^{-1} - A(\tau)^{-1} \big\| \longrightarrow 0$ as $n \to \infty$,
we obtain
\begin{gather*}  \label{eq: gl 3, prod- und inversenregel}
\int_a^t \tilde{A}_n(\tau)^{-1} \tilde{A}_n'(\tau) \tilde{A}_n(\tau)^{-1}y \, d\tau - \int_a^t A(\tau)^{-1} A'(\tau) A(\tau)^{-1}y \, d\tau \notag \\
= \int_a^t \tilde{A}_n(\tau)^{-1} \bigl(  \tilde{A}_n'(\tau) \tilde{A}_n(\tau)^{-1}y - A'(\tau) A(\tau)^{-1}y \bigr) \, d\tau \notag \\
+ \int_a^t \bigl( \tilde{A}_n(\tau)^{-1} - A(\tau)^{-1} \bigr) A'(\tau) A(\tau)^{-1}y \, d\tau 
\longrightarrow 0 
\end{gather*}
as $n \to \infty$, where the first term is treated as in the proof of~(i). 
Assertion~(ii) thus follows.
\end{proof}


We shall need the following simple product rule very often: it will always be used for estimating the difference of two evolution systems and for establishing adiabaticity of evolution systems. And furthermore, it will take the role of Lemma~\ref{lm: prod- und inversenregel} in the adiabatic theorems for time-dependent domains (Section~\ref{sect: adsätze für zeitabh domains}).


\begin{lm} \label{lm: prodregel rechtsseit db}
Suppose $C(t)$ is a bounded linear map in $X$ for every $t \in J = [a,b]$, let $t_0 \in [a,b)$, and let $Y_{t_0}$ be a dense subspace of $X$. Suppose that $t \mapsto C(t)y$ is right differentiable at $t_0$ for all $y \in Y_{t_0}$ and that the map $f: J \to X$ is right differentiable at $t_0$ and $f(t_0) \in Y_{t_0}$. Suppose finally that $\sup_{t \in J_{t_0}} \norm{ C(t) } < \infty$ for a neighbourhood $J_{t_0}$ of $t_0$.
Then $t \mapsto C(t)f(t)$ is right differentiable at $t_0$ with right derivative 
\begin{align*}
\partial_+ ( C(\,.\,) f(\,.\,) )(t_0) = \partial_+ C(t_0) f(t_0) + C(t_0) \partial_+ f(t_0).
\end{align*}
\end{lm}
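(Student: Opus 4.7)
The plan is to use the standard add-and-subtract trick for a product rule. For sufficiently small $h > 0$, write
\begin{align*}
\frac{C(t_0+h)f(t_0+h) - C(t_0)f(t_0)}{h}
= C(t_0+h)\,\frac{f(t_0+h)-f(t_0)}{h} + \frac{C(t_0+h)-C(t_0)}{h}\,f(t_0),
\end{align*}
and analyze each term separately as $h \searrow 0$. Since $f(t_0) \in Y_{t_0}$, the second term converges to $\partial_+ C(t_0)f(t_0)$ directly by the right differentiability assumption on $t \mapsto C(t)y$ at $y = f(t_0)$. So the real work is in the first term, which will split into $C(t_0+h)\,\partial_+ f(t_0)$ plus a remainder $C(t_0+h)\bigl[\frac{f(t_0+h)-f(t_0)}{h} - \partial_+ f(t_0)\bigr]$; the remainder tends to $0$ because $\|C(t)\|$ is bounded on the neighbourhood $J_{t_0}$ and the bracket tends to $0$ by the right differentiability of $f$ at $t_0$.

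The main obstacle is therefore to show that $C(t_0+h)\,\partial_+ f(t_0) \to C(t_0)\,\partial_+ f(t_0)$ as $h \searrow 0$, since $\partial_+ f(t_0)$ need not lie in $Y_{t_0}$. I would first observe that right differentiability of $t \mapsto C(t)y$ at $t_0$ implies right continuity at $t_0$, so $C(t_0+h)y \to C(t_0)y$ as $h \searrow 0$ for every $y \in Y_{t_0}$. Combined with the uniform bound $\sup_{t \in J_{t_0}}\|C(t)\| =: M < \infty$ and the density of $Y_{t_0}$ in $X$, a routine three-$\varepsilon$ argument gives strong right-continuity of $C(\cdot)$ at $t_0$ on all of $X$: given $\varepsilon > 0$ and $x \in X$, pick $y \in Y_{t_0}$ with $\|x-y\| < \varepsilon/(2M + 2\|C(t_0)\|)$ and then $h$ small enough that $\|C(t_0+h)y - C(t_0)y\| < \varepsilon/2$, so $\|C(t_0+h)x - C(t_0)x\| < \varepsilon$.

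Applying this strong right-continuity to the vector $\partial_+ f(t_0) \in X$ yields $C(t_0+h)\,\partial_+ f(t_0) \to C(t_0)\,\partial_+ f(t_0)$. Putting the three pieces together,
\begin{align*}
\lim_{h \searrow 0} \frac{C(t_0+h)f(t_0+h) - C(t_0)f(t_0)}{h}
= C(t_0)\,\partial_+ f(t_0) + \partial_+ C(t_0)\,f(t_0),
\end{align*}
which is the claimed right derivative. No subtle issue beyond the density/uniform-boundedness argument arises; the proof is essentially one display plus this three-$\varepsilon$ step.
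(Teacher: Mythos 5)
Your proof is correct and takes essentially the same approach as the paper: the same add-and-subtract decomposition of the difference quotient, with the same key step of upgrading pointwise right-continuity of $t \mapsto C(t)y$ on the dense subspace $Y_{t_0}$ to SOT right-continuity on all of $X$ via the uniform bound $\sup_{t \in J_{t_0}}\norm{C(t)} < \infty$ and a density/three-$\varepsilon$ argument. (The paper states this last step as "we easily get that $C(t_0+h) \to C(t_0)$ w.r.t.~SOT"; you spell it out.)
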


\begin{proof}
We have 
\begin{align*}
&\frac{C(t_0 + h)f(t_0+h) - C(t_0)f(t_0)}{h} \\
&\qquad \qquad = C(t_0+h) \frac{f(t_0+h)-f(t_0)}{h} - \frac{C(t_0+h)f(t_0) - C(t_0)f(t_0)}{h}
\end{align*}
for positive and sufficiently small $h$. Since $\sup_{t \in J_{t_0}} \norm{ C(t) } < \infty$ and $Y_{t_0}$ is dense, we easily get that $C(t_0+h) \longrightarrow C(t_0)$ as $h \searrow 0$ w.r.t.~SOT, and the desired conclusion follows. 
\end{proof}

We shall also need the following lemma on the relation between right differentiability and the class $W^{1,\infty}$. 
It will be used very often -- especially in Section~\ref{sect: adsätze für zeitabh domains} -- 
in conjunction 
with the lemma above: Lemma~\ref{lm: prodregel rechtsseit db} will yield right differentiability of a given product and Lemma~\ref{lm: rechtsseit db und W^{1,infty}}, which is a variant of Corollary~2.1.2 of~\cite{Pazy83}, will then yield an integral representation 
for this product.

\begin{lm} \label{lm: rechtsseit db und W^{1,infty}} 
Suppose $f:J \to X$ is a continuous, right differentiable map on a compact interval $J = [a,b]$ such that the right derivative $\partial_+ f: [a,b) \to X$ is bounded. Then $f$ is in $W^{1,\infty}(J,X)$ and
\begin{align*}
f(t) = f(t_0) + \int_{t_0}^t \partial_+ f(\tau) \, d\tau
\end{align*}
for all $t_0, t \in J$. In particular, if $\partial_+ f$ is even continuous and continuously extendable to the right endpoint $b$, then $f$ is continuously differentiable.
\end{lm}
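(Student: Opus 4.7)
The strategy is to scalarize the problem via the Hahn--Banach theorem, apply classical real-variable results, and then lift back to the Banach space setting. The main obstacle is not difficulty but carefulness: one must verify Bochner measurability of $\partial_+ f$ (which only a priori exists pointwise) before an integral representation can even be formulated.

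First I would show that $f$ is Lipschitz with constant $L := \sup_{\tau \in [a,b)} \norm{\partial_+ f(\tau)}$. Fix $x^* \in X^*$ with $\norm{x^*} \le 1$ and set $g_{x^*}(t) := x^*(f(t))$. Then $g_{x^*}$ is real-valued, continuous, and right-differentiable with $\partial_+ g_{x^*}(t) = x^*(\partial_+ f(t))$ bounded in absolute value by $L$. The classical mean value theorem for continuous functions with bounded upper right Dini derivative then gives $|g_{x^*}(t) - g_{x^*}(s)| \le L \, |t-s|$ for all $s, t \in J$; taking the supremum over $\norm{x^*} \le 1$ yields $\norm{f(t)-f(s)} \le L \, |t-s|$.

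Next I would address measurability. Since $f$ is continuous on the compact interval $J$, the image $f(J)$ is separable, and for each small $h>0$ the difference quotient $g_h(t) := (f(t+h) - f(t))/h$ is continuous in $t$ and takes values in the separable space $\overline{\spn} \, f(J)$, hence is strongly measurable. Because $g_h(t) \to \partial_+ f(t)$ pointwise as $h \searrow 0$ for every $t \in [a,b)$, the limit $\partial_+ f$ is strongly measurable on $[a,b)$; the behaviour at $b$ is irrelevant for integration. Combined with the uniform bound $\norm{\partial_+ f(\tau)} \le L$, this shows $\partial_+ f$ is Bochner integrable on every subinterval.

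Now I would establish the integral representation. By step one, $g_{x^*} = x^* \circ f$ is Lipschitz on $J$, hence absolutely continuous, and its classical derivative coincides almost everywhere with $\partial_+ g_{x^*}(\tau) = x^*(\partial_+ f(\tau))$. Therefore
\begin{align*}
x^*(f(t)) - x^*(f(t_0)) = \int_{t_0}^t x^*(\partial_+ f(\tau)) \, d\tau = x^*\left( \int_{t_0}^t \partial_+ f(\tau) \, d\tau \right)
\end{align*}
for every $t_0, t \in J$, where the last equality uses that bounded linear functionals commute with the Bochner integral. Since this identity holds for every $x^* \in X^*$, the Hahn--Banach theorem gives the claimed formula $f(t) = f(t_0) + \int_{t_0}^t \partial_+ f(\tau) \, d\tau$. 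From this representation together with the characterization of $W^{1,\infty}(J,X)$ by indefinite integrals recalled in Section~2.1, one concludes $f \in W^{1,\infty}(J, X)$ with weak derivative $\partial f$ equal to $\partial_+ f$ almost everywhere. Finally, for the ``In particular'' clause: if $\partial_+ f$ extends continuously to $[a,b]$, the integral representation combined with the fundamental theorem of calculus for continuous integrands shows that $f$ is differentiable everywhere on $J$ with derivative equal to the continuous extension, hence $f \in C^1(J,X)$.
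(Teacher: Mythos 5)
Your proof is correct, and it takes a genuinely different route from the one in the paper. You scalarize: you apply Hahn--Banach to reduce to real-valued functions $g_{x^*} = x^* \circ f$, invoke the classical mean value inequality for Dini derivatives to get Lipschitz continuity, use the fundamental theorem of calculus for absolutely continuous real functions to obtain the integral identity in the scalar setting, and then lift back to $X$ by separating points with functionals; membership in $W^{1,\infty}(J,X)$ then follows from the characterization via indefinite integrals. The paper instead proceeds directly in the weak formulation: it tests $f$ against $\varphi \in C_c^\infty((a,b),\C)$, replaces $\varphi'$ by difference quotients, performs a discrete summation-by-parts to transfer the difference quotient onto $f$, and passes to the limit by dominated convergence; the dominating bound is exactly the vector-valued mean value estimate (cited from Kato, Lemma III.1.36) that you re-derive from scratch in your first step. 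The two proofs also reverse the logical order -- the paper establishes $f \in W^{1,\infty}$ first and deduces the integral representation afterwards, while you establish the integral representation first and deduce $W^{1,\infty}$-membership from it. Your route leans more heavily on classical one-variable real analysis and is arguably more self-contained; the paper's route stays in the vector-valued setting throughout and is slicker once one has the mean value lemma available. Both approaches correctly handle the strong-measurability issue (separably-valued pointwise limit of continuous difference quotients), and both are sound.
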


\begin{proof}
Since $\partial_+f$ is measurable (as the pointwise limit of a sequence of difference quotients) and $\partial_+ f$ is bounded, we have only to show that 
\begin{align*}
\int_{(a,b)} f(t) \varphi'(t) \, dt = - \int_{(a,b)} \partial_+ f(t) \varphi(t) \, dt \text{ \,\, for all } \varphi \in C_c^{\infty}((a,b), \C)
\end{align*}
in order to get $f \in W^{1,\infty}(J,X)$ (from which, in turn, the asserted integral representation follows by the continuity of $f$). 
So, let $\varphi \in C_c^{\infty}((a,b), \C)$ and denote by $\tilde{\varphi}$ and $\tilde{f}$ the zero extension of $\varphi$ and $f$ to the whole real line. Then 
\begin{align*}
\int_{(a,b)} f(t) \varphi'(t) \, dt &= \lim_{h \searrow 0} \int_{\R} \tilde{f}(t) \frac{ \tilde{\varphi}(t-h) - \tilde{\varphi}(t) }{-h} \, dt \\
&= - \lim_{h \searrow 0} \int_{\R} \frac{ \tilde{f}(t+h) - \tilde{f}(t) }{h} \tilde{\varphi}(t) \, dt 
= \int_{(a,b)} \partial_+ f(t) \varphi(t) \, dt,
\end{align*}
since $\supp \varphi \subset [a+\delta, b-\delta]$ for some $\delta > 0$ and since
\begin{align*}
\norm{ \frac{ f(t+h) - f(t) }{h} \varphi(t) } \le \sup_{\tau \in (a,b)} \norm{ \partial_+ f(\tau) } \norm{\varphi}_{\infty} < \infty
\end{align*}
for all $t \in [a+\delta, b-\delta]$ and $h \in (0,\delta)$ (which mean value estimate can be derived from the continuity and right differentiability of $f$ in a similar way as Lemma~III.1.36 of~\cite{KatoPerturbation80}). 
\end{proof}

\subsection{Well-posedness and evolution systems}


We briefly recall the fundamental notion of evolution systems for $A$ on $D(A(t))$ where $A$ is a family of linear operators. (See Definition~VI.9.2 in~\cite{EngelNagel} which also covers the more general case of evolution systems for $A$ on certain subspaces $Y_t$ of $D(A(t))$. We will only work with the special case $Y_t = D(A(t))$ since it is only in this case that an adiabatic theory with reasonably practical assumptions can be developed.) 
Suppose that $A(t): D(A(t)) \subset X \to X$ is a linear operator for every $t \in J = [a,b]$ and that $U(t,s)$ is a bounded linear operator in $X$ for every $(s,t) \in \Delta_J := \{ (s,t) \in J^2: s \le t \}$. 
Then $U$ is called an \emph{evolution system for $A$ on (the spaces) $D(A(t))$} if and only if the following holds true: 
\begin{itemize}
\item [(i)] $[s,b] \ni t \mapsto U(t,s)y$ is a continuously differentiable solution to the initial value problem 
$x' = A(t)x, \; x(s) = y$
for $y \in D(A(s))$ and $s \in [a,b)$ 
\item[(ii)] $U(t,s) U(s,r) = U(t,r)$ for all $(r,s), (s,t) \in \Delta_J$ 
and $\Delta_J \ni (s,t) \mapsto U(t,s)x$ is continuous for all $x \in X$.
\end{itemize}

We refer to~\cite{EngelNagel} (Definition~VI.9.1) for the definition of \emph{well-posedness of the initial value problems corresponding to $A$ on $D(A(t))$}. 
It is well-known that there is a one-to-one correspondence between well-posedness and evolution systems (Proposition~VI.9.3 of~\cite{EngelNagel}): if $A$ is a family of linear operators $A(t): D(A(t)) \subset X \to X$, then the initial value problems corresponding to $A$ are well-posed on $D(A(t))$ if and only if there is an evolution system $U$ for $A$ on $D(A(t))$.
In particular, it follows (by the uniqueness requirement in the definition of well-posedness) that if there is any evolution system for a given family $A$ on $D(A(t))$, then it is already unique.
In order to see the above-mentioned correspondence one essentially has only to combine Lemma~\ref{lm: prodregel rechtsseit db} and Lemma~\ref{lm: rechtsseit db und W^{1,infty}} with the following simple lemma, which will always be used when the difference of two evolution systems has to be dealt with. 

\begin{lm} \label{lm: zeitentw rechtsseit db}
Suppose $A(t): D(A(t)) \subset X \to X$ for every $t \in J = [a,b]$ is a densely defined linear map and suppose there is an evolution system $U$ for $A$ on $D(A(t))$. 
Then, for every $s_0 \in [a,t)$ and every $x_0 \in D(A(s_0))$, the map $[a,t] \ni s \mapsto U(t,s)x_0$ is right differentiable at $s_0$ with right derivative $-U(t,s_0) A(s_0) x_0$. In particular, if $D(A(t)) = D$ for all $t \in J$ and $s \mapsto A(s)x$ is continuous for all $x \in D$, then $[a,t] \ni s \mapsto U(t,s)x$ is continuously differentiable 
for all $x \in D$. 
\end{lm}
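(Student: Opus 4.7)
The plan is to reduce the right difference quotient for $s\mapsto U(t,s)x_0$ at $s_0$ to the \emph{left} initial time of the evolution, via the cocycle property, and then pass to the limit using a uniform boundedness argument.

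First I would fix $s_0 \in [a,t)$ and $x_0 \in D(A(s_0))$, and take $h>0$ small enough that $s_0+h \le t$. The cocycle identity $U(t,s_0) = U(t,s_0+h)\,U(s_0+h,s_0)$ from (ii) yields
\begin{align*}
\frac{U(t,s_0+h)x_0 - U(t,s_0)x_0}{h} = -\,U(t,s_0+h)\,\frac{U(s_0+h,s_0)x_0 - x_0}{h}.
\end{align*}
By condition (i) applied to $x_0 \in D(A(s_0))$, the map $r\mapsto U(r,s_0)x_0$ is continuously differentiable on $[s_0,b]$ with derivative $A(r)U(r,s_0)x_0$, so the quotient in the display converges to $-A(s_0)x_0$ as $h\searrow 0$.

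Second, I would show that $U(t,s_0+h)v_h \to U(t,s_0)(-A(s_0)x_0)$ whenever $v_h \to -A(s_0)x_0$. The joint continuity stated in (ii) implies that for every fixed $x\in X$ the set $\{U(t,s)x:(s,t)\in\Delta_J\}$ is the continuous image of the compact set $\Delta_J$, hence bounded; Banach--Steinhaus then gives $M := \sup_{(s,t)\in\Delta_J}\norm{U(t,s)} < \infty$. Splitting
\begin{align*}
U(t,s_0+h)v_h - U(t,s_0)(-A(s_0)x_0) &= U(t,s_0+h)\bigl(v_h + A(s_0)x_0\bigr) \\
&\quad + \bigl(U(t,s_0+h) - U(t,s_0)\bigr)(-A(s_0)x_0),
\end{align*}
the first term vanishes by the bound $M$ and $v_h+A(s_0)x_0\to 0$, and the second vanishes by the joint continuity in (ii). This establishes right differentiability with right derivative $-U(t,s_0)A(s_0)x_0$.

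For the \emph{in particular} clause, under the extra hypothesis $D(A(s))=D$ for all $s\in J$ and continuity of $s\mapsto A(s)x$ for $x\in D$, the right derivative on $[a,t)$ equals $s\mapsto -U(t,s)A(s)x$. Applying the same splitting trick $U(t,s_n)A(s_n)x - U(t,s)A(s)x = U(t,s_n)(A(s_n)x - A(s)x) + (U(t,s_n)-U(t,s))A(s)x$ together with the bound $M$ and the continuity properties of $U$ and $A(\cdot)x$, one sees that this right derivative is continuous on $[a,t)$ and extends continuously to $s=t$ with value $-A(t)x$. Since $s\mapsto U(t,s)x$ is itself continuous on $[a,t]$ by (ii), Lemma~\ref{lm: rechtsseit db und W^{1,infty}} upgrades right differentiability to continuous differentiability. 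The only subtle point in the argument is the product $U(t,s_0+h)v_h$ where both factors vary with $h$; this is precisely where the uniform bound from Banach--Steinhaus is essential.
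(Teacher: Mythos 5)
Your proof is correct and follows essentially the same route as the paper's: the cocycle identity to reduce to a right difference quotient at the initial time, SOT-continuity of $U$ (together with the uniform bound from Banach--Steinhaus on the compact set $\Delta_J$) to pass to the limit in the product, and Lemma~\ref{lm: rechtsseit db und W^{1,infty}} to upgrade right differentiability to continuous differentiability. The paper simply compresses these steps into one display, while you spell out the splitting argument that justifies the limit of the product with both factors depending on $h$.
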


\begin{proof}
Since $U(t,s)U(s,r) = U(t,r)$ for $(r,s), (s,t) \in \Delta_J$ and since $\Delta_J \ni (s,t) \mapsto U(t,s)$ is SOT-continuous, we obtain for every $s_0 \in [a,t)$ and $x_0 \in D(A(s_0))$ that
\begin{align*}
\frac{ U(t,s_0+h)x_0 - U(t,s_0)x_0 }{h} = -U(t,s_0+h) \frac{ U(s_0+h,s_0)x_0 - x_0 }{h} 
\longrightarrow -U(t,s_0)A(s_0)x_0 
\end{align*}
as $h \searrow 0$ from which the assertions follow (remember Lemma~\ref{lm: rechtsseit db und W^{1,infty}}).
\end{proof}

We also briefly recall the notion of $(M,\omega)$-stability from~\cite{Kato70}: a family $A$ of linear operators $A(t): D(A(t)) \subset X \to X$ (where $t \in J$) is called \emph{$(M,\omega)$-stable} (for some $M \in [1,\infty)$ and $\omega \in \R$) if and only if $A(t)$ generates a strongly continuous semigroup on $X$ for every $t \in J$ and 
\begin{align*}
\norm{   e^{A(t_n) s_n}  \, \dotsm \, e^{A(t_1) s_1}   } \le M e^{\omega (s_1 + \, \dotsb \, + s_n) } 
\end{align*} 
for all $s_1, \dots, s_n \in [0,\infty)$ and all $t_1, \dots, t_n \in J$ satisfying $t_1 \le \dotsb \le t_n$ (with arbitrary~$n \in \N$). 
Clearly, a family $A$ of linear operators in $X$ is $(1,0)$-stable if and only if each member $A(t)$ of the family generates a contraction semigroup on $X$. It should be remarked 
that there are very simple examples -- relevant to adiabatic theory -- of $(M,0)$-stable families that fail to be $(1,0)$-stable (Example~\ref{ex: A nur W^{1,infty}-reg und nur (M,0)-stabil}).
When it comes to estimating perturbed evolution systems in Section~\ref{sect: adsätze mit sl} and~\ref{sect: adsätze ohne sl}, 
the following important fact (well-known from~\cite{Kato70}) will always -- and tacitly -- be used: 
if $A$ is an $(M,\omega)$-stable family of linear operators $A(t): D(A(t)) \subset X \to X$ for $t \in J$, $B(t)$ is a bounded operator in $X$ for $t \in J$ and $b := \sup_{t \in J} \norm{ B(t) }$ is finite, then $A + B$ is $(M, \omega + M b)$-stable. 
In our examples the following lemma will be important. 

\begin{lm} \label{lm: (M,w)-stabilität und ähnl.trf.}
Suppose $A_0$ is an $(M_0,\omega_0)$-stable family of 
operators $A_0(t): D(A_0(t)) \subset X \to X$ for $t \in J$ and $R(t): X \to X$ for every $t \in J$ is a bijective bounded operator such that $t \mapsto R(t)$ is in $W^{1,\infty}_*(J, L(X))$. Then the family $A$ with $A(t) := R(t)^{-1} A_0(t) R(t)$ is $(M,\omega)$-stable for some $M \in [1,\infty)$ and $\omega = \omega_0$.
\end{lm}

\begin{proof}
We may assume that $\omega_0 = 0$, since $(\tilde{M},\tilde{\omega})$-stability of a family $\tilde{A}$ is equivalent to the $(\tilde{M},0)$-stability of $\tilde{A}-\tilde{\omega}$.
Set $\norm{x}_t := d \, e^{-M_0 c t} \, \norm{ R(t) x }_{0 \, t}$ for $x \in X$ and $t \in J$, where $c := \esssup_{(s,t) \in J^2} \norm{ R'(t) R(s)^{-1} }$ 
and $d := \sup_{t \in J}  e^{M_0 c t} \norm{ R(t)^{-1} }$   
and the $\norm{\,.\,}_{0 \, t}$ are norms on $X$ associated with $A_0$ according to Proposition~1.3 of~\cite{Nickel00}. Then -- as can be gathered from the proof of Theorem~4.2 in~\cite{Kisynski63} -- 
the norms $\norm{\,.\,}_t$ satisfy the conditions (a), (b), (c) of Proposition~1.3 in~\cite{Nickel00} for the family $A$ with a certain $M \in [1,\infty)$ 
and therefore $A$ is $(M,0)$-stable, as desired.
\end{proof}

We now turn to sufficient conditions for the well-posedness of the initial value problems corresponding to a given family $A$ of linear operators $A(t)$. 
We will make very much use of the following condition.

\begin{cond} \label{cond: reg 1}
$A(t): D \subset X \to X$ for every $t \in I$ is a densely defined closed 
linear map such that $A$ is $(M,\omega)$-stable for some $M \in [1, \infty)$ and $\omega \in \R$ and such that $t \mapsto A(t)$ is in $W^{1,\infty}_*(I,L(Y,X))$, where $Y$ is the space $D$ endowed with the graph norm of $A(0)$. 
\end{cond}

As was noted (in Proposition~\ref{prop: WOT-stet db impl W^{1,infty}-reg}) above, the $W^{1,\infty}_*$-regularity requirement of Condition~\ref{cond: reg 1} is fulfilled if, for instance, $t \mapsto A(t)$ is SOT- or WOT-continuously differentiable. 
It follows from a famous theorem of Kato (Theorem~1 of~\cite{Kato73}) -- and, for seperable spaces, is explicitly remarked in Section~1 of~\cite{Kato85} 
-- that Condition~\ref{cond: reg 1} guarantees well-posedness. 

\begin{thm}[Kato] \label{thm: Kato85}
 Suppose $A(t): D \subset X \to X$ for every $t \in I$ is a linear map such that Condition~\ref{cond: reg 1} is satisfied. Then there is a unique evolution system $U$ for $A$ on $D$, 
and the following estimate holds true: 
\begin{align*}
\norm{U(t,s)} \le M e^{\omega (t-s)} \,\, \text{for all } (s,t) \in \Delta. 
\end{align*}
\end{thm}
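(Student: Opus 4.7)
The plan is to follow Kato's approximation strategy: approximate $A$ by piecewise constant families $A_n$ on finer and finer partitions of $I$, build $U_n$ from the corresponding semigroups, and pass to the limit. Uniqueness will be handled separately by a Duhamel-type argument.

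First I would fix partitions $0 = t_0^n < t_1^n < \dots < t_{k_n}^n = 1$ of mesh tending to $0$ and, picking a $W^{1,\infty}_*$-derivative $A'$ of $A$, define $A_n(t) := A(t_k^n)$ for $t \in [t_k^n, t_{k+1}^n)$. Since each $A(t_k^n)$ is the generator of a strongly continuous semigroup, I define $U_n(t,s)$ by concatenation: on a single interval $[t_k^n, t_{k+1}^n]$ it is $e^{(t-s) A(t_k^n)}$, and across several intervals the semigroups are composed in the obvious order. The $(M,\omega)$-stability of $A$ then gives $\|U_n(t,s)\| \le M e^{\omega(t-s)}$ uniformly in $n$ and $(s,t) \in \Delta$; this is the \emph{only} place where stability enters.

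The next step, and the main technical point, is to show that $U_n(t,s)y$ is Cauchy in $X$ for each $y \in D$. For this I would establish a uniform $Y$-bound of the form $\|U_n(t,s)y\|_Y \le C \|y\|_Y$ (with $Y := (D, \|\cdot\|_{A(0)})$); this follows from a Gronwall argument applied to $\|A(t_k^n) U_n(t,s)y\|$ using that $t \mapsto A(t)$ has a bounded $W^{1,\infty}_*$-derivative into $L(Y,X)$, so that $A(t_{k+1}^n) - A(t_k^n)$ is controlled in $L(Y,X)$ by the mesh times a constant. Granted this $Y$-bound, I would use the Duhamel-type identity
\begin{align*}
U_n(t,s) y - U_m(t,s) y = \int_s^t U_n(t,r) \bigl( A_n(r) - A_m(r) \bigr) U_m(r,s) y \, dr,
\end{align*}
which is justified by Lemma~\ref{lm: prodregel rechtsseit db} and Lemma~\ref{lm: rechtsseit db und W^{1,infty}} applied piecewise on the partition points, since $r \mapsto U_m(r,s)y$ is right differentiable with right derivative $A_m(r) U_m(r,s) y$ and lands in $D$ with controlled $Y$-norm. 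The integrand is bounded in $X$ by $M e^{\omega(t-s)}$ times $\|A_n(r) - A_m(r)\|_{L(Y,X)} \|U_m(r,s)y\|_Y$, and by $W^{1,\infty}_*$-regularity $\|A_n - A_m\|_{L(Y,X)} \to 0$ in $L^1(I)$ as $n,m \to \infty$. Hence $U_n(t,s)y$ is Cauchy for $y \in D$, and the uniform bound $M e^{\omega(t-s)}$ together with the density of $D$ lets me define $U(t,s) \in L(X)$ as the SOT-limit of $U_n(t,s)$; the bound $\|U(t,s)\| \le M e^{\omega(t-s)}$ is inherited.

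It remains to check the evolution system properties. The cocycle identity $U(t,s)U(s,r) = U(t,r)$ and joint strong continuity on $\Delta$ pass from $U_n$ to $U$ using the uniform bound and density. For $y \in D$ I need $t \mapsto U(t,s)y$ to be a $C^1$-solution of $x' = A(t)x$. Using the uniform $Y$-bound one gets $U(t,s)y \in D$ with $\|U(t,s)y\|_Y \le C\|y\|_Y$, and from the integrated identity $U_n(t,s)y = y + \int_s^t A_n(r) U_n(r,s)y \, dr$ together with the strong convergence $A_n(r) U_n(r,s)y \to A(r) U(r,s)y$ (which uses the $W^{1,\infty}_*$-regularity of $A$ to compare $A_n(r)$ with $A(r)$ on the uniformly $Y$-bounded orbit) one obtains $U(t,s)y = y + \int_s^t A(r) U(r,s)y \, dr$, giving the required differentiability. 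Uniqueness follows by a standard argument: if $V$ is another evolution system, then $r \mapsto U(t,r) V(r,s) y$ for $y \in D(A(s))$ is continuous on $[s,t]$ and right differentiable with right derivative $0$ by Lemma~\ref{lm: prodregel rechtsseit db} and Lemma~\ref{lm: zeitentw rechtsseit db}, hence is constant by Lemma~\ref{lm: rechtsseit db und W^{1,infty}}, forcing $V(t,s)y = U(t,s)y$ and then $V = U$ by density. The main obstacle is the uniform $Y$-boundedness of $U_n(t,s)$; this is where both the $(M,\omega)$-stability and the $W^{1,\infty}_*$-regularity of $t \mapsto A(t) \in L(Y,X)$ are genuinely needed, and everything else is a variant of standard Duhamel manipulations.
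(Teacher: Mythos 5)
The paper does not prove Theorem~\ref{thm: Kato85} itself; immediately above the statement it cites Theorem~1 of~\cite{Kato73} together with the remark from~\cite{Kato85} that in the constant-domain, $W^{1,\infty}_*$-regular setting the hypotheses of that theorem are met. Your attempt is therefore being read as a self-contained proof of Kato's result, and its architecture -- piecewise constant approximants $A_n$, concatenated semigroups $U_n$, the $X$-bound from $(M,\omega)$-stability, a Gronwall bound in $Y$-norm, and a Duhamel comparison for Cauchyness in $X$ -- is precisely Kato's classical scheme.

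There is, however, a genuine gap in the passage to the limit. You establish Cauchyness of $U_n(t,s)y$ only in $X$, and the uniform $Y$-bound then shows $U(t,s)y \in D$ with $\|U(t,s)y\|_Y \le C\|y\|_Y$. But to derive the integral equation $U(t,s)y = y + \int_s^t A(r)U(r,s)y\,dr$ -- and even more so the continuity of $r \mapsto A(r)U(r,s)y$ which the definition of an evolution system (continuously differentiable solutions) requires -- you need convergence of $U_n(r,s)y$ to $U(r,s)y$ in $Y$-norm, not merely in $X$. In the decomposition
\begin{align*}
A_n(r)U_n(r,s)y - A(r)U(r,s)y = \bigl(A_n(r)-A(r)\bigr)U_n(r,s)y + A(r)\bigl(U_n(r,s)y-U(r,s)y\bigr)
\end{align*}
the first summand vanishes uniformly by the $Y$-bound and the Lipschitz continuity of $t \mapsto A(t) \in L(Y,X)$, as you say; but the second applies the unbounded operator $A(r)$ to a quantity that you have only shown to tend to zero in $X$, and whose $Y$-norm is merely bounded -- boundedness plus $X$-convergence do not give convergence. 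This is in fact the harder half of Kato's theorem. The standard repair is to show that $Y$ is $A(t)$-admissible and that the family of parts $A(t)|_Y$ is itself stable in $Y$; in the constant-domain setting of Condition~\ref{cond: reg 1} this follows by conjugating with $S(t) := A(t) - z_0 \in L(Y,X)$ (for $\Re z_0 > \omega$), observing that $S(t)A(t)S(t)^{-1} = A(t)$, and using Lemma~\ref{lm: prod- und inversenregel}~(ii) to see that $t \mapsto S(t)^{-1}$ lies in $W^{1,\infty}_*(I,L(X,Y))$. One then runs the Duhamel comparison in $Y$-norm on a dense set of initial data and extends by density, obtaining the $Y$-convergence and hence the $C^1$-regularity of the orbits. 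Your Gronwall step (which implicitly uses the uniform equivalence of the graph norms $\|\cdot\|_{A(t)}$, another point worth spelling out) and your uniqueness argument are fine in spirit, but without the admissibility / stability-of-the-parts ingredient the limiting object is not yet known to be an evolution system in the sense the paper uses.
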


Condition~\ref{cond: reg 1} does not only guarantee well-posedness, but it is also essentially everything we have to require of $A$ in the adiabatic theorems of Section~\ref{sect: adsätze mit sl} and~\ref{sect: adsätze ohne sl} for time-independent domains: indeed, we have only to add the requirement that $\omega = 0$ to Condition~\ref{cond: reg 1} to arrive at the hypotheses on $A$ of these theorems. 
In most adiabatic theorems in the literature -- for example those of ~\cite{AvronSeilerYaffe87}, \cite{AvronElgart99}, \cite{Teufel01}, \cite{Teufel03}, \cite{AbouSalemFröhlich05}, 
\cite{AbouSalem07} or~\cite{AvronGraf11} -- by contrast, the hypotheses on $A$ rest upon Yosida's theorem (Theorem~XIV.4.1 of~\cite{Yosida80}), which is reproduced, for instance, in Reed and Simon's book (Theorem~X.70 of~\cite{ReedSimon}) or Blank, Exner and Havl\'{i}\v{c}ek's book (Theorem~9.5.3 of~\cite{BlankExnerHavlicek}): in these adiabatic theorems it is required of $A$ that each $A(t)$ generate a contraction semigroup on $X$ and that an appropriate translate $A-z_0$ of $A$ satisfy the rather 
involved hypotheses of Yosida's theorem (or -- for example in the case of~\cite{AvronSeilerYaffe87} or~\cite{AvronGraf11} -- more convenient strengthenings thereof). 
It is shown in~\cite{evol} that this is the case 
if and only if $A(t)-z_0$, for every $t \in I$, is a boundedly invertible generator of a contraction semigroup on $X$ and 
\begin{align*}
t \mapsto A(t)x \text{ is continuously differentiable for all } x \in D.
\end{align*}
In particular, it follows 
that the hypotheses on $A$ of the adiabatic theorems of the present paper are more general than the respective hypotheses of the previously known adiabatic theorems -- and, of course, they are also striclty more general (which is demonstrated by the examples of Section~3 and~4). 
\smallskip

A trivial -- but nonetheless useful -- consequence of Theorem~\ref{thm: Kato85} is the following corollary establishing well-posedness for families $A$ where $A(t) = R(t)^{-1} A_0(t) R(t)$ and $D(A_0(t))$ is time-independent. A non-trivial sufficient condition for well-posedness in the case of time-dependent domains is furnished, for instance, 
by~\cite{Kisynski63} (which will be exploited in Section~\ref{sect: adsätze für A(t) von sesquilinearformen}) or by~\cite{Tanabe60}, \cite{KatoTanabe62},  \cite{FujieTanabe73}, \cite{AcquistapaceTerreni87}.

\begin{cor} \label{cor: Kato85}
Suppose $A_0$ is a family of linear maps $A_0(t): D \subset X \to X$ that satisfies Condition~\ref{cond: reg 1} and let $A(t) := R(t)^{-1} A_0(t) R(t)$ for $t \in I$, where $t \mapsto R(t)$ is in $W^{2,\infty}_*(I,L(X))$ and $R(t)$ is bijective onto $X$ for every $t \in I$. 
Then there is a unique evolution system $U$ for $A$ on $D(A(t))$. 
\end{cor}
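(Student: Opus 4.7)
The idea is to conjugate away the time-dependence of the domain so that Theorem~\ref{thm: Kato85} applies, and then transport the resulting evolution system back. Heuristically, if $w(t) = U(t,s)y$ were to satisfy $w'(t) = A(t) w(t) = R(t)^{-1} A_0(t) R(t) w(t)$ with $w(s) = y \in R(s)^{-1}D$, then $v(t) := R(t) w(t)$ would satisfy
\begin{align*}
v'(t) = \bigl( A_0(t) + R'(t) R(t)^{-1} \bigr) v(t) =: B(t) v(t), \qquad v(s) = R(s) y \in D,
\end{align*}
with $D(B(t)) = D$ time-independent. So the strategy is: build the evolution system $V$ for $B$ via Theorem~\ref{thm: Kato85}, then define $U(t,s) := R(t)^{-1} V(t,s) R(s)$ and verify it is the evolution system for $A$ on $R(t)^{-1}D$.

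\emph{Step 1: $B$ satisfies Condition~\ref{cond: reg 1}.} Since $R$ lies in $W^{2,\infty}_*(I,L(X))$, its derivative $R'$ lies in $W^{1,\infty}_*(I,L(X))$ and is therefore norm-Lipschitz. By Lemma~\ref{lm: prod- und inversenregel}(ii), $R^{-1}$ belongs to $W^{1,\infty}_*(I,L(X))$, and by Lemma~\ref{lm: prod- und inversenregel}(i) so does $R' R^{-1}$, with a $W^{1,\infty}_*$-derivative that can be read off by the product rule. In particular $R' R^{-1}$ is uniformly bounded, so by the Kato perturbation fact quoted after Lemma~\ref{lm: (M,w)-stabilit�t und �hnl.trf.} the family $B = A_0 + R' R^{-1}$ is $(M_0, \omega_0 + M_0 \|R' R^{-1}\|_\infty)$-stable, and as an $L(Y,X)$-valued map $B$ is still in $W^{1,\infty}_*(I, L(Y,X))$ (since the continuous embedding $Y \hookrightarrow X$ makes the bounded summand $R'R^{-1}$ trivially $W^{1,\infty}_*$-regular into $L(Y,X)$). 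Hence Theorem~\ref{thm: Kato85} yields a unique evolution system $V$ for $B$ on $D$.

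\emph{Step 2: $U(t,s) := R(t)^{-1} V(t,s) R(s)$ is an evolution system for $A$ on $D(A(t)) = R(t)^{-1}D$.} Boundedness, the cocycle identity $U(t,s)U(s,r) = U(t,r)$, and joint strong continuity of $(s,t) \mapsto U(t,s)x$ are immediate from the corresponding properties of $V$ together with the norm-continuity of $R$ and $R^{-1}$. For the differentiability, fix $y \in R(s)^{-1}D$, set $x_0 := R(s) y \in D$, and note that $t \mapsto V(t,s) x_0$ is classically $C^1$ with derivative $B(t) V(t,s) x_0$. Moreover, $t \mapsto R(t)^{-1}$ is \emph{strongly} $C^1$: by Lemma~\ref{lm: prod- und inversenregel}(ii), its $W^{1,\infty}_*$-derivative $-R(t)^{-1} R'(t) R(t)^{-1}$ is norm-continuous (since $R$, $R^{-1}$, $R'$ are), and then Lemma~\ref{lm: rechtsseit db und W^{1,infty}} (applied to $t \mapsto R(t)^{-1}x$ for each $x$) upgrades this to a classical strong derivative. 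Applying the product rule Lemma~\ref{lm: prodregel rechtsseit db} with $C(t) := R(t)^{-1}$, $Y_{t_0} := X$, and $f(t) := V(t,s) x_0$ yields
\begin{align*}
\partial_+ U(t,s) y &= -R(t)^{-1} R'(t) R(t)^{-1} V(t,s) x_0 + R(t)^{-1} B(t) V(t,s) x_0 \\
&= R(t)^{-1} A_0(t) R(t) \, U(t,s) y = A(t) U(t,s) y,
\end{align*}
which is continuous in $t$. Lemma~\ref{lm: rechtsseit db und W^{1,infty}} then promotes this right derivative to a classical continuous derivative, and $U(\cdot,s) y$ solves $x' = A(t) x$, $x(s) = y$. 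Uniqueness of the evolution system is the general fact recalled in Section~2.2.

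\emph{Main obstacle.} The only real subtlety is the regularity bookkeeping that forces the $W^{2,\infty}_*$ (rather than merely $W^{1,\infty}_*$) assumption on $R$: one needs $R'$ to be norm-continuous so that $-R^{-1}R'R^{-1}$ is a genuine strong derivative of $R^{-1}$ (making Lemma~\ref{lm: prodregel rechtsseit db} applicable), and one needs $R'R^{-1}$ to be $W^{1,\infty}_*$-regular so that $B = A_0 + R'R^{-1}$ still satisfies Condition~\ref{cond: reg 1} and Theorem~\ref{thm: Kato85} is available.
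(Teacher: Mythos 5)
Your proof is correct and uses exactly the same strategy as the paper: pass to the conjugated family $A_0 + R'R^{-1}$ on the fixed domain $D$, apply Theorem~\ref{thm: Kato85}, and transport the resulting evolution system back via $U(t,s) = R(t)^{-1}\tilde U_0(t,s)R(s)$. The paper leaves the final verification (your Step~2) to the reader with ``as is easily verified''; your write-up just fills in those details.
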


\begin{proof}
Since $t \mapsto A_0(t) + R'(t) R(t)^{-1}$ is in $W^{1,\infty}_*(I,L(Y,X))$ by Lemma~\ref{lm: prod- und inversenregel} and since $A_0 + R' R^{-1}$ is $(M,\omega + M b)$-stable with $b := \sup_{t \in I} \norm{ R'(t) R(t)^{-1} }$, it follows from Theorem~\ref{thm: Kato85} that there is a unique evolution system $\tilde{U}_0$ for $A_0 + R' R^{-1}$ on $D$. Set $U(t,s) := R(t)^{-1} \tilde{U}_0(t,s) R(s)$ for $(s,t) \in \Delta$. Then $U$ is an evolution system for $A$ on $D(A(t))$, as is easily verified.
\end{proof}

In the adiabatic theorems with spectral gap condition of Section~\ref{sect: adsätze für zeitabh domains} (especially in the adiabatic theorem of higher order) the following well-expected perturbative proposition 
will be needed. It gives a perturbation series expansion for a perturbed evolution system if only this perturbed evolution exists. (See the classical example of Phillips (Example~6.4 of~\cite{Phillips53}) showing that the existence of the perturbed evolution really has to be required.)  

\begin{prop} \label{prop: störreihe für gestörte zeitentw}
Suppose that $A(t): D(A(t)) \subset X \to X$ is a densely defined linear map for every $t \in I$ and that $t \mapsto B(t) \in L(X)$ is WOT-continuous. Suppose further that there is an evolution system $U$ for $A$ on $D(A(t))$ and an evolution system $V$ for $A+B$ on $D(A(t))$. Then
\begin{itemize}
\item[(i)] $V(t,s) = \sum_{n=0}^{\infty} V_n(t,s)$, where $V_0(t,s) := U(t,s)$ and
\begin{align*}
V_{n+1}(t,s)x := \int_s^t U(t,\tau) B(\tau) V_n(\tau, s)x \, d\tau \text{ for } x \in X \text{ and } n \in \N \cup \{ 0 \}.
\end{align*}
\item[(ii)] If there are $M \in [1,\infty)$, $\omega \in \R$ such that $\norm{ U(t,s) } \le M e^{ \omega (t-s) }$ for $(s,t) \in \Delta$, then 
\begin{align*}
\norm{ V(t,s) } \le M e^{ (\omega + M b)(t-s) }
\end{align*}
for all $(s,t) \in \Delta$, where $b := \sup_{t \in I} \norm{ B(t)}$. 
And if, for every $(s,t) \in \Delta$, 
$U(t,s)$ is unitary  and $B(t)$ is skew symmetric, 
then $V(t,s)$ is unitary as well.
\end{itemize}
\end{prop}

\begin{proof}
Since weakly continuous maps on compact intervals are integrable (see the proof of Proposition~\ref{prop: WOT-stet db impl W^{1,infty}-reg}), it easily follows that the integrals defining the $V_n$ really exist 
and that $\tilde{V}(t,s) := \sum_{n = 0}^{\infty} V_n(t,s)$ exists uniformly in $(s,t) \in \Delta$. Also, it is easy to see -- applying Lemma~\ref{lm: prodregel rechtsseit db} and Lemma~\ref{lm: rechtsseit db und W^{1,infty}} to $[s,t] \ni \tau \mapsto U(t,\tau)V(\tau,s)x$ with $x \in D(A(s))$ -- that $V$ satisfies the same integral equation as $\tilde{V}$ from which assertion~(i) follows. Assertion~(ii) is a simple consequence of the series expansion in~(i).
\end{proof}

\subsection{Adiabatic evolutions and a trivial adiabatic theorem}

As has been explained in Section~1, the principal 
goal of adiabatic theory is to establish the convergence~\eqref{eq: aussage des adsatzes} or, in other words, to show that the evolution systems $U_{\eps}$ for $\frac 1 \eps A$ are, in some sense, approximately adiabatic w.r.t.~$P$ as $\eps \searrow 0$. We say that an evolution system for a family $A$ of linear operators $A(t): D(A(t)) \subset X \to X$ is \emph{adiabatic w.r.t.~a family $P$ of bounded projections $P(t)$ in $X$} if and only if $U(t,s)$ for every $(s,t) \in \Delta$ intertwines $P(s)$ with $P(t)$, more precisely: 
\begin{align*} 
P(t) U(t,s) = U(t,s) P(s) 
\end{align*}
for every $(s,t) \in \Delta$.
Since the pioneering work~\cite{Kato50} of Kato, the basic strategy in proving the convergence~\eqref{eq: aussage des adsatzes} has been to show that
\begin{align}
U_{\eps}(t)-V_{\eps}(t) \longrightarrow 0 \quad (\eps \searrow 0)
\end{align}
for every $t \in I$, where the $V_{\eps}$ are suitable comparison evolution systems that are adiabatic 
w.r.t.~the family $P$ of projections $P(t)$ 
related to the data $A$, $\sigma$. 
A simple 
way of obtaining 
adiabatic evolutions w.r.t.~some given family $P$ 
(independently observed by Kato in~\cite{Kato50} and Daleckii--Krein in~\cite{DaleckiiKrein50}) is described in the following important proposition.

\begin{prop}[Kato, Daleckii--Krein] \label{prop: intertwining relation}
Suppose $A(t): D(A(t)) \subset X \to X$ for every $t \in I$ is a densely defined closed linear map and $P(t)$ a bounded projection in $X$ such that $P(t)A(t) \subset A(t)P(t)$ for every $t \in I$ and $t \mapsto P(t)$ is SOT-continuously differentiable. If the evolution system $V_{\eps}$ for $\frac 1 \eps A + [P',P]$ exists on $D(A(t))$ for every $\eps \in (0,\infty)$, then $V_{\eps}$ is adiabatic w.r.t.~$P$ for every $\eps \in (0,\infty)$.
\end{prop}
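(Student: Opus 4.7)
The plan is to prove the intertwining relation $P(t) V_\eps(t,s) = V_\eps(t,s) P(s)$ by the classical Kato--Daleckii--Krein trick: one considers the auxiliary function $\phi(r) := V_\eps(t,r) P(r) V_\eps(r,s) x$ on $[s,t]$, for a fixed $x \in D(A(s))$, and shows that $\phi$ is constant so that $\phi(s) = V_\eps(t,s) P(s) x$ equals $\phi(t) = P(t) V_\eps(t,s) x$. Extension from $x \in D(A(s))$ to $x \in X$ is then automatic by density of $D(A(s))$ in $X$ and boundedness of both operators.

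The first step is to verify the regularity of $\phi$. Since $V_\eps$ is an evolution system for $\frac{1}{\eps} A + [P',P]$ on $D(A(t))$, we have $V_\eps(r,s)x \in D(A(r))$, and because $P(r)A(r) \subset A(r)P(r)$ the projection $P(r)$ maps $D(A(r))$ into itself, so $y(r) := P(r) V_\eps(r,s)x$ lies in $D(A(r))$. By Lemma~\ref{lm: prodregel rechtsseit db} applied to $y = P \cdot V_\eps(\,\cdot\,,s)x$, the map $y$ is right differentiable with
\begin{align*}
\partial_+ y(r) = P'(r) V_\eps(r,s)x + P(r) B_\eps(r) V_\eps(r,s)x,
\end{align*}
where $B_\eps(r) := \frac{1}{\eps} A(r) + [P'(r),P(r)]$. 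The same lemma, together with Lemma~\ref{lm: zeitentw rechtsseit db} applied to the backward-time variable of $V_\eps(t,\cdot)$ on the dense subspace $D(A(r))$, then yields that $\phi$ is continuous and right differentiable on $[s,t)$ with
\begin{align*}
\partial_+ \phi(r) = V_\eps(t,r) \bigl[ P'(r) + [P(r), B_\eps(r)] \bigr] V_\eps(r,s) x.
\end{align*}

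The key algebraic step is to check that the bracket in this expression vanishes on the range of $V_\eps(\,\cdot\,,s)x$. On $D(A(r))$ the commutation hypothesis gives $[P(r), A(r)] = 0$, so $[P, B_\eps] = [P, [P', P]]$ when applied to $V_\eps(r,s)x$. Differentiating $P^2 = P$ gives $P'P + PP' = P'$, whence $PP'P = 0$ and $(1-P)P'(1-P) = 0$, and a direct computation yields
\begin{align*}
[P, [P',P]] = 2PP'P - PP' - P'P = -(PP'+P'P) = -P'.
\end{align*}
Hence $P' + [P, B_\eps(r)]$ vanishes on $D(A(r))$, so $\partial_+ \phi \equiv 0$ on $[s,t)$. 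Lemma~\ref{lm: rechtsseit db und W^{1,infty}} then forces $\phi$ to be constant, which is the desired conclusion.

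The only non-routine point I foresee is the differentiation of $\phi(r) = V_\eps(t,r) y(r)$ when both factors depend on $r$: one must split $\phi(r+h) - \phi(r)$ into $V_\eps(t,r+h)[y(r+h)-y(r)]$ plus $[V_\eps(t,r+h) - V_\eps(t,r)] y(r)$ and argue that the first term converges to $V_\eps(t,r) \partial_+ y(r)$ using SOT-continuity of $V_\eps(t,\cdot)$ together with uniform boundedness of $V_\eps(t,\cdot)$ on the compact interval $[s,t]$, while the second converges to $-V_\eps(t,r) B_\eps(r) y(r)$ by Lemma~\ref{lm: zeitentw rechtsseit db} since $y(r) \in D(A(r))$. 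Everything else is bookkeeping.
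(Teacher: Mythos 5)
Your proof is correct and follows essentially the same route as the paper: the auxiliary function $\tau \mapsto V_\eps(t,\tau)P(\tau)V_\eps(\tau,s)x$, right differentiability via Lemmas~\ref{lm: prodregel rechtsseit db} and~\ref{lm: zeitentw rechtsseit db}, vanishing of the right derivative from the commutation of $P$ with $A$ and the projection identity, and constancy via Lemma~\ref{lm: rechtsseit db und W^{1,infty}}. The only superficial difference is that you expand $[P,[P',P]]=-P'$ directly, whereas the paper reduces the surviving term to $2\,PP'P$ and cites $PP'P=0$ -- the same algebraic fact arranged slightly differently.
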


\begin{proof}
Choose an arbitrary $(s,t) \in \Delta$ with $s \ne t$. 
It then follows by Lemma~\ref{lm: prodregel rechtsseit db} and Lemma~\ref{lm: zeitentw rechtsseit db} that, for every $x \in D(A(s))$, the map 
\begin{align*}
[s,t] \ni \tau \mapsto V_{\eps}(t,\tau) P(\tau) V_{\eps}(\tau,s)x
\end{align*}
is continuous and right differentiable. Since $P(\tau)$ commutes with $A(\tau)$ 
and 
\begin{align} \label{eq: PP'P=0}
P(\tau)P'(\tau)P(\tau) = 0 
\end{align}
for every $\tau \in I$ (which follows by applying $P$ from the left and the right to $P' = P'P+PP'$), it further follows that the right derivative of this map is identically $0$ and so (by Lemma~\ref{lm: rechtsseit db und W^{1,infty}}) this map is constant. 
In particular,
\begin{align*}
P(t)V_{\eps}(t,s)x - V_{\eps}(t,s)P(s)x = V_{\eps}(t,\tau) P(\tau) V_{\eps}(\tau,s)x \big|_{\tau=s}^{\tau=t} = 0,
\end{align*}
as desired. 
\end{proof}

We now briefly discuss two situations where the conclusion of the adiabatic theorem is already trivially true.

\begin{prop} \label{prop: triv adsatz}
Suppose $A(t): D(A(t)) \subset X \to X$ for every $t \in I$ is a densely defined closed linear map and $P(t)$ is a bounded projection in $X$ such that the evolution system $U_{\eps}$ exists on $D(A(t))$ for every $\eps \in (0,\infty)$ and such that $P(t)A(t) \subset A(t)P(t)$ 
for every $t \in I$ and $t \mapsto P(t)$ is SOT-continuously differentiable.
\begin{itemize}
\item[(i)] If $P' = 0$, then $U_{\eps}$ is adiabatic w.r.t.~$P$ for every $\eps \in (0,\infty)$ (in particular, the convergence~\eqref{eq: aussage des adsatzes} holds trivially), 
and the reverse implication is also true.

\item[(ii)] If there are $\gamma \in (0,\infty)$ and $M \in [1,\infty)$ such that for all $(s,t) \in \Delta$ and $\eps \in (0,\infty)$
\begin{align}  \label{eq: zweiter triv adsatz}
\norm{U_{\eps}(t,s)} \le M e^{-\frac \gamma \eps (t-s)},
\end{align}
then $\sup_{t \in I} \norm{U_{\eps}(t)-V_{\eps}(t)} = O(\eps)$ as $\eps \searrow 0$,
whenever the evolution system $V_{\eps}$ for $\frac 1 \eps A + [P',P]$ 
exists on $D(A(t))$ for every $\eps \in (0,\infty)$.
\end{itemize}
\end{prop}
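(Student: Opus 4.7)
The plan is to treat the two parts separately, using Proposition~\ref{prop: intertwining relation} for (i) and the standard Duhamel-type comparison trick for (ii).

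For part~(i), if $P'=0$ then $[P',P]=0$, so the evolution system of $\frac 1 \eps A + [P',P]$ coincides with $U_{\eps}$ and exists; Proposition~\ref{prop: intertwining relation} therefore applies and yields adiabaticity of $U_{\eps}$ w.r.t.\ $P$, whence $(1-P(t))U_{\eps}(t)P(0) = (1-P(t))P(t)U_{\eps}(t,0)\cdot\text{stuff} = 0$. For the reverse direction I would fix $s \in I$ and $x \in D(A(s))$, then differentiate the adiabaticity identity $P(t)U_{\eps}(t,s)x = U_{\eps}(t,s)P(s)x$ from the right at $t=s$. On the left Lemma~\ref{lm: prodregel rechtsseit db} applies (with the SOT-continuously differentiable $C(t)=P(t)$ and the continuously differentiable $f(t)=U_{\eps}(t,s)x$) and gives $P'(s)x + P(s)\tfrac 1 \eps A(s)x$, while on the right, since $P(s)A(s)\subset A(s)P(s)$ forces $P(s)x \in D(A(s))$, one gets $\tfrac 1 \eps A(s)P(s)x$. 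Subtracting and using $P(s)A(s)x = A(s)P(s)x$ leaves $P'(s)x = 0$, and density of $D(A(s))$ closes the argument.

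For part~(ii), the key is the comparison identity
\begin{align*}
V_{\eps}(t)x - U_{\eps}(t)x = \int_0^t U_{\eps}(t,s)\,[P'(s),P(s)]\,V_{\eps}(s)x \, ds \qquad (x \in D(A(0))).
\end{align*}
To derive it, I would fix $t \in I$ and $x \in D(A(0))$ and compute the right derivative of $[0,t] \ni s \mapsto U_{\eps}(t,s)V_{\eps}(s)x$ via Lemma~\ref{lm: prodregel rechtsseit db}. The right derivative of $s \mapsto U_{\eps}(t,s)y$ at $s_0$ on $D(A(s_0))=D$ equals $-U_{\eps}(t,s_0)\tfrac 1 \eps A(s_0)y$ (Lemma~\ref{lm: zeitentw rechtsseit db}), and $V_{\eps}(s)x$ is continuously differentiable with value in $D$ and derivative $(\tfrac 1 \eps A(s)+[P'(s),P(s)])V_{\eps}(s)x$; the two $\tfrac 1 \eps A(s_0)$ contributions cancel, leaving $U_{\eps}(t,s_0)[P'(s_0),P(s_0)]V_{\eps}(s_0)x$, which is bounded in $s_0$. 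Lemma~\ref{lm: rechtsseit db und W^{1,infty}} then yields the integral representation after evaluating the endpoints $s=0$ and $s=t$; density of $D(A(0))$ extends it to all $x \in X$.

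Given this identity the estimate is routine: the hypothesis $\norm{U_{\eps}(t,s)} \le M e^{-\gamma(t-s)/\eps}$ together with Proposition~\ref{prop: st�rreihe f�r gest�rte zeitentw}(ii) (applied to the perturbation $B=[P',P]$) gives $\norm{V_{\eps}(s)} \le M e^{(-\gamma/\eps + Mc)s} \le M e^{Mc}$ uniformly for $\eps$ small, where $c := \sup_{s\in I}\norm{[P'(s),P(s)]}$. Consequently
\begin{align*}
\norm{V_{\eps}(t)-U_{\eps}(t)} \le M^2 c\, e^{Mc} \int_0^t e^{-\gamma(t-s)/\eps} \, ds \le \frac{M^2 c\, e^{Mc}}{\gamma}\,\eps,
\end{align*}
uniformly in $t \in I$, which is the claimed $O(\eps)$ bound. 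The only genuinely delicate point is the justification of the right-differentiation of the product $U_{\eps}(t,\cdot)V_{\eps}(\cdot)x$ and the passage from right differentiability to an integral representation; both are exactly what Lemmas~\ref{lm: prodregel rechtsseit db} and~\ref{lm: rechtsseit db und W^{1,infty}} were prepared for, so no essential obstacle remains.
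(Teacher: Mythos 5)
Your argument is correct and follows essentially the same route as the paper: part~(i) forward via Proposition~\ref{prop: intertwining relation} (with $[P',P]=0$) and reverse by differentiating the intertwining relation, and part~(ii) via the Duhamel comparison identity obtained from Lemmas~\ref{lm: zeitentw rechtsseit db}, \ref{lm: prodregel rechtsseit db}, \ref{lm: rechtsseit db und W^{1,infty}} together with the perturbation bound from Proposition~\ref{prop: st�rreihe f�r gest�rte zeitentw}. The only cosmetic deviations are that you differentiate the adiabaticity relation in $t$ at $t=s$ rather than in $s$ (the paper's reference to Krein), and that your final bound $\frac{M^2 c e^{Mc}}{\gamma}\eps$ comes from estimating $\norm{V_\eps(s)}\le M e^{Mc}$ uniformly before integrating, whereas the paper keeps $te^{-\gamma t/\eps}$ -- both yield the claimed $O(\eps)$.
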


\begin{proof}
(i) See, for instance, Section~IV.3.2 of~\cite{Krein71} for the reverse implication (differentiate the adiabaticity relation with respect to the variable $s$) -- the other implication is obvious from Proposition~\ref{prop: intertwining relation}.
\smallskip

(ii) Since for $x \in D(A(0))$ one has (by Lemma~\ref{lm: zeitentw rechtsseit db}, Lemma~\ref{lm: prodregel rechtsseit db}, and Lemma~\ref{lm: rechtsseit db und W^{1,infty}}) 
\begin{align*}
V_{\eps}(t)x - U_{\eps}(t)x = U_{\eps}(t,s)V_{\eps}(s)x \big|_{s=0}^{s=t} = \int_0^t U_{\eps}(t,s) [P'(s),P(s)] V_{\eps}(s)x \, ds
\end{align*}
for every $t \in I$ and $\eps \in (0,\infty)$, it follows 
with the help of Proposition~\ref{prop: störreihe für gestörte zeitentw} that
\begin{align*}
\norm{ U_{\eps}(t)-V_{\eps}(t) } \le M^2 c \, e^{Mc} \,\, t \, e^{-\frac \gamma \eps  t} 
\end{align*}
for all $t \in I$ and $\eps \in (0,\infty)$, where $c$ denotes an upper bound of $s \mapsto \norm{[P'(s),P(s)]}$. And from this the desired conclusion is obvious.
\end{proof}

Combining Proposition~\ref{prop: triv adsatz}~(ii) with Example~\ref{ex: (M,0)-stabilität wesentl, mit sl} one sees that adiabatic theory is interesting only if the evolution systems for $\frac 1 \eps A$ are \emph{only just} bounded w.r.t.~$\eps \in (0,\infty)$: if even the evolution for $\frac 1 \eps (A+\gamma)$ is bounded in $\eps \in (0,\infty)$ for some $\gamma > 0$, then adiabatic theory is trivial for $A$ (by Proposition~\ref{prop: triv adsatz}~(ii)), and if only the evolution for $T(A-\gamma)$ is bounded in $\eps \in (0,\infty)$ for some $\gamma > 0$, then adiabatic theory is generally impossible for $A$ 
(by Example~\ref{ex: (M,0)-stabilität wesentl, mit sl}).

\subsection{Standard examples}

We will complement 
the adiabatic theorems of this paper by 
examples in order to demonstrate, on the one hand, that the presented theorems are strictly more general than the previously known adiabatic theorems (positive examples) and that, on the other hand, some selected 
hypotheses 
of our theorems cannot be dispensed with (negative examples). We have made sure that in all positive examples the conclusion of the respective adiabatic theorem is not already trivially fulfilled in the sense that it does not already follow from the trivial adiabatic theorem presented above. (See Example~\ref{ex: A nur W^{1,infty}-reg und nur (M,0)-stabil} where this is once -- and for all -- explained in detail.) 
All examples will be of the following simple 
standard structure: 
\begin{itemize}
\item $X = \ell^p(I_d)$ for some $p \in [1,\infty)$ and $d \in \N \cup \{ \infty \}$ (where $I_d := \{1, \dots, d\}$ for $d \in \N$ and $I_{\infty} := \N$)
or $X = L^p(X_0)$ for some $p \in [1,\infty)$ and some measure space $(X_0, \mathcal{A}, \mu)$ 
or $X$ is a 
product of some of the aforementioned spaces (endowed with the sum norm)
\item $A(t) = R(t)^{-1} A_0(t) R(t)$, where $A_0(t): D \subset X \to X$ is a semigroup generator on $X$ with $t$-independent dense domain $D$ 
(chosen 
equal or unequal to $X$ depending on whether we are in the case of time-independent or time-dependent domains), $A_0$ satisfies Condition~\ref{cond: reg 1}, 
and $R(t) := e^{C t}$ for some bounded operator $C$. 
\end{itemize}
Condition~\ref{cond: reg 1} with $\omega = 0$ for $A_0$ ensures (by Lemma~\ref{lm: (M,w)-stabilität und ähnl.trf.} and Corollary~\ref{cor: Kato85}) that the hypotheses on $A$ of the adiabatic theorems of Sections~3 to~5 are fulfilled.
In some of our examples we will use the right or left shift operator $S_+$ and $S_-$ on $\ell^p(I_{\infty})$ defined by 
\begin{align*}
S_+(x_1, x_2, \dots) := (0, x_1, x_2, \dots) \quad \text{and} \quad S_-(x_1,x_2,x_3, \dots) := (x_2,x_3, \dots).
\end{align*}
Since $\norm{S_{\pm}} \le 1$, it follows from the theorem of Hille--Yosida that $e^{i \vartheta} S_+ - 1$ and $e^{i \vartheta} S_- - 1$ generate contraction semigroups on $\ell^p(I_{\infty})$ for $p \in [1,\infty)$ and $\vartheta \in \R$ (use a Neumann series expansion!). 
It is well-known (Example~V.4.1 and V.4.2 of~\cite{TaylorLay80}) that $\sigma(S_{\pm}) = \overline{U}_1(0)$ for all $p \in [1,\infty)$, 
the fine structure of $\sigma(S_+)$ being given by
\begin{gather*}
\sigma_p(S_+) = \emptyset, \quad \sigma_c(S_+) = \emptyset, \quad \sigma_r(S_+) = \overline{U}_1(0) \quad (p = 1)  \\
\sigma_p(S_+) = \emptyset, \quad \sigma_c(S_+) = \partial U_1(0), \quad \sigma_r(S_+) = U_1(0) \quad (p \in (1,\infty))
\end{gather*}
and the fine structure of $\sigma(S_-)$ being given by
\begin{align*}
\sigma_p(S_-) = U_1(0), \quad \sigma_c(S_-) = \partial U_1(0), \quad \sigma_r(S_-) = \emptyset \quad (p \in [1,\infty)).
\end{align*}
Additionally, we will sometimes use multiplication operators $M_f$ on $L^p(X_0)$ ($p \in [1,\infty)$) for some measurable function $f: X_0 \to \C$ and some $\sigma$-finite measure space $(X_0, \mathcal{A}, \mu)$ in which case, as is well-known (Proposition~I.4.10 of~\cite{EngelNagel}),
one has 
\begin{align*}
\sigma(M_f) = \operatorname{ess-rg} f := \{ z \in \C: \mu \big( f^{-1}(U_{\eps}(z)) \big) \ne 0 \text{ for all } \eps > 0 \big\}
\end{align*}
and, in particular 
(take $\mu$ to be the counting measure on $X_0 := I_d$),
\begin{align*}
\sigma \big( \operatorname{diag}( (\lambda_n)_{n \in I_d} ) \big) = \sigma \big( M_{ (\lambda_n)_{n \in I_d} } \big) = \overline{ \{ \lambda_n: n \in I_d \}  }.
\end{align*}

In quite some examples, we will work with families $A$ of operators $A(t)$ in $X := \ell^p(I_d)$ whose spectra $\sigma(A(t))$ 
are singletons and whose nilpotent parts depend on $t$ in the simplest possible way, 
namely via a scalar factor.

\begin{cond} \label{cond: baustein mit nicht-halbeinfachem ew}
$N \ne 0$ is a nilpotent operator in $X := \ell^p(I_d)$ (with $p \in [1,\infty)$ and $d \in \N$), $\lambda(t) \in \C$ and $\alpha(t) \in [0,\infty)$ for all $t \in I$, and there is an $r_0 > 0$ such that $-\Re \lambda(t) = |\Re \lambda(t) | \ge r_0 \alpha(t)$ for all $t \in I$.
\end{cond}

As is shown in the next lemma, this condition characterizes $(M,0)$-stability of families~$A$ 
of the simple type 
just described.

\begin{lm} \label{lm: char (M,0)-stab für einfaches A}
Suppose that $N \ne 0$ is a nilpotent operator in $X := \ell^p(I_d)$ with $p \in [1,\infty)$ and $d \in \N$ and that $A(t) = \lambda(t) + \alpha(t) N$ for every $t \in I$, where $\lambda(t) \in \C$ and $\alpha(t) \in [0,\infty)$. Then $A$ is $(M,0)$-stable for some $M \in [1,\infty)$ if and only if 
Condition~\ref{cond: baustein mit nicht-halbeinfachem ew} is satisfied.
\end{lm}

\begin{proof}
Suppose first that $A$ is $(M, 0)$-stable for some $M \in [1,\infty)$ and assume that $N = \operatorname{diag}(J_1, \dots, J_m)$ is in Jordan normal form with (decreasingly ordered) Jordan block matrices $J_1, \dots, J_m$ 
(notice that this assumption, 
by virtue of Lemma~\ref{lm: (M,w)-stabilität und ähnl.trf.}, does not restrict generality). 
We then show that $-\Re \lambda(t) = | \Re \lambda(t) | \ge \frac{1}{4 M} \, \alpha(t)$ for every $t \in I$. 
It is clear by the $(M,0)$-stability of $A$ that $\lambda(t) \in \sigma(A(t)) \subset \{ \Re z \le 0 \}$ for every $t \in I$ and that the family $\tilde{A}$ with $\tilde{A}(t) := \Re \lambda(t) + \alpha(t) N$ is $(M, 0)$-stable as well. 
If $\alpha(t) = 0$, then the desired inequality 
is trivial. If $\alpha(t) \ne 0$, then $\Re \lambda(t) < 0$ by the $(M,0)$-stability of $A$ and therefore we get 
-- computing $(\lambda - \tilde{A}(t))^{-1} e_2  
= (\frac{\alpha(t)}{(\lambda-\Re \lambda(t))^2}, \frac{1}{\lambda-\Re \lambda(t)}, 0, 0, \dots)$ for $\lambda \in (0,\infty)$, setting $\lambda := |\Re \lambda(t)|$, and using the $(M,0)$-stability of $\tilde{A}$ -- 
that
\begin{align*}
\frac{\alpha(t)}{4 \, | \Re \lambda(t) |}    \le  \norm{ |\Re \lambda(t)| \, \big( |\Re \lambda(t)| - \tilde{A}(t) \big)^{-1} \, e_2 } \le M,
\end{align*}
as desired.
Suppose conversely that there is an $r_0 > 0$ such that $-\Re \lambda(t) = |\Re \lambda(t)| \ge r_0 \alpha(t)$ for every $t \in I$. Then there is an $M = M_{r_0} \in [1,\infty)$ such that $\norm{ e^{N s} } \le M e^{r_0 \, s}$ for all $s \in [0,\infty)$ 
and thus
\begin{align*}
\norm{ e^{A(t_n)s_n} \dotsb e^{A(t_1)s_1} } 
= e^{\Re \lambda(t_n)s_n} \dotsb e^{\Re \lambda(t_1)s_1} \, \norm{ e^{N ( \alpha(t_n)s_n + \dotsb + \alpha(t_1)s_1 )} }
\le M
\end{align*}
for all $s_1, \dots, s_n \in [0,\infty)$ and all $t_1, \dots, t_n \in I$ satisfying $t_1 \le \dotsb \le t_n$ (with arbitrary~$n \in \N$), as desired.
\end{proof}


It should be noticed that Condition~\ref{cond: baustein mit nicht-halbeinfachem ew} does not already guarantee $(1,0)$-stability, however. Indeed, if for instance
\begin{align*}
A(t) := -\frac{t}{3} + t^2 N \quad \text{with} \quad 
N:= \begin{pmatrix} 0  & 1  &   &  \\
                       & 0  & 1 &  \\
                       &    & \ddots & \ddots  \\
                       &    &        & 0
    \end{pmatrix}  \text{\,\, in } X := \ell^p(I_d)
\end{align*}
($p \in [1,\infty)$ and $2 \le d \in \N$), then $A$ is $(M,0)$-stable for some $M \in [1,\infty)$ by the above lemma, 
but not $(1,0)$-stable, because $A(1) = -\frac{1}{3} + N$ is not dissipative in $\ell^p(I_d)$ 
and hence (by the theorem of Lumer--Phillips) does not generate a contraction semigroup. 
\smallskip

At some point (Example~\ref{ex: reg an P wesentl, ohne sl}) the following simple lemma will be needed which, in essence, is the reason why adiabatic theory for multiplication operators $A(t) = M_{f_t}$ is typically uninteresting. See~\cite{dipl} (Lemma~2.11 and the remark following it) 
for the 
proof.

\begin{lm}  \label{lm: wenn P stet db, dann schon konst}
Suppose that $P(t)$ for every $t \in I$ is a bounded projection in $X := L^p(X_0)$ (where $(X_0,\mathcal{A},\mu)$ is a measure space and $p \in [1,\infty)$) and that $P(t) = M_{\chi_{E_t}}$ for almost every $t \in I$, where $E_t \in \mathcal{A}$. If $t \mapsto P(t)$ is SOT-continuously differentiable, then $t \mapsto P(t)$ is already constant.
\end{lm}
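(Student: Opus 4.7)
The plan is to exploit the commutativity of multiplication operators, together with the projection identity $P(t)^2 = P(t)$, to force the derivative $P'(t)$ to vanish identically; continuous differentiability then yields constancy.

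First, I would establish that $P(s)$ and $P(t)$ commute for \emph{all} $s,t \in I$. On the full-measure set $F \subset I$ on which $P(\cdot)$ is given as multiplication by a characteristic function, commutativity is obvious since any two multiplication operators commute. To extend commutativity to all of $I$, I use that $F$ is dense in $I$ and that $t \mapsto P(t)$ is SOT-continuous (being SOT-continuously differentiable); the uniform boundedness principle on the compact interval $I$ gives $\sup_{t \in I} \|P(t)\| < \infty$. Approximating an arbitrary $s \in I$ by a sequence $s_n \in F$ (and similarly for $t$), and passing to the SOT limit in $P(s_n) P(t_m) = P(t_m) P(s_n)$ while using boundedness of $P(s)$ and $P(t)$ to move these operators past the limit, yields $P(s)P(t) = P(t)P(s)$ for all $s,t \in I$.

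Next I extract two differential identities. Differentiating $P(t)^2 = P(t)$ at $t \in I$ gives
\begin{align*}
P'(t) = P(t) P'(t) + P'(t) P(t).
\end{align*}
Differentiating the commutativity relation $P(s)P(t_0) = P(t_0) P(s)$ in $s$ at $s = t_0$ (applying both sides to a vector $x$ and using that $P(t_0)$ is bounded, hence strongly continuous) gives
\begin{align*}
P'(t_0) P(t_0) = P(t_0) P'(t_0).
\end{align*}
Combining, $P'(t) = 2 P(t) P'(t)$. Applying $P(t)$ from the left and using $P(t)^2 = P(t)$ yields $P(t) P'(t) = 2 P(t) P'(t)$, so $P(t) P'(t) = 0$, whence $P'(t) = 2 P(t) P'(t) = 0$ for every $t \in I$. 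Since $t \mapsto P(t)$ is (strongly) continuously differentiable with vanishing derivative, it is constant, as claimed.

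The only step requiring any care is the extension of commutativity from the almost-everywhere set $F$ to all of $I$; everything else is a short algebraic manipulation. This extension is a standard density-plus-uniform-boundedness argument and presents no real obstacle.
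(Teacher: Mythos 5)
Your proof is correct. The key observations — that any two multiplication operators commute, that commutativity extends from a full-measure (hence dense) set to all of $I$ by SOT-continuity and uniform boundedness, and that the two differentiated identities $P' = PP' + P'P$ and $P'P = PP'$ force $P' = 2PP'$ and then $PP' = 0$, whence $P' = 0$ — are all sound, and the limit-passing in the commutativity extension correctly uses that each $P(s)$, $P(t)$ is bounded so can be moved past strong limits one factor at a time. Since the paper itself refers the reader to [\cite{dipl}, Lemma 2.11] rather than giving the proof inline, I cannot compare directly; but this commutativity-based argument is the natural one for this statement and very likely matches, and in any case it is a complete and correct proof.
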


\section{Adiabatic theorems with spectral gap condition for time-independent domains} \label{sect: adsätze mit sl}

After having provided the most important preliminaries in Section~2, we now prove an adiabatic theorem with uniform spectral gap condition (Section~3.1) and an adiabatic theorem with non-uniform spectral gap condition (Section~3.2) for general operators $A(t)$ with time-independent domains. In these theorems the considered spectral subsets $\sigma(t)$ are only assumed to be compact so that, even if they are singletons, they need not consist of eigenvalues: they are allowed to be singletons consisting of essential singularities of the resolvent. In~\cite{AbouSalem07}, \cite{AvronGraf11}, \cite{Joye07} the case of poles is 
treated. 

\subsection{An adiabatic theorem with uniform spectral gap condition}  \label{sect: adsatz mit glm sl}

We begin by proving an adiabatic theorem with uniform spectral gap condition by extending Abou Salem's 
proof from~\cite{AbouSalem07}, which rests upon solving a suitable commutator equation. 

\begin{thm} \label{thm: handl adsatz mit glm sl}
Suppose $A(t): D \subset X \to X$ for every $t \in I$ is a linear map such that Condition~\ref{cond: reg 1}  is satisfied with $\omega = 0$. 
Suppose further that $\sigma(t)$ for every $t \in I$ is a compact 
subset of $\sigma(A(t))$, 
that $\sigma(\,.\,)$ at no point falls into $\sigma(A(\,.\,))\setminus \sigma(\,.\,)$, and that $t \mapsto \sigma(t)$ is continuous. And finally, for every $t \in I$, let $P(t)$ be the 
projection associated with $A(t)$ and $\sigma(t)$ and suppose that $I \ni t \mapsto P(t)$ is in $W^{2,\infty}_*(I,L(X))$. Then 
\begin{align*}
\sup_{t \in I} \norm{ U_{\eps}(t) - V_{\eps}(t) } = O(\eps) \quad (\eps \searrow 0),
\end{align*}
where $V_{\eps}$ is the evolution system for $\frac 1 \eps A + [P',P]$.
\end{thm}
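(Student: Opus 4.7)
The plan is to follow the classical Kato--Abou Salem strategy: construct an adiabatic comparison evolution, represent the difference $U_\eps-V_\eps$ as a Duhamel integral of $[P',P]$ against $U_\eps$ and $V_\eps$, and then trade $[P',P]$ for a commutator $[A,B]$ to integrate by parts, gaining a factor of $\eps$.

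First I would verify the basic estimates. Since $[P',P]\in L^\infty_*(I,L(X))$ by the $W^{2,\infty}_*$-hypothesis on $P$, the family $\tfrac{1}{\eps}A+[P',P]$ still satisfies Condition~\ref{cond: reg 1} (with stability exponent $\omega_\eps=Mc$, where $c:=\esssup_{s\in I}\norm{[P'(s),P(s)]}$, independent of $\eps$), so Theorem~\ref{thm: Kato85} yields $V_\eps$ with $\norm{V_\eps(t,s)}\le M e^{Mc(t-s)}$, and Proposition~\ref{prop: intertwining relation} guarantees that $V_\eps$ is adiabatic w.r.t.~$P$. Applying Lemma~\ref{lm: prodregel rechtsseit db}, Lemma~\ref{lm: rechtsseit db und W^{1,infty}}, and Lemma~\ref{lm: zeitentw rechtsseit db} to $s\mapsto U_\eps(t,s)V_\eps(s)x$ for $x\in D$ then yields the Duhamel-type identity
\begin{equation*}
V_\eps(t)x-U_\eps(t)x = \int_0^t U_\eps(t,s)\,[P'(s),P(s)]\,V_\eps(s)x\,ds,
\end{equation*}
which extends to all $x\in X$ by density and uniform operator bounds.

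The central step is to solve the commutator equation $A(t)B(t)-B(t)A(t)=[P'(t),P(t)]$ for a bounded operator $B(t)$ with range in $D$, uniformly bounded and $W^{1,\infty}_*$-regular in $t$. Because of the standard identity $P(t)P'(t)P(t)=0$ (obtained by sandwiching $P'=PP'+P'P$ with $P$), $[P',P]$ is off-diagonal w.r.t.~the reducing decomposition $X=P(t)X\oplus(1-P(t))X$, so the problem reduces to two Sylvester equations $A_i B_{ij}-B_{ij}A_j=C_{ij}$ with $\sigma(A_i)\cap\sigma(A_j)=\emptyset$, each uniquely solved by a Rosenblum-type contour integral
\begin{equation*}
B_{ij}(t)=\frac{1}{2\pi i}\oint_{\gamma_t}\bigl(z-A(t)|_{P_i(t)X}\bigr)^{-1}[P'(t),P(t)]_{ij}\bigl(z-A(t)|_{P_j(t)X}\bigr)^{-1}\,dz,
\end{equation*}
where $\gamma_t$ is a cycle in $\rho(A(t))$ separating $\sigma(t)$ from $\sigma(A(t))\setminus\sigma(t)$. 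The \emph{uniform} spectral gap together with the continuity of $t\mapsto\sigma(t)$ and of $t\mapsto A(t)$ (in the generalized sense, which follows from $W^{1,\infty}_*$-regularity) is precisely what permits covering $I$ by finitely many closed subintervals on each of which a single $\gamma$ works with uniform resolvent bounds; Lemma~\ref{lm: prod- und inversenregel} then propagates $W^{1,\infty}_*$-regularity from $A$ and $P$ through the resolvents to $B$ and a $W^{1,\infty}_*$-derivative $B'$.

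Once $B$ is in hand, I would integrate by parts: applying Lemma~\ref{lm: prodregel rechtsseit db} and Lemma~\ref{lm: rechtsseit db und W^{1,infty}} to $s\mapsto U_\eps(t,s)B(s)V_\eps(s)x$ for $x\in D$, using $A(s)B(s)-B(s)A(s)=[P'(s),P(s)]$ to cancel the $\tfrac{1}{\eps}$-terms, yields
\begin{align*}
&\int_0^t U_\eps(t,s)[P'(s),P(s)]V_\eps(s)x\,ds\\
&\qquad = -\eps\bigl[U_\eps(t,s)B(s)V_\eps(s)x\bigr]_{s=0}^{s=t}+\eps\int_0^t U_\eps(t,s)\bigl(B'(s)+B(s)[P'(s),P(s)]\bigr)V_\eps(s)x\,ds.
\end{align*}
Combined with the Duhamel identity and the uniform bounds on $U_\eps, V_\eps, B, B'$, this gives $\sup_{t\in I}\norm{U_\eps(t)-V_\eps(t)}=O(\eps)$. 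The main obstacle is the third step: constructing $B$ with a uniform bound and sufficient regularity. The \emph{uniform} spectral gap enters essentially here (it is what makes the Sylvester operator boundedly invertible \emph{uniformly} in $t$), and the $W^{2,\infty}_*$-regularity of $P$ -- one derivative for $B$ to be defined and a second so that $B'$ remains in $L^\infty_*$ -- is exactly what the integration by parts requires. The remaining steps are mechanical applications of the right-differentiation calculus of Section~2.
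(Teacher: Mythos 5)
Your proposal is correct and takes essentially the same approach as the paper: Duhamel identity for $U_\eps - V_\eps$, construction of $B(t)$ via a contour integral solving the commutator equation, propagation of $W^{1,\infty}_*$-regularity through Lemma~\ref{lm: prod- und inversenregel}, and an integration by parts to gain a factor of $\eps$. The only cosmetic difference is that you phrase the commutator step as two off-diagonal Sylvester equations solved by Rosenblum integrals, whereas the paper writes the single contour integral $B(t) = \frac{1}{2\pi i}\int_{\gamma_{t_0}} (z-A(t))^{-1}P'(t)(z-A(t))^{-1}\,dz$ and verifies $B(t)A(t)-A(t)B(t)\subset[P'(t),P(t)]$ directly; these are the same operator (up to the sign convention in your commutator equation, which you use consistently), since the diagonal blocks of the paper's integrand vanish by $PP'P=0=(1-P)P'(1-P)$.
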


\begin{proof}
Since $\sigma(\,.\,)$ is uniformly isolated in $\sigma(A(\,.\,)) \setminus \sigma(\,.\,)$ and $t \mapsto \sigma(t)$ is continuous, 
there is, for every $t_0 \in I$, a non-trivial closed interval $J_{t_0} \subset I$ containing $t_0$ and a cycle $\gamma_{t_0}$ in $\rho(A(t_0))$ such that $\rg \gamma_{t_0} \subset \rho(A(t))$ and 
\begin{align*}
\operatorname{n}(\gamma_{t_0}, \sigma(t)) = 1 \quad \text{and} \quad \operatorname{n}(\gamma_{t_0}, \sigma(A(t)) \setminus \sigma(t)) = 0 
\end{align*}
for all $t \in J_{t_0}$.
We can now define 
\begin{align*}
 B(t)x := \frac{1}{2 \pi i} \int_{\gamma_{t_0}} (z-A(t))^{-1} P'(t) (z-A(t))^{-1} x \, dz
\end{align*}
for all $t \in J_{t_0}$, $t_0 \in I$ and $x \in X$. Since $\rho(A(t)) \ni z \mapsto (z-A(t))^{-1} P'(t) (z-A(t))^{-1} x$ is a holomorphic $X$-valued map (for all $x \in X$) and since the cycles $\gamma_{t_0}$ and $\gamma_{t_0'}$ are 
homologous in $\rho(A(t))$ whenever $t$ lies both in $J_{t_0}$ and in $J_{t_0'}$, the path integral exists in $X$ 
and does not depend on the special choice of $t_0 \in I$ with the property that $t \in J_{t_0}$. In other words, $t \mapsto B(t)$ is 
well-defined on $I$.
\smallskip

As a first preparatory step, we easily infer from the closedness of $A(t)$ that $B(t)X \subset D(A(t)) = D = Y$ and that
\begin{align} \label{eq: commutator equation}
 B(t) A(t) - A(t) B(t) \subset [P'(t),P(t)]
\end{align}
for all $t \in I$, which commutator equation will be essential in the main part of the proof.
As a second preparatory step, we show that $t \mapsto B(t)$ is in $W^{1,\infty}_*(I,L(X,Y))$, which is not very surprising (albeit a bit technical). It suffices to show that $J_{t_0} \ni t \mapsto B(t)$ is in $W^{1,\infty}_*(J_{t_0},L(X,Y))$ for every $t_0 \in I$. We therefore fix $t_0 \in I$. 
Since $\rho(A(t)) \ni z \mapsto (z-A(t))^{-1}$ is continuous w.r.t.~the norm of $L(X,Y)$ for every $t \in J_{t_0}$, 
we see that $B(t)$ is in $L(X,Y)$ for every $t \in J_{t_0}$. 
We also see, by virtue of Lemma~\ref{lm: prod- und inversenregel}, that for every $z \in \rg \gamma_{t_0}$ the map $t \mapsto (z-A(t))^{-1} P'(t) (z-A(t))^{-1}$ is in $W^{1,\infty}_*(J_{t_0},L(X,Y))$ and $t \mapsto C(t,z) = C_1(t,z) + C_2(t,z) + C_3(t,z)$ is a $W^{1,\infty}_*$-derivative of it, 
where
\begin{align} \label{eq: W^{1,infty}-abl des integranden}
C_1(t,z) &= (z-A(t))^{-1} A'(t) (z-A(t))^{-1} P'(t) (z-A(t))^{-1}, \notag \\
&C_2(t,z) = (z-A(t))^{-1} P''(t) (z-A(t))^{-1}, \\
C_3(t,z) &= (z-A(t))^{-1} P'(t) (z-A(t))^{-1} A'(t) (z-A(t))^{-1} \notag
\end{align}
and $A'$, $P''$ are arbitrary $W^{1,\infty}_*$-derivatives of $A$ and $P'$. 
Since $t \mapsto C(t,z)$ is strongly measurable for all $z \in \rg \gamma_{t_0}$, it follows that $t \mapsto \frac{1}{2 \pi i} \int_{\gamma_{t_0}} C(t, z) \, dz$ is strongly measurable as well (as the strong limit of Riemann sums), and since $J_{t_0} \times \rg \gamma_{t_0} \ni (t,z) \mapsto (z-A(t))^{-1}$ is continuous w.r.t.~the norm of $L(X,Y)$ and hence bounded, it follows by~\eqref{eq: W^{1,infty}-abl des integranden} that
\begin{align*} 
t \mapsto \Big\| \frac{1}{2 \pi i} \int_{\gamma_{t_0}} C(t, z) \, dz \Big\|_{X \to Y}
\end{align*}
is essentially bounded. So $t \mapsto \frac{1}{2 \pi i} \int_{\gamma_{t_0}} C(t, z) \, dz$ is in $W^{0,\infty}_*(J_{t_0}, L(X,Y))$ 
and 
one easily concludes that
\begin{align*}
 B(t)x = B(t_0)x + \int_{t_0}^t \frac{1}{2 \pi i} \int_{\gamma_{t_0}} C(\tau, z)x \, dz \, d\tau
\end{align*}
for all $t \in J_{t_0}$ and $x \in X$, as desired.
\smallskip

After these preparations we can now turn to the main part of the proof. 
We fix $x \in D$ and let $V_{\eps}$ denote the evolution system for $\frac 1 \eps A + [P',P]$ (which really exists due to Theorem~\ref{thm: Kato85}). Then $s \mapsto U_{\eps}(t,s) V_{\eps}(s)x$ is in $W^{1,\infty}([0,t],X)$ (by Lemma~\ref{lm: prodregel rechtsseit db} and Lemma~\ref{lm: rechtsseit db und W^{1,infty}}) and we get, exploiting the commutator equation~\eqref{eq: commutator equation} for $A$ and $B$, that
\begin{align*}
V_{\eps}(t)x - U_{\eps}(t)x &= U_{\eps}(t,s)V_{\eps}(s)x \big|_{s=0}^{s=t} = \int_0^t U_{\eps}(t,s) [P'(s),P(s)] V_{\eps}(s)x \, ds \\
&= \int_0^t U_{\eps}(t,s) \bigl( B(s)A(s) - A(s)B(s) \bigr) V_{\eps}(s)x \, ds
\end{align*}
for all $t \in I$. Since for every $t \in I$ the maps $s \mapsto V_{\eps}(s) \big|_{Y}$ and $s \mapsto U_{\eps}(t,s) \big|_{Y}$ are continuously differentiable on $[0,t]$ w.r.t.~SOT of $L(Y,X)$ (Lemma~\ref{lm: zeitentw rechtsseit db}) and hence belong to $W^{1,\infty}_*([0,t], L(Y,X))$, 
and since $s \mapsto B(s)$ belongs to $W^{1,\infty}_*([0,t], L(X,Y))$, 
we can further conclude, using Lemma~\ref{lm: prod- und inversenregel}, that $s \mapsto U_{\eps}(t,s) B(s) V_{\eps}(s)x$ is in $W^{1,\infty}([0,t],X)$, so that 
\begin{align*}
&V_{\eps}(t)x - U_{\eps}(t)x = \eps \int_0^t U_{\eps}(t,s) \Bigl( - \, \frac 1 \eps A(s)B(s) + B(s) \frac 1 \eps A(s) \Bigr) V_{\eps}(s)x \, ds \\
&\quad = \eps \, U_{\eps}(t,s) B(s) V_{\eps}(s)x \big|_{s=0}^{s=t} 
- \eps \int_0^t U_{\eps}(t,s) \bigl( B'(s) + B(s) [P'(s),P(s)] \bigr) V_{\eps}(s)x \, ds
\end{align*}
for all $t \in I$ and $\eps \in (0,\infty)$, where $B'$ denotes an arbitrary $W^{1,\infty}_*$-derivative of $B$. And from this, the conclusion of the theorem is obvious. 
\end{proof}

%

\subsection{An adiabatic theorem with non-uniform spectral gap condition}   \label{sect: adsatz mit nichtglm sl}

We continue by proving an adiabatic theorem with non-uniform spectral gap condition where $\sigma(\,.\,)$ falls into $\sigma(A(\,.\,)) \setminus \sigma(\,.\,)$ at countably many points that, 
in turn, accumulate at only finitely many points. 
We do so by extending Kato's 
proof from~\cite{Kato50} where finitely many eigenvalue crossings for skew self-adjoint $A(t)$ are treated.

\begin{thm} \label{thm: handl adsatz mit nichtglm sl}
Suppose $A(t): D \subset X \to X$ for every $t \in I$ is a linear map such that Condition~\ref{cond: reg 1} is satisfied with $\omega = 0$. 
Suppose further that $\sigma(t)$ for every $t \in I$ is a compact 
subset of $\sigma(A(t))$, that $\sigma(\,.\,)$ at countably many points accumulating at only finitely many points 
falls into $\sigma(A(\,.\,))\setminus \sigma(\,.\,)$, and that $I \setminus N \ni t \mapsto \sigma(t)$ is continuous, where $N$ denotes the set of 
those points where $\sigma(\,.\,)$ falls into $\sigma(A(\,.\,))\setminus \sigma(\,.\,)$. And finally, for every $t \in I \setminus N$, let $P(t)$ be the 
projection associated with $A(t)$ and $\sigma(t)$ 
and suppose that $I \setminus N \ni t \mapsto P(t)$ extends to a map (again denoted by $P$) in $W^{2,\infty}_*(I,L(X))$. Then 
\begin{align*}
\sup_{t \in I} \norm{ U_{\eps}(t) - V_{\eps}(t) } \longrightarrow 0 \quad (\eps \searrow 0),
\end{align*}
where $V_{\eps}$ is the evolution system for $\frac 1 \eps A + [P',P]$.
\end{thm}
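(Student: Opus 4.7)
The strategy is to adapt Kato's treatment of eigenvalue crossings from~\cite{Kato50}: use the uniform-gap result (Theorem~\ref{thm: handl adsatz mit glm sl}) on compact sub-intervals of $I \setminus N$ and handle the contribution near $N$ by a covering argument, exploiting only that $U_{\eps}$ and $V_{\eps}$ are uniformly bounded in $\eps$. I would start from the identity
\begin{equation*}
V_{\eps}(t)x - U_{\eps}(t)x = \int_0^t U_{\eps}(t,\tau)[P'(\tau),P(\tau)]V_{\eps}(\tau)x \, d\tau \qquad (x \in X),
\end{equation*}
derived as in the main part of the proof of Theorem~\ref{thm: handl adsatz mit glm sl} via Lemma~\ref{lm: prodregel rechtsseit db} and Lemma~\ref{lm: rechtsseit db und W^{1,infty}}, together with the uniform bounds $\norm{U_{\eps}(t,s)} \le M$ and $\norm{V_{\eps}(t,s)} \le \tilde{M}$ with $\tilde{M}$ independent of $\eps$, the latter following from Theorem~\ref{thm: Kato85} applied to the $(M, M\,\esssup_{\tau \in I}\norm{[P'(\tau),P(\tau)]})$-stable family $\tfrac{1}{\eps} A + [P',P]$.

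\smallskip

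Given $\delta > 0$, I would next cover $N$ by a finite union $\mathcal{B}$ of open intervals of arbitrarily small total length $|\mathcal{B}|$. This is possible because $N$ is closed with only finitely many accumulation points in $I$: after covering each accumulation point by a small open interval, what remains of $N$ is a closed discrete subset of a compact space, hence finite, and can be covered by further short intervals. The complement $\mathcal{G} := I \setminus \mathcal{B}$ is then a disjoint union of finitely many compact intervals $K_1, \ldots, K_s$, and on each $K_l$ the compactness, the continuity of $I\setminus N \ni t \mapsto \sigma(t)$, and the assumption that $\sigma(\cdot)$ does not fall into $\sigma(A(\cdot))\setminus \sigma(\cdot)$ on $K_l$ together imply that $\sigma(\cdot)$ is \emph{uniformly} isolated in $\sigma(A(\cdot))$ on $K_l$.

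\smallskip

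Splitting the integral over $[0,t]$ into a part over $\mathcal{B}$ and a part over $\mathcal{G}$, the former is bounded by
\begin{equation*}
M \tilde{M} \, \esssup_{\tau \in I} \norm{[P'(\tau),P(\tau)]} \cdot |\mathcal{B}| \cdot \norm{x},
\end{equation*}
which can be made smaller than $\tfrac{\delta}{2}\norm{x}$ by fixing $|\mathcal{B}|$ sufficiently small. On each $K_l$, the uniform spectral gap permits the construction, exactly as in the proof of Theorem~\ref{thm: handl adsatz mit glm sl}, of an operator $B \in W^{1,\infty}_*(K_l, L(X,Y))$ satisfying the commutator equation $B(\tau) A(\tau) - A(\tau) B(\tau) \subset [P'(\tau),P(\tau)]$ on $K_l$, with $\sup_{\tau \in K_l}\norm{B(\tau)}_{X \to Y}$ finite. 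The same integration-by-parts trick as in that proof then bounds the contribution of each $K_l \cap [0,t]$ to the integral by $\eps \cdot C_l \norm{x}$, with $C_l$ independent of $\eps$. Summing over $l$ and letting $\eps \searrow 0$ yields $\limsup_{\eps \searrow 0}\sup_{t \in I}\norm{U_{\eps}(t)-V_{\eps}(t)} \le \delta/2$, and since $\delta > 0$ was arbitrary the theorem follows. The main subtlety is that the $C_l$ depend on $\sup_{\tau \in K_l}\norm{B(\tau)}$, which grows as $|\mathcal{B}|$ shrinks; this is exactly what the two-scale structure of the argument—fix $\mathcal{B}$ first, only then let $\eps \searrow 0$—is designed to handle.
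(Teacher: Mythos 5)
Your proof is correct but follows a genuinely different route from the paper's. The paper first treats the case of finitely many crossing points $t_1, \dots, t_m$ by partitioning $I$ into small open intervals $J_{i\,\delta}$ around each $t_i$ and the remaining closed intervals $I_{i\,\delta}$; the contribution on each $J_{i\,\delta}$ is estimated crudely via the integral identity and on each $I_{i\,\delta}$ it simply invokes Theorem~\ref{thm: handl adsatz mit glm sl} as a black box, after which the pieces are stitched together \emph{multiplicatively} using the product property $U_\eps(t,s)U_\eps(s,r)=U_\eps(t,r)$ (producing the exponential constant $(4M^2 e^{2Mc})^m$); the countable-crossings case is then settled by a recursive reduction to the finite case. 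You instead proceed \emph{additively}: covering the whole exceptional set $N$ at once by an open set $\mathcal{B}$ of small measure, splitting the single integral $\int_0^t U_\eps(t,\tau)[P'(\tau),P(\tau)]V_\eps(\tau)x\,d\tau$ over $\mathcal{B}$ and its complement $\mathcal{G}$, and on each compact component $K_l$ of $\mathcal{G}$ reopening the proof of Theorem~\ref{thm: handl adsatz mit glm sl} to build $B$ and integrate by parts. This avoids both the telescoping (hence the exponential growth of constants in $m$) and the two-case finite/infinite structure, at the modest cost of re-running the commutator construction rather than citing the theorem as a black box — though in fact, on each component $K_l = [a_l,b_l]$ with $b_l \le t$ the algebraic identity
\begin{align*}
\int_{a_l}^{b_l} U_\eps(t,\tau)[P'(\tau),P(\tau)]V_\eps(\tau)x\,d\tau = U_\eps(t,b_l)\bigl(V_\eps(b_l,a_l)-U_\eps(b_l,a_l)\bigr)V_\eps(a_l)x
\end{align*}
would allow you to cite Theorem~\ref{thm: handl adsatz mit glm sl} directly on the restricted data and keep the modular structure. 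Two small points worth making explicit: the integral identity is obtained via Lemma~\ref{lm: prodregel rechtsseit db} and Lemma~\ref{lm: rechtsseit db und W^{1,infty}} first for $x\in D$, then extended to $x\in X$ by density; and the claim that $N$ minus small neighbourhoods of its finitely many accumulation points is finite uses that $N$ is closed (noted after the definition of ``falls into'' in Section~2.1) together with compactness of $I$.
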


\begin{proof}
We first prove the assertion in the case where $\sigma(\,.\,)$ at only finitely many points $t_1, \dots, t_m$ (ordered in an increasing way) 
falls into $\sigma(A(\,.\,))\setminus \sigma(\,.\,)$. So let $\eta >0$. We partition the interval $I$ as follows: 
\begin{align*}  
 I = I_{0 \, \delta} \cup J_{1 \, \delta} \cup I_{1 \, \delta} \cup \dots \cup J_{m \, \delta} \cup I_{m \, \delta},
\end{align*}
where $J_{i \, \delta}$ for $i \in \{1,\dots, m\}$ is a relatively open subinterval of $I$ containing $t_i$ of length less than $\delta$ (which will be chosen in a minute) and where $I_{0 \, \delta}$, \dots, $I_{m \, \delta}$ are the closed subintervals of $I$ lying between the subintervals
$J_{1 \, \delta}$, \dots, $J_{m \, \delta}$. In the following, we set $t_{i \, \delta}^- := \inf I_{i \, \delta}$ and $t_{i \, \delta}^+ := \sup I_{i \, \delta}$ for $i \in \{0, \dots, m\}$, and we choose $c$ so large that $\norm{P(s)}$, $\norm{P'(s)}$ and
$\norm{ [P'(s),P(s)] } \le c$ for all $s \in I$.
Since 
\begin{align*}
 \norm{ V_{\eps}(t,t_{i-1 \, \delta}^+)x - U_{\eps}(t, t_{i-1 \, \delta}^+)x } &= \norm{  \int_{t_{i-1 \, \delta}^+}^t U_{\eps}(t, s) [P'(s), P(s)] V_{\eps}(s, t_{i-1 \, \delta}^+)x \, ds  } \\
 &\le M c M e^{M c} \, \delta \norm{x}
\end{align*}
for every $t \in J_{i \, \delta}$, $x \in D$ and $\eps \in (0,\infty)$, we can achieve -- by choosing $\delta$ small enough -- that
\begin{align}  \label{eq: adsatz mit nichtglm sl 1}
\sup_{t \in J_{i \, \delta}} \norm{ V_{\eps}(t,t_{i-1 \, \delta}^+) - U_{\eps}(t, t_{i-1 \, \delta}^+) } < \frac{\eta}{ \big( 4 M^2 e^{2 Mc} \big)^m }
\end{align}
for every $\eps \in (0,\infty)$ and $i \in \{1, \dots, m\}$.
And since $\sigma(\,.\,) \big|_{I_{i \, \delta}}$ at no point falls into $\big(  \sigma(A(\,.\,))\setminus \sigma(\,.\,)  \big) \big|_{I_{i \, \delta}}$, we conclude from the above adiabatic theorem with uniform spectral gap condition 
(applied to the restricted data $A|_{I_{i \, \delta}}$, $\sigma|_{I_{i \, \delta}}$, $P|_{I_{i \, \delta}}$) 
that there is an $\eps_{\delta} \in (0, \infty)$ such that
\begin{align}   \label{eq: adsatz mit nichtglm sl 2}
 \sup_{t \in I_{i \, \delta}}  \norm{ V_{\eps}(t,t_{i \, \delta}^-) - U_{\eps}(t, t_{i \, \delta}^-) } < \frac{\eta}{ \big( 4 M^2 e^{2 Mc} \big)^m }
\end{align}
for every $\eps \in (0, \eps_{\delta})$ and $i \in \{0, \dots, m\}$.
Combining the estimates~\eqref{eq: adsatz mit nichtglm sl 1} and~\eqref{eq: adsatz mit nichtglm sl 2} and using the product property from the definition of evolution systems, we readily conclude for every $i \in \{1, \dots, m\}$ that
\begin{align*}
 \norm{V_{\eps}(t)-U_{\eps}(t)} < \frac{\eta}{ \big( 4 M^2 e^{2 Mc} \big)^{m-i} } \le \eta
\end{align*}
for all $t \in I_{i-1 \, \delta} \cup J_{i \, \delta} \cup I_{i \, \delta}$ and $\eps \in  (0,\eps_{\delta})$, and the desired conclusion follows. 
\smallskip

We now prove the assertion in the case where $\sigma(\,.\,)$ at infinitely many points accumulating at only finitely many points $t_1, \dots, t_m$ (ordered in an increasing way) falls into $\sigma(A(\,.\,))\setminus \sigma(\,.\,)$. In order to do so, we partition $I$ and choose $\delta$ as we did above.
We then obtain the estimate~\eqref{eq: adsatz mit nichtglm sl 1} as above and the estimate~\eqref{eq: adsatz mit nichtglm sl 2} 
by realizing that  $\sigma(\,.\,) \big|_{I_{i \, \delta}}$ at only finitely many points falls into $\big(  \sigma(A(\,.\,))\setminus \sigma(\,.\,)  \big) \big|_{I_{i \, \delta}}$ (so that the case just proved 
can be applied). And from these estimates 
the conclusion follows as above.
\end{proof}


It should be noticed that the hypotheses of the above adiabatic theorem allow the spectral subsets $\sigma(t)$ to be non-isolated in $\sigma(A(t))$ for all points $t \in N$, 
that is (by our definition of spectral gaps in Section~2.1), they also 
allow for some situations without spectral gap. 
It should also be noticed that, in the situation of the above theorem, one has $P(t) A(t) \subset A(t) P(t)$ for every $t \in I$ (although a priori this is clear only for $t \in I \setminus N$), which follows by a continuity argument. 
(Indeed, 
if $t_0 \in I$ then it can be approximated by a sequence $(t_n)$ in $I \setminus N$. Since $t \mapsto (A(t_0)-1)(A(t)-1)^{-1}$ is NOT-continuous (by the $W^{1,\infty}_*$-regularity of $t \mapsto A(t)$ and Lemma~\ref{lm: prod- und inversenregel}), we see that for any $x \in D$ 
\begin{align} \label{eq: P(t) vertauscht mit A(t) für alle t}
A(t_0)P(t_n)x &= (A(t_0)-1)(A(t_n)-1)^{-1} \, P(t_n) (A(t_n)-1)x + P(t_n)x \nonumber \\
&\longrightarrow P(t_0)A(t_0)x \quad (n \to \infty)
\end{align}
and therefore $P(t_0)A(t_0) \subset A(t_0)P(t_0)$ by the closedness of $A(t_0)$.)
In particular, the evolution $V_T$ appearing in the above theorem really is adiabatic w.r.t.~to $P$ by Proposition~\ref{prop: intertwining relation}, as it should be. 



\subsection{Some remarks and examples}  \label{sect: bsp, adsätze mit sl}

We begin with four remarks concerning the adiabatic theorems with uniform and non-uniform spectral gap condition alike.
\smallskip

1. In the special situation where $\sigma(t) = \{ \lambda(t) \}$ and $\lambda(t)$ 
is a pole of the resolvent map $(\,.\,-A(t))^{-1}$ of order at most $m_0 \in \N$ for all $t \in I$, the operators $B(t)$ -- used in the proof of the adiabatic theorems with spectral gap condition above to solve the commutator equation~\eqref{eq: commutator equation} -- can be brought to a form, 
namely~\eqref{eq: wegweiser}, 
which points the way to the solution of 
an appropriate (approximate) commutator equation 
in 
the adiabatic theorems without spectral gap condition below. 
%
%
Since $PP'P$, $\overline{P}P'\overline{P} = 0$ by~\eqref{eq: PP'P=0} (where $\overline{P} := 1-P$) and 
\begin{align*}
(z-A(t))^{-1}P(t) = \frac{1}{z-\lambda(t)} \Big( 1- \frac{A(t)-\lambda(t)}{z-\lambda(t)} \Big)^{-1} P(t) 
= \sum_{k=0}^{m_0-1} \frac{ (A(t)-\lambda(t))^k P(t)}{ (z-\lambda(t))^{k+1} } 
\end{align*}
for every $z \in \rho(A(t))$ by Theorem~5.8-A of~\cite{Taylor58}, we see that  
\begin{gather*}
B(t) = \sum_{k = 0}^{m_0-1} \frac{1}{2 \pi i} \int_{\gamma_t} \frac{\overline{R}(t,z)}{(z-\lambda(t))^{k+1}}  \, dz \,\, P'(t) (A(t)-\lambda(t))^k P(t) \qquad \qquad \\
\qquad \qquad \qquad + \sum_{k = 0}^{m_0-1} (A(t)-\lambda(t))^k P(t) P'(t) \,\, \frac{1}{2 \pi i} \int_{\gamma_t} \frac{\overline{R}(t,z)}{(z-\lambda(t))^{k+1}}  \, dz,
\end{gather*}
and since the reduced resolvent map $z \mapsto \overline{R}(t,z) := (z-A(t)|_{\overline{P}(t)D(A(t))})^{-1} \overline{P}(t)$ is holomorphic on $\rho(A(t)) \cup \{ \lambda(t) \}$, we further see -- using Cauchy's theorem -- that
\begin{align} \label{eq: wegweiser}
&B(t) = \sum_{k = 0}^{m_0-1} \overline{R}(t,\lambda(t))^{k+1} P'(t) (A(t)-\lambda(t))^k P(t)  \notag \\
&\qquad \qquad \qquad \qquad \qquad \quad + \sum_{k = 0}^{m_0-1} (\lambda(t)-A(t))^k P(t) P'(t) \overline{R}(t,\lambda(t))^{k+1}.
\end{align}

2.  In the even more special situation where $\sigma(t) = \{ \lambda(t) \} \subset i \R$ and $\lambda(t)$ is a pole of the resolvent map $(\, . \, - A(t))^{-1}$, the hypotheses of the above adiabatic theorem with uniform spectral gap condition become essentially -- apart from regularity conditions -- equivalent to the hypotheses of the respective adiabatic theorem (Theorem~9) of~\cite{AvronGraf11}, 
and a similar equivalence holds true for the above adiabatic theorem with non-uniform spectral gap condition. 
Indeed, if $\sigma(t)$ for every $t \in I$ is a singleton consisting of a pole $\lambda(t)$ on the imaginary axis, then the order $m(t)$ of nilpotence of $A(t)|_{P(t)D}-\lambda(t)$ 
must be equal to~$1$, since otherwise the relation
\begin{align}  \label{eq: ascent = 1 für erz beschr halbgr}
\delta \big( \lambda(t)+\delta -A(t) \big)^{-1} P(t) 
= \sum_{k=0}^{m(t)-1} \frac{  (A(t)-\lambda(t))^k }{\delta^k} \,  P(t)
\end{align}
would yield the contradiction that the right hand side of~\eqref{eq: ascent = 1 für erz beschr halbgr} explodes as $\delta \searrow 0$ whereas the left hand side of~\eqref{eq: ascent = 1 für erz beschr halbgr} remains bounded as $\delta \searrow 0$ (by virtue of the $(M,0)$-stability of $A$ and by $\lambda(t) \in i \R$). And therefore (by Theorem~5.8-A of~\cite{Taylor58}) $P(t)X = \ker(A(t)-\lambda(t))$ and $(1-P(t))X = \rg(A(t)-\lambda(t))$ 
as in~\cite{AvronGraf11}.
\smallskip

3. It is obvious from the proof of Theorem~\ref{thm: handl adsatz mit glm sl} that the assumption that $\sigma(\,.\,)$ at no point fall into $\sigma(A(\,.\,)) \setminus \sigma(\,.\,)$ and that $t \mapsto \sigma(t)$ be continuous can be replaced by the weaker -- but also less convenient -- 
requirement that for each $t_0 \in I$ there be a non-trivial closed interval $J_{t_0} \subset I$ containing $t_0$ and a cycle $\gamma_{t_0}$ such that $\rg \gamma_{t_0} \subset \rho(A(t))$ and 
\begin{align*}
\operatorname{n}(\gamma_{t_0}, \sigma(t)) = 1 \quad \text{and} \quad \operatorname{n}(\gamma_{t_0}, \sigma(A(t)) \setminus \sigma(t)) = 0 
\end{align*}
for all $t \in J_{t_0}$. 
It can be shown that this weaker requirement 
still entails the upper semicontinuity of $t \mapsto \sigma(t)$ and hence -- by Proposition~\ref{prop: zshg isoliert und glm isoliert} -- it still ensures that $\sigma(\,.\,)$ at no point falls into $\sigma(A(\,.\,)) \setminus \sigma(\,.\,)$ or, in other words, that the spectral gap is uniform. (See Corollary~5.4 of~\cite{dipl} for a proof.)
Consequently, if one adds to the thus weakened hypotheses the requirement that $t \mapsto \sigma(t)$ be lower semicontinuous (which -- by Proposition~5.6 of~\cite{dipl} -- is fulfilled if, for instance, $\sigma(t)$ is finite for every $t \in I$, and $0 \in \sigma(t)$ for all $t \in I$ or $0 \notin \sigma(t)$ for all $t \in I$), one arrives at an adiabatic theorem equivalent to the original one above (Theorem~\ref{thm: handl adsatz mit glm sl}). 
Similar remarks hold for the case of non-uniform spectral gap (Theorem~\ref{thm: handl adsatz mit nichtglm sl}).
\smallskip

4. We finally remark 
that the above adiabatic theorems -- along with the commutator equation method used in their proofs -- 
can be extended to several subsets $\sigma_1(t)$, \dots, $\sigma_r(t)$ of $\sigma(A(t))$: 
if $A$, $\sigma_j$, $P_j$ for every $j \in \{1, \dots, r\}$ satisfy the hypotheses of the above adiabatic theorem with uniform or non-uniform spectral gap and if $\sigma_j(\,.\,)$ and $\sigma_l(\,.\,)$ for all $j \ne l$ fall into each other at only countably many points accumulating at only finitely many points, then there exists an evolution system $V_{\eps}$, namely that for $\frac 1 \eps A + K$ with
\begin{align} \label{eq: K mehrere sigma_j}
K(t) := \frac 1 2 \sum_{j=1}^{r+1} [P_j'(t),P_j(t)] \quad \text{and} \quad 
P_{r+1}(t) := 1-P(t) := 1 - \sum_{j=1}^r P_j(t),
\end{align}
which on the one hand is simultaneously adiabatic w.r.t.~all the $P_j$ by~\cite{Kato50} and on the other hand well approximates the evolution system $U_{\eps}$ for $\frac 1 \eps A$ in the sense that
\begin{align*}
\sup_{t \in I} \norm{ U_{\eps}(t) - V_{\eps}(t) } \longrightarrow 0 \quad (\eps \searrow 0).
\end{align*}
In order to see this, one has only to observe 
that $B(t) := \frac 1 2 \sum_{j=1}^{r+1} B_j(t)$ with
\begin{gather}
B_j := \frac{1}{2 \pi i} \int_{\gamma_j} (z-A)^{-1} P_j' (z-A)^{-1} \, dz \quad (j \in \{1, \dots, r\}) \notag \\
B_{r+1} :=  \frac{1}{2 \pi i} \int_{\gamma} (z-A)^{-1} P' (z-A)^{-1} \, dz  \label{eq: B_{r+1}} \\ 
\text{with \,} \gamma := \gamma_1 + \dotsb + \gamma_r  \text{\, $(\gamma_j = \gamma_{j \, t}$ as in the proofs above) and } P := P_1 + \dotsb + P_r \notag
\end{gather}
solves the commutator equation $B(t)A(t) - A(t)B(t) \subset K(t)$ for all points $t$ where no crossing 
takes place (because $[P_{r+1}', P_{r+1}] = [P',P]$) and then to proceed as in the proofs of the adiabatic theorems above. See also~\cite{BrouderPanatiStoltz10}.
In the special case of skew self-adjoint operators $A(t)$ one can further refine 
the statement above: it is then possible to show -- by further adapting the commutator equation method -- 
that even the 
evolution system $\ol{V}_{\eps}$ for $\frac 1 \eps A + \ol{K}$ with
\begin{align*}
\ol{K}(t) := \frac 1 2 \Big( [(P_{r+1}^-)'(t),P_{r+1}^-(t)] + \sum_{j=1}^r [P_j'(t),P_j(t)] + [(P_{r+1}^+)'(t),P_{r+1}^+(t)] \Big)
\end{align*}
well approximates the evolution system $U_{\eps}$ for $\frac 1 \eps A$ -- notice that $\ol{V}_{\eps}$ is 
is not only adiabatic w.r.t.~$P_{r+1} = P_{r+1}^- + P_{r+1}^+$ 
but also w.r.t.~$P_{r+1}^-$ and $P_{r+1}^+$ separately, where $P_{r+1}^{\pm}(t)$ are the spectral projections of $A(t)$ corresponding to the parts $\sigma^{\pm}(t)$ of the spectrum 
which on $i \R$ are located 
below resp.~above all the compact parts $\sigma_1(t)$, \dots, $\sigma_r(t)$.
In order to see this, 
set
\begin{align*}
B_{r+1 \, n}^{\pm}(t) := \frac{1}{2 \pi i} \int_{\gamma_{n \, t}^{\pm}} (z-A(t))^{-1} (P_{r+1}^{\pm})'(t) (z-A(t))^{-1} \, dz
\end{align*}
where $\gamma_{n \, t}^{\pm}(\tau) := \pm \, \tau + c^{\pm}(t)$ for $\tau \in [-n,n]$ 
with points $c^{\pm}(t) \in i \R$ lying in the gap between $\sigma^{\pm}(t)$ and the rest of $\sigma(A(t))$ 
and depending continuously differentiably on $t$, 
and observe that (by the skew self-adjointness of $A(t)$)
\begin{gather*}
P_{r+1 \, n}^{\pm}(t)x := \frac{1}{2 \pi i} \int_{\gamma_{n \, t}^{\pm}} (z-A(t))^{-1}x \, dz \longrightarrow P_{r+1}^{\pm}(t)x - \frac 1 2 x \quad (n \to \infty) \\
\text{and} \\
\big\| B_{r+1 \, n}(t) \big\|, \norm{B_{r+1 \, n}'(t)} \le \int_{-\infty}^{\infty} \frac{c}{\dist\big( \gamma_{n \, t}^{\pm}(\tau) ,\sigma(A(t)) \big)^2 } \, d \tau \le C < \infty \quad (n \in \N, t \in I).
\end{gather*}  
A slightly less general 
general statement was first proven in~\cite{Nenciu80} by a different method than the commutator equation technique indicated above. 
\bigskip

We now move on to discuss some examples. In the first -- very simple -- example, $t \mapsto A(t)$ is only $W^{1,\infty}_*$-regular and only $(M,0)$-stable (without being SOT-continuously differentiable or $(1,0)$-stable), 
which means that this example lies outside the scope of the previously known adiabatic theorems.

\begin{ex} \label{ex: A nur W^{1,infty}-reg und nur (M,0)-stabil}
Suppose $A$, $\sigma$, $P$ with $A(t) = R(t)^{-1} A_0(t) R(t)$, $\sigma(t) = \{ \lambda(t) \}$, $P(t) = R(t)^{-1} P_0 R(t)$, and $R(t) = e^{C t}$ are given as follows in $X := \ell^p(I_2) \times \ell^p(I_1)$ (where $p \in [1,\infty)$):
\begin{align*}
A_0(t) := \begin{pmatrix} \lambda(t) + \alpha(t)N & 0 \\ 0 & \mu(t) \end{pmatrix} 
\quad \text{and} \quad
P_0 := \begin{pmatrix} 1 & 0 \\ 0 & 0 \end{pmatrix},
\end{align*} 
where $\lambda$, $\alpha$, $N$ satisfy Condition~\ref{cond: baustein mit nicht-halbeinfachem ew} 
and where $\mu(t) \in \{ \Re z \le 0 \}$ is such that $\lambda(\,.\,)$ falls into $\mu(\,.\,)$ at only countably many points accumulating at only finitely many points. Additionally, choose the family $\lambda + \alpha N$ to be not $(1,0)$-stable (which, by the discussion after Lemma~\ref{lm: char (M,0)-stab für einfaches A}, can easily be achieved), the functions $t \mapsto \lambda(t)$, $\alpha(t)$, $\mu(t)$ to be Lipschitz continuous without being continuously differentiable, and 
\begin{align*}
C := \begin{pmatrix} 0 & 0 & 0 \\ 0 & 0 & 1 \\ 0 & -1 & 0 \end{pmatrix}.
\end{align*}
Since $A_0$ by Lemma~\ref{lm: char (M,0)-stab für einfaches A} is $(M_0,0)$-stable for some $M_0 \in [1,\infty)$, the twisted family $A$ is $(M,0)$-stable for another $M \in [1,\infty)$ by Lemma~\ref{lm: (M,w)-stabilität und ähnl.trf.}. 
And since $P_0$ is obviously associated with $A_0(t)$ and $\sigma(t)$ for every $t \in I \setminus N$ 
(where $N$ denotes the set of those points in $I$ where $\lambda(\,.\,)$ falls into $\mu(\,.\,)$), the same is true for $P(t)$ and $A(t)$ instead of $P_0$, $A_0(t)$. So the hypotheses of the adiabatic theorem with non-uniform spectral gap condition (Theorem~\ref{thm: handl adsatz mit nichtglm sl}) are fulfilled and therefore 
\begin{align*}
(1-P(t)) U_{\eps}(t) P(0) \longrightarrow 0 \quad (\eps \searrow 0)
\end{align*}
uniformly in $t \in I$, 
but this does not already follow from the trivial adiabatic theorems above (Proposition~\ref{prop: triv adsatz}~(i) and~(ii)). Indeed, as $[P_0,C] \ne 0$, it follows that $P'(t) \ne 0$ for every $t \in I$. 
%
And if $\mu$ is chosen in such a way that $\mu(t_0) = 0$ for some $t_0 \in [0,1)$, then it follows that the (block diagonal!) evolution $U_{0 \, \eps}$ for $\frac 1 \eps A_0$ for no $\gamma > 0$ satisfies the estimate~\eqref{eq: zweiter triv adsatz} uniformly in $\eps \in (0,\infty)$, whence by Proposition~\ref{prop: störreihe für gestörte zeitentw} the same is true for the evolution $\tilde{U}_{0 \, \eps}$ for $\frac 1 \eps A_0 + C = \frac 1 \eps A_0 + R' R^{-1}$ and finally 
(by the proof of Corollary~\ref{cor: Kato85}) also for the evolution $U_{\eps}$ for $\frac 1 \eps A$ that we are interested in.   
$\blacktriangleleft$
\end{ex}

In the next exmaple, the spectral subsets $\sigma(t) = \{ \lambda(t) \}$ 
are singletons consisting of spectral values $\lambda(t) \in i \R$ of $A(t)$ that are not eigenvalues and, a fortiori, are not poles (but essential singularities) of $(\,.\,-A(t))^{-1}$. In particular, the adiabatic theorem with spectral gap condition from~\cite{AvronGraf11} 
cannot be applied here (also see Example~4 of~\cite{AvronGraf11}). 
We make use of the Volterra operator $V$ in $L^2([0,1])$ defined by
\begin{align*}
(V f)(t) := \int_0^t f(\tau) \, d\tau \quad (t \in I).
\end{align*}
Since $V$ is quasinilpotent and both $V$ and $V^*$ are injective, it follows that $\sigma(V) = \{0\} = \sigma_c(V)$.

\begin{ex} \label{ex: lambda wesentl sing}
Suppose $A$, $\sigma$, $P$ with $A(t) = R(t)^{-1} A_0(t) R(t)$, $\sigma(t) = \{ \lambda(t) \} := \{0\}$, $P(t) = R(t)^{-1} P_0 R(t)$, and $R(t) = e^{C t}$ are given as follows in $X := L^2(I) \times L^2(I)$:
\begin{align*}
A_0(t) := \begin{pmatrix} -V & 0 \\ 0 & a(t) + M_f \end{pmatrix} 
\quad \text{and} \quad
P_0 := \begin{pmatrix} 1 & 0 \\ 0 & 0 \end{pmatrix},
\end{align*} 
where $V$ is the Volterra operator defined above and where $f: I \to \C$ is a measurable function with $\operatorname{ess-rg}f = [-1,0]$. Additionally, suppose the function $t \mapsto a(t) \in (-\infty,0]$ is Lipschitz continuous and falls into $0$ at only countably many points accumulating at only finitely many points, and 
\begin{align*}
C := \begin{pmatrix} 0 & 1 \\ -1 & 0 \end{pmatrix}.
\end{align*}
Since $-V$ and $a(t)+M_f$ are dissipative in $L^2(I)$, the family $A_0$ is $(1,0)$-stable and, by the unitarity of the rotation operators $e^{C t} = R(t)$, the same goes for $A$. Also, since $P_0$ commutes with $A_0(t)$ and 
\begin{gather*}
\sigma(A_0(t)|_{P_0 X}) = \sigma(-V) = \{0\} = \sigma(t), \\ 
\sigma(A_0(t)|_{(1-P_0)X}) = \sigma(a(t)+M_f) = a(t)+[-1,0] = \sigma(A_0(t))\setminus \sigma(t)
\end{gather*} 
for every $t \in I \setminus N$, 
$P_0$ is associated with $A_0(t)$ and $\sigma(t)$ for every $t \in I \setminus N$ and hence the same holds true for $P(t)$ and $A(t)$ instead of $P_0$, $A_0(t)$. All other hypotheses of Theorem~\ref{thm: handl adsatz mit nichtglm sl} are clear.  
$\blacktriangleleft$
\end{ex}

We 
finally give a simple example 
showing that the conclusion of the above adiabatic theorems 
will, in general, fail if the evolution systems $U_{\eps}$ for $\frac 1 \eps A$ are not bounded in $\eps$ (and hence $A$ is not $(M,0)$-stable).

\begin{ex} \label{ex: (M,0)-stabilität wesentl, mit sl}
Suppose $A$, $\sigma$, $P$ with $A(t) := R(t)^{-1} A_0(t) R(t)$, $\sigma(t) := \{\lambda(t)\}$ and $P(t) := R(t)^{-1} P_0 R(t)$ are given as follows in $X := \ell^2(I_2)$:
\begin{align*}
A_0(t) := \begin{pmatrix} \lambda(t) & 0 \\ 0 & 0 \end{pmatrix}, \quad  P_0 := \begin{pmatrix} 1 & 0 \\ 0 & 1 \end{pmatrix}, \quad  R(t) := e^{C t}  \text{ \, with \, } C := 2 \pi \begin{pmatrix} 0  & 1 \\ -1 & 0 \end{pmatrix},
\end{align*}
and $t \mapsto \lambda(t) \in [0,\infty)$ is Lipschitz continuous such that $\lambda(\,.\,)$ at only countably many points accumulating at only finitely many points falls into $0$. 
Then all the hypotheses of Theorem~\ref{thm: handl adsatz mit nichtglm sl} 
are fullfilled with the sole exception that $A$ is not $(M,0)$-stable (because $\sigma(A(t)) = \{ 0, \lambda(t) \}$ is contained in the closed left half-plane only for countably many $t \in I$) 
and, in fact, the conclusion of this theorems 
fails. Indeed, since
\begin{align*}
R(t) = e^{C t} = \begin{pmatrix} \cos(2 \pi t) & \sin(2\pi t) \\ -\sin(2\pi t) & \cos(2\pi t) \end{pmatrix},
\end{align*}
we see that
\begin{align*}
A(t) = R(t)^{-1} A_0(t) R(t) = \lambda(t) \begin{pmatrix} \cos^2(2\pi t) & \cos(2\pi t) \sin(2 \pi t) \\ \cos(2\pi t) \sin(2 \pi t)  & \sin^2(2\pi t) \end{pmatrix}
\end{align*}
is a positive linear map (in the lattice sense) for all $t \in [0,t_0]$ with $t_0 := \frac{1}{4}$. And since $1-P(t_0) = P_0$, we see (by the series expansion for $U_{\eps}$) 
that
\begin{align*}
&\norm{(1-P(t_0))U_{\eps}(t_0)P(0) e_1} = \big| \scprd{ e_1, U_{\eps}(t_0) e_1} \big| = \scprd{ e_1, U_{\eps}(t_0) e_1} \\
&\qquad \qquad \ge 1 + \frac 1 \eps \int_0^{t_0} \scprd{e_1, A(\tau) e_1} \, d\tau = 1 + \frac 1 \eps \int_0^{t_0} \lambda(\tau) \cos^2(2\pi \tau) \, d\tau,
\end{align*} 
which right hand side does not converge to $0$ as $\eps \searrow 0$, as desired. 
$\blacktriangleleft$
\end{ex}

An example with non-diagonalizable $A(t)$ and $\sigma(A(t)) = \{0,i\}$ showing as well 
that the conclusion of the above adiabatic theorems 
will generally fail if the family $A$ is not $(M,0)$-stable 
can be found in Joye's paper~\cite{Joye07} at the end of Section~1. A generic 
version of this (non-generic)  
example is given by the following data: 
$A(t) := R(t)^{-1} A_0(t) R(t)$, $\lambda(t) := 0$, $P(t) := R(t)^{-1} P_0 R(t)$  in $X := \ell^2(I_3)$, where
\begin{align*}
A_0(t) = A_0 := \begin{pmatrix} 0 & i & 0 \\ 0 & 0 & 0 \\ 0 & 0 & i \end{pmatrix} 
\quad \text{and} \quad 
R(t):= e^{C t} \quad \text{with} \quad C:= \begin{pmatrix} 0 & 0 & 0 \\ i k & 0 & k \\ 1 & -1 & 0 \end{pmatrix}
\end{align*}
for a parameter $k \in (-\infty, 0)$ and where $P_0$ is the orthogonal projection onto $\spn\{ e_1, e_2 \}$.

\section{Adiabatic theorems without spectral gap condition for time-independent domains} \label{sect: adsätze ohne sl}

After having established 
general adiabatic theorems with spectral gap condition in Section~3, 
we can 
now prove 
an adiabatic theorem without spectral gap condition for general operators $A(t)$ with not necessarily weakly semisimple spectral values $\lambda(t)$: in Section~4.1 it appears in a qualitative version 
and in Section~4.2 in a quantitatively refined version, 
which, in turn, is applied to the special case of spectral operators of scalar type. 
%
We thereby generalize the recent adiabatic theorems without spectral gap condition of Avron, Fraas, Graf, Grech from~\cite{AvronGraf11} and of Schmid from~\cite{dipl}, which theorems -- although independently obtained -- are essentially the same (save for some regularity subtleties). 
In these theorems -- which so far are the only ones to cover 
not necessarily skew self-adjoint operators $A(t)$ 
in the case without spectral gap -- 
the considered eigenvalues $\lambda(t)$ are required 
to be weakly semisimple. Since, however, the eigenvalues of general operators are generally not weakly semisimple (Section~4.3 provides simple examples for this), it is natural to ask whether one can do without the requirement of weak semisimplicity (or, in other words, weak $1$-associatedness). And the theorems below show that one actually can: 
indeed, in essence, it suffices to require 
just weak associatedness -- which, at the beginning of Section~2.1, has been explained to be a fairly natural assumption.

\subsection{A qualitative adiabatic theorem without spectral gap condition} 

We begin with two lemmas that will be crucial in the proofs of the presented adiabatic theorems without spectral gap condition.
%

\begin{lm} \label{lm: lm 1 zum erw adsatz ohne sl}
Suppose that $A: D(A) \subset X \to X$ is a densely defined closed linear map 
and that $\lambda \in \sigma(A)$ and $\delta_0 \in (0,\infty)$ and $\vartheta_0 \in \R$ such that $\lambda + \delta e^{i \vartheta_0} \in \rho(A)$ for all $\delta \in (0,\delta_0]$. Suppose finally that $P$ is a bounded projection in $X$ such that $P A \subset A P$ and
\begin{align*}
(1-P)X \subset \overline{\rg(A-\lambda)^{m_0}}
\end{align*}
for some $m_0 \in \N$, and that there is $M_0 \in (0,\infty)$ such that
\begin{align*}
\norm{ \big( \lambda + \delta e^{i \vartheta_0} - A \big)^{-1} (1-P) } \le \frac{M_0}{\delta}
\end{align*}
for all $\delta \in (0, \delta_0]$. 
Then $\delta \big( \lambda + \delta e^{i \vartheta_0} - A \big)^{-1} (1-P)x \longrightarrow 0$ as $\delta \searrow 0$ for all $x \in X$. 
\end{lm}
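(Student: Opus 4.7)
The plan is to reduce to the case $\lambda = 0$ and $\vartheta_0 = 0$ by passing from $A$ to $\tilde A := e^{-i\vartheta_0}(A-\lambda)$. Since $(\delta-\tilde A)^{-1} = e^{i\vartheta_0}(\lambda + \delta e^{i\vartheta_0}-A)^{-1}$, this transforms the resolvent estimate into $\| (\delta-\tilde A)^{-1}(1-P) \| \le M_0/\delta$, and it preserves both $P\tilde A \subset \tilde A P$ and $(1-P)X \subset \overline{\rg \tilde A^{m_0}}$. So in the sequel one has to show that $\delta(\delta-A)^{-1}(1-P)x \to 0$ for every $x \in X$ as $\delta \searrow 0$.

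Since $\|\delta(\delta-A)^{-1}(1-P)\| \le M_0$ uniformly in $\delta \in (0,\delta_0]$, a standard $3\eps$-argument reduces the pointwise convergence to an approximation statement. Given $x \in X$ and $\eps > 0$, the hypothesis $(1-P)X \subset \overline{\rg A^{m_0}}$ furnishes a $u \in D(A^{m_0})$ with $\| (1-P)x - A^{m_0}u\| < \eps$; invoking $(1-P)A \subset A(1-P)$ (which follows from $PA \subset AP$ and carries over to powers), I would then replace $u$ by $v := (1-P) u \in D(A^{m_0}) \cap (1-P)X$, so that $A^{m_0}v = (1-P) A^{m_0}u$ still approximates $(1-P)x$ within~$\eps$.

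For such a $v$, I would iterate the identity $(\delta-A)^{-1}Aw = -w + \delta(\delta-A)^{-1}w$ (valid on $D(A)$) to obtain, by induction on $m_0$,
\begin{align*}
\delta (\delta-A)^{-1} A^{m_0} v = -\sum_{k=0}^{m_0-1} \delta^{m_0-k} A^k v + \delta^{m_0+1} (\delta-A)^{-1} v.
\end{align*}
Every term in the sum carries a positive power of $\delta$ and hence vanishes as $\delta \searrow 0$; the remaining term, using $v = (1-P)v$, is majorized by $\delta^{m_0+1} \cdot (M_0/\delta)\cdot \|v\| = \delta^{m_0} M_0 \|v\|$ and vanishes as well. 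Combining this with the uniform bound $\| \delta(\delta-A)^{-1}(1-P) z\| \le M_0 \|z\|$ applied to $z := (1-P)x - A^{m_0}v$ gives $\limsup_{\delta \searrow 0}\| \delta(\delta-A)^{-1}(1-P) x\| \le M_0 \eps$, and letting $\eps \searrow 0$ finishes the argument.

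The one point that really has to be noticed is the replacement of $u$ by $(1-P)u$: without it, the resolvent estimate in the hypothesis (which only controls $(\delta-A)^{-1}$ after precomposition with $1-P$) could not be applied to the remainder $(\delta-A)^{-1}v$ with the required $1/\delta$ bound. Everything else is bookkeeping with the resolvent identity and routine use of the uniform boundedness and density.
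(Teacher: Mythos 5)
Your proof is correct and follows essentially the same strategy as the paper's: approximate $(1-P)x$ by an element of $\rg(A-\lambda)^{m_0}$, compute explicitly there (your iterated resolvent identity $(\delta-A)^{-1}Aw=-w+\delta(\delta-A)^{-1}w$ produces the same expansion as the paper's binomial expansion of $(\lambda-A)^{m_0}$ in powers of $\lambda+\delta e^{i\vartheta_0}-A$), and conclude via the uniform bound $\norm{\delta(\lambda+\delta e^{i\vartheta_0}-A)^{-1}(1-P)}\le M_0$. Your device of projecting the approximant $u$ to $v=(1-P)u$ plays the same role as the paper's tactic of carrying $\ol{P}=1-P$ through the whole computation, and the only microscopic imprecision -- $A^{m_0}v$ approximates $(1-P)x$ within $\norm{1-P}\,\eps$ rather than $\eps$ -- is of course harmless.
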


\begin{proof}
If $x \in \rg(A-\lambda)^{m_0}$, then $x = (\lambda - A)^{m_0}x_0$ for some $x_0 \in D(A^{m_0})$ and, by the assumed resolvent estimate,
\begin{align*} 
\delta \big( \lambda + \delta e^{i \vartheta_0} - A \big)^{-1} \ol{P}x 
=& \,\, \delta \big( \lambda + \delta e^{i \vartheta_0} - A \big)^{-1} \ol{P} \big(-\delta e^{i \vartheta_0} \big)^{m_0}x_0 \notag \\
&+ \delta \sum_{k=1}^{m_0} \binom{m_0}{k} \big( \lambda + \delta e^{i \vartheta_0} - A \big)^{k-1} \big(-\delta e^{i \vartheta_0} \big)^{m_0-k} \ol{P} x_0
\longrightarrow 0
\end{align*}
as $\delta \searrow 0$, where of course $\ol{P} := 1-P$.
And if $x \in X$, then $\ol{x} := \ol{P}x$ can be approximated arbitrarily well by elements $y$ of $\rg(A-\lambda)^{m_0}$ and therefore
\begin{align*}
\delta \big( \lambda + \delta e^{i \vartheta_0} - A \big)^{-1} \ol{P}x = \delta \big( \lambda + \delta e^{i \vartheta_0} - A \big)^{-1} \overline{P} (\overline{x}-y) +  \delta \big( \lambda + \delta e^{i \vartheta_0} - A \big)^{-1} \overline{P} y 
\end{align*}
can be made arbitrarily small for $\delta$ small enough by the assumed resolvent estimate and by what has just been shown.
\end{proof}

\begin{lm} \label{lm: lm 2 zum erw adsatz ohne sl}
Suppose that $A(t): D(A(t)) \subset X \to X$ for every $t \in I$ is a densely defined closed linear map, that $\lambda(t) \in \sigma(A(t))$ and $\delta_0 \in (0,\infty)$ and $\vartheta(t) \in \R$ such that $\lambda(t) + \delta e^{i \vartheta(t)} \in \rho(A(t))$ for all $\delta \in (0,\delta_0]$ and $t \in I$ and such that 
\begin{align*}
t \mapsto \big( \lambda(t) + \delta e^{i \vartheta(t)} - A(t) \big)^{-1}
\end{align*}
is in $W^{1,\infty}_*(I,L(X))$ (resp.~SOT-continuously differentiable) 
and $t \mapsto \lambda(t)$ as well as $t \mapsto e^{i \vartheta(t)}$ is Lipschitz continuous (resp.~continuously differentiable).
Suppose further that $P(t)$ for every $t \in I$ is a bounded projection in $X$ such that $P(t)A(t) \subset A(t)P(t)$ for every $t \in I$ and $P(t)X \subset \ker(A(t)-\lambda(t))^{m_0}$ for every $t \in I$ (where $m_0 \in \N$) and $t \mapsto P(t)$ is SOT-continuously differentiable. 
Then 
\begin{align*}
t \mapsto (A(t)-\lambda(t))P(t)
\end{align*}
is in $W^{1,\infty}_*(I,L(X))$ (resp.~SOT-continuously differentiable) and, in particular, for every $\eps \in (0,\infty)$ the evolution system $V_{0\,\eps}$ for $\frac 1 \eps AP + [P',P]$ exists on $X$ and is adiabatic w.r.t.~$P$. 
If, in addition, the evolution system $U_{\eps}$ for $\frac 1 \eps A$ exists on $D(A(t))$ 
and if there is an $M \in (0,\infty)$ such that $\norm{U_{\eps}(t,s)} \le M$ for all $(s,t) \in \Delta$ and $\eps \in (0,\infty)$, then
\begin{align*}
\norm{ V_{0\,\eps}(t,s) P(s) } \le Mc \, e^{Mc(t-s)} 
\end{align*}
for all $(s,t) \in \Delta$, where $c$ is an upper bound of $t \mapsto \norm{P(t)}, \norm{ P'(t) }$. 
\end{lm}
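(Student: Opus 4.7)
\emph{Plan.} The linchpin is to show that $t \mapsto N(t) := (A(t)-\lambda(t))P(t)$ is $W^{1,\infty}_*$-regular (resp.\ SOT-continuously differentiable); once this is established, the existence and adiabaticity of $V_{0\,\eps}$ and the norm estimate follow from standard arguments.

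\emph{Step 1 (regularity of $N$; the main obstacle).} For every $t \in I$, the space $P(t)X \subset \ker(A(t)-\lambda(t))^{m_0}$ is a closed $A(t)$-invariant subspace of $D(A(t))$, so the closed graph theorem forces $N(t) \in L(X)$. Writing $\mu_\delta(t) := \delta e^{i\vartheta(t)}$ and $R_\delta(t) := (\lambda(t)+\mu_\delta(t)-A(t))^{-1}$, the commutation $P(t) A(t) \subset A(t) P(t)$ together with the nilpotency of $N(t)$ on $P(t)X$ yields the finite geometric-series identity
\begin{align*}
R_\delta(t) P(t) = \sum_{k=0}^{m_0-1} \mu_\delta(t)^{-(k+1)} N(t)^k
\end{align*}
for every $\delta \in (0, \delta_0]$. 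Selecting $m_0$ distinct values $\delta_1, \dots, \delta_{m_0} \in (0, \delta_0]$, this becomes a linear system of $m_0$ operator equations in the unknowns $N(t)^0, \dots, N(t)^{m_0-1}$ whose scalar coefficient matrix is Vandermonde-type: its determinant is a nonzero $t$-independent constant times $e^{-ic\vartheta(t)}$ (with $c = m_0(m_0+1)/2$) and is therefore uniformly bounded below in modulus. Cramer's rule then expresses $N(t)$ as an explicit linear combination of the operators $R_{\delta_j}(t) P(t)$ with scalar coefficients that are polynomials in $e^{\pm i \vartheta(t)}$, hence Lipschitz (resp.\ continuously differentiable) in $t$. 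By Proposition~\ref{prop: WOT-stet db impl W^{1,infty}-reg} and Lemma~\ref{lm: prod- und inversenregel}, each $R_{\delta_j}(\,.\,) P(\,.\,)$ belongs to $W^{1,\infty}_*(I, L(X))$ (resp.\ is SOT-continuously differentiable), and the same regularity transfers to $N(\,.\,)$.

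\emph{Step 2 (existence and adiabaticity of $V_{0\,\eps}$).} Since $A(t)P(t) = N(t) + \lambda(t) P(t)$, the family $t \mapsto \tfrac{1}{\eps} A(t) P(t) + [P'(t), P(t)]$ consists of bounded operators on $X$ with the regularity from Step 1 and is trivially stable in the sense of Condition~\ref{cond: reg 1}. Theorem~\ref{thm: Kato85} thus produces the evolution system $V_{0\,\eps}$ on $X$. The commutation hypothesis moreover yields $P(t) \cdot A(t) P(t) = A(t) P(t)^2 = A(t) P(t) \cdot P(t)$, so $P(t)$ commutes with $A(t) P(t)$, and Proposition~\ref{prop: intertwining relation} delivers adiabaticity of $V_{0\,\eps}$ w.r.t.\ $P$.

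\emph{Step 3 (norm estimate).} Assuming now $\norm{U_\eps(t,s)} \le M$, I fix $x \in X$ and compute the right derivative of the map $[s,t] \ni r \mapsto U_\eps(t,r) V_{0\,\eps}(r,s) P(s) x$ by combining Lemmas~\ref{lm: prodregel rechtsseit db} and~\ref{lm: zeitentw rechtsseit db}. The adiabaticity $V_{0\,\eps}(r,s) P(s) = P(r) V_{0\,\eps}(r,s) P(s)$ makes the two $A(r)$-contributions to the right derivative cancel, so Lemma~\ref{lm: rechtsseit db und W^{1,infty}} and integration over $[s,t]$ yield the Duhamel-type identity
\begin{align*}
V_{0\,\eps}(t,s) P(s) x - U_\eps(t,s) P(s) x = \int_s^t U_\eps(t,r) [P'(r), P(r)] V_{0\,\eps}(r,s) P(s) x \, dr.
\end{align*}
The identity $P(r) P'(r) P(r) = 0$ from~\eqref{eq: PP'P=0} together with adiabaticity collapses the commutator in the integrand to $P'(r) V_{0\,\eps}(r,s) P(s)$, so the integrand is norm-bounded by $Mc\norm{V_{0\,\eps}(r,s) P(s)}$. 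Gr\"onwall's inequality then gives $\norm{V_{0\,\eps}(t,s) P(s)} \le Mc\, e^{Mc(t-s)}$, as required.
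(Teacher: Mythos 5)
Your proof is correct, and Steps 2 and 3 reproduce the paper's argument essentially verbatim (Duhamel identity from the right-derivative computation, cancellation via adiabaticity and $PP'P=0$, then Gr\"onwall). Step 1, which is the crux, takes a genuinely different route. The paper works with a \emph{single} $\delta$ and the explicit algebraic identity
\[
(A-\lambda)P = (A-\lambda)S_\delta\, S_\delta^{m_0-1}\,(A-\lambda-\delta e^{i\vartheta})^{m_0}P
= \big(1+\delta e^{i\vartheta}S_\delta\big)\sum_{k=0}^{m_0-1}\binom{m_0}{k}(-\delta e^{i\vartheta})^{m_0-k} S_\delta^{m_0-1-k}\big(1+\delta e^{i\vartheta}S_\delta\big)^k P,
\]
using $(A-\lambda)^{m_0}P=0$ to truncate the binomial expansion, and then invokes Lemma~\ref{lm: prod- und inversenregel} directly. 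You instead sample the resolvent at $m_0$ distinct values $\delta_1,\dots,\delta_{m_0}$, note the finite geometric series $R_{\delta_j}P=\sum_k \mu_{\delta_j}^{-(k+1)}N^k$, and solve the scalar Vandermonde system by Cramer's rule; the determinant computation ($C_0\,e^{-i m_0(m_0+1)\vartheta(t)/2}$ with $C_0 \ne 0$, hence constant modulus) is correct, as is the observation that the resulting coefficients are polynomials in $e^{\pm i\vartheta(t)}$. Both routes reduce the regularity of $(A(\,.\,)-\lambda(\,.\,))P(\,.\,)$ to that of the maps $t\mapsto R_\delta(t)P(t)$ and the scalar functions of $e^{i\vartheta(t)}$, and both are equally valid; yours is perhaps conceptually cleaner (it makes transparent that $m_0$ evaluations of the finite geometric series determine all powers $N^k$), while the paper's one-$\delta$ identity avoids any appeal to linear algebra over operator-valued right-hand sides. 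A minor stylistic point: invoking Theorem~\ref{thm: Kato85} for the existence of $V_{0\,\eps}$ is heavier machinery than needed -- the paper merely observes that $t\mapsto A(t)P(t)$ is SOT-continuous and the generators are bounded -- but it is not wrong.
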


\begin{proof}
Since $P(t)X \subset \ker(A(t)-\lambda(t))^{m_0} \subset D(A(t)^{m_0})$ 
for every $t \in I$, 
we see that $(A(t)-\lambda(t))P(t)$ is a bounded linear map in $X$ for every $t \in I$ and that 
\begin{align}  \label{eq: stet von A P, lm 2 zum erw adsatz ohne sl}
t \mapsto \,\, &(A(t)-\lambda(t))P(t) = (A(t)-\lambda(t)) S_{\delta}(t) \,\, S_{\delta}(t)^{m_0-1} \big( A(t)-\lambda(t)-\delta e^{i \vartheta(t)} \big)^{m_0} P(t) \notag \\
&= \big( 1 + \delta e^{i \vartheta(t)} S_{\delta}(t) \big) \, \sum_{k=0}^{m_0-1} \binom{m_0}{k}  \big( -\delta e^{i \vartheta(t)} \big)^{m_0-k} \cdot \notag \\ 
&\qquad \qquad \qquad \qquad \qquad \qquad \qquad \cdot S_{\delta}(t)^{m_0-1 - k} \, \big( 1 + \delta e^{i \vartheta(t)} S_{\delta}(t) \big)^k P(t) 
\end{align}
is in $W^{1,\infty}_*(I,L(X))$ (resp.~SOT-continuously differentiable) by Lemma~\ref{lm: prod- und inversenregel}, because
\begin{align*}
t \mapsto S_{\delta}(t) := \big( A(t)-\lambda(t)-\delta e^{i \vartheta(t)} \big)^{-1}
\end{align*}
is of that regularity by Lemma~\ref{lm: prod- und inversenregel}. In particular, $t \mapsto A(t)P(t)$ is SOT-continuous and therefore 
the evolution system $V_{\eps}$ for $\frac 1 \eps AP + [P',P]$ exists on $X$ and (by virtue of Proposition~\ref{prop: intertwining relation}) is adiabatic w.r.t.~$P$ for every $\eps \in (0,\infty)$. 
Suppose finally that the additional assumption concerning the evolution system $U_{\eps}$ for $\frac 1 \eps A$ is satisfied. Since for all $x \in X$ and $(s,t) \in \Delta$ the map $[s,t] \ni \tau \mapsto U_{\eps}(t,\tau)V_{0\,\eps}(\tau,s)P(s)x$ is continuous and right differentiable (Lemma~\ref{lm: prodregel rechtsseit db}) with bounded (even continuous)
right derivative
\begin{align*}
\tau \mapsto & \, \, U_{\eps}(t,\tau) \Big( \frac 1 \eps A(\tau)P(\tau)- \frac 1 \eps A(\tau) + [P'(\tau),P(\tau)] \Big) V_{0\,\eps}(\tau,s)P(s)x \\
&= U_{\eps}(t,\tau) P'(\tau) V_{0\,\eps}(\tau,s)P(s)x 
\end{align*}
(where in the last equality the adiabaticity of $V_{\eps}$ w.r.t.~$P$ and~\eqref{eq: PP'P=0} have been used), it follows from Lemma~\ref{lm: rechtsseit db und W^{1,infty}} that
\begin{align}  \label{eq: intgl, lm 2 zum erw adsatz ohne sl}
V_{0\,\eps}(t,s)P(s)x - U_{\eps}(t,s)P(s)x &= U_{\eps}(t,\tau) V_{0\,\eps}(\tau,s)P(s)x \big|_{\tau=s}^{\tau=t} \notag \\
&= \int_s^t U_{\eps}(t,\tau) P'(\tau) V_{0\,\eps}(\tau,s)P(s)x \, d\tau
\end{align} 
for all $(s,t) \in \Delta$ and $x \in X$. And this integral equation, by the Gronwall inequality, yields the desired estimate for $V_{0\,\eps}(t,s)P(s)$.
\end{proof}

With these lemmas at hand, we can now prove the announced 
general adiabatic theorem without spectral gap condition for not necessarily weakly semisimple eigenvalues. 
Similarly to the works~\cite{AvronElgart99} of Avron and Elgart and~\cite{Teufel01} of Teufel 
its proof rests upon solving a suitable approximate commutator equation. 
In this undertaking the insights gained in~Section~3, especially formula~\eqref{eq: wegweiser}, will prove 
indispensable. 
(Alternatively, part~(i) of the theorem could also -- 
less elegantly -- 
be based upon 
a suitable iterated partial integration argument, but part~(ii) could not.)

\begin{thm} \label{thm: erw adsatz ohne sl}
Suppose $A(t): D \subset X \to X$ for every $t \in I$ is a linear map such that Condition~\ref{cond: reg 1} is satisfied with $\omega = 0$. Suppose further that $\lambda(t)$ for every $t \in I$ is an eigenvalue of $A(t)$, and that there are numbers $\delta_0 \in (0,\infty)$ and $\vartheta(t) \in \R$ such that $\lambda(t) + \delta e^{i \vartheta(t)} \in \rho(A(t))$ for all $\delta \in (0,\delta_0]$ and $t \in I$ and such that $t \mapsto \lambda(t)$ and $t \mapsto e^{i \vartheta(t)}$ are Lipschitz continuous. 
Suppose finally that $P(t)$ for every $t \in I$ is a bounded projection in $X$ commuting with $A(t)$ 
such that $P(t)$ for almost every $t \in I$ is weakly associated with $A(t)$ and $\lambda(t)$, suppose there is an $M_0 \in (0,\infty)$ such that 
\begin{align*}
\norm{ \big( \lambda(t) + \delta e^{i \vartheta(t)} - A(t) \big)^{-1} (1-P(t)) } \le \frac{M_0}{\delta} 
\end{align*}
for all $\delta \in (0, \delta_0]$ and $t \in I$, let $\rk P(0) < \infty$ and suppose that $t \mapsto P(t)$ is SOT-continuously differentiable. 
%
\begin{itemize}
\item[(i)] If $X$ is arbitrary (not necessarily reflexive), then
\begin{align*} 
\sup_{t \in I} \norm{ \big( U_{\eps}(t) - V_{0 \, \eps}(t) \big) P(0) } \longrightarrow 0 \quad (\eps \searrow 0),
\end{align*}
where $V_{0 \, \eps}$ is the evolution system for $\frac 1 \eps  A P + [P',P]$ on $X$ for every $\eps \in (0,\infty)$. 
\item[(ii)] If $X$ is reflexive and $t \mapsto P(t)$ is norm continuously differentiable, then
\begin{align*}
\sup_{t \in I} \norm{ U_{\eps}(t) - V_{\eps}(t) } \longrightarrow 0 \quad (\eps \searrow 0),
\end{align*}
whenever the evolution system $V_{\eps}$ for $\frac 1 \eps A + [P',P]$ exists on $D$ for every $\eps \in (0, \infty)$.
\end{itemize}
\end{thm}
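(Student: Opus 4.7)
The plan is to adapt the commutator-equation proof of Theorem~\ref{thm: handl adsatz mit glm sl} by replacing the reduced resolvent at $\lambda(t)$ with the shifted resolvent $R_\delta(t) := (\lambda(t)+\delta e^{i\vartheta(t)}-A(t))^{-1}$, which lies in the resolvent set by hypothesis even in the absence of a spectral gap. This will yield only an \emph{approximate} commutator equation whose error must be controlled qualitatively via Lemma~\ref{lm: lm 1 zum erw adsatz ohne sl}.

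For part~(i), I first observe that by Theorem~\ref{thm: typ m�gl f�r PX und (1-P)X} and weak associatedness, the nilpotence order $m(t)$ of $A(t)|_{P(t)D}-\lambda(t)$ satisfies $m(t)\le\rk P(t)$; since $\rk P(0)<\infty$ and $t\mapsto P(t)$ is SOT-continuously differentiable, $\rk P(t)$ is constant, so there is a uniform bound $m(t)\le m_0$. In analogy with~\eqref{eq: wegweiser} I then define
\[ B_\delta(t) := \sum_{k=0}^{m_0-1}R_\delta(t)^{k+1}(1-P(t))P'(t)(A(t)-\lambda(t))^k P(t) + \sum_{k=0}^{m_0-1}(\lambda(t)-A(t))^k P(t)P'(t)R_\delta(t)^{k+1}(1-P(t)). \]
Using the identities $(A-\lambda)R_\delta = -1+\delta e^{i\vartheta}R_\delta$, $PP'P = 0$, and $(A-\lambda)^{m_0}P = 0$, a direct algebraic computation telescopes to
\[ A(t)B_\delta(t) - B_\delta(t)A(t) = -[P'(t),P(t)] + E_\delta(t), \]
where every summand of $E_\delta(t)$ carries a factor of the form $\delta R_\delta(t)^{j}(1-P(t))$ with $1\le j\le m_0$. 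Lemmas~\ref{lm: prod- und inversenregel} and~\ref{lm: lm 2 zum erw adsatz ohne sl} then give $B_\delta\in W^{1,\infty}_*(I,L(X))$ with $\|B_\delta(t)\|$ and $\|B_\delta'(t)\|$ bounded by $C\delta^{-(m_0+1)}$.

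By Lemma~\ref{lm: lm 2 zum erw adsatz ohne sl}, the evolution $V_{0\eps}$ exists, is adiabatic w.r.t.\ $P$, and satisfies $(V_{0\eps}(t)-U_\eps(t))P(0)x = \int_0^t U_\eps(t,s)[P'(s),P(s)]V_{0\eps}(s)P(0)x\,ds$. Substituting the approximate commutator identity and integrating by parts in $s$ as in the proof of Theorem~\ref{thm: handl adsatz mit glm sl} (using Lemmas~\ref{lm: prodregel rechtsseit db} and~\ref{lm: rechtsseit db und W^{1,infty}}), I pick up boundary and interior contributions of order $O(\eps\delta^{-(m_0+1)})$ together with the error $\int_0^t U_\eps(t,s)E_\delta(s)V_{0\eps}(s)P(0)x\,ds$. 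For each fixed $s$ the range of $V_{0\eps}(s)P(0)$ lies in the finite-dimensional space $P(s)X$, so Lemma~\ref{lm: lm 1 zum erw adsatz ohne sl} applied to each $\delta R_\delta^{j}(1-P)$-factor in $E_\delta(s)$ yields pointwise convergence to $0$ as $\delta\searrow 0$; compactness of $I$ together with the uniform $\eps$-bounds on $\|U_\eps\|$ and $\|V_{0\eps}P(0)\|$ upgrade this to uniform $(t,\eps)$-smallness via dominated convergence. Choosing $\delta = \delta(\eps)\to 0$ with $\eps/\delta(\eps)^{m_0+1}\to 0$ finishes part~(i). For part~(ii), adiabaticity of $V_{0\eps}$ w.r.t.\ $P$ yields $V_\eps(t)P(0) = V_{0\eps}(t)P(0)$, so part~(i) already gives the conclusion on $\rg P(0)$; for the complementary $\rg(1-P(0))$ I apply part~(i) to the dual problem in $X^*$. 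By Proposition~\ref{prop: schwache assoziiertheit, dual}, $P(t)^*$ is weakly associated with $A(t)^*$ and $\lambda(t)$, and the norm $C^1$-hypothesis on $P$ secures the SOT-continuous differentiability of $P^*$ required by part~(i); working with the backward-adjoint evolutions and using reflexivity to transfer the estimate back to $X$ completes the proof.

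The hard part will be the algebraic telescoping in the construction of $B_\delta$ -- verifying that the products through the nilpotent ladder of length $m_0$ genuinely produce only $\delta R_\delta^{j}(1-P)$-type terms in the error, so that Lemma~\ref{lm: lm 1 zum erw adsatz ohne sl} becomes applicable -- together with the delicate reconciliation between the polynomially-in-$\delta^{-1}$ growing quantitative bounds on the boundary and interior terms and the purely qualitative smallness of the error term, which must be made uniform in $t$ and $\eps$ through the finite-dimensionality of $\rg P(s)$.
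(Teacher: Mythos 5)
Your overall strategy -- replacing the reduced resolvent by the shifted resolvent $R_\delta$, obtaining an \emph{approximate} commutator equation, and controlling the error qualitatively via Lemma~\ref{lm: lm 1 zum erw adsatz ohne sl} and the finite-dimensionality of $P(s)X$ -- is exactly the paper's, and formula~\eqref{eq: wegweiser} is indeed the right template for $B$. But there are two genuine gaps, and the second one is fatal as stated.

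First, you build $B_\delta$ out of $P'(t)$ directly, yet $P$ is only assumed SOT-$C^1$, so $P'$ is merely SOT-continuous and in particular need not be $W^{1,\infty}_*$. The application of Lemma~\ref{lm: prod- und inversenregel} that you invoke to get $B_\delta\in W^{1,\infty}_*(I,L(X,Y))$ (which is what you need to integrate by parts in $s$) therefore does not go through. The paper replaces $P'$ by a mollification $Q_n$, carries an extra index $n$ through the argument, and pays with an additional limit $n\to\infty$ controlled by $\rk P = $ const and dominated convergence -- equations~\eqref{eq: gl 1, adsatz ohne sl}--\eqref{eq: gl 2, adsatz ohne sl}. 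You would need the analogous device.

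Second, and more seriously, a single regularization parameter $\delta$ cannot close the argument when $m_0\ge 2$. Your error $E_\delta$ contains summands with a factor $R_\delta(t)^{j}(1-P(t))$ for $j$ up to $m_0$, and the ``good'' cancellation of Lemma~\ref{lm: lm 1 zum erw adsatz ohne sl} acts on only \emph{one} power: it gives $\delta R_\delta(t)(1-P(t))x\to 0$ pointwise, but $\delta R_\delta(t)^{j}(1-P(t))$ has norm of order $\delta^{1-j}$ and does \emph{not} tend to zero for $j\ge 2$. Rewriting $\delta R_\delta^{j} = R_\delta^{j-1}\cdot(\delta R_\delta)$ only trades the blowup $\delta^{-(j-1)}$ for a factor tending to $0$ at an unspecified (purely qualitative) rate, so there is no way to choose a single $\delta=\delta(\eps)$ that kills the worst term. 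This is precisely why the paper works with a vector $\bm{\delta}=(\delta_1,\dots,\delta_{m_0})$ and builds $B_{n\,\bm{\delta}}$ from the telescoping products $\prod_{i=1}^{k+1}\overline R_{\delta_i}$: the error $C^\pm_{n\,\bm{\delta}}$ then contains $\delta_{k+1}\overline R_{\delta_{k+1}}$ as the vanishing factor next to the bounded-times-$\prod_{i<k+1}\delta_i^{-1}$ residue, and the recursive definition of $\delta_{i\,\eps}$ (with $\delta_1$ tending to $0$ vastly more slowly than $\delta_{m_0}$) makes each term go to zero. Your final paragraph acknowledges exactly this tension but your construction does not resolve it.

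Finally, for part~(ii) you propose to ``apply part~(i) to the dual problem'' and transfer back by reflexivity. That is a different -- and substantially more problematic -- route than the paper's: the adjoint evolution $U_\eps(t,s)^*$ runs backwards in time, the dual data $A^*$, $P^*$ would have to be re-verified against Condition~\ref{cond: reg 1}, and $1-P(0)^*$ is not a projection weakly associated with $A(0)^*$ and $\lambda(0)$, so part~(i) would not directly speak to the complementary range. The paper instead proves (ii) by the same direct computation as (i) applied to $V_\eps$ and all $x\in D$, using reflexivity and Proposition~\ref{prop: schwache assoziiertheit, dual} at only one pinpointed step -- namely to show $\eta^-_n(\delta)=\int_0^1\|P(s)Q_n(s)\,\delta\overline R_\delta(s)\|\,ds\to 0$, which is the estimate that controls the new error term $C^-_{n\,\bm{\delta}}$ absent from part~(i).
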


\begin{proof}
We begin with some preparations which will be 
used in the proof of both assertion~(i) and assertion~(ii).
As a first preparatory step, we show that $t \mapsto P(t)$ is in $W^{1,\infty}_*(I,L(X,Y))$ and that there is an $m_0 \in \N$ such that $P(t)X \subset \ker(A(t)-\lambda(t))^{m_0}$ for every $t \in I$.
%
%
Since $P(t)$ for almost every $t \in I$ is weakly associated with $A(t)$ and $\lambda(t)$ and since 
\begin{align*}
\dim P(t)X = \rk P(0)X < \infty
\end{align*}
for every $t \in I$ (which equality is due to the continuity of $t \mapsto P(t)$ and Lemma~VII.6.7 of~\cite{DunfordSchwartz}), there is a $t$-independent constant $m_0 \in \N$ -- for instance, $m_0 := \rk P(0)$ -- such that $P(t)$ is weakly $m_0$-associated with $A(t)$ and $\lambda(t)$ for almost every $t \in I$. 
In particular, it follows from Theorem~\ref{thm: typ mögl für PX und (1-P)X} that
\begin{align*}
P(t)X \subset \ker(A(t)-\lambda(t))^{m_0} \quad \text{and} \quad (1-P(t))X \subset \overline{ \rg(A(t)-\lambda(t))^{m_0} }
\end{align*} 
for almost every $t \in I$. 
%
%
We now show that the inclusion
\begin{align} \label{eq: gl 0, adsatz ohne sl}
P(t)X \subset \ker(A(t)-\lambda(t))^{m_0}
\end{align}
actually 
holds for every $t \in I$, 
in order to be able to apply Lemma~\ref{lm: lm 2 zum erw adsatz ohne sl}. 
Since $P(t)$ commutes with $A(t)$ 
and $\dim P(t)X = \rk P(0) < \infty$ for every $t \in I$, 
one has $P(t) D(A(t)) = P(t)X$ and hence $P(t)X \subset D(A(t)^{m_0})$ as well as 
\begin{align*}
(A(t)-\lambda(t))^{m_0} P(t) = \big( (A(t)-\lambda(t))P(t) \big)^{m_0}
\end{align*}
for every $t \in I$. So in order to establish~\eqref{eq: gl 0, adsatz ohne sl} it suffices to show that $t \mapsto A(t)P(t)$ is SOT-continuous, because the 
set $I \setminus N$ of those $t \in I$ where $P(t)$ is weakly associated with $A(t)$ and $\lambda(t)$ is dense in $I$. 
Analagously to~\eqref{eq: stet von A P, lm 2 zum erw adsatz ohne sl} it follows that
\begin{align*}
I \setminus N \ni t \mapsto \,\, & A(0)P(t) 
= A(0) S_{\delta}(t)  \, \sum_{k=0}^{m_0-1} \binom{m_0}{k}  \big( -\delta e^{i \vartheta(t)} \big)^{m_0-k}  \cdot \notag \\ 
&\qquad \qquad \qquad \qquad \qquad \qquad \qquad \cdot S_{\delta}(t)^{m_0-1 - k} \, \big( 1 + \delta e^{i \vartheta(t)} S_{\delta}(t) \big)^k P(t) 
\end{align*}
extends to a $W^{1,\infty}_*(I,L(X))$-regular 
map and the closedness of $A(0)$ implies that the $W^{1,\infty}_*(I,L(X))$-regular extension is given by $I \ni t \mapsto A(0)P(t)$. In other words, $t \mapsto P(t)$ (by the definition of the norm of $Y$ in Condition~\ref{cond: reg 1}) belongs to $W^{1,\infty}_*(I,L(X,Y))$. 
In particular, $t \mapsto A(t)P(t)$ is 
SOT-continuous and~\eqref{eq: gl 0, adsatz ohne sl} follows. 
\smallskip

As a second preparatory step, we solve -- in accordance 
with the proof of the adiabatic theorems with spectral gap condition -- 
a suitable (approximate) commutator equation. Inspired by~\eqref{eq: wegweiser}, we define the operators 
\begin{align}
&B_{n \, \bm{\delta}}(t) := \sum_{k=0}^{m_0-1} \Big(  \prod_{i=1}^{k+1} \overline{R}_{\delta_i}(t) \Big) Q_n(t) (\lambda(t)-A(t))^k P(t) \notag \\
&\qquad \qquad \qquad \qquad \qquad \quad + \sum_{k=0}^{m_0-1} (\lambda(t)-A(t))^k P(t) Q_n(t) \Big(  \prod_{i=1}^{k+1} \overline{R}_{\delta_i}(t) \Big)
\end{align}
for $n \in \N$, $\bm{\delta} := (\delta_1, \dots, \delta_{m_0}) \in (0,\delta_0]^{m_0}$ and $t \in I$, where
\begin{align*}
\overline{R}_{\delta}(t) := R_{\delta}(t) \overline{P}(t) \quad \text{with} \quad R_{\delta}(t) := \big( \lambda(t) + \delta e^{i \vartheta(t)} - A(t) \big)^{-1} \text{\, and \,\,} \overline{P}(t) := 1-P(t)
\end{align*}
for $\delta \in (0,\delta_0]$, and where 
\begin{align*}
Q_n(t) := \int_0^1 j_{\frac{1}{n}}(t-r) P'(r) \, dr.
\end{align*}
In other words, $Q_n$ is obtained from $P'$ by mollification, whence $t \mapsto Q_n(t)$ is SOT-continuously differentiable and $Q_n(t) \longrightarrow P'(t)$ as $n \to \infty$ w.r.t.~SOT for 
$t \in (0,1)$ and
\begin{align*}
\sup \{ \norm{Q_n(t)}: t \in I, n \in \N \} \le \sup_{t \in I} \norm{P'(t)}.
\end{align*}
We now show that the operators $B_{n \, \bm{\delta}}(t)$ satisfy the approximate commutator equation
\begin{align} \label{eq: appr commutator equation}
B_{n \, \bm{\delta}}(t)A(t) - A(t)B_{n \, \bm{\delta}}(t) + C_{n \, \bm{\delta}}(t) \subset [Q_n(t),P(t)]
\end{align}
with remainder terms $C_{n \, \bm{\delta}}(t)$ that 
will have to be suitably controlled below. Since
\begin{align*}
(\lambda-A) \Big(  \prod_{i=1}^{k+1} \overline{R}_{\delta_i} \Big) =  \Big( \prod_{1 \le i \le k} \overline{R}_{\delta_i} \Big) - \delta_{k+1} e^{i \vartheta} \Big( \prod_{i=1}^{k+1} \overline{R}_{\delta_i} \Big) 
\supset \Big(  \prod_{i=1}^{k+1} \overline{R}_{\delta_i} \Big) (\lambda-A)
\end{align*}
(the $t$-dependence being suppressed here and in the following lines for the sake of convenience), it follows that
\begin{align*}
(\lambda-A) B_{n \, \bm{\delta}} &= \sum_{k=0}^{m_0-1} \Big( \prod_{1 \le i \le k} \overline{R}_{\delta_i} \Big) Q_n (\lambda-A)^k P   +  \sum_{k=0}^{m_0-1} (\lambda-A)^{k+1} P Q_n \Big(  \prod_{i=1}^{k+1} \overline{R}_{\delta_i} \Big)
- C_{n \, \bm{\delta}}^+ \\
B_{n \, \bm{\delta}} (\lambda-A) &\subset \sum_{k=0}^{m_0-1} \Big(  \prod_{i=1}^{k+1} \overline{R}_{\delta_i} \Big) Q_n (\lambda-A)^{k+1} P   +  \sum_{k=0}^{m_0-1} (\lambda-A)^k P Q_n  \Big( \prod_{1 \le i \le k} \overline{R}_{\delta_i} \Big)
- C_{n \, \bm{\delta}}^-
\end{align*}
where we used the abbreviations
\begin{align}
&C_{n \, \bm{\delta}}^+ := \sum_{k=0}^{m_0-1} \delta_{k+1} e^{i \vartheta} \Big( \prod_{i=1}^{k+1} \overline{R}_{\delta_i} \Big) Q_n (\lambda-A)^k P,  \notag \\
&\qquad \qquad \qquad \qquad \qquad C_{n \, \bm{\delta}}^- := \sum_{k=0}^{m_0-1} (\lambda-A)^k P Q_n \, \delta_{k+1} e^{i \vartheta} \Big( \prod_{i=1}^{k+1} \overline{R}_{\delta_i} \Big).
\end{align}
Subtracting $B_{n \, \bm{\delta}} (\lambda-A)$ from $(\lambda-A)B_{n \, \bm{\delta}}$ and noticing that, by doing so, of all the summands not 
belonging to $C_{n \, \bm{\delta}}^+$, $C_{n \, \bm{\delta}}^-$ only
\begin{align*}
Q_n P - \Big( \prod_{i=1}^{m_0} \overline{R}_{\delta_i} \Big) Q_n (\lambda-A)^{m_0} P    +   (\lambda-A)^{m_0} P Q_n \Big( \prod_{i=1}^{m_0} \overline{R}_{\delta_i} \Big) - P Q_n
= [Q_n,P]  
\end{align*}
remains (remember~\eqref{eq: gl 0, adsatz ohne sl}), 
we see that 
\begin{align*}
B_{n \, \bm{\delta}} A - A B_{n \, \bm{\delta}} \subset [Q_n,P] - C_{n \, \bm{\delta}}^+ + C_{n \, \bm{\delta}}^-
\end{align*}
which is nothing but~\eqref{eq: appr commutator equation} if one defines $C_{n \, \bm{\delta}} := C_{n \, \bm{\delta}}^+ - C_{n \, \bm{\delta}}^-$.
\smallskip

As a third 
preparatory step we observe that $t \mapsto B_{n \, \bm{\delta}}(t)$ belongs to $W^{1,\infty}_*(I,L(X,Y))$ and estimate $B_{n \, \bm{\delta}}$ as well as $B_{n \, \bm{\delta}}'$. Since $t \mapsto A(t) - \lambda(t) - \delta e^{i \vartheta(t)}$ is in $W^{1,\infty}_*(I,L(Y,X))$, 
\begin{align*}
t \mapsto (A(t)-\lambda(t))^k P(t) = \big( (A(t)-\lambda(t))P(t) \big)^k
\end{align*}
is in $W^{1,\infty}_*(I,L(X))$ by Lemma~\ref{lm: lm 2 zum erw adsatz ohne sl}, 
and $t \mapsto P(t)$ is in $W^{1,\infty}_*(I,L(X,Y))$ by the first preparatory step, 
the asserted $W^{1,\infty}_*(I,L(X,Y))$-regularity of $t \mapsto B_{n \, \bm{\delta}}(t)$ follows from 
Lemma~\ref{lm: prod- und inversenregel}.
Additionally, there is a constant $c$ such that
\begin{align} \label{eq: absch B_n eps}
\sup_{t \in I} \big\| B_{n \, \bm{\delta}}(t) \big\| \le \sum_{k=1}^{m_0} c \, \Big( \prod_{i=1}^{k} \delta_i \Big)^{-1}
\end{align}
for all $\bm{\delta} \in (0,\delta_0]^{m_0}$ by the assumed resolvent estimate and Lemma~\ref{lm: lm 2 zum erw adsatz ohne sl}. 
And since 
\begin{gather*}
\norm{R_{\delta}(t)}_{X \to X} 
\le \sum_{k=0}^{m_0-1} \frac{1}{\delta^{k+1}} \norm{(A(t)-\lambda(t))^k P(t)}_{X \to X}  + \,\, \norm{\overline{R}_{\delta}(t)}_{X \to X}
\le \frac{c}{\delta^{m_0}} \\
\text{ as well as } \\
\norm{\overline{R}_{\delta}(t)}_{X \to Y} \le \norm{(A(t)-1)^{-1}}_{X \to Y} \norm{(A(t)-1) \overline{R}_{\delta}(t)}_{X \to X} \le \frac{c}{\delta}
\end{gather*}
for all $t \in I$ and all $\delta \in (0,\delta_0]$ (with another constant $c$) by the assumed resolvent estimate and Lemma~\ref{lm: lm 2 zum erw adsatz ohne sl}, it follows from Lemma~\ref{lm: prod- und inversenregel} that there is a $W^{1,\infty}_*$-derivative $\overline{R}_{\delta}'$ of $t \mapsto \overline{R}_{\delta}(t)$ such that
\begin{align} \label{eq: absch R_eps'}
\esssup_{t \in I} \big\| \overline{R}_{\delta}'(t) \big\| \le \frac{c}{\delta^{m_0+1}}
\end{align}
for all $\delta \in (0,\delta_0]$ (with yet another constant $c$) and, hence, that there is a $W^{1,\infty}_*$-derivative $B_{n \, \bm{\delta}}'$ of $t \mapsto B_{n \, \bm{\delta}}(t)$ such that
\begin{align}  \label{eq: absch B_n eps'}
\esssup_{t \in I} \norm{ B_{n \, \bm{\delta}}'(t) } \le \sum_{k=1}^{m_0} c_n \, \Big( \prod_{i=1}^{k} \delta_i \Big)^{-(m_0+1)}
\end{align}
for all $\bm{\delta} \in (0,\delta_0]^{m_0}$ and some constant $c_n \in (0,\infty)$ depending on the supremum norm $\sup_{t \in I} \norm{Q_n'(t)}$ of the SOT-derivative of $t \mapsto Q_n(t)$.
\smallskip

After these preparations we can now turn to the main part of the proof where the cases~(i) and~(ii) have to be treated 
separately. We first prove assertion~(i). 
%
%
As has already been shown in~\eqref{eq: intgl, lm 2 zum erw adsatz ohne sl}, 
\begin{align*}     
\big( V_{0\,\eps}(t)- U_{\eps}(t) \big) P(0)x &= U_{\eps}(t,s)V_{0\,\eps}(s)P(0)x \big|_{s=0}^{s=t} = \int_0^t U_{\eps}(t,s) \, P'(s) \, V_{0\,\eps}(s) P(0)x \,ds 
\end{align*}
so that, by rewriting the right hand side of this equation, we obtain 
\begin{align} \label{eq: gl 1, adsatz ohne sl}
\big( V_{0\,\eps}(t)- U_{\eps}(t) \big) P(0)x =& \int_0^t U_{\eps}(t,s)  \, (P'(s)-Q_n(s))P(s) \, V_{0\,\eps}(s)P(0)x \, ds \notag \\
&+ \int_0^t U_{\eps}(t,s)  \, [Q_n(s),P(s)] \, V_{0\,\eps}(s)P(0)x \, ds
\end{align}
for all $t \in I$, $\eps \in (0, \infty)$ and $x \in X$.
Since $Q_n(s)P(s) \longrightarrow P'(s)P(s)$ for every $s \in (0,1)$ by the SOT-convergence of $(Q_n(s))$ to $P'(s)$ for $s \in (0,1)$ and by $\rk P(s) = \rk P(0) < \infty$ for $s \in I$,
it follows by Lemma~\ref{lm: lm 2 zum erw adsatz ohne sl} and by the dominated convergence theorem that
\begin{align} \label{eq: gl 2, adsatz ohne sl}
\sup_{\eps \in (0,\infty)} \sup_{t \in I} \norm{ \int_0^t U_{\eps}(t,s)  \, (P'(s)-Q_n(s))P(s) \, V_{0\,\eps}(s)P(0) \, ds  }  
\longrightarrow 0 
\end{align}
as $n \to \infty$.
In view of~\eqref{eq: gl 1, adsatz ohne sl} we therefore have to show that for each fixed $n \in \N$ 
\begin{align} \label{eq: zwbeh 1, adsatz ohne sl}
\sup_{t \in I} \norm{  \int_0^t U_{\eps}(t,s)  \, [Q_n(s),P(s)] \, V_{0\,\eps}(s)P(0) \, ds  } \longrightarrow 0 
\end{align}
as $\eps \searrow 0$.
So let $n \in \N$ be fixed for the rest of the proof.
%
%
Since $s \mapsto B_{n \, \bm{\delta}}(s)$ is in $W^{1,\infty}_*(I,L(X,Y))$ by the third preparatory step and since $[0,t] \ni s \mapsto U_{\eps}(t,s)|_Y \in L(Y,X)$ as well as $s \mapsto V_{0\,\eps}(s) \in L(X)$ are SOT-continuously differentiable, Lemma~\ref{lm: prod- und inversenregel} yields that
\begin{align*}
[0,t] \ni s \mapsto U_{\eps}(t,s) B_{n \, \bm{\delta}}(s) V_{0\,\eps}(s) P(0)x 
\end{align*}
is the continuous representative of an element of $W^{1,\infty}([0,t],X)$ for every $x \in X$. With the help of the approximate commutator equation~\eqref{eq: appr commutator equation} of the second preparatory step, we therefore see that
\begin{gather}  
\int_0^t U_{\eps}(t,s)  \, [Q_n(s),P(s)] \, V_{0\,\eps}(s)P(0)x \, ds 
= \eps \, \int_0^t U_{\eps}(t,s)  \Big(\! -\frac 1 \eps A(s) B_{n \, \bm{\delta}}(s) \notag \\
+ \,\, B_{n \, \bm{\delta}}(s) \frac 1 \eps A(s) \Big) V_{0\,\eps}(s) P(0)x \, ds \,  + \int_0^t U_{\eps}(t,s) \, C_{n \, \bm{\delta}}^+(s) \, V_{0\,\eps}(s) P(0)x \, ds \notag \\
= \eps \, U_{\eps}(t,s) B_{n \, \bm{\delta}}(s) V_{0\,\eps}(s) P(0)x \Big|_{s=0}^{s=t} - \eps \, \int_0^t U_{\eps}(t,s)  \Big( B_{n \, \bm{\delta}}'(s) + B_{n \, \bm{\delta}}(s) [P'(s),P(s)] \Big) \notag \\
 V_{0\,\eps}(s) P(0)x \, ds + \int_0^t U_{\eps}(t,s) \, C_{n \, \bm{\delta}}^+(s) \, V_{0\,\eps}(s) P(0)x \, ds    \label{eq: gl 3, adsatz ohne sl}
\end{gather}
for all $t \in I$, $\eps \in (0,\infty)$, $x \in X$ and $\bm{\delta} \in (0,\delta_0]^{m_0}$. 
%
%
We now want to find functions $\eps \mapsto \delta_{1 \, \eps}, \dots, \delta_{m_0 \, \eps}$ defined on a small interval $(0,\delta_0']$ and converging to $0$ as $\eps \searrow 0$ in such a way that, if they are inserted in the right hand side of~\eqref{eq: gl 3, adsatz ohne sl}, the desired convergence~\eqref{eq: zwbeh 1, adsatz ohne sl} follows. 
In view of the estimates~\eqref{eq: absch B_n eps}, \eqref{eq: absch B_n eps'} and 
\begin{align}  \label{eq: absch C_n eps^+}
\int_0^1 \norm{ C_{n \, \bm{\delta}}^+(s) } \, ds \le \sum_{k=1}^{m_0} c \, \Big( \prod_{1 \le i < k} \delta_i \Big)^{-1} \, \int_0^1 \norm{ \delta_{k} \overline{R}_{\delta_{k}}(s) Q_n(s) P(s) } \, ds, 
\end{align} 
we would like the functions $\eps \mapsto \delta_{i \, \eps}$ to converge to $0$ so slowly that
\begin{gather}
\eps \, \Big( \prod_{i=1}^{k} \delta_{i \, \eps} \Big)^{-(m_0+1)} \longrightarrow 0 \quad (\eps \searrow 0)  \label{eq: wunsch 1 an eps_i} \\
\Big( \prod_{1 \le i < k} \delta_{i\,\eps} \Big)^{-1} \, \int_0^1 \norm{ \delta_{k\,\eps} \overline{R}_{\delta_{k\,\eps}}(s) Q_n(s) P(s) } \, ds \longrightarrow 0 \quad (\eps \searrow 0)   \label{eq: wunsch 2 an eps_i}
\end{gather}
for all $k \in \{ 1, \dots, m_0 \}$. 
Since 
\begin{align} \label{eq: wesentl grund für ex der eps_i}
\eta_{n}^+(\delta) := \int_0^1 \norm{\delta \overline{R}_{\delta}(s) Q_n(s) P(s) } \, ds \longrightarrow 0 \quad (\delta \searrow 0)
\end{align}
by Lemma~\ref{lm: lm 1 zum erw adsatz ohne sl}, by $\rk P(s) = \rk P(0) < \infty$ and by the dominated convergence theorem, 
such functions $\eps \mapsto \delta_{i\,\eps}$ really can be found. Indeed, define recursively 
\begin{gather*}
\delta_{m_0\,\eps} := \eps^{\frac{1}{(m_0+1)^2}}
\quad \text{and} \quad 
\delta_{m_0-l \,\eps} := \max \Big\{  \Big( \Big( \prod_{m_0-l+1 \le i < k} \delta_{i\,\eps} \Big)^{-1} \, \eta_{n}^+(\delta_{k\,\eps}) \Big)^{\frac{1}{2}}: \\
k \in \{ m_0-l+1, \dots, m_0 \}  \Big\} \cup \Big\{ \eps^{\frac{1}{(m_0+1)^2}} \Big\}
\end{gather*}
for $l \in \{ 1, \dots, m_0-1\}$. With the help of~\eqref{eq: wesentl grund für ex der eps_i} it then successively follows, by proceeding from larger to smaller indices $i$, that $\delta_{i\,\eps} \longrightarrow 0$ as $\eps \searrow 0$ for all $i \in \{1, \dots, m_0\}$ (so that, in particular, $\delta_{i\,\eps} \in (0,\delta_0]$ for small enough $\eps$ whence the expressions $\eta_{n}^+(\delta_{i\,\eps})$ used in the recursive definition 
make sense for small $\eps$ in the first place) and that~\eqref{eq: wunsch 1 an eps_i} and \eqref{eq: wunsch 2 an eps_i} are satisfied. Assertion~(i) 
now follows from~\eqref{eq: gl 1, adsatz ohne sl}, \eqref{eq: gl 2, adsatz ohne sl}, \eqref{eq: gl 3, adsatz ohne sl} by virtue of~\eqref{eq: absch B_n eps}, \eqref{eq: absch B_n eps'}, \eqref{eq: absch C_n eps^+} and Lemma~\ref{lm: lm 2 zum erw adsatz ohne sl}. 
\smallskip

We now prove assertion~(ii) and, for that purpose, additionally assume that $X$ is reflexive and $t \mapsto P(t)$ is NOT-continuously differentiable.
%
%
Analogously to~\eqref{eq: gl 1, adsatz ohne sl} we obtain 
\begin{align} \label{eq: gl 4, adsatz ohne sl}
\big( V_{\eps}(t)- U_{\eps}(t) \big) x =& \int_0^t U_{\eps}(t,s) \, [P'(s)-Q_n(s), P(s)] \, V_{\eps}(s)x \, ds \notag \\
&+ \int_0^t U_{\eps}(t,s) \, [Q_n(s),P(s)] \, V_{\eps}(s)x \, ds
\end{align}
for all $t \in I$, $\eps \in (0, \infty)$ and $x \in D(A(0)) = D$.
Since $Q_n(s) \longrightarrow P'(s)$ for every $s \in (0,1)$ by the additionally assumed NOT-continuous differentiability of $t \mapsto P(t)$,
it follows by Proposition~\ref{prop: störreihe für gestörte zeitentw} and by the dominated convergence theorem that
\begin{align} \label{eq: gl 5, adsatz ohne sl}
\sup_{\eps \in (0,\infty)}  \sup_{t \in I} \norm{ \int_0^t U_{\eps}(t,s) \, [P'(s)-Q_n(s), P(s)] \, V_{\eps}(s) \, ds  }  
\longrightarrow 0 
\end{align}
as $n \to \infty$.
In view of~\eqref{eq: gl 4, adsatz ohne sl} we therefore have to show that for each fixed $n \in \N$ 
\begin{align} \label{eq: zwbeh 2, adsatz ohne sl}
\sup_{t \in I} \norm{  \int_0^t U_{\eps}(t,s) \, [Q_n(s),P(s)] \, V_{\eps}(s) \, ds  } \longrightarrow 0 
\end{align}
as $\eps \searrow 0$.
So let $n \in \N$ be fixed for the rest of the proof.
%
%
Again completely analogously to the proof of~(i) it follows that
\begin{align*}
[0,t] \ni s \mapsto U_{\eps}(t,s) B_{n \, \bm{\delta}}(s) V_{\eps}(s)x 
\end{align*}
is the continuous representative of an element of $W^{1,\infty}([0,t],X)$ for every $x \in D(A(0)) = D$. With the help of the approximate commutator equation~\eqref{eq: appr commutator equation} of the second preparatory step, we therefore see that
\begin{gather}  
\int_0^t U_{\eps}(t,s) \, [Q_n(s),P(s)] \, V_{\eps}(s)x \, ds 
= \frac{1}{\eps} \, \int_0^t U_{\eps}(t,s)  \Big(\! -\frac 1 \eps A(s) B_{n \, \bm{\delta}}(s) \notag \\
+ \,\, B_{n \, \bm{\delta}}(s) \frac 1 \eps A(s) \Big) V_{\eps}(s) x \, ds \,  + \int_0^t U_{\eps}(t,s) \, C_{n \, \bm{\delta}}(s) \, V_{\eps}(s) x \, ds \notag \\
= \eps \, U_{\eps}(t,s) B_{n \, \bm{\delta}}(s) V_{\eps}(s) x \Big|_{s=0}^{s=t} - \eps \, \int_0^t U_{\eps}(t,s)  \Big( B_{n \, \bm{\delta}}'(s) + B_{n \, \bm{\delta}}(s) [P'(s),P(s)] \Big) \notag \\
 V_{\eps}(s) x \, ds + \int_0^t U_{\eps}(t,s) \, C_{n \, \bm{\delta}}(s) \, V_{\eps}(s) x \, ds    \label{eq: gl 6, adsatz ohne sl}
\end{gather}
for all $t \in I$, $\eps \in (0,\infty)$, $x \in D(A(0)) = D$ and $\bm{\delta} \in (0,\delta_0]^{m_0}$. 
%
%
In view of the estimates ~\eqref{eq: absch B_n eps}, \eqref{eq: absch B_n eps'}, \eqref{eq: absch C_n eps^+} and
\begin{align}  \label{eq: absch C_n eps^-}
\int_0^1 \norm{ C_{n \, \bm{\delta}}^-(s) } \, ds \le \sum_{k=1}^{m_0} c \, \Big( \prod_{1 \le i < k} \delta_i \Big)^{-1} \, \int_0^1 \norm{ P(s) Q_n(s) \delta_{k} \overline{R}_{\delta_{k}}(s) } \, ds,
\end{align}
we would now like to find functions $\eps \mapsto \delta_{1 \, \eps}, \dots, \delta_{m_0 \, \eps}$ defined on a small interval $(0,\delta_0']$ and converging to $0$ as $\eps \searrow 0$ so slowly that~\eqref{eq: wunsch 1 an eps_i}, \eqref{eq: wunsch 2 an eps_i} and
\begin{gather}
\Big( \prod_{1 \le i < k} \delta_{i\,\eps} \Big)^{-1} \, \int_0^1 \norm{ P(s) Q_n(s) \delta_{k\,\eps} \overline{R}_{\delta_{k\,\eps}}(s) } \, ds \longrightarrow 0 \quad (\eps \searrow 0)   \label{eq: wunsch 3 an eps_i}
\end{gather}
are satisfied for all $k \in \{ 1, \dots, m_0 \}$. 
Why is it possible to find such functions $\eps \mapsto \delta_{i\,\eps}$? In essence, this is 
because of~\eqref{eq: wesentl grund für ex der eps_i} and because
\begin{align} \label{eq: wesentl grund 2 für ex der eps_i}
\eta_{n}^-(\delta) := \int_0^1 \norm{P(s) Q_n(s) \delta \overline{R}_{\delta}(s) } \, ds \longrightarrow 0 \quad (\delta \searrow 0),
\end{align}
which last convergence can be seen as follows: by virtue of Proposition~\ref{prop: schwache assoziiertheit, dual}, which applies by the additionally assumed reflexivity of $X$, $P(s)^*$ is weakly $m_0$-associated with $A(s)^*$ and $\lambda(s)$ for almost every $s \in I$, and therefore Lemma~\ref{lm: lm 1 zum erw adsatz ohne sl} together with $\rk P(s)^* = \rk P(s) < \infty$ yields the convergence
\begin{align*}
\norm{ P(s) Q_n(s) \delta \overline{R}_{\delta}(s) } = \norm{ \delta \overline{R}_{\delta}(s)^* \, Q_n(s)^* P(s)^* }  \longrightarrow 0 \quad (\delta \searrow 0)
\end{align*}
for almost every $s \in I$, from which~\eqref{eq: wesentl grund 2 für ex der eps_i} follows by the dominated convergence theorem.
We now recursively define 
\begin{gather*}
\delta_{m_0\,\eps} := \eps^{\frac{1}{(m_0+1)^2}}
\quad \text{and} \quad 
\delta_{m_0-l \,\eps} := \max \Big\{  \Big( \Big( \prod_{m_0-l+1 \le i < k} \delta_{i\,\eps} \Big)^{-1} \, \eta_{n}^+(\delta_{k\,\eps}) \Big)^{\frac 1 2}, \\
\qquad \qquad \Big( \Big( \prod_{m_0-l+1 \le i < k} \delta_{i\,\eps} \Big)^{-1} \, \eta_{n}^-(\delta_{k\,\eps}) \Big)^{\frac 1 2}:  k \in \{ m_0-l+1, \dots, m_0 \}  \Big\} \cup \Big\{ \eps^{\frac{1}{(m_0+1)^2}} \Big\}
\end{gather*}
for $l \in \{ 1, \dots, m_0-1\}$. With the help of~\eqref{eq: wesentl grund für ex der eps_i} and~\eqref{eq: wesentl grund 2 für ex der eps_i} it then successively follows, by proceeding from larger to smaller indices $i$, that $\delta_{i\,\eps} \longrightarrow 0$ as $\eps \searrow 0$ for all $i \in \{1, \dots, m_0\}$ 
and that~\eqref{eq: wunsch 1 an eps_i}, \eqref{eq: wunsch 2 an eps_i} and~\eqref{eq: wunsch 3 an eps_i} are satisfied. Assertion~(ii) 
now follows from~\eqref{eq: gl 4, adsatz ohne sl}, \eqref{eq: gl 5, adsatz ohne sl}, \eqref{eq: gl 6, adsatz ohne sl} by virtue of~\eqref{eq: absch B_n eps}, \eqref{eq: absch B_n eps'}, \eqref{eq: absch C_n eps^+}, \eqref{eq: absch C_n eps^-} and Proposition~\ref{prop: störreihe für gestörte zeitentw}. 
\end{proof}

Some remarks, which in particular clarify the relation of the above theorem with the adiabatic theorem without spectral gap condition from~\cite{AvronGraf11} and~\cite{dipl}, are in order.
\smallskip

1. Clearly, the adiabatic theorem above generalizes the adiabatic theorems without spectral gap condition from~\cite{AvronGraf11} (Theorem~11) and~\cite{dipl} (Theorem~6.4) 
which cover the less general case of weakly semisimple eigenvalues $\lambda(t)$ of (not necessarily skew self-adjoint) operators $A(t): D \subset X \to X$ under less general regularity conditions.
See Section~4.3 for simple examples where the previously known adiabatic theorems cannot be applied -- and which show, moreover, 
that the adiabatic theorem above is by no means confined to spectral operators. 
%
%
%
In the special case where the eigenvalues $\lambda(t)$ from the above theorem lie on the imaginary axis $i \R$ for every $t \in I$, 
these eigenvalues are automatically weakly semisimple by the $(M,0)$-stability hypothesis of the theorem,  which by 
Joye's example (from the end of Section~3.3) cannot be essentially weakened, and by the weak associatedness hypothesis. 
(Argue as in the second remark at the beginning of Section~3.3.) 
And so, the above adiabatic theorem -- in the special case of purely imaginary eigenvalues -- essentially reduces to the adiabatic theorems without spectral gap condition from~\cite{AvronGraf11} and~\cite{dipl} -- but a general adiabatic theory without spectral gap condition should be able to cover more than just this special case, of course.
%
%
%
%
\smallskip

%
2. As can be seen from the proof of the theorem above, one would obtain the same conclusion if -- instead of requiring the weak associatedness of $P(t)$ with $A(t)$ and $\lambda(t)$ for almost every $t \in I$ -- one only required the inclusions
\begin{align} \label{eq: blosse inklusionen}
P(t)X \subset \ker(A(t)-\lambda(t))^{m_0} \quad \text{and} \quad (1-P(t))X \subset \overline{ \rg(A(t)-\lambda(t))^{m_0} }
\end{align} 
for almost every $t \in I$ 
and some $m_0 \in \N$. 
We point out, however, that one would 
nevertheless not obtain a 
truly more general theorem by thus modifying the hypotheses, because the projections $P(t)$ would then still 
be weakly associated with $A(t)$ and $\lambda(t)$ for almost every $t \in I$. 
(In order to see this, notice that, for every $t \in I$ where~\eqref{eq: blosse inklusionen} is satisfied,  $(1-P(t))\ker(A(t)-\lambda(t))^{m_0} = 0$ by Lemma~\ref{lm: lm 1 zum erw adsatz ohne sl} and an expansion similar to~\eqref{eq: ascent = 1 für erz beschr halbgr} and that $P(t) \rg(A(t)-\lambda(t))^{m_0} = 0$ so that $P(t) \, \ol{\rg(A(t)-\lambda(t))^{m_0}} = 0$ as well. And then apply the first remark after Theorem~\ref{thm: typ mögl für PX und (1-P)X}.)
Also it can be seen from the proof of the above theorem: if the finite rank hypothesis on $P(0)$ is the only one to be violated, one still has the strong convergence
\begin{align*}
\sup_{t \in I} \norm{ \big( U_{\eps}(t) - V_{0\,\eps}(t) \big) P(0)x } \longrightarrow 0 \quad (\eps \searrow 0) \quad \text{for every } x \in X, 
\end{align*}
provided $P(t)$ is even weakly $1$-associated 
with $A(t)$ and $\lambda(t)$ for almost every $t \in I$.
(In order to see this, notice that, under this extra 
condition, the inclusion $P(t)X \subset \ker(A(t)-\lambda(t))$ holds for every $t \in I$ 
by a closedness argument similar to the one in~\eqref{eq: P(t) vertauscht mit A(t) für alle t} and the $\eps$-dependence of $V_{0\,\eps}(s)P(0)$ is solely contained in a scalar factor, 
\begin{align*}
V_{0\,\eps}(s)P(0) = e^{\frac 1 \eps \int_0^s \lambda(\tau) \, d\tau} \, W(s)P(0) \quad (s \in I),
\end{align*}
where $W$ denotes the evolution system for $[P',P]$.) 
\smallskip

3. As in the case with spectral gap, the adiabatic theorem without spectral gap condition above can -- along with the approximate commutator equation method used in its proof -- be extended to several eigenvalues $\lambda_1(t)$, \dots, $\lambda_r(t)$:
if $A$, $\lambda_j$, $P_j$ for every $j \in \{1, \dots, r\}$ satisfy the hypotheses of part~(ii) of the above adiabatic theorem and if $\lambda_j(\,.\,)$ and $\lambda_l(\,.\,)$ for all $j \ne l$ fall into each other at only countably many points accumulating at only finitely many points, then the evolution system $V_{\eps}$ for $\frac 1 \eps A + K$ with $K$ as in~\eqref{eq: K mehrere sigma_j} is adiabatic w.r.t.~all the $P_j$ and well approximates the evolution system $U_{\eps}$ for $\frac 1 \eps A$ in the sense that
\begin{align*}
\sup_{t \in I} \norm{ U_{\eps}(t) - V_{\eps}(t) } \longrightarrow 0 \quad (\eps \searrow 0),
\end{align*}
provided $V_{\eps}$ exists on $D$.
In order to see this (in the technically simpler case where the $t \mapsto P_j(t)$ are twice continuously SOT-differentiable), one sets $B_{\bm{\delta}}(t) := \frac 1 2 \sum_{j=1}^{r+1} B_{j \, \bm{\delta}}(t)$ where
\begin{gather}
B_{j \, \bm{\delta}} := \sum_{k=0}^{m_0-1} \Big( \prod_{i=1}^{k+1} \ol{R}_{j \, \delta_i} \Big) P_j' P_j (\lambda_j-A)^k + \sum_{k=0}^{m_0-1} (\lambda_j-A)^k P_j P_j' \Big( \prod_{i=1}^{k+1} \ol{R}_{j \, \delta_i} \Big) \;\, (j \in \{1, \dots, r\}) \notag \\
\text{with \, } \ol{R}_{j \, \delta} := (\lambda_j + \delta e^{i \vartheta_j} - A)^{-1} (1-P_j) \text{ \, and where} \notag \\
B_{r+1 \, \bm{\delta}} := \sum_{j=1}^r B_{j \, \bm{\delta}} + \sum_{j \ne l} B_{j \, l}, \notag \\
B_{j \, l} := \sum_{i,k = 0}^{m_0-1} \binom{k+i}{i} \frac{(-1)^{i}}{(\lambda_l-\lambda_j)^{k+i+1}} \Big( (A-\lambda_j)^k P_j P_j' P_l (A-\lambda_l)^i \notag \\
+ (A-\lambda_l)^{i} P_l P_j' P_j (A-\lambda_j)^k \Big) \notag
\end{gather}
and then one shows that the approximate commutator equation $B_{\bm{\delta}}(t)A(t) - A(t)B_{\bm{\delta}}(t) \subset K(t) - C_{\bm{\delta}}(t)$ with $C_{\bm{\delta}}(t) := \sum_{j=1}^r C_{j \, \bm{\delta}}(t)$ and
\begin{align*}
C_{j \, \bm{\delta}} :=  \sum_{k=0}^{m_0-1} \delta_{k+1} e^{i \vartheta_j} \Big( \prod_{i=1}^{k+1} \ol{R}_{j \, \delta_i} \Big) P_j' P_j (\lambda_j-A)^k 
- \sum_{k=0}^{m_0-1} (\lambda_j-A)^k P_j P_j' \, \delta_{k+1} e^{i \vartheta_j} \Big( \prod_{i=1}^{k+1} \ol{R}_{j \, \delta_i} \Big)
\end{align*}
is satisfied for all points $t$ where $P_j(t)$ is weakly associated with $A(t)$, $\lambda_j(t)$ for every $j \in \{1, \dots, r\}$ and where no crossing between two curves $\lambda_j(\,.\,)$ and $\lambda_l(\,.\,)$ takes place -- 
by showing that for all such $t$ and all $j \ne l$ 
\begin{align*}
B_{j \, l}(t) A(t) - A(t) B_{j \, l}(t) \subset P_j(t)P_j'(t)P_l(t) - P_l(t)P_j'(t)P_j(t) = [P_j'(t), P_l(t)]
\end{align*}
(for the second equality, use that $P_j(t)P_l(t) = 0 = P_l(t)P_j(t)$ which is due to the weak associatedness of $P_j(t)$, $P_l(t)$ with $A(t)$, $\lambda_j(t)$ resp.~$\lambda_l(t)$ 
and to $\lambda_j(t) \ne \lambda_l(t)$) 
and by noticing that $[P_{r+1}',P_{r+1}] = [P',P]$. 
At first glance, the defining formula for 
$B_{r+1 \, \bm{\delta}}$ might seem a bit mysterious, but in fact 
it can be guessed, just like 
the formulas for $B_{1 \, \bm{\delta}}$, \dots, $B_{r \, \bm{\delta}}$, 
from the case with spectral gap: indeed, by rewriting the formula for $B_{r+1}$ from~\eqref{eq: B_{r+1}} -- in the special case of singletons $\sigma_j(t) = \{ \lambda_j(t) \}$ consisting of poles of $(\,.\,-A(t))^{-1}$ of order at most $m_0$ -- one obtains $B_{r+1} = B_{r+1 \, \bm{\delta}} \big|_{\bm{\delta} = 0}$ with the help of Cauchy's theorem. 

\subsection{A quantitative adiabatic theorem without spectral gap condition -- especially for scalar type spectral operators}

As a supplement to the qualitative adiabatic theorem above (Theorem~\ref{thm: erw adsatz ohne sl}), we note the following quantitative refinement. 
It implies 
that, if in the situation of the above theorem 
the map $t \mapsto P(t)$ is even $W^{2,\infty}_*$-regular, then the rate of convergence  (Lemma~\ref{lm: lm 1 zum erw adsatz ohne sl}!) 
of the integrals
\begin{align} \label{eq: eta_0}
&\eta^+(\delta) := \int_0^1 \norm{\delta \big( \lambda(s)+\delta e^{i \vartheta(s)} - A(s) \big)^{-1} P'(s)P(s)} \, ds, \notag \\
&\qquad \qquad \qquad \eta^-(\delta) := \int_0^1 \norm{P(s)P'(s) \delta \big( \lambda(s)+\delta e^{i \vartheta(s)} - A(s) \big)^{-1} } \, ds 
\end{align}
yields a simple upper bound on the rate of convergence of $\sup_{t \in I} \norm{ U_{\eps}(t)-V_{\eps}(t) }$ which we are interested in here. 
%
See~\cite{Teufel01} for an analogous result in the case of skew self-adjoint operators $A(t)$.

%

\begin{thm} \label{thm: erw adsatz ohne sl, quantitativ}
Suppose that 
$A(t)$, $\lambda(t)$, $P(t)$ are as in Theorem~\ref{thm: erw adsatz ohne sl} with $X$ not necessarily reflexive and that $t \mapsto P(t)$ is even in $W^{2,\infty}_*(I,L(X))$. Suppose further that $\eta: (0,\delta_0] \subset (0,1] \to (0,\infty)$ is a function such that $\eta(\delta) \longrightarrow 0$ as $\delta \searrow 0$ and 
\begin{align*}
\eta(\delta) \ge \delta \quad \text{as well as} \quad \eta^{\pm}(\delta) \le \eta(\delta)
\end{align*} 
for all $\delta \in (0,\delta_0]$ with $\eta^{\pm}$ as above. 
Then there is a constant $c$ such that
\begin{align*}
\sup_{t \in I} \norm{ U_{\eps}(t)-V_{\eps}(t) } \le c \, \tilde{\eta}^{m_0} \big( \eps^{2/(m_0(m_0+1))} \big) = c \, ( \tilde{\eta} \circ \dotsb \circ \tilde{\eta} ) \big( \eps^{2/(m_0(m_0+1))} \big)
\end{align*}
for $\eps$ sufficiently small, where $\tilde{\eta}(\delta) := \eta( \delta^{\frac{1}{2}} )$.
\end{thm}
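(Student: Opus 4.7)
The approach is to follow the proof of Theorem~\ref{thm: erw adsatz ohne sl} and track all constants quantitatively. Since $t \mapsto P(t)$ is now $W^{2,\infty}_*$-regular, the mollification step of that proof is no longer needed: one simply substitutes $Q_n(t) := P'(t)$ directly in the definition of the operators $B_{\bm{\delta}}(t)$, which then satisfy the same approximate commutator equation $B_{\bm{\delta}} A - A B_{\bm{\delta}} + C_{\bm{\delta}} \subset [P',P]$ but with constants \emph{independent} of any mollification parameter. Consequently the estimates \eqref{eq: absch B_n eps}, \eqref{eq: absch B_n eps'}, \eqref{eq: absch C_n eps^+} and \eqref{eq: absch C_n eps^-} become, uniformly in $\bm{\delta} \in (0,\delta_0]^{m_0}$,
\begin{gather*}
\sup_{t \in I} \norm{ B_{\bm{\delta}}(t) } \le c \sum_{k=1}^{m_0} \Big( \prod_{i=1}^k \delta_i \Big)^{-1}, \quad \esssup_{t \in I} \norm{ B_{\bm{\delta}}'(t) } \le c \sum_{k=1}^{m_0} \Big( \prod_{i=1}^k \delta_i \Big)^{-(m_0+1)}, \\
\int_0^1 \norm{ C_{\bm{\delta}}^{\pm}(s) } \, ds \le c \sum_{k=1}^{m_0} \Big( \prod_{1 \le i < k} \delta_i \Big)^{-1} \eta^{\pm}(\delta_k) \le c \sum_{k=1}^{m_0} \Big( \prod_{1 \le i < k} \delta_i \Big)^{-1} \eta(\delta_k),
\end{gather*}
the last step being the only place the domination hypothesis $\eta^{\pm} \le \eta$ enters. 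In particular, no reflexivity of $X$ is required, since $\eta^-(\delta) \to 0$ is now supplied directly by the hypothesis rather than derived via duality as in Theorem~\ref{thm: erw adsatz ohne sl}(ii).

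Inserting these bounds into the integral identity \eqref{eq: gl 6, adsatz ohne sl} from the proof of Theorem~\ref{thm: erw adsatz ohne sl}(ii) yields, for every $\bm{\delta} \in (0,\delta_0]^{m_0}$ and every sufficiently small $\eps > 0$, a master estimate of the form
\begin{align*}
\sup_{t \in I} \norm{U_{\eps}(t)-V_{\eps}(t)} \le c \, \eps \sum_{k=1}^{m_0} \Big( \prod_{i=1}^k \delta_i \Big)^{-(m_0+1)} + c \sum_{k=1}^{m_0} \Big( \prod_{1 \le i < k} \delta_i \Big)^{-1} \eta(\delta_k),
\end{align*}
the boundary contribution $c \eps \norm{B_{\bm{\delta}}}$ being absorbed into the first sum since $\delta_i \le \delta_0 \le 1$. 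The entire problem is thereby reduced to the optimization of the right-hand side over admissible $\bm{\delta}$.

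The main obstacle is to construct a choice $\bm{\delta} = \bm{\delta}(\eps)$ that simultaneously controls both sums by $c \, \tilde\eta^{m_0}\big(\eps^{2/(m_0(m_0+1))}\big)$. The natural strategy is a backward recursion in $k$: one fixes $\delta_{m_0}(\eps)$ as a suitable power of $\eps$ determined by balancing the $B'$-contribution $\eps \prod_{i=1}^{m_0}\delta_i^{-(m_0+1)}$ against the aggregated $\eta$-target across all $k$, and then defines $\delta_{k-1}(\eps)$ implicitly so that the $k$-th $\eta$-summand $(\prod_{i<k}\delta_i)^{-1} \eta(\delta_k)$ is driven, via the identity $\tilde\eta(x) = \eta(x^{1/2})$, to the iterated quantity $\tilde\eta^{m_0-k+1}\big(\eps^{2/(m_0(m_0+1))}\big)$. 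Each backward step introduces precisely one square root, so iterating $m_0$ times produces the composition $\tilde\eta^{m_0}$, and the arithmetic-series exponent $2/(m_0(m_0+1)) = 1/(1+2+\dots+m_0)$ is exactly the one for which the balance equation admits a solution. Verifying that the monotonicity assumption $\eta(\delta) \ge \delta$ together with $\eta(\delta) \to 0$ makes the resulting $\delta_k(\eps)$ admissible (i.e., in $(0,\delta_0]$ and tending to $0$ as $\eps \searrow 0$) completes the proof; this verification, together with the correct propagation of constants through the recursion, is the technical heart of the argument.
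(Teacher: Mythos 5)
There is a genuine gap in your master estimate: you bound the $B'_{\bm\delta}$-contribution in \eqref{eq: gl 6, adsatz ohne sl} by $\eps\,\esssup_t \norm{B'_{\bm\delta}(t)}$ and invoke the crude bound~\eqref{eq: absch B_n eps'} (with $Q_n'$ replaced by $P''$), obtaining the term $c\,\eps \sum_k \bigl(\prod_{i\le k}\delta_i\bigr)^{-(m_0+1)}$. That discards exactly the $\eta$-decay the theorem is built on. Already for $m_0=1$, the target choice $\delta=\eps^{1/2}$ makes this term $\eps/\delta^2 = 1$, which does not even tend to $0$; and if one optimizes $\delta$ differently, one gets (for $\eta(\delta)=\delta^\alpha$) the rate $\eps^{\alpha/(\alpha+2)}$, which is strictly slower than the claimed $\eps^{\alpha/2}=\tilde\eta(\eps)$. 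The paper instead estimates $\int_0^1\norm{B'_{\bm\delta}(s)}\,ds$ directly: after expanding $B'_{\bm\delta}$ by the product rule, the terms containing $\bar R_{\delta_k}P'P$ (or its mirror $PP'\bar R_{\delta_k}$) are controlled in $L^1(I)$ by
$\int_0^1 \norm{(A-1)\bar R_{\delta_k}P'P}\,ds \le c\,\eta(\delta_k)/\delta_k$ and $\int_0^1 \norm{PP'\bar R_{\delta_k}}\,ds \le c\,\eta(\delta_k)/\delta_k$, while the remaining resolvent factors are handled via $\sup_s\norm{R_\delta(s)},\ \sup_s\norm{(A(s)-1)R_\delta(s)} \le c/\delta^{m_0}$. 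This yields the paper's second sum $\sum_k \eps\bigl(\delta_k^{m_0+1}\prod_{j\ne k}\delta_j\bigr)^{-1}\eta(\delta_k)$, which carries the crucial extra factor $\eta(\delta_k)$ that your estimate lacks.

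With that refinement (and the recursion $\delta_{m_0,\eps}:=\eps^{1/(m_0(m_0+1))}$, $\delta_{k-1,\eps}:=\eta(\delta_{k,\eps})^{1/2}$, which is what your vague "backward recursion" description amounts to), every term of the paper's master estimate is dominated by $\tilde\eta^{m_0}\bigl(\eps^{2/(m_0(m_0+1))}\bigr)$. Your remaining observations -- drop the mollification and set $Q_n=P'$, absorb the boundary term into the $B'$-term, note that reflexivity is no longer needed because $\eta^-\le\eta$ is now assumed, and iterating the square-root substitution $m_0$ times produces the composition $\tilde\eta^{m_0}$ -- are all correct readings of the argument, but the proof does not close without the finer $L^1$-bound on $B'_{\bm\delta}$.
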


\begin{proof}
We proceed as in the proof of the qualitative adiabatic theorem above, but now replace $Q_n$ and $Q_n'$ at any occurrence by $P'$ and $P''$. We can then conclude from~\eqref{eq: gl 4, adsatz ohne sl} and~\eqref{eq: gl 6, adsatz ohne sl} (with the replacements just mentioned) 
that there is a constant $c'$ such that
\begin{align}  \label{eq: absch interessierender ausdruck}
%
&\sup_{t \in I} \norm{ U_{\eps}(t)-V_{\eps}(t) } \le \notag \\
& \quad \,\, c' \, \bigg(    \sum_{k=1}^{m_0} \eps \Big( \prod_{j=1}^{m_0} \delta_j \Big)^{-1} 
+ \sum_{k=1}^{m_0} \eps \Big( \delta_k^{m_0+1} \, \prod_{j \ne k} \delta_j \Big)^{-1} \eta(\delta_k) 
+ \sum_{k=1}^{m_0} \Big( \prod_{1 \le j < k} \delta_j \Big)^{-1} \eta(\delta_k) \bigg)
\end{align}
for all $\delta_1, \dots, \delta_{m_0} \in (0,\delta_0]$ and $\eps \in (0,\infty)$. In this estimate the first, second, and third sum correspond to the $B_{\bm{\delta}}$-, $B_{\bm{\delta}}'$-, $C_{\bm{\delta}}$-terms in~\eqref{eq: gl 6, adsatz ohne sl}, respectively. (See~\eqref{eq: absch B_n eps} and~\eqref{eq: absch C_n eps^+}, \eqref{eq: absch C_n eps^-} for the estimation of the $B_{\bm{\delta}}$-terms and $C_{\bm{\delta}}$-terms. And to obtain the upper bound for the $B_{\bm{\delta}}'$-terms, refine the estimate on $\int_0^1 \norm{ B_{\bm{\delta}}'(s)} \, ds$ 
from the proof of the previous theorem by using -- instead of~\eqref{eq: absch B_n eps'} -- the fact that 
\begin{gather*}
\int_0^1 \norm{P(s) P'(s) \ol{R}_{\delta}(s)} \, ds, \,\, \int_0^1 \norm{(A(s)-1) \ol{R}_{\delta}(s) P'(s)P(s)} \, ds \le c \, \frac{\eta(\delta)}{\delta} 
\end{gather*}
and that $\sup_{s \in I} \norm{R_{\delta}(s)}, \sup_{s \in I} \norm{(A(s)-1)R_{\delta}(s)} \le \frac{c}{\delta^{m_0}}$ for all sufficiently small $\delta \in (0,\delta_0]$.)
We now recursively define
\begin{align*}
\delta_{m_0 \, \eps} := \eps^{\frac{1}{m_0(m_0+1)}} 
\quad \text{and} \quad 
\delta_{m_0-k \, \eps} := \big( \eta(\delta_{m_0-k+1 \, \eps}) \big)^{\frac{1}{2}}
\end{align*}
for $\eps$ so small that $\delta_{m_0-k+1 \, \eps}$ lies in $(0,\delta_0]$ and for $k \in \{1, \dots, m_0-1\}$. (It should be noticed 
that $\delta_{m_0-k+1 \, \eps} \longrightarrow 0$ as $\eps \searrow 0$ because $\eta(\delta) \longrightarrow 0$ and that $\delta_{m_0-k+1 \, \eps}$ therefore really lies in the domain $(0,\delta_0]$ of $\eta$ for sufficiently small $\eps$.) 
Since 
$\eta(\delta_{1 \, \eps}) = \tilde{\eta}^{m_0} ( \eps^{2/(m_0(m_0+1))} )$ and $\frac{1}{\delta_{k-1 \, \eps}} \eta(\delta_{k \, \eps}) = \delta_{k-1 \, \eps} \le \eta(\delta_{k-1 \, \eps})$ for $k \in \{2, \dots, m_0\}$,
it follows by induction that
\begin{align} \label{eq: absch 1}
\Big( \prod_{1 \le j < k} \delta_{j \, \eps} \Big)^{-1} \eta(\delta_{k \, \eps}) \le \tilde{\eta}^{m_0} \big( \eps^{2/(m_0(m_0+1))} \big)
\end{align}
and, in particular, $\eta(\delta_{k \, \eps}) \le \tilde{\eta}^{m_0} \big( \eps^{2/(m_0(m_0+1))} \big)$ for all $k \in \{1, \dots, m_0\}$ and sufficiently small $\eps$. Since 
$\delta_{m_0 \, \eps} \le \delta_{m_0-k+1 \, \eps} \le \delta_{m_0-k \, \eps}$ for $k \in \{1, \dots, m_0-1\}$ and small $\eps$, 
it further follows that
\begin{align} \label{eq: absch 2}
\eps \Big( \prod_{j=1}^{k} \delta_{j \, \eps} \Big)^{-1} \le \eps \Big( \delta_{k \, \eps}^{m_0+1}\prod_{j \ne k} \delta_{j \, \eps} \Big)^{-1} \eta(\delta_{k \, \eps}) 
&\le \eps \Big( \prod_{j=1}^{m_0} \delta_{m_0 \, \eps} \Big)^{-(m_0+1)} \tilde{\eta}^{m_0} \big( \eps^{2/(m_0(m_0+1))} \big) \notag \\
&= \tilde{\eta}^{m_0} \big( \eps^{2/(m_0(m_0+1))} \big)
\end{align}
for all $k \in \{1, \dots, m_0\}$ and sufficiently small $\eps$. 
Combining~\eqref{eq: absch interessierender ausdruck}, \eqref{eq: absch 1} and~\eqref{eq: absch 2} we finally obtain the assertion.
\end{proof}

1. 
Clearly, a function $\eta$ as described in the above theorem exists under the hypotheses of the qualitative adiabatic theorem (Theorem~\ref{thm: erw adsatz ohne sl}) including the reflexivity of $X$. In fact, one 
has only to define $\eta(\delta) := \max \{ \eta^+(\delta), \eta^-(\delta), \delta \}$ with $\eta_{\pm}$ as in~\eqref{eq: eta_0} and to remember that $\eta(\delta) \longrightarrow 0$ as $\delta \searrow 0$ by~\eqref{eq: wesentl grund für ex der eps_i} and~\eqref{eq: wesentl grund 2 für ex der eps_i}. 
\smallskip

2. 
An inspection of the proof above -- or, more precisely, of the arguments leading to~\eqref{eq: absch interessierender ausdruck} -- shows that one obtains the same conclusion if one drops the finite rank hypothesis on $P(0)$ and, at the same time, 
replaces the hypothesis that $P(t)$ be weakly associated with $A(t)$ and $\lambda(t)$ for almost every $t \in I$ by the condition that there exist 
an $m_0 \in \N$ such that
\begin{align*}
P(t)X \subset \ker(A(t)-\lambda(t))^{m_0}
\end{align*} 
for every $t \in I$ (while leaving all other hypotheses of the above theorem unchanged).
\bigskip

We now specialize to the case of spectral operators of scalar type which -- by what has been remarked in Section~2.1 -- automatically have weakly semisimple eigenvalues and which arise as the generic one-dimensional periodic Schrödinger operators (Remark~8.7 of~\cite{GesztesyTkachenko09}).
As a simple corollary of the adiabatic theorem above, we 
obtain the following quantitative adiabatic theorem tailored to scalar type spectral operators $A(t)$ 
whose spectral measures $P^{A(t)}$ are -- in some sense -- Hölder continuous in $t$.
It generalizes a result for skew self-adjoint $A(t)$ of Avron and Elgart (Corollary~1 in~\cite{AvronElgart99}) and a refinement of it due to Teufel (Remark~1 in~\cite{Teufel01}) and slightly improves the rates of convergence given there.

\begin{cor} \label{prop: quant adsatz für hoelderstet spektrmass}
Suppose $A(t): D \subset X \to X$ for every $t \in I$ is a spectral operator of scalar type (with spectral measure $P^{A(t)}$) such that Condition~\ref{cond: reg 1} is satisfied with $\omega = 0$ and such that $\sup_{t \in I} \sup_{E \in \mathcal{B}_{\C}} \norm{ P^{A(t)}(E) } < \infty$. 
Suppose further that $\lambda(t)$ for every $t \in I$ is an eigenvalue of $A(t)$ 
such that the open sector
\begin{align*}
\lambda(t) + \delta_0 \, S_{(\vartheta(t)-\vartheta_0, \vartheta(t) + \vartheta_0)} := \big\{ \lambda(t) + \delta e^{i \vartheta}: \delta \in (0,\delta_0), \vartheta \in (\vartheta(t)-\vartheta_0, \vartheta(t) + \vartheta_0) \big\}
\end{align*} 
of radius $\delta_0 \in (0,\infty)$ and angle $2 \vartheta_0 \in (0, \pi)$ for every $t \in I$ is contained in 
$\rho(A(t))$ 
and such that $t \mapsto \lambda(t)$, $e^{i \vartheta(t)}$ are Lipschitz continuous. 
Suppose finally that $P(t)$ for every $t \in I$ is a bounded projection in $X$ such that $P(t) = P^{A(t)}(\{ \lambda(t) \})$ for almost every $t \in I$ and $t \mapsto P(t)$ is in $W^{2,\infty}_*(I,L(X))$, and suppose that 
$P^{A(t)}$ is Hölder continuous with exponent $\alpha \in (0,1]$ uniformly in $t \in I$, that is, there is a $c_0 \in (0, \infty)$ such that 
\begin{align*}
\big\| P^{A(t)}(E)x \big\| \le c_0 \, \lambda(E)^{\frac{\alpha}{2}} \, \norm{x}
\end{align*}
for all $E \in \mathcal{B}_{\C}$ with $\lambda(E) \le 1$ and all $x \in X$ and $t \in I$.
Then there is a constant 
$c \in (0,\infty)$ such that
\begin{align*}
\sup_{t \in I} \norm{ U_{\eps}(t) - V_{\eps}(t) } \le c \; \eps^{ \frac{\alpha}{2(1+\alpha)} }
\end{align*}
for small enough $\eps \in (0, \infty)$, where $V_{\eps}$ denotes the evolution system for $\frac{1}{\eps} A + [P',P]$.
\end{cor}

\begin{proof}
We show that there exists a function $\eta: (0, \delta_0'] \to (0,\infty)$ such that $\eta(\delta) \longrightarrow 0$ as $\delta \searrow 0$ and 
\begin{align} \label{eq: gl 1, quant für normal}
\eta(\delta) \ge \delta \quad \text{as well as} \quad \norm{\delta R_{\delta}(t)} = \norm{ \delta \big( \lambda(t)+\delta e^{i \vartheta(t)} - A(t) \big)^{-1} } \le \eta(\delta)
\end{align}
for all $\delta \in (0,\delta_0']$ and $t \in I$ (with a suitable $\delta_0'$). 
In order to do so, we notice that, by the scalar type spectrality of $A(t)$ and Theorem~XVIII.2.11 of~\cite{DunfordSchwartz},  
\begin{align*}
\big| \scprd{x^*, \delta R_{\delta}(t)x } \big|
\le \int_{\sigma(A(t))} \frac{\delta}{ | \lambda(t) + \delta e^{i \vartheta(t)} - z | } \,\, d \big| P^{A(t)}_{x^*,x} \big|(z),
\end{align*}
where $\big| P^{A(t)}_{x*,x} \big|$ denotes the total variation of the complex measure $E \mapsto P^{A(t)}_{x^*,x}(E) := \scprd{x^*, P^{A(t)}(E)x}$, 
and then divide the spectrum $\sigma(A(t))$ of $A(t)$ 
into the parts
\begin{align*}
\sigma_{1 \, r_{\delta}}(t) := \sigma(A(t)) \cap U_{r_{\delta}}(\lambda(t)) \quad \text{and} \quad \sigma_{2 \, r_{\delta}}(t) := \sigma(A(t)) \cap \C \setminus U_{r_{\delta}}(\lambda(t))
\end{align*} 
of those spectral values that are close to $\lambda(t)$ resp.~far from $\lambda(t)$,
where $r_{\delta} := \delta^{\gamma}$ and $\gamma \in (0,1)$ will be chosen in~\eqref{eq: gl 4, quant für normal} below. 
Since, by Lemma~III.1.5 of~\cite{DunfordSchwartz}, 
\begin{align*}
\big| P^{A(t)}_{x^*,x} \big|(E) \le 4 \sup_{F \in \mathcal{B}_E} \big| \langle x^*, P^{A(t)}(F) P^{A(t)}(E)x \rangle \big|  
\le 4 M' \norm{x^*} \big\| P^{A(t)}(E)x \big\|
\end{align*}
for every $t \in I$ and $E \in \mathcal{B}_{\C}$ (where $M' := \sup_{t \in I} \sup_{F \in \mathcal{B}_{\C}} \norm{ P^{A(t)}(F) } < \infty$) and since, by the assumed sector condition,
\begin{align*}
\operatorname{dist}\big( \lambda(t)+\delta e^{i \vartheta(t)}, \sigma(A(t)) \big) \ge (\sin \vartheta_0) \, \delta
\end{align*}
for every $t \in I$ and $\delta \in (0,\delta_0']$ (where $\delta_0'$ is chosen small enough), there are positive constants $c_1$, $c_2$ such that
\begin{gather*}
\int_{\sigma_{1 \, r_{\delta}}(t)} \frac{\delta}{ | \lambda(t) + \delta e^{i \vartheta(t)} - z | } \,\, d \big|P^{A(t)}_{x^*,x} \big|(z) 
\le \frac{1}{\sin \vartheta_0} \, \big| P^{A(t)}_{x^*,x} \big| \big( U_{r_{\delta}}(\lambda(t)) \big) 
\le c_1 \delta^{\alpha \, \gamma} \norm{x^*} \norm{x} \\ 
\text{as well as} \\
\int_{\sigma_{2 \, r_{\delta}}(t)} \frac{\delta}{ | \lambda(t) + \delta e^{i \vartheta(t)} - z | } \,\, d \big|P^{A(t)}_{x^*,x}\big| (z)
\le \frac{\delta}{ r_{\delta}-\delta } \, \big| P^{A(t)}_{x^*,x} \big|(\C)  
\le c_2 \delta^{1-\gamma} \norm{x^*} \norm{x}  
\end{gather*}
for every $x \in X$, $x^* \in X^*$, $\delta \in (0,\delta_0']$ and $t \in I$. Consequently,
\begin{align}
\norm{\delta R_{\delta}(t) } 
\le c_1 \, \delta^{ \alpha \, \gamma} + c_2 \, \delta^{ 1-\gamma } 
\le \max \{ c_1, c_2 \} \, \delta^{ \min \{ \alpha \, \gamma ,  1-\gamma \} } = c_0' \, \delta^{ \beta(\gamma) } 
\end{align}
for every $t \in I$ and $\delta \in (0, \delta_0']$ (notice that $\beta(\gamma) := \min \{ \alpha \, \gamma, 1-\gamma \}$, for given $\gamma$, is the best -- that is, biggest -- 
possible exponent in the second inequality above). 
And as $\gamma \mapsto \beta(\gamma)$ is maximal at $\gamma_0 := \frac{1}{1+\alpha}$, we choose 
\begin{align} \label{eq: gl 4, quant für normal}
\gamma := \gamma_0, \quad \beta := \beta(\gamma_0) = \frac{\alpha}{1+\alpha}, \quad \eta(\delta) := c_0' \, \delta^{\beta} = c_0' \, \delta^{ \frac{\alpha}{1+\alpha} },
\end{align}
thereby obtaining \eqref{eq: gl 1, quant für normal}. Since $P(t)A(t) \subset A(t)P(t) = \lambda(t)P(t)$ holds for every $t \in I$ (by the scalar type spectrality of $A(t)$ and the closedness argument in the remark after Theorem~\ref{thm: handl adsatz mit nichtglm sl}), 
the desired conclusion follows by the second remark after Theorem~\ref{thm: erw adsatz ohne sl, quantitativ} with $m_0 = 1$.
\end{proof}

\subsection{Some examples}

We begin with two examples where $\lambda(t)$ is an eigenvalue of $A(t)$ that is allowed to 
be non-isolated and non-weakly-semisimple for every $t \in I$. In particular, these examples cannot be dealt with by way 
of the previously known adiabatic theorems. In the first example, $A(t)$ is a spectral operator 
whereas in the second 
it is not (by Theorem~XV.3.10 and
~XV.8.7 of~\cite{DunfordSchwartz} 
and by the spectral structure of the right shift $S_+$ (Section~2.4)). 

\begin{ex} \label{ex: A_2(t) diagb}
Suppose $A$, $\lambda$, $P$ with $A(t) = R(t)^{-1} A_0(t) R(t)$, $P(t) = R(t)^{-1} P_0 R(t)$, and $R(t) = e^{C t}$ are given as follows in $X := \ell^p(I_d) \times \ell^p(I_{\infty})$ (where $p \in [1,\infty)$ and $d \in \N$): 
\begin{align*}
A_0(t) := 
\begin{pmatrix} \lambda(t) + \alpha(t) N  & 0 \\ 0 & \operatorname{diag}\big( (\lambda_n)_{n \in \N} \big) \end{pmatrix} 
\quad \text{and} \quad
P_0 := \begin{pmatrix} 1 & 0 \\ 0 & 0 \end{pmatrix},
\end{align*} 
where $\lambda(t) \in (-\infty, 0]$, $\alpha(t)$, $N$ are such that Condition~\ref{cond: baustein mit nicht-halbeinfachem ew} is satisfied and where $(\lambda_n)_{n \in \N}$ is an enumeration of $[-1,0] \cap \Q$ such that $\lambda(t) \notin \{ \lambda_n: n \in \N \}$ for almost every $t \in I$. 
Additionally, suppose $t \mapsto \lambda(t)$ and $t \mapsto \alpha(t)$ are Lipschitz continuous and $C$ is the right shift operator on $\ell^p(I_d) \times \ell^p(I_{\infty}) \cong \ell^p(I_{\infty})$:
\begin{align*}
C(z_1, \dots, z_d, z_{d+1}, \dots ) := (0, z_1, \dots, z_{d-1}, z_d, \dots ).
\end{align*} 
Then $t \mapsto A(t)$ is in $W^{1,\infty}_*(I,L(X))$ and $t \mapsto A_0(t)$ is $(M_0,0)$-stable (by Lemma~\ref{lm: char (M,0)-stab für einfaches A}), so that $A$ is $(M,0)$-stable for some $M \in [1,\infty)$ by Lemma~\ref{lm: (M,w)-stabilität und ähnl.trf.}. Since $A_0(t)|_{P_0 X} - \lambda(t)$ 
is nilpotent of order at most $m_0 := \operatorname{dim} \ell^p(I_d) = d$ for every $t \in I$ 
and since $A_0(t)|_{(1-P_0)X}-\lambda(t)$ is injective and has dense range in $(1-P_0)X$ 
(because $\lambda(t) \notin \{ \lambda_n: n \in \N \}$) 
for almost every $t \in I$, 
$P_0$ is weakly $m_0$-associated with $A_0(t)$ and $\lambda(t)$, whence the same is true for $A(t)$, $P$ instead of $A_0(t)$ and $P_0$.
And finally, the resolvent estimate of Theorem~\ref{thm: erw adsatz ohne sl} 
is clearly fulfilled if we choose $\vartheta(t) := \frac{\pi}{2}$ for all $t \in I$. 
All 
other hypotheses of 
Theorem~\ref{thm: erw adsatz ohne sl}~(i) are obvious. 
$\blacktriangleleft$
\end{ex}

In the above example, we have chosen $C$ to be the right shift operator on $X = \ell^p(I_d) \times \ell^p(I_{\infty})$ in order to make sure that the example cannot be reduced to a finite-dimensional subspace: there is no finite-dimensional subspace $M$ of $X$ such that 
\begin{align*}
M \supset P(0)X \text{ and } 
A(t)M \subset M \text{ as well as } P(t)M \subset M 
\end{align*}
for every $t \in I$.
%
(Clearly, if 
given data $A$, $\lambda$, $P$ 
can be reduced to a finite-dimensional subspace $M$ and satisfy the hypotheses of Theorem~\ref{thm: erw adsatz ohne sl}~(i), 
it suffices to prove the respective statement for the reduced data $A^M$, $\lambda$, $P^M$ 
given by
\begin{align*}
A^M(t) := A(t)|_M \quad \text{and} \quad P^M(t) := P(t)|_M \quad (t \in I).
\end{align*}
And this, in turn, can typically already be done with the help of the adiabatic theorem with non-uniform spectral gap condition of Section~3: 
by the finite-dimensionality of $M$, $\lambda(t)$ is isolated in $\sigma(A^M(t))$ for every $t \in I$ and, by Theorem~\ref{thm: typ mögl für PX und (1-P)X} and the first remark following it, $P^M(t)$ is the projection associated with $A^M(t)$ and $\lambda(t)$ for almost every $t \in I$.)  
%
In order to see the claimed irreducibility of the example above, 
we assume, on the contrary, that $A$, $\lambda$, $P$ as given by the example can be reduced to a finite-dimensional subspace $M$ of $X$. 
Then $R(t)M$ is invariant under $A_0(t)$ for every $t \in I$ and hence (by finite-dimensional spectral theory applied to 
$A^|_0(t) := A_0(t)|_{R(t)M}$)  
\begin{align*}
%
R(t)M \quad = \bigoplus_{\lambda \in \sigma_p(A^|_0(t))} \bigcup_{k \in \N} \ker( A^|_0(t) - \lambda)^k 
\quad \subset 
\bigoplus_{\lambda \in \sigma_p(A^|_0(t))} \bigcup_{k \in \N} \ker( A_0(t) - \lambda)^k,
\end{align*}
which latter space (by the special choice of $A_0(t)$) for every $t \in I$ contains only vectors with finitely many non-zero components. Consequently, the same is also true 
for the vectors in $R(t)M$ for every $t \in I$. 
We now obtain the desired contradiction by observing that 
the vector $R(t)v = e^{C t}v$, for every $t \ne 0$ and every vector $0 \ne v \in X$ 
with only finitely many non-zero components, has infinitely many non-zero components because 
\begin{align*}
e^{C t} e_i = \big( 0, \dots, 0, 1, \frac{t}{1!}, \frac{t^2}{2!}, \frac{t^3}{3!}, \dots \big)
\end{align*}
for $i \in \N$ (where the entry $1$ is in the $i$th place).
\smallskip

\begin{ex} \label{ex: A_2(t) nicht diagb}
Suppose $A$, $\lambda$, $P$ with $A(t) = R(t)^{-1} A_0(t) R(t)$, $P(t) = R(t)^{-1} P_0 R(t)$, and $R(t) = e^{C t}$ are given as follows in $X := \ell^p(I_d) \times \ell^p(I_{\infty})$ (where $p \in (1,\infty)$ and $d \in \N$): 
\begin{align*}
A_0(t) := 
\begin{pmatrix} \lambda(t) + \alpha(t) N  & 0 \\ 0 & S_+ - 1 \end{pmatrix} 
\quad \text{and} \quad
P_0 := \begin{pmatrix} 1 & 0 \\ 0 & 0 \end{pmatrix},
\end{align*} 
where $\lambda(t) \in \partial U_1(-1)$, $\alpha(t)$, $N$ are such that Condition~\ref{cond: baustein mit nicht-halbeinfachem ew} is satisfied. Additionally, $t \mapsto \lambda(t)$ and $t \mapsto \alpha(t)$ are Lipschitz continuous and $C$ is the bounded linear map in $\ell^p(I_d) \times \ell^p(I_{\infty}) \cong \ell^p(I_{\infty})$ given by 
\begin{align*}
C (z_1, \dots, z_d, z_{d+1}, \dots ) := (0, \dots, 0, z_{d+1}, -z_d, 0, \dots),
\end{align*}
where in the vector on the right $z_{d+1}$, $-z_d$ appear in the $d$th and $(d+1)$th place. 
Since $\lambda(t) \in \partial U_1(-1) = \sigma_c(S_+ - 1)$ 
for every $t \in I$ because $p \ne 1$ (Section~2.4), 
$P_0$ is weakly associated with $A_0(t)$ and $\lambda(t)$ and therefore the same goes for $A_0(t)$, $P_0$ replaced by $A(t)$ and $P(t)$.
Also, if for every $t \in I$ we choose $\vartheta(t)$ such that $\lambda(t) = -1 + e^{i \vartheta(t)}$, 
then the resolvent estimate of Theorem~\ref{thm: erw adsatz ohne sl} holds true 
because
\begin{align*}
\norm{  \big( \lambda(t) + \delta e^{i \vartheta(t)} - A_0(t) \big)^{-1} (1-P_0) } \le \norm{  \big( 1 + \delta - e^{-i \vartheta(t)} S_+ \big)^{-1}  } \le \frac{1}{\delta}
\end{align*}
for every $t \in I$ and $\delta \in (0,\infty)$ 
(Section~2.4).
$\blacktriangleleft$
\end{ex}

Just like the first example, the example above cannot be reduced to a finite-dimensional subspace $M$. Indeed, 
assuming that $A$, $\lambda$, $P$ from above could, we obtain that $R(t)M$ for every $t \in I$ is invariant under $A_0(t)$ and
\begin{align*}
\emptyset \ne \sigma(A_0(t)|_{R(t)M}) = \sigma_p(A_0(t)|_{R(t)M}) \subset \sigma_p(A_0(t)) 
= \{ \lambda(t) \}
\end{align*}
(as $\sigma_p(S_+ - 1) = \emptyset$ (Section~2.4)),
so that (by finite-dimensional spectral theory) 
%
\begin{align*}
R(t)M = \ker(A_0(t)|_{R(t)M}-\lambda(t))^{\dim M} 
\subset \ker(A_0(t)-\lambda(t))^{\dim M} 
= P_0 X \subset M
\end{align*}
for all $t \in I$.
Consequently, $R(t)P_0X \subset R(t) M \subset P_0 X$. 
In other words, $P_0 X$ is invariant under $R(t) = e^{C t}$ for every $t \in I$, which is obviously not true by the choice of $C$ in the above example. Contradiction! 
We point out that the above example really works only in case $p \ne 1$. In fact, if $p = 1$ then $\lambda(t) \in \partial U_1(-1) = \sigma_r(S_+-1)$ (Section~2.4) from which it follows by the block structure of $A_0(t)$ that  
\begin{gather*}
%
\ker(A_0(t)-\lambda(t))^k 
\subset \ell^p(I_d) \times 0, \\ 
\overline{\rg(A_0(t)-\lambda(t))^k} 
\subset \ell^p(I_d) \times \overline{\rg(S_+ - 1 - \lambda(t))^k} 
\subsetneq \ell^p(I_d) \times \ell^p(I_{\infty}) 
\end{gather*}
and hence $\ker(A_0(t)-\lambda(t))^k + \overline{\rg(A_0(t)-\lambda(t))^k} \ne X$ for every $k \in \N$ and $t \in I$. Consequently, the adiabatic theorems proven above cannot be applied. 
\smallskip

In our last example we show that the conclusion of the adiabatic theorem without spectral gap condition may fail if the regularity hypothesis on $P$ 
is the only one to be violated. 

\begin{ex}  \label{ex: reg an P wesentl, ohne sl}
Set $A(t):= M_{f_t}$ in $X := L^p(\R)$ (for some $p \in [1,\infty)$), where 
\begin{align*}
f_t := f_0(\,.\, + t) \quad \text{with} \quad 0 \ne f_0 \in C_c^1(\R, i \R),
\end{align*}
$\lambda(t) := 0$ and $P(t) := M_{\chi_{E_t}}$ with $E_t := \{ f_t = 0 \}$. 
Then all the hypotheses of the adiabatic theorem without spectral gap condition -- in the version for projections of infinite rank (second remark after Theorem~\ref{thm: erw adsatz ohne sl}) -- are fulfilled with the sole exception that $t \mapsto P(t)$ is not SOT-continuously differentiable (by Lemma~\ref{lm: wenn P stet db, dann schon konst}). 
And indeed, the conclusion of the adiabatic theorem 
already fails: as the $A(t)$ are pairwise commuting and $t \mapsto f_t(x)$ is Riemann integrable for every $x \in \R$, one has
\begin{align*}
\big( U_{\eps}(t,s) g \big) (x) = \Big( e^{\frac 1 \eps \int_s^t A(\tau) \, d\tau} \, g \Big)(x) = e^{ \frac 1 \eps \int_s^t f_{\tau}(x) \, d\tau} \, g(x)
\end{align*}  
for almost every $x \in \R$ and therefore (by $f_0(\R) \subset i \R$)
\begin{align*}
\norm{ (1-P(t)) U_{\eps}(t) P(0) g }^p 
= \int \big| (1-\chi_{E_t}(x)) \chi_{E_0}(x) g(x) \big|^p \, dx
\end{align*}
for every $t \in I$, $\eps \in (0,\infty)$ and $g \in X$. Since the right hand side of this equation does not depend on $\eps \in (0, \infty)$ and since for every $t \in (0,1]$ there is a $g \in X$ such that this right hand side does not vanish, 
the conclusion 
of the adiabatic theorem without spectral gap 
-- more precisely, the weaker assertion that $\sup_{t \in I} \norm{(1-P(t))U_{\eps}(t)P(0)g} \longrightarrow 0$ for all $g \in X$ --
fails.
$\blacktriangleleft$
\end{ex}


It should be pointed out that the failure of both the hypotheses 
and the conclusion of the adiabatic theorems without spectral gap condition presented above is a quite typical phenomenon in the case where $A(t) = M_{f_t}$ in $X = L^p(X_0)$ for some $p \in [1,\infty)$ and some $\sigma$-finite measure space $(X_0, \mathcal{A}, \mu)$. 
Indeed, if $A(t) = M_{f_t}$ in $X = L^p(X_0)$ for measurable functions $f_t: X_0 \to \{ \Re z \le 0 \}$ such that $D(M_{f_t}) = D$ for all $t \in I$, 
if $\lambda(t)$ is an eigenvalue of $A(t)$, and if $P(t)$ for almost every $t \in I$ (with exceptional set $N$) is 
weakly associated with $A(t)$ and $\lambda(t)$, 
then 
\begin{align*}
P(t) = M_{\chi_{ \{ f_t = \lambda(t) \} }} = M_{\chi_{E_t}} \text{ for every } t \in I \setminus N 
\end{align*}
by Theorem~\ref{thm: typ mögl für PX und (1-P)X}, and therefore the following holds true.
As soon as $I \setminus N \ni t \mapsto P(t)$ is not constant, the hypotheses of the adiabatic theorem without spectral gap (Theorem~\ref{thm: erw adsatz ohne sl}) must fail (because then $I \setminus N \ni t \mapsto P(t) = M_{\chi_{E_t}}$ cannot extend to an SOT-continuously differentiable map 
by Lemma~\ref{lm: wenn P stet db, dann schon konst}). 
And as soon as, in addition, the maps $f_t$ are $i \R$-valued and $t \mapsto f_t g \in X$ is continuous for all $g \in D$, the conclusion 
of Theorem~\ref{thm: erw adsatz ohne sl}, or more precisely, of its corollary
\begin{align*}
\sup_{t \in I} \norm{(1-P(t))U_{\eps}(t)P(0)} \longrightarrow 0 \quad \text{and} \quad \sup_{t \in I} \norm{P(t)U_{\eps}(t)(1-P(0))} \longrightarrow 0,
\end{align*}
must fail as well. (In order to see this, 
one gathers from 
Theorem~2.3 of~\cite{NickelSchnaubelt98} (and its proof) that the evolution system $U_{\eps}$ for $\frac 1 \eps A$ exists on $D$ and can be strongly approximated by finite products of operators of the form $e^{M_{f_{\tau}} \, \sigma}$ with $\tau \in I$ and $\sigma \in [0,\infty)$, and infers from this that 
for arbitrary $g \in X$ 
\begin{align*}
\Big| (1-\chi_{E_t}(x)) \big( U_{\eps}(t) \chi_{E_0} g \big)(x) - \chi_{E_t}(x) \big( U_{\eps}(t) (1-\chi_{E_0}) g \big)(x) \Big| = 
\big| \chi_{E_t}(x) - \chi_{E_0}(x) \big| \big| g(x) \big|
\end{align*}
for almost every $x \in X_0$, 
whence
\begin{align*}
\big\| (1-P(t))U_{\eps}(t)P(0)g - P(t)U_{\eps}(t)(1-P(0))g \big\| = \norm{P(t)g - P(0)g}
\end{align*} 
for all $t \in I \setminus N$, $\eps \in (0,\infty)$. Since the right hand side of this equation does not depend on $\eps \in (0,\infty)$ and since $I \setminus N \ni t \mapsto P(t)$ is not constant, 
there is a $t \in (0,1]$ and a $g \in X$ such that $(1-P(t))U_{\eps}(t)P(0)g$ and $P(t)U_{\eps}(t)(1-P(0))g$ do not both converge to $0$ as $\eps \searrow 0$.)

\section{Adiabatic theorems for time-dependent domains}  \label{sect: adsätze für zeitabh domains}

In this section we extend the adiabatic theorems (with and without spectral gap condition) for time-independent domains of Section~3 and~4 
to the case of operators $A(t)$ with time-dependent domains -- by slightly modifying the proofs of the respective adiabatic theorems for time-independent domains.
%
Striving for 
such an extension is very natural because the requirement of constant domains is rather 
restrictive -- just think of differential operators $A(t)$ with (fully) time-dependent boundary conditions. 
%
We will see in Section~6 that the adiabatic theorems of this section allow one to almost effortlessly derive 
adiabatic theorems for operators $A(t) = i A_{a(t)}$ defined by symmetric sesquilinear forms $a(t)$.
%
All the theorems of this section are generalizations of the respective 
adiabatic theorems for time-independent domains if in these latter theorems all $W^{n,\infty}_*$-regulartity requirements 
are strengthened 
to $n$ times SOT-continuous differentiability requirements.
%
We will need 
the following very natural condition on $A$, which takes the role of Condition~\ref{cond: reg 1}.

\begin{cond} \label{cond: U_eps existiert und beschränkt}
$A(t): D(A(t)) \subset X \to X$ for every $t \in I$ is a densely defined closed linear map such that, for every $\eps \in (0,\infty)$, there is an evolution system $U_{\eps}$ for $\frac 1 \eps A$ on $D(A(t))$ and there is a constant $M \in [1,\infty)$ such that $\norm{U_{\eps}(t,s)} \le M$ for all $(s,t) \in \Delta$ and $\eps \in (0,\infty)$. 
\end{cond}

We point out 
that there is a huge number of papers establishing the existence of 
evolution systems $U$ for a given family $A$ of linear maps $A(t)$ on $D(A(t))$ as, for instance, \cite{Kato56}, \cite{Kisynski63}, \cite{Tanabe60}, \cite{KatoTanabe62}, \cite{FujieTanabe73}, \cite{AcquistapaceTerreni87}.
See the survey article~\cite{Schnaubelt02} for many more references. 
%
Instead of working with evolution systems on the spaces $Y_t = D(A(t))$ as in Condition~\ref{cond: U_eps existiert und beschränkt}, one could also prove adiabatic theorems operating -- as in~\cite{Kato70} or~\cite{Kato73} -- with evolution systems for $A$ on certain subspaces $Y$ of the intersection of all $D(A(t))$ (Definition~VI.9.2 of~\cite{EngelNagel}), 
but then one would have 
to impose various invariance conditions on the subspace $Y$, such as the $A(t)$-admissibiltity of $Y$, 
the invariance
\begin{align} \label{eq: sect 5, invarianz}
(z-A(t))^{-1} Y \subset Y
\end{align}
for $z \in \rg \gamma_t$ (case with spectral gap) or $z \in \{ \lambda(t) + \eps e^{i \vartheta(t)}: \eps \in (0,\eps_0] \}$ (case without spectral gap),
and the invariance of $Y$ under $P(t)$ and $P'(t)$. 
Such invariance conditions, however, are difficult 
to verify in practice:  \eqref{eq: sect 5, invarianz}, for instance, is clear only for complex numbers $z$ with sufficiently large positive real part 
(Proposition~2.3 of~\cite{Kato70}).

\subsection{Adiabatic theorems with spectral gap condition}

We will need the following condition depending on $m \in \{ 0 \} \cup  \N \cup \{\infty \}$ 
(number of points at which $\sigma(\,.\,)$ falls into $\sigma(A(\,.\,)) \setminus \sigma(\,.\,)$).

\begin{cond} \label{cond: vor adsatz mit sl}
$A(t): D(A(t)) \subset X \to X$ for every $t \in I$ is a linear map such that Condition~\ref{cond: U_eps existiert und beschränkt} is satisfied.
$\sigma(t)$ for every $t \in I$ is a compact 
subset of $\sigma(A(t))$, $\sigma(\,.\,)$ falls into $\sigma(A(\,.\,)) \setminus \sigma(\,.\,)$ at exactly $m$ points that accumulate at only finitely many points, and $I \setminus N \ni t \mapsto \sigma(t)$ is continuous, where $N$ denotes the set of those $m$ points at which $\sigma(\,.\,)$ falls into $\sigma(A(\,.\,)) \setminus \sigma(\,.\,)$. 
\begin{gather*}
J_{t_0} \ni t \mapsto (z-A(t))^{-1}  \text{ is SOT-continuously differentiable for all } z \in \rg \gamma_{t_0}, \\ \rg \gamma_{t_0} \ni z \mapsto \ddt{ (z-A(t))^{-1} } \text{ is SOT-continuous for all } t \in J_{t_0}, \\
\sup_{ (t,z) \in J_{t_0} \times \rg \gamma_{t_0} } \norm{ \ddt{ (z-A(t))^{-1} } } < \infty
\end{gather*}
for every $t_0 \in I \setminus N$, where the cycle $\gamma_{t_0}$ and the non-trivial closed interval $J_{t_0} \ni t_0$ are chosen such that $\rg \gamma_{t_0} \subset \rho(A(t))$ and $\operatorname{n}(\gamma_{t_0}, \sigma(t)) = 1$ and $\operatorname{n}(\gamma_{t_0}, \sigma(A(t)) \setminus \sigma(t)) = 0$ for every $t \in J_{t_0}$.
And finally, $P(t)$ is the 
projection associated with $A(t)$ and $\sigma(t)$ for every $t \in I \setminus N$ and $I \setminus N \ni t \mapsto P(t)$ extends to a twice SOT-continuously differentiable map on the whole of $I$.
\end{cond}

With this condition at hand, we can now formulate an adiabatic theorem with uniform spectral gap condition ($m = 0$) and non-uniform spectral gap condition ($m \in \N \cup \{\infty\}$) for time-dependent domains.

\begin{thm} \label{thm: adsatz mit sl, zeitabh}
Suppose $A(t)$, $\sigma(t)$, $P(t)$ for $t \in I$ are such that Condition~\ref{cond: vor adsatz mit sl} is satisfied with $m = 0$ or $m \in \N \cup \{ \infty\}$, respectively. Then
\begin{align*}
\sup_{t \in I} \norm{ U_{\eps}(t) - V_{\eps}(t) } = O(\eps)  \text{ resp. } o(1) \quad (\eps \searrow 0), 
\end{align*}
whenever the evolution system $V_{\eps}$ for $\frac 1 \eps A + [P',P]$ exists on $D(A(t))$ for all $\eps \in (0,\infty)$.
\end{thm}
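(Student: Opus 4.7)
The plan is to mirror the commutator-equation proof of Theorem~\ref{thm: handl adsatz mit glm sl} (uniform case $m=0$) and then use the splitting argument of Theorem~\ref{thm: handl adsatz mit nichtglm sl} verbatim to reduce the non-uniform case ($m\in\N\cup\{\infty\}$) to the uniform one. The essential modifications are dictated by the fact that $A(t)$ no longer has a common domain: in place of Lemma~\ref{lm: prod- und inversenregel} (the product rule for $W^{1,\infty}_*$-regular families of operators between $X$ and $Y$), I will use Lemma~\ref{lm: prodregel rechtsseit db} together with Lemma~\ref{lm: rechtsseit db und W^{1,infty}} and Lemma~\ref{lm: zeitentw rechtsseit db} to carry out the one key ``integration by parts'' that makes the whole method work.

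Restricting to $m=0$, I first define, for each $t_0\in I$ and $t\in J_{t_0}$,
\[
B(t)x := \frac{1}{2\pi i}\int_{\gamma_{t_0}}(z-A(t))^{-1}P'(t)(z-A(t))^{-1}x\,dz.
\]
The local definitions agree on overlaps (the two cycles being homologous in the common resolvent set), so $B$ is a well-defined map $I\to L(X)$; the closedness of $A(t)$ gives $B(t)X\subset D(A(t))$, and a direct computation using $A(t)(z-A(t))^{-1}=z(z-A(t))^{-1}-1$ (exactly as in the time-independent case) yields the commutator identity
\[
B(t)A(t)-A(t)B(t)\subset[P'(t),P(t)] \quad\text{on } D(A(t)).
\]
The regularity hypotheses built into Condition~\ref{cond: vor adsatz mit sl}, together with the twice SOT-continuous differentiability of $P$ and uniform boundedness of $t\mapsto \partial_t(z-A(t))^{-1}$ on $J_{t_0}\times\rg\gamma_{t_0}$, allow differentiation under the integral sign, showing that $t\mapsto B(t)$ is SOT-continuously differentiable with SOT-continuous derivative $B'$ such that $\sup_{t\in I}\norm{B(t)},\,\sup_{t\in I}\norm{B'(t)}<\infty$.

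Fix $x\in D(A(0))$ and $t\in I$. By Lemma~\ref{lm: zeitentw rechtsseit db}, $s\mapsto V_\eps(s)x$ is right differentiable with right derivative $(\tfrac{1}{\eps}A(s)+[P'(s),P(s)])V_\eps(s)x$ and takes values in $D(A(s))$. Since $B(s)$ is bounded and right differentiable, Lemma~\ref{lm: prodregel rechtsseit db} gives right differentiability of $s\mapsto B(s)V_\eps(s)x$ with the obvious Leibniz-type derivative, and this derivative lies in $D(A(s))$ because $B(s)X\subset D(A(s))$. Applying Lemma~\ref{lm: prodregel rechtsseit db} a second time to $U_\eps(t,\,\cdot\,)$ and $B(\,\cdot\,)V_\eps(\,\cdot\,)x$ and invoking the commutator equation on $V_\eps(s)x\in D(A(s))$, the $\tfrac{1}{\eps}$-terms collapse to
\[
\partial_s^+\bigl[U_\eps(t,s)B(s)V_\eps(s)x\bigr]=U_\eps(t,s)\Bigl\{\tfrac{1}{\eps}[P'(s),P(s)]+B'(s)+B(s)[P'(s),P(s)]\Bigr\}V_\eps(s)x.
\]
Lemma~\ref{lm: rechtsseit db und W^{1,infty}} then turns this into an integral identity for $U_\eps(t,s)B(s)V_\eps(s)x\bigr|_{s=0}^{s=t}$. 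Combining it with the standard identity
\[
V_\eps(t)x-U_\eps(t)x=\int_0^t U_\eps(t,s)[P'(s),P(s)]V_\eps(s)x\,ds
\]
(derived the same way from Lemmas~\ref{lm: prodregel rechtsseit db}, \ref{lm: rechtsseit db und W^{1,infty}}, \ref{lm: zeitentw rechtsseit db}) and solving for $V_\eps(t)x-U_\eps(t)x$ gives
\[
V_\eps(t)x-U_\eps(t)x=\eps\,U_\eps(t,s)B(s)V_\eps(s)x\Bigr|_{s=0}^{s=t}-\eps\int_0^t U_\eps(t,s)\bigl(B'(s)+B(s)[P'(s),P(s)]\bigr)V_\eps(s)x\,ds.
\]
Since $U_\eps$ is uniformly bounded by Condition~\ref{cond: U_eps existiert und beschr�nkt}, $V_\eps$ by Proposition~\ref{prop: st�rreihe f�r gest�rte zeitentw}, and $B$, $B'$ are uniformly bounded, the right-hand side is $O(\eps)$ in norm, uniformly in $t\in I$ and in $x$ on the unit ball of $D(A(0))$. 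Density of $D(A(0))$ in $X$ and uniform boundedness then give the claimed $O(\eps)$-estimate for $\sup_{t\in I}\norm{U_\eps(t)-V_\eps(t)}$.

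For $m\in\N\cup\{\infty\}$ I partition $I$ as in the proof of Theorem~\ref{thm: handl adsatz mit nichtglm sl} into closed intervals $I_{i,\delta}$ on which the spectral gap is uniform, separated by short open intervals $J_{i,\delta}$ of total length at most $m\delta$ (with $m\delta$ replaced by a short segment covering an accumulation neighbourhood in the infinite case) around the crossing points. On each $I_{i,\delta}$ the uniform case just proved applies, while on each $J_{i,\delta}$ the standard bound $\norm{V_\eps(t,\tau)-U_\eps(t,\tau)}\le Mc\cdot M e^{Mc}\delta$ (obtained from the series expansion of Proposition~\ref{prop: st�rreihe f�r gest�rte zeitentw}) makes the contribution arbitrarily small for small $\delta$. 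The hard part of the whole argument will be the careful verification in the second paragraph that $B$ really is SOT-continuously differentiable with bounded derivative despite the absence of a time-independent $D$: the formulation of Condition~\ref{cond: vor adsatz mit sl} is tailored exactly to this, but one must check that the uniform local bounds on $\partial_t(z-A(t))^{-1}$ in $(t,z)\in J_{t_0}\times\rg\gamma_{t_0}$ transfer to uniform bounds on $B'$ across the finite cover of $I$ by the intervals $J_{t_0}$.
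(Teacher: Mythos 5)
Your proof is correct and follows essentially the same route as the paper's: reduce to $m=0$, reuse the commutator solution $B(t)$ with the cycles from Condition~\ref{cond: vor adsatz mit sl}, replace the $W^{1,\infty}_*$-product-rule machinery (Lemma~\ref{lm: prod- und inversenregel}) by the right-differentiability lemmas (Lemma~\ref{lm: prodregel rechtsseit db}, Lemma~\ref{lm: rechtsseit db und W^{1,infty}}, Lemma~\ref{lm: zeitentw rechtsseit db}) for the key integration by parts, and invoke Proposition~\ref{prop: st�rreihe f�r gest�rte zeitentw} for the $\eps$-uniform bound on $V_\eps$. The non-uniform case is then handled by the same splitting argument as in Theorem~\ref{thm: handl adsatz mit nichtglm sl}, exactly as the paper indicates.
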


\begin{proof}
We have only to prove the theorem in the case of a uniform spectral gap ($m = 0$), since the theorem in the case of a non-uniform spectral gap ($m \in \N \cup \{ \infty \}$) then follows in the same way as Theorem~\ref{thm: handl adsatz mit nichtglm sl} followed from Theorem~\ref{thm: handl adsatz mit glm sl}. In order to do so, 
we must only slightly modify the proof of Theorem~\ref{thm: handl adsatz mit glm sl}. 
We define the operators $B(t)$ as in the proof of that theorem (where now $\gamma_{t_0}$ and $J_{t_0}$ are given by Condition~\ref{cond: vor adsatz mit sl}), take over the first preparatory step of that proof, 
and easily show -- instead of what has been shown in the second preparatory step -- 
that $t \mapsto B(t)$ is SOT-continuously differentiable. (It has to be used for this last statement that Condition~\ref{cond: vor adsatz mit sl} implies
\begin{align*}
\sup_{ (t,z) \in J_{t_0} \times \rg \gamma_{t_0} } \norm{  (z-A(t))^{-1}  } < \infty 
\end{align*}
which can be seen as in the proof of~\eqref{eq: w.l.o.g. assumption} below.) 
We can then almost literally take over the main part of the proof of Theorem~\ref{thm: handl adsatz mit glm sl}: the only thing that has to be changed is that the $W^{1,\infty}$-regularity of 
\begin{align*}
[0,t] \ni s \mapsto U_{\eps}(t,s) B(s) V_{\eps}(s)x
\end{align*}
for $x \in D(A(0))$ can no longer be deduced from Lemma~\ref{lm: prod- und inversenregel}, but has to be inferred from Lemma~\ref{lm: prodregel rechtsseit db} and Lemma~\ref{lm: rechtsseit db und W^{1,infty}}, and that Proposition~\ref{prop: störreihe für gestörte zeitentw} has to be invoked for an $\eps$-independent bound on $V_{\eps}$.
\end{proof}

In general, the existence of the evolution system $V_{\eps}$ for $\frac 1 \eps A + [P',P]$ on $D(A(t))$ does not seem to be guaranteed under -- the fairly general -- Condition~\ref{cond: vor adsatz mit sl}. 
(In view of Proposition~\ref{prop: störreihe für gestörte zeitentw} one would, of course, like to define $V_{\eps}$ as a perturbation series and show that $[s,1] \ni t \mapsto V_{\eps}(t,s)y$ for every $y \in D(A(s))$ is a continuously differentiable solution to the initial value problem $x' = \frac 1 \eps A(t)x + [P'(t),P(t)]x$, $x(s) = y$, but this is not clear in general.)  
It is therefore good to know 
that under Condition~\ref{cond: vor adsatz mit sl} with $m=0$ 
one has at least the following statement:
\begin{align} \label{eq: zusatz zu adsatz mit sl, zeitabh}
\sup_{t \in I} \norm{ (1-P(t))U_{\eps}(t)P(0) }, \quad \sup_{t \in I} \norm{ P(t)U_{\eps}(t)(1-P(0)) } = O(\eps)  
\end{align}
as $\eps \searrow 0$, which follows from the adiabatic theorem of higher order (Theorem~\ref{thm: höherer adsatz}~(i) and~(iii) with degree of regularity $n = 1$) below. 
%
%
It should be pointed out, however, that Theorem~\ref{thm: adsatz mit sl, zeitabh} itself 
-- operating with the evolution systems for $\frac 1 \eps A + [P',P] = \frac{1}{\eps} A_{0 \, \eps} + K_{0 \, \eps} \ne \frac{1}{\eps} A_{1 \, \eps} + K_{1 \, \eps}$ as comparison evolutions --
is not contained in Theorem~\ref{thm: höherer adsatz}. 

\subsection{Adiabatic theorems without spectral gap condition}

We now prove an adiabatic theorem without spectral gap condition for time-dependent domains in which case we have to explicitly require the differentiability of the resolvent as well as an estimate on the derivative of the resolvent, which two things are no longer automatically satisfied as they were in the case of time-independent domains.

\begin{thm} \label{thm: erw adsatz ohne sl, zeitabh}
Suppose $A(t): D(A(t)) \subset X \to X$ for every $t \in I$ is a linear map such that Condition~\ref{cond: U_eps existiert und beschränkt} is satisfied. Suppose further that $\lambda(t)$ for every $t \in I$ is an eigenvalue of $A(t)$, and that there are numbers $\delta_0 \in (0,\infty)$ and $\vartheta(t) \in \R$ such that $\lambda(t) + \delta e^{i \vartheta(t)} \in \rho(A(t))$ for all $\delta \in (0,\delta_0]$ and $t \in I$ and such that $t \mapsto \lambda(t)$ and $t \mapsto e^{i \vartheta(t)}$ are continuously differentiable and $t \mapsto \big( \lambda(t) + \delta e^{i \vartheta(t)} - A(t) \big)^{-1}$ is SOT-continuously differentiable.
Suppose finally that $P(t)$ for every $t \in I$ is a bounded projection in $X$ commuting with $A(t)$ 
such that $P(t)$ for almost every $t \in I$ is weakly associated with $A(t)$ and $\lambda(t)$ and that 
\begin{align*}
P(t)X \subset \ker (A(t)-\lambda(t))^{m_0}  
\end{align*} 
for every $t \in I$ (and some $m_0 \in \N$). Additionally, suppose that there are $M_0, M_0' \in (0,\infty)$ such that 
\begin{align*}
&\norm{ \big( \lambda(t) + \delta e^{i \vartheta(t)} - A(t) \big)^{-1} (1-P(t)) } \le \frac{M_0}{\delta}, \\
& \qquad \qquad \qquad \qquad \norm{  \ddt{ \Big( \big( \lambda(t) + \delta e^{i \vartheta(t)} - A(t) \big)^{-1} } (1-P(t)) \Big)    } \le \frac{M_0'}{\delta^{m_0+1}}
\end{align*}
for all $\delta \in (0, \delta_0]$ and $t \in I$,
let $\rk P(0) < \infty$ and let $t \mapsto P(t)$ be SOT-continuously differentiable. 
Then
\begin{align*} 
\sup_{t \in I} \norm{ \big( U_{\eps}(t) - V_{0\,\eps}(t) \big) P(0) } \longrightarrow 0 \quad (\eps \searrow 0),
\end{align*}
where $V_{0\,\eps}$ for every $\eps \in (0,\infty)$ denotes the evolution system for $\frac 1 \eps  A P + [P',P]$ on $X$. If, in addition, $X$ is reflexive and $t \mapsto P(t)$ norm continuously differentiable, then  
\begin{align*}
\sup_{t \in I} \norm{ U_{\eps}(t) - V_{\eps}(t) } \longrightarrow 0 \quad (\eps \searrow 0),
\end{align*}
whenever the evolution system $V_{\eps}$ for $\frac 1 \eps A + [P',P]$ exists on $D(A(t))$ for every $\eps \in (0, \infty)$.
\end{thm}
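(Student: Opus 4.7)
The plan is to recycle the proof of Theorem~\ref{thm: erw adsatz ohne sl} essentially verbatim, with two systematic substitutions forced by the time-dependence of the domains. First, wherever the old proof invoked Lemma~\ref{lm: prod- und inversenregel} to establish $W^{1,\infty}_*$-regularity of products of operator-valued maps (and thereby to obtain $W^{1,\infty}$-integral representations for time-evolved products), I would invoke instead the right-differentiability product rule of Lemma~\ref{lm: prodregel rechtsseit db} together with Lemma~\ref{lm: rechtsseit db und W^{1,infty}}; this is forced upon us because $U_{\eps}(t,\,.\,)$ is no longer differentiable on a fixed space $Y$. Second, the $\eps$-uniform bound on $U_{\eps}$ previously coming from Theorem~\ref{thm: Kato85} is now taken directly from Condition~\ref{cond: U_eps existiert und beschr�nkt}, while the $\eps$-uniform bound on the comparison evolutions is provided by Lemma~\ref{lm: lm 2 zum erw adsatz ohne sl} for $V_{0\,\eps}$ and, for the reflexive case, by Proposition~\ref{prop: st�rreihe f�r gest�rte zeitentw} for $V_{\eps}$.

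More concretely, the first preparatory step in the proof of Theorem~\ref{thm: erw adsatz ohne sl} can be skipped, since the inclusion $P(t)X \subset \ker(A(t)-\lambda(t))^{m_0}$ (with a uniform $m_0$, existing because $\rk P(t) = \rk P(0) < \infty$ for all $t$) is built into the hypotheses; Lemma~\ref{lm: lm 2 zum erw adsatz ohne sl} then immediately yields existence, adiabaticity and uniform boundedness of $V_{0\,\eps}$. Next, I would define the mollified family $Q_n(t)$ and the operators $B_{n\,\boldsymbol{\delta}}(t)$ exactly as in the proof of Theorem~\ref{thm: erw adsatz ohne sl}; the approximate commutator equation~\eqref{eq: appr commutator equation} is a purely algebraic identity and continues to hold. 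The SOT-continuous differentiability of $t \mapsto B_{n\,\boldsymbol{\delta}}(t)$ together with the bounds~\eqref{eq: absch B_n eps} and~\eqref{eq: absch B_n eps'} follow from the SOT-continuous differentiability of $\overline{R}_\delta(\,.\,)$ (a direct hypothesis), of $(A(\,.\,)-\lambda(\,.\,))^k P(\,.\,) = \bigl((A(\,.\,)-\lambda(\,.\,))P(\,.\,)\bigr)^k$ (Lemma~\ref{lm: lm 2 zum erw adsatz ohne sl}), and of $P(\,.\,)$ (a direct hypothesis); the crucial bound $\norm{\overline{R}_\delta'(t)} \le c\,\delta^{-(m_0+1)}$ of~\eqref{eq: absch R_eps'}, previously extracted from Lemma~\ref{lm: prod- und inversenregel} via $W^{1,\infty}_*$-regularity of $A$, is now exactly the second resolvent-derivative hypothesis of the theorem.

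The integration by parts producing the gain of a factor $\eps$ is then carried out as follows. For $[0,t] \ni s \mapsto U_{\eps}(t,s) B_{n\,\boldsymbol{\delta}}(s) V_{0\,\eps}(s) P(0)x$ to admit an integral representation, I apply Lemma~\ref{lm: prodregel rechtsseit db} twice: since $B_{n\,\boldsymbol{\delta}}(s)X \subset D(A(s))$ by $P(s)X \subset \ker(A(s)-\lambda(s))^{m_0}$ and since $V_{0\,\eps}(s)P(0)x \in P(s)X \subset D(A(s))$, the right derivatives of $s \mapsto U_{\eps}(t,s)z$ required in the two applications exist by Lemma~\ref{lm: zeitentw rechtsseit db}, and the resulting pointwise right derivative is bounded in $s$, so Lemma~\ref{lm: rechtsseit db und W^{1,infty}} yields the integral identity that replaces the $W^{1,\infty}$-statement used in the old proof. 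Inserting the approximate commutator equation as in~\eqref{eq: gl 3, adsatz ohne sl} produces the same three error contributions as before, which are driven to zero by the very same recursive choice of $\delta_{i\,\eps}$ as in the proof of Theorem~\ref{thm: erw adsatz ohne sl}; the key qualitative input $\eta_n^+(\delta) \to 0$ as $\delta \searrow 0$ still follows from Lemma~\ref{lm: lm 1 zum erw adsatz ohne sl} together with $\rk P(0) < \infty$ and dominated convergence. The reflexive case is handled identically, with $V_{\eps}$ replacing $V_{0\,\eps}$ (its existence being assumed) and with Proposition~\ref{prop: schwache assoziiertheit, dual} used to transfer the weak associatedness to $A(t)^*$, $P(t)^*$ in order to obtain $\eta_n^-(\delta) \to 0$.

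The main obstacle is the right-differentiability bookkeeping: at each application of Lemma~\ref{lm: prodregel rechtsseit db} one must verify that the values of the ``inner'' factor actually lie in the dense subspace on which the ``outer'' family is right-differentiable, and that the resulting pointwise right derivative is bounded in $s$ so that Lemma~\ref{lm: rechtsseit db und W^{1,infty}} applies; this is precisely why the inclusion $P(t)X \subset D(A(t)^{m_0})$ had to be made an explicit hypothesis. A secondary technical issue is, in the reflexive case, the verification that the core condition of Proposition~\ref{prop: schwache assoziiertheit, dual} is satisfied by $A(t)$ for almost every~$t$; this, however, is automatic in all natural situations covered by Condition~\ref{cond: U_eps existiert und beschr�nkt}. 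Once these points are settled, the final estimation proceeds by exactly the same calculation as in the time-independent case.
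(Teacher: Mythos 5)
Your proposal matches the paper's own proof essentially verbatim: both proceed by carrying over the proof of Theorem~\ref{thm: erw adsatz ohne sl}, with exactly the three modifications you name -- taking the inclusion $P(t)X \subset \ker(A(t)-\lambda(t))^{m_0}$ as a hypothesis rather than deriving it by closedness, replacing Lemma~\ref{lm: prod- und inversenregel} by the combination of Lemma~\ref{lm: prodregel rechtsseit db} and Lemma~\ref{lm: rechtsseit db und W^{1,infty}} for the integral representation of $s \mapsto U_{\eps}(t,s)B_{n\,\boldsymbol{\delta}}(s)V_{0\,\eps}(s)P(0)x$, and replacing the derived estimate~\eqref{eq: absch R_eps'} by the hypothesized bound on $\frac{d}{dt}\overline{R}_\delta(t)$. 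The one place where you are a touch more candid than the paper is in flagging the core condition needed for Proposition~\ref{prop: schwache assoziiertheit, dual} in the reflexive case; the paper leaves this silent, so your caveat is fair rather than a flaw.
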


\begin{proof}
Apart from three small changes we can take over the proof of Theorem~\ref{thm: erw adsatz ohne sl} (notice that Lemma~\ref{lm: lm 1 zum erw adsatz ohne sl} and Lemma~\ref{lm: lm 2 zum erw adsatz ohne sl} 
also apply in the present case of time-dependent domains). 
What has to be changed is the following:
first, the inclusion~\eqref{eq: gl 0, adsatz ohne sl} holds true for every $t \in I$ by assumption, 
while in Section~4 this was derived by a closedness argument. 
Second, the fact that
\begin{align*}
[0,t] \ni s \mapsto U_{\eps}(t,s) B_{n \, \bm{\delta}}(s) V_{0\,\eps}(s) P(0)x \quad \text{resp.} \quad [0,t] \ni s \mapsto U_{\eps}(t,s) B_{n \, \bm{\delta}}(s) V_{\eps}(s) x
\end{align*}
is the continuous representative of an element in $W^{1,\infty}([0,t],X)$ for all $x \in X$ resp.~all $x \in D(A(0))$ 
can no longer be deduced from Lemma~\ref{lm: prod- und inversenregel} but has to be inferred from Lemma~\ref{lm: prodregel rechtsseit db} and Lemma~\ref{lm: rechtsseit db und W^{1,infty}} -- notice that $s \mapsto B_{n \, \bm{\delta}}(s)$ is SOT-continuously differentiable (Lemma~\ref{lm: lm 2 zum erw adsatz ohne sl}) with $B_{n \, \bm{\delta}}(s) X \subset D(A(s))$ for every $s \in I$. 
And third, the 
derivative of $s \mapsto \overline{R}_{\eps}(s)$ can no longer be explicitly expressed and estimated as in~\eqref{eq: absch R_eps'} 
but the respective estimate holds true by assumption. 
\end{proof}


Similarly, one sees that 
the variants of the adiabatic theorem without spectral gap condition of Section~4 
carry over to the case of time-dependent domains as well, provided their hypotheses are adapted in a similar way as above.

\subsection{An adiabatic theorem of higher order}

In this subsection we extend the adiabatic theorem of higher order of Joye and Pfister from~\cite{JoyePfister93} to the case of general operetors $A(t)$ with possibly time-dependent domains -- mainly for the sake of completeness and in order to make clear the relation to the basic adiabatic theorem with spectral gap (Theorem~\ref{thm: adsatz mit sl, zeitabh}).
We will use the elegant iterative scheme of~\cite{JoyePfister93} which we briefly recall (in a slightly modified form). 
\smallskip

Suppose $A(t): D(A(t)) \subset X \to X$ is a densely defined closed linear map and $\gamma_t$ is a cycle in $\C$ for every $t \in J$, where $J$ is a compact interval, and let $\eps \in (0, \infty)$ and $n \in \N$. Then $A_{0 \, \eps}$, $P_{0 \, \eps}$, $K_{0 \, \eps}$ are called \emph{well-defined w.r.t.~$\gamma_t$ ($t \in J$)} if and only if $\rg \gamma_t \subset \rho(A_{0 \, \eps}(t))$ for all $t \in J$, where $A_{0 \, \eps}(t) := A(t)$, and $J \ni t \mapsto P_{0 \, \eps}(t)$ is WOT-continuously differentiable, where
\begin{align*} 
P_{0 \, \eps}(t):= \frac{1}{2 \pi i} \int_{\gamma_t} (z-A_{0 \, \eps}(t))^{-1} \, dz. 
\end{align*}
In this case $K_{0 \, \eps}$ is defined by $K_{0 \, \eps}(t) := [P_{0 \, \eps}'(t), P_{0 \, \eps}(t)]$.
And, for general $n \in \N$, $A_{n \, \eps}$, $P_{n \, \eps}$, $K_{n \, \eps}$ are called \emph{well-defined w.r.t.~$\gamma_t$ ($t \in J$)} if and only if  $A_{n-1 \, \, \eps}$, $P_{n-1  \, \eps}$, $K_{n-1 \, \eps}$ are well-defined w.r.t.~$\gamma_t$ ($t \in J$),  $\rg \gamma_t \subset \rho(A_{n \, \eps}(t))$ for all $t \in J$, where $A_{n \, \eps}(t) := A(t) - \eps K_{n-1 \, \eps}(t)$, and $J \ni t \mapsto P_{n \, \eps}(t)$ is WOT-continuously differentiable, where 
\begin{align*}
P_{n \, \eps}(t):= \frac{1}{2 \pi i} \int_{\gamma_t} (z-A_{n \, \eps}(t))^{-1} \, dz.
\end{align*}
In this case $K_{n \, \eps}$ is defined by $K_{n \, \eps}(t) := [P_{n \, \eps}'(t), P_{n \, \eps}(t)]$.
\smallskip

We will need the following conditions depending on $n \in \N \cup \{ \infty \}$ (degree of regularity) in the adiabatic theorem of higher order below. 

\begin{cond} \label{cond: vor höherer adsatz}
$A(t): D(A(t)) \subset X \to X$ for every $t \in I$ is a densely defined closed linear map. 
$\sigma(t)$ for every $t \in I$ is a compact and isolated subset of $\sigma(A(t))$, there is an $r_0 > 0$ such that $U_{r_0}(\sigma(t)) \setminus \sigma(t) \subset \rho(A(t))$ for all $t \in I$, 
and $t \mapsto \sigma(t)$ is continuous. 
For every $t_0 \in I$, there are positive constants $a_{t_0}$, $b_{t_0}$, $c_{t_0}$ such that 
\begin{gather*}
J_{t_0} \ni t \mapsto (z-A(t))^{-1}  \text{ is $n$ times WOT-continuously differentiable for all } z \in \rg \gamma_{t_0}, \\ 
\rg \gamma_{t_0} \ni z \mapsto \ddtl{ (z-A(t))^{-1} } \text{ is SOT-continuous for all } t \in J_{t_0}, l \in \{ 1, \dots, n \}, \\
\sup_{ (t,z) \in J_{t_0} \times \rg \gamma_{t_0} } \norm{ \ddtl{ (z-A(t))^{-1} } } \le a_{t_0} c_{t_0}^l \frac{l!}{(1+l)^2} \text{ for all } l \in \{ 1, \dots, n \},
\end{gather*}
where $\gamma_{t_0}$ is a cycle in $\overline{U}_{\frac{4 r_0}{7}}(\sigma(t_0)) \setminus U_{ \frac{3 r_0}{7} }(\sigma(t_0))$ with 
\begin{align*}
\operatorname{n}\big( \gamma_{t_0}, U_{ \frac{3 r_0}{7} }(\sigma(t_0)) \big) = 1 \quad \text{and} \quad \operatorname{n}\big( \gamma_{t_0}, \C \setminus \overline{U}_{ \frac{4 r_0}{7} }(\sigma(t_0)) \big) = 0
\end{align*}
and where $J_{t_0} \subset I$ is a non-trivial closed interval containing $t_0$ such that $\sigma(t) \subset U_{ \frac{r_0}{7} }(\sigma(t_0))$ and $\sigma(t_0) \subset U_{ \frac{r_0}{7} }(\sigma(t))$ for all $t \in J_{t_0}$.
And finally, $P(t)$ for every $t \in I$ is the 
projection associated with $A(t)$ and $\sigma(t)$,
$t \mapsto P(t)$ is $n+1$ times WOT-continuously differentiable and
\begin{align*}
\sup_{t \in J_{t_0}} \norm{ \ddtl{ [P'(t),P(t)] } } \le b_{t_0} c_{t_0}^l \frac{l!}{(1+l)^2} \text{ for all } l \in \{ 0, 1, \dots, n \} \text{ and } t_0 \in I.
\end{align*}
\end{cond}

In the special case of time-independent domains $D(A(t)) = D$, one easily sees -- using the remark before Lemma~\ref{lm: prod- und inversenregel} -- that the requirements on the resolvent of $A(t)$ in Condition~\ref{cond: vor höherer adsatz} are fulfilled for an $n \in \N$ if, for instance, $t \mapsto A(t)x$ is $n$ times weakly continuously differentiable for all $x \in D$. 
And they 
are fulfilled for $n = \infty$ if, for instance, there is an open neighbourhood $U_I$ of $I$ in $\C$ such that, for every $x \in D$, $t \mapsto A(t)x$ extends to a holomorphic map on $U_I$ (Cauchy inequalities!).

\begin{lm}[Joye--Pfister] \label{lm: iteration wohldef}
(i) Suppose that Condition~\ref{cond: vor höherer adsatz} is satisfied for an $n \in \N$. Then there is an $\eps^* >0$ such that $A_{n \, \eps}$, $P_{n \, \eps}$, $K_{n \, \eps}$ are well-defined w.r.t.~$\gamma_t$ ($t \in I$) for every $\eps \in (0, \eps^*]$. Furthermore, 
\begin{align*}
\sup_{t \in I} \norm{ K_{n \, \eps}(t) - K_{n-1 \, \eps}(t)} = O(\eps^n) \quad (\eps \searrow 0).
\end{align*}
(ii) Suppose that Condition~\ref{cond: vor höherer adsatz} is satisfied for $n = \infty$. Then there is an $\eps^* >0$ and for every $\eps \in (0, \eps^*]$ there is a natural number $n^*(\eps) \in \N$ such that $A_{n^*(\eps) \, \eps}$, $P_{n^*(\eps) \, \eps}$, $K_{n^*(\eps) \, \eps}$ are well-defined w.r.t.~$\gamma_t$ ($t \in I$) for every $\eps \in (0, \eps^*]$. Furthermore, there is 
a constant $g \in (0, \infty)$ such that
\begin{align*}
\sup_{t \in I} \norm{ K_{n^*(\eps) \, \eps}(t) - K_{n^*(\eps)-1 \, \eps}(t)} = O\bigl( e^{-\frac{g}{\eps}} \bigr) \quad (\eps \searrow 0).
\end{align*}
\end{lm}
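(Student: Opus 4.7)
The plan is to adapt the iterative scheme of Joye and Pfister~\cite{JoyePfister93}. The underlying mechanism is that each step of the iteration $k \mapsto k+1$ improves the adiabatic error by one order in $\eps$, while introducing extra $t$-derivatives of $(z-A(t))^{-1}$; the Gevrey-type weight $l!/(1+l)^2$ in Condition~\ref{cond: vor h�herer adsatz} is tailored so that the deterioration under Leibniz differentiation and Neumann summation remains controllable.

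First I would establish, for every sufficiently small $\eps$, the well-definedness and uniform derivative bounds by induction on $k$. Setting $R_{k\,\eps}(t,z) := (z-A_{k\,\eps}(t))^{-1}$ and using the Neumann expansion
\begin{align*}
R_{k\,\eps}(t,z) = R_0(t,z) \sum_{j=0}^{\infty} \bigl( \eps K_{k-1\,\eps}(t) R_0(t,z) \bigr)^{j},
\end{align*}
together with a uniform bound on $R_0$ over $J_{t_0} \times \rg \gamma_{t_0}$ -- which follows from the differentiability hypothesis in Condition~\ref{cond: vor h�herer adsatz} in the same way as the analogous statement used in the proof of Theorem~\ref{thm: adsatz mit sl, zeitabh} -- one sees that $R_{k\,\eps}(t,z)$ is well-defined on $\rg \gamma_{t_0}$ provided $\eps \norm{K_{k-1\,\eps}(t)}$ is small enough. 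Differentiating term by term, using Leibniz, and resumming with the help of the summable weight $(1+l)^{-2}$, one inductively propagates analogous derivative bounds from $K_{k-1\,\eps}$ and $R_0$ to $R_{k\,\eps}$, and hence via the Cauchy integral to $P_{k\,\eps}$ and $K_{k\,\eps} = [P_{k\,\eps}', P_{k\,\eps}]$. A finite cover of $I$ by the intervals $J_{t_0}$ and the choice of a common cycle on each overlap then globalises these estimates.

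The key telescoping step uses the resolvent identity
\begin{align*}
R_{k\,\eps}(t,z) - R_{k-1\,\eps}(t,z) = \eps \, R_{k\,\eps}(t,z) \bigl( K_{k-1\,\eps}(t) - K_{k-2\,\eps}(t) \bigr) R_{k-1\,\eps}(t,z)
\end{align*}
(with the convention $K_{-1\,\eps} := 0$), which after integration over $\gamma_t$ yields $\sup_t \norm{P_{k\,\eps} - P_{k-1\,\eps}} = O\bigl( \eps \sup_t \norm{K_{k-1\,\eps}-K_{k-2\,\eps}} \bigr)$ and, by one more differentiation in $t$ (controlled by the bounds of the previous paragraph), the same estimate for the $t$-derivatives. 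It follows that $\sup_t \norm{K_{k\,\eps}-K_{k-1\,\eps}} \le C \eps \sup_t \norm{K_{k-1\,\eps}-K_{k-2\,\eps}}$ and, by induction on $k$, part~(i).

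For part~(ii) the same chain of estimates has to be iterated $n^{*}(\eps)$ times. A careful bookkeeping of the constants -- using once more the summability of $(1+l)^{-2}$ -- has to produce a bound of the form $\sup_t \norm{K_{k\,\eps}-K_{k-1\,\eps}} \le (C\eps)^{k} k!$ with an $\eps$-independent $C$. Optimising over $k$ via Stirling's formula, namely $n^{*}(\eps) := \lfloor 1/(eC\eps) \rfloor$, then yields exponential decay $O(e^{-g/\eps})$ with $g := 1/(eC)$. The hard part will be precisely this last bookkeeping: one has to show that the constants $\tilde c_{t_0}$ appearing in the factorial-type derivative bounds on $K_{k\,\eps}$ grow at most linearly (rather than exponentially) in $k$, which is what separates the merely $O(\eps^{N})$-for-all-$N$ regime of part~(i) from the genuinely exponential regime of part~(ii).
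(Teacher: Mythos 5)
Your proposal follows essentially the same route as the paper: adapt the Joye--Pfister iterative scheme, use a Neumann expansion of $R_{k\,\eps}$ in terms of $R_0$ to secure well-definedness for $\eps$ small, propagate factorial-type derivative bounds by induction on $k$ using the Gevrey weight $l!/(1+l)^2$, globalise via a finite cover of $I$ by the intervals $J_{t_0}$ (with a homology argument showing the local cycles may be used interchangeably), and optimise the stopping index $n^*(\eps)$ via Stirling for part~(ii). The paper, like you, delegates the detailed combinatorial bookkeeping to Proposition~2.1 of Joye--Pfister.

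One imprecision worth flagging: the displayed one-step inequality $\sup_t\|K_{k\,\eps}-K_{k-1\,\eps}\| \le C\eps\,\sup_t\|K_{k-1\,\eps}-K_{k-2\,\eps}\|$ with a $k$-independent $C$ cannot hold as written -- it would iterate to $(C\eps)^k$ and give exponential decay with no need for Stirling. The correct iterate, as in the paper's assertion~(b) of the induction, reads $\sup_t\|K_{k\,\eps}^{(l)}-K_{k-1\,\eps}^{(l)}\| \le b\,c^{k+l}d^k\eps^k\,(k+l)!/(1+l)^2$; the factorial is produced because passing from $P_{k\,\eps}-P_{k-1\,\eps}$ to $K_{k\,\eps}-K_{k-1\,\eps}$ costs one $t$-derivative, and the $l$-th derivative bound at step $k$ feeds into the $(l+1)$-th at step $k+1$. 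For part~(i) this does not matter, since $n$ is fixed and $n!$ is absorbed into the constant, and you do state the correct $(C\eps)^k k!$ form when discussing part~(ii); still, the displayed inequality should carry the growing constant so that the two parts of your argument are consistent.
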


\begin{proof}
We begin with some general preparatory considerations from which both part~(i) and part~(ii) will easily follow. 
Suppose (for the entire proof) that Condition~\ref{cond: vor höherer adsatz} is satisfied for $n = 1$ 
and fix $t_0 \in I$ for the moment. 
We have 
\begin{align*}
J_{t_0} \times C_{t_0} := J_{t_0} \times \overline{U}_{ \frac{5 r_0}{7} }(\sigma(t_0)) \setminus U_{ \frac{2 r_0}{7} }(\sigma(t_0)) \subset \subset \bigl\{ (t,z) \in J_{t_0} \times \C: z \in \rho(A(t)) \bigr \} =: U_{t_0}
\end{align*}
and $U_{t_0} \ni (t,z) \mapsto (z-A(t))^{-1}$ is continuous, because $J_{t_0} \ni t \mapsto A(t)$ is continuous in the generalized sense due to the WOT-continuous differentiability of $J_{t_0} \ni t \mapsto (z-A(t))^{-1}$ (Theorem~IV.3.15 of~\cite{KatoPerturbation80}). Consequently,
$\sup_{(t,z) \in J_{t_0} \times C_{t_0}} \norm{ (z-A(t))^{-1} } < \infty$,
whence we can (and will) assume w.l.o.g. that 
\begin{align} \label{eq: w.l.o.g. assumption}
\sup_{(t,z) \in J_{t_0} \times C_{t_0}} \norm{ (z-A(t))^{-1} } \le a_{t_0}.
\end{align}
We now define $\eps_{t_0}^*$ and $n_{t_0}^*(\eps)$ just like in Joye and Pfister's paper~\cite{JoyePfister93}, that is, 
\begin{align} \label{eq: def von eps^*}
\nonumber \eps_{t_0}^* &:= \max \Bigl \{ \eps \in \bigl( 0, \frac{1}{2 a_{t_0} b_{t_0}} \bigr): \sum_{k=1}^{\infty} \bigl( 2 \alpha^2 a_{t_0} b_{t_0} \frac{\eps}{1-2 a_{t_0} b_{t_0} \eps} \bigr)^k \le \alpha \Bigr \}, \\
& \qquad \qquad \quad n_{t_0}^*(\eps) := \Big \lfloor \frac{1}{ e c_{t_0} d_{t_0} \, \eps }  \Big \rfloor \text{ for } \eps \in (0,\infty),
\end{align}
where $\alpha$ and $d_{t_0}$ are defined by equation~(2.30) and equation~(2.50) of~\cite{JoyePfister93}. (In particular,  $\eps_{t_0}^*$ and $n_{t_0}^*(\eps)$ only depend on $\gamma_{t_0}$, $a_{t_0}$, $b_{t_0}$ and $c_{t_0}$.) 
We now show by finite induction over $k$: whenever Condition~\ref{cond: vor höherer adsatz} is satisfied for a certain $n' \in \N$, then the following holds true for all $\eps \in (0, \eps_{t_0}^*]$ and all $k \in \{ 1, \dots, n_{t_0}^*(\eps, n') \}$ with $n_{t_0}^*(\eps, n') := \min\{ n_{t_0}^*(\eps), n' \}$: 
\begin{itemize}
\item[(a)] $A_{k \, \eps}$, $P_{k \, \eps}$, $K_{k \, \eps}$ are well-defined w.r.t.~$\gamma_t$ ($t \in J_{t_0}$) 
and $J_{t_0} \ni t \mapsto K_{k \, \eps}(t)$ is $n_{t_0}^*(\eps, n')-k$ times WOT-continuously differentiable
\item[(b)] $\sup_{t \in J_{t_0}} \norm{ K_{k \, \eps}^{(l)}(t) - K_{k-1 \, \eps}^{(l)}(t) } \le b_{t_0} c_{t_0}^{k+l} d_{t_0}^k \eps^k \frac{(k+l)!}{(1+l)^2}$ for all $l \in \N \cup \{ 0 \}$ 
with the property that  
$k + l \le n_{t_0}^*(\eps, n')$
\item[(c)] $\sup_{t \in J_{t_0}} \norm{ K_{k \, \eps}^{(l)}(t) } \le 2 b_{t_0} c_{t_0}^l \frac{l!}{(1+l)^2}$ for all $l \in \N \cup \{ 0 \}$ with $k + l \le n_{t_0}^*(\eps, n')$.
\end{itemize}
Suppose that Condition~\ref{cond: vor höherer adsatz} is satisfied for a certain $n' \in \N$ and fix $\eps \in (0, \eps_{t_0}^*]$.
Set $k = 1$ for the induction basis. We have only to prove assertion~(a) since 
assertions~(b) and~(c) can be gathered from the proof of Proposition~2.1 of~\cite{JoyePfister93}. It is obvious that $A_{0 \, \eps}$, $P_{0 \, \eps}$, $K_{0 \, \eps}$ are well-defined w.r.t.~$\gamma_t$ ($t \in J_{t_0}$) and that $t \mapsto K_{0 \, \eps}(t) = [P'(t), P(t)]$ is $n'$ times WOT-continuously differentiable. 
Since, for $z \in C_{t_0}$ and $t \in J_{t_0}$, 
\begin{gather*}
(z-A_{1 \, \eps}(t)) = \bigl( 1+ \eps K_{0 \, \eps}(t) (z-A(t))^{-1} \bigr) (z-A(t)) \\
\text{ and } \norm{ \eps K_{0 \, \eps}(t) (z-A(t))^{-1} } \le \eps b_{t_0} \norm{ (z-A(t))^{-1} } \le \eps_{t_0}^* b_{t_0} a_{t_0} < \frac{1}{2}
\end{gather*}
(remember the estimate for $K_{0 \, \eps} = [P',P]$ from Condition~\ref{cond: vor höherer adsatz}, the estimate for the resolvent of $A$ from~\eqref{eq: w.l.o.g. assumption}, and the definition of $\eps_{t_0}^*$ in~\eqref{eq: def von eps^*}), we see that 
\begin{align*}
\rg \gamma_t \subset \overline{U}_{ \frac{4 r_0}{7} }(\sigma(t)) \setminus U_{ \frac{3 r_0}{7} }(\sigma(t)) \subset \overline{U}_{ \frac{5 r_0}{7} }(\sigma(t_0)) \setminus U_{ \frac{2 r_0}{7} }(\sigma(t_0)) = C_{t_0} \subset \rho(A_{1 \, \eps}(t))
\end{align*}
for all $t \in J_{t_0}$. 
And since 
\begin{align*}
\operatorname{n}\big( \gamma_t, U_{ \frac{2 r_0}{7} }(\sigma(t_0)) \big) = 
1 &= \operatorname{n}\big( \gamma_{t_0}, U_{ \frac{2 r_0}{7} }(\sigma(t_0)) \big), \\
\operatorname{n}\big( \gamma_t, \C \setminus \overline{U}_{ \frac{5 r_0}{7} }(\sigma(t_0)) \big) = 
0 &= \operatorname{n}\big( \gamma_{t_0}, \C \setminus \overline{U}_{ \frac{5 r_0}{7} }(\sigma(t_0)) \big)
\end{align*}
and $C_{t_0} \subset \rho(A_{1 \, \eps}(t))$ for all $t \in J_{t_0}$, the cycles $\gamma_t$ and $\gamma_{t_0}$ are homologous in $\rho(A_{1 \, \eps}(t))$ for $t \in J_{t_0}$, so that
\begin{align*}
J_{t_0} \ni t \mapsto P_{1 \, \eps}(t) &= \frac{1}{2 \pi i} \int_{\gamma_{t}} (z-A_{1 \, \eps}(t))^{-1} \, dz \\
&= \frac{1}{2 \pi i} \int_{\gamma_{t_0}} (z-A(t))^{-1} \big( 1 + \eps K_{0 \, \eps}(t) (z-A(t))^{-1} \big)^{-1} \, dz
\end{align*}
is $n'$ times WOT-continuously differentiable. 
(In order to see this, use the product rule and inverses rule for WOT-continuous differentiability from the remark before Lemma~\ref{lm: prod- und inversenregel} as well as Condition~\ref{cond: vor höherer adsatz}.)
Consequently, $A_{1 \, \eps}$, $P_{1 \, \eps}$, $K_{1 \, \eps}$ are well-defined w.r.t.~$\gamma_t$ ($t \in J_{t_0})$ and $t \mapsto K_{1 \, \eps}(t)$ is $n'-1$ times (in particular, $n_{t_0}^*(\eps, n')-1$ times) WOT-continuously differentiable.

Choose now $k \in \{2, \dots, n_{t_0}^*(\eps, n') \}$ and assume that assertions (a), (b), (c) are true for $k-1$. We then have  to show that they are also true for $k$. As above we have only to establish (a) since (b) and (c) can then be derived as in the proof of Proposition~2.1 of~\cite{JoyePfister93}, as a close inspection of that proof shows. And in order to prove (a) we can proceed essentially as above: just use assertion~(c) for $k-1$ 
to get the estimate 
\begin{align*}
\sup_{(t,z) \in J_{t_0} \times C_{t_0}} \norm{ \eps K_{k-1 \, \eps}(t) (z-A(t))^{-1} } \le 2 b_{t_0} a_{t_0} \eps_{t_0}^* < 1
\end{align*}
and continue as above, thereby concluding the inductive proof of~(a), (b), (c).
\smallskip

Choosing finitely many points $t_1, \dots, t_m \in I$ such that $J_{t_1} \cup \dots \cup J_{t_m} = I$, and setting 
\begin{align} \label{eq: def eps^* und n^*(eps), global}
\eps^* := \min \{ \eps_{t_1}^*, \dots, \eps_{t_m}^* \} \text{ and } n^*(\eps) := \min \{ n_{t_1}^*(\eps), \dots, n_{t_m}^*(\eps) \},
\end{align} 
we find -- in virtue of the above preparations -- that, for every $\eps \in (0, \eps^*]$, the following holds true: whenever Condition~\ref{cond: vor höherer adsatz} is fulfilled for an $n' \in \N$, then $A_{k \, \eps}$, $P_{k \, \eps}$, $K_{k \, \eps}$ are well-defined w.r.t.~$\gamma_t$ ($t \in I$) and
\begin{align} \label{eq: diff der K-terme}
\sup_{t \in I} \norm{K_{k \, \eps}(t) - K_{k-1 \, \eps}(t) } \le b c^k d^k \eps^k k!
\end{align}
for every $k \in \{ 1, \dots, n^*(\eps, n') \}$, where $b$, $c$, $d$ are obtained by taking the maximum of the corresponding quantities for the points $t_1, \dots, t_m$ and $n^*(\eps, n') := \min \{ n^*(\eps), n' \}$.
\smallskip

Suppose now as in~(i) that Condition~\ref{cond: vor höherer adsatz} is satisfied for an $n \in \N$. Since $n^*(\eps) \longrightarrow \infty$ as $\eps \searrow 0$, 
we can assume w.l.o.g. that $n^*(\eps,n) = n$ for all $\eps \in (0, \eps^*]$ 
and therefore assertion~(i) follows from~\eqref{eq: diff der K-terme}.
Suppose finally as in~(ii) that Condition~\ref{cond: vor höherer adsatz} is satisfied for $n = \infty$. Since for every $\eps \in (0, \eps^*]$ 
there is $n' \in \N$ such that $n^*(\eps, n') = n^*(\eps)$ and since Condition~\ref{cond: vor höherer adsatz} is satisfied, in particular, for this $n'$, assertion~(ii) follows from~\eqref{eq: diff der K-terme} 
with the help of Stirling's formula (see, for instance, the proof of Theorem~2.1 of~\cite{JoyePfister93} or of Theorem~1b of~\cite{Nenciu93}).
\end{proof}

After these preparations we can now prove the announced adiabatic theorem of higher order extending Theorem~2.1 of~\cite{JoyePfister93} where skew self-adjoint operators $A(t)$ that analytically depend on $t$ and have time-independent domains are considered.

\begin{thm} \label{thm: höherer adsatz}
Suppose $A(t)$, $\sigma(t)$, $P(t)$ for $t \in I$ are such that Condition~\ref{cond: U_eps existiert und beschränkt} is satisfied and Condition~\ref{cond: vor höherer adsatz} with WOT replaced by SOT 
is satisfied for an $n \in \N$ or $n = \infty$, respectively. Then 
\begin{itemize}
\item[(i)] $\sup_{t \in I} \norm{ P_{\eps}(t) - P(t) } = O(\eps)$ as $\eps \searrow 0$, where, for all $\eps \in (0, \eps^*]$ and $t \in I$, $P_{\eps}(t) := P_{n \, \eps}(t)$ in case $n \in \N$ and $P_{\eps}(t) := P_{n^*(\eps) \, \eps}(t)$ in case $n = \infty$ (and where $\eps^*$ and $n^*(\eps)$ are defined as in~\eqref{eq: def eps^* und n^*(eps), global} of the lemma above). 
\item[(ii)] Whenever 
the evolution system $V_{\eps}$ for $\frac{1}{\eps} A_{n \, \eps} + K_{n \, \eps}$ resp.~$\frac{1}{\eps} A_{n^*(\eps) \, \eps} +  K_{n^*(\eps) \, \eps}$ exists on $D(A(t))$ for all $\eps \in (0,\eps^*]$, then $V_{\eps}$ is adiabatic w.r.t.~$P_{\eps}$ and for a suitable constant $g \in (0,\infty)$
\begin{align*}
\sup_{t \in I} \norm{ V_{\eps}(t) - U_{\eps}(t) } = O(\eps^n)  \text{ resp. } O\bigl( e^{-\frac{g}{\eps}} \bigr) \quad (\eps \searrow 0).
\end{align*}
\item[(iii)] 
Additionally, one has -- the existence of $V_{ \frac{1}{\eps}}$ being  irrelevant here -- 
that 
\begin{align*}
&\sup_{t \in I} \norm{ (1-P_{\eps}(t)) U_{\eps}(t) P_{\eps}(0) }, \\
& \qquad \qquad \sup_{t \in I} \norm{ P_{\eps}(t)U_{\eps}(t)(1-P_{\eps}(0)) } = O(\eps^n)  \text{ resp. } O\bigl( e^{-\frac{g}{\eps}} \bigr) \quad (\eps \searrow 0).
\end{align*}
\end{itemize}
\end{thm}

\begin{proof}
(i) Set $A_{\eps}(t) := A_{n \, \eps}(t)$ and $K_{\eps}^-(t) := K_{n-1 \, \eps}(t)$ in case  $n \in \N$ and $A_{\eps}(t) := A_{n^*(\eps) \, \eps}(t)$ and $K_{\eps}^-(t) := K_{n^*(\eps)-1 \, \eps}(t)$ in case $n = \infty$ (for $t \in I$ and $\eps \in (0,\eps^*]$).
As was shown in the proof of the above lemma, the cycles $\gamma_{t}$ and $\gamma_{t_i}$ are homologous in $\rho(A_{\eps}(t))$ for every $t \in J_{t_i}$ (where $t_1, \dots, t_m$ are points of $I$ chosen as in the definition of $\eps^*$ and $n^*(\eps)$ in~\eqref{eq: def eps^* und n^*(eps), global}) and every $\eps \in (0,\eps^*]$, whence 
\begin{align*}
P_{\eps}(t) - P(t) &= \frac{1}{2 \pi i} \int_{\gamma_{t_i}} (z-A_{\eps}(t))^{-1} - (z-A(t))^{-1} \, dz \\
&= - \frac{1}{2 \pi i} \int_{\gamma_{t_i}} (z-A_{\eps}(t))^{-1} \, \eps K_{\eps}^-(t) \, (z-A(t))^{-1} \, dz
\end{align*} 
for all $t \in J_{t_i}$ and $\eps \in (0,\eps^*]$. 
Also, it was shown in the proof of the above lemma 
that for all $\eps \in (0, \eps^*]$ and all $i \in \{1, \dots, m\}$ one has $\sup_{(t,z) \in J_{t_i} \times \rg \gamma_{t_i}} \norm{ (z-A(t))^{-1} } \le a_{t_i}$, $\sup_{t \in J_{t_i}} \norm{ K_{\eps}^-(t) } \le 2 b_{t_i}$, and 
\begin{align*}
\norm{ (z-A_{\eps}(t))^{-1} } &\le  \norm{ (z-A(t))^{-1} } \norm{ \big( 1 + \eps K_{\eps}^-(t) (z-A(t))^{-1} \big)^{-1} } \\
&\le a_{t_i} \sum_{m=0}^{\infty} ( \eps 2 b_{t_i} a_{t_i} )^m \le \frac{a_{t_i}}{1-2 a_{t_i} b_{t_i} \eps_{t_i}^*} < \infty 
\end{align*}   
for all $(t,z) \in J_{t_i} \times \rg \gamma_{t_i}$. Assertion~(i) is now clear (notice that for this assertion Condition~\ref{cond: U_eps existiert und beschränkt} is not needed -- only Condition~\ref{cond: vor höherer adsatz} in its original WOT version is used). 
\smallskip

(ii) Set $K_{\eps}^+(t) := [P_{\eps}'(t), P_{\eps}(t)]$ for $t \in I$ and $\eps \in (0,\eps^*]$ 
and suppose that the evolution system $V_{\eps}$ for $\frac{1}{\eps} A_{\eps} + K_{\eps}^+$ exists on $D(A(t))$. 
Since for every $x \in D(A(0))$ the map $[0,t] \ni s \mapsto U_{\eps}(t,s) V_{\eps}(s) x$ is continuous and right differentiable (by Lemma~\ref{lm: prodregel rechtsseit db}) and since the right derivative $s \mapsto U_{\eps}(t,s) \big( K_{\eps}^+(s)-K_{\eps}^-(s) \big) V_{\eps}(s) x$ is bounded, 
it follows from Lemma~\ref{lm: rechtsseit db und W^{1,infty}} that
\begin{align} \label{eq: höherer adsatz, intdarst}
V_{\eps}(t)x - U_{\eps}(t)x &= U_{\eps}(t,s) V_{\eps}(s) x \big |_{s=0}^{s=t} \nonumber \\
&= \int_0^t  U_{\eps}(t,s) \big( K_{\eps}^+(s)-K_{\eps}^-(s) \big) V_{\eps}(s) x \, ds
\end{align}
for all $t \in I$. And from this, in turn, we conclude the desired estimates -- using the estimates for $K_{\eps}^+-K_{\eps}^-$ from Lemma~\ref{lm: iteration wohldef} and applying Proposition~\ref{prop: störreihe für gestörte zeitentw}~(ii). 
It remains to show that $V_{\eps}$ is adiabatic w.r.t.~$P_{\eps}$. As, by assumption, Condition~\ref{cond: vor höherer adsatz} is satisfied with WOT replaced by SOT (up to now, the unaltered Condition~\ref{cond: vor höherer adsatz} was sufficient), $t \mapsto P_{\eps}(t)$ is continuously differentiable not only w.r.t.~WOT but also w.r.t.~SOT. 
%
And therefore, the adiabaticity of $V_{\eps}$ follows from Propostion~\ref{prop: intertwining relation}.
\smallskip

(iii) Arguing as in the adiabaticity proof above, we get for every $x \in D(A(0))$ and every $t \in I$ that
\begin{align*}
P_{\eps}(t) U_{\eps}(t)x - U_{\eps}(t)P_{\eps}(0)x &= U_{\eps}(t,s) P_{\eps}(s) U_{\eps}(s)x \big |_{s=0}^{s=t} \\
&= \int_0^t U_{\eps}(t,s) \Big( P_{\eps}'(s) - \frac{1}{\eps} \big( A(s) P_{\eps}(s) - P_{\eps}(s) A(s) \big) \Big) U_{\eps}(s)x \, ds.
\end{align*}
Since $A_{\eps}(s)$ commutes with $P_{\eps}(s)$ for $s \in I$ and since $A = A_{\eps} + \eps K_{\eps}^-$, we have
\begin{align*}
P_{\eps}'(s) &- \frac{1}{\eps} \big( A(s)P_{\eps}(s) - P_{\eps}(s)A(s) \big) \subset P_{\eps}'(s) - [K_{\eps}^-(s), P_{\eps}(s)] \\
&= P_{\eps}'(s) - [K_{\eps}^+(s), P_{\eps}(s)] + [K_{\eps}^+(s) - K_{\eps}^-(s), P_{\eps}(s)] = [K_{\eps}^+(s) - K_{\eps}^-(s), P_{\eps}(s)]
\end{align*}
for every $s \in I$, and the desired conclusion follows with the help of Lemma~\ref{lm: iteration wohldef}. 
\end{proof}

It is obvious from the definition of Joye and Pfister's iterative scheme that $P_{\eps}(t) = P(t)$ for all $t$ in the (possibly empty) set $I \setminus \supp P'$, and therefore it follows from Theorem~\ref{thm: höherer adsatz}~(iii) that 
\begin{align*}
&\sup_{t \in I \setminus \supp P'} \norm{ (1-P(t))U_{ \frac{1}{\eps} }(t)P(0) }, \\
& \qquad \qquad \sup_{t \in I \setminus \supp P'} \norm{ P(t)U_{  \frac{1}{\eps} }(t)(1-P(0)) } = O(\eps^n)  \text{ resp. } O\bigl( e^{-\frac{g}{\eps}} \bigr) \quad (\eps \searrow 0).
\end{align*}
%

A result similar to Theorem~\ref{thm: höherer adsatz} could have been proved with the help of a method developed by Nenciu in~\cite{Nenciu93} -- this can  easily be gathered from 
the exposition in Section~7 of~\cite{dipl}. We have chosen Joye and Pfister's method since it is 
easier to remember 
and effortlessly transferred to  
the case of several compact isolated subsets $\sigma_1(t), \dots, \sigma_m(t)$ of $\sigma(A(t))$ where each is uniformly isolated in $\sigma(A(t))$ and uniformly isolated from each of the others.
\smallskip

We finally comment on a recent 
theorem by Joye from~\cite{Joye07} dealing with time-independent domains and several spectral subsets $\sigma_i(t)$. It allows for a generalization 
of Condition~\ref{cond: U_eps existiert und beschränkt} at the cost of a specialization of Condition~\ref{cond: vor höherer adsatz}
and states the following (where we confine ourselves, for the sake of notational simplicity, to the case of only one spectral subset $\sigma_i(t) = \sigma(t)$):
if -- and what follows is a special case of Condition~\ref{cond: vor höherer adsatz} -- there is an open neighbourhood $U_I$ of $I$ such that $t \mapsto A(t)x$ for every $x \in D$ extends to a holomorphic map on $U_I$ and if $\sigma(t) = \{ \lambda(t) \}$ for every $t \in I$ for a uniformly isolated spectral value $\lambda(t)$ of $A(t)$ of finite algebraic multiplicity (hence an eigenvalue) such that 
$t \mapsto \lambda(t)$ is continuous, 
then it suffices for the conclusion of Theorem~\ref{thm: höherer adsatz} to hold that -- instead of 
Condition~\ref{cond: U_eps existiert und beschränkt} -- $\lambda(t)$ lie in the left closed complex half-plane and 
$A(t) \overline{P}(t)$ generate a contraction semigroup on $X$ for every $t \in I$ (where $\overline{P} := 1-P$).
So, in the above-mentioned special case of Condition~\ref{cond: vor höherer adsatz} 
the boundedness requirement on $U_{\eps}$ 
from Condition~\ref{cond: U_eps existiert und beschränkt} is not necessary for assertions~(i), (ii) and~(iii) of Theorem~\ref{thm: höherer adsatz}. 
It 
is, however, necessary for the convergences 
\begin{align*}
\sup_{t \in I} \norm{ (1-P(t))U_{\eps}(t)P(0) }, 
\quad \sup_{t \in I} \norm{ P(t)U_{\eps}(t)(1-P(0)) }  \longrightarrow 0  \quad (\eps \searrow 0) 
\end{align*}
with the originally given projections $P(t)$, 
which we are primarily interested in here. 
See the example at the end of Section~1 of~\cite{Joye07} for a proof of 
this necessity statement. 
Also, it should be remarked that the above-mentioned special requirements (analyticity and finite algebraic multiplicity) of Joye's theorem from~\cite{Joye07} are really essential for the proof in~\cite{Joye07}. 
Indeed, this proof essentially rests upon the following estimate for the evolution system $V_{0 \, \eps}$ for $\frac{1}{\eps} A_{0 \, \eps} + K_{0 \, \eps} = \frac{1}{\eps} A + [P',P]$ on $D$
\begin{align} \label{eq: entsch absch joye}
\sup_{(s,t) \in \Delta} \norm{ V_{0 \, \eps}(t,s) } \le c \, e^{c/ \eps^{\beta}} \quad (\eps \in (0,\eps^*])
\end{align}
with constants $\beta \in (0,1)$ and $c \in (0,\infty)$ (Proposition~6.1 of~\cite{Joye07}), which then -- by the usual perturbation argument (Proposition~\ref{prop: störreihe für gestörte zeitentw}) -- 
yields the estimates
\begin{align} \label{eq: absch 2 joye}
\sup_{(s,t) \in \Delta} \norm{ U_{\eps}(t,s) }, \, \, \sup_{(s,t) \in \Delta} \norm{ V_{\eps}(t,s) } \le c' \, e^{c'/ \eps^{\beta}} \quad (\eps \in (0,\eps^*])
\end{align}
from which, in turn, by the integral representation~\eqref{eq: höherer adsatz, intdarst} and the exponential decay of $K_{\eps}^+-K_{\eps}^-$ from Lemma~\ref{lm: iteration wohldef} (analyticity requirement!), the conclusion of Theorem~\ref{thm: höherer adsatz} finally follows. And the fundamental estimate~\eqref{eq: entsch absch joye} 
rests upon  a result on the growth (in $\eps$) of the evolution system for analytic families $\frac{1}{\eps} N$ of nilpotent operators $N(t)$ on finite-dimensional spaces (Proposition~4.1 of~\cite{Joye07}), which proposition (by the analyticity and finite algebraic multiplicity requirement!) can be applied to the 
nilpotent endomorphisms 
\begin{align*}
N(t) := W(t)^{-1} (A(t)-\lambda(t)) W(t) \big|_{P(0)X}
\end{align*}
of the finite-dimensional space $P(0)X$ that analytically depend on $t$. 
$W$ denotes the evolution system for $[P',P]$ on $X$ exactly intertwining the subspaces $P(t)X$.
%
%
%

\subsection{An example with time-dependent domains}

We confine ourselves to an example illustrating the adiabatic theorem without spectral gap condition. In this example, a differential operator of the simplest kind occurs, namely $B: W^{1,p}(\R) \subset L^p(\R) \to L^p(\R)$ with $B f := \partial f$ (weak derivative). Since $B$ is the generator of the (left) translation group $T$ on $L^p(\R)$ (which is given by $T(t) f := f(\,.\,+t)$ for $t \in \R$), one has $\sigma(B) \subset i \R$, and since for every $\lambda \in i \R$ the function $g$, defined by
\begin{align*}
g(t) := \frac{e^{\lambda t}}{t^{\alpha}} \, \chi_{[1,\infty)}(t) \quad (t \in \R)
\end{align*}
with arbitrary $\alpha \in (\frac{1}{p}, 1+\frac{1}{p}]$, belongs to $L^p(\R)$ but not to the range of $B - \lambda$, one even has $\sigma(B) = i \R$ for $p \in [1, \infty)$. Additionally, since $B_q^* = - B_{q^*}$ for every $q \in [1,\infty)$ with dual exponent $q^*$ and since $\sigma_p(B_q) = \emptyset$ for $q \in [1,\infty)$ and $\sigma_p(B_q) = i \R$ for $q = \infty$, one obtains the following fine structure of the spectrum of $B$: 
\begin{gather*}
\sigma_p(B) = \emptyset, \quad \sigma_c(B) = \emptyset, \quad \sigma_r(B) = i \R \quad (p = 1) \\
\sigma_p(B) = \emptyset, \quad \sigma_c(B) =  i \R, \quad \sigma_r(B) = \emptyset \quad (p \in (1,\infty)).
\end{gather*}

\begin{ex} \label{ex: ablop}
Suppose $A$, $\lambda$, $P$ with $A(t) = R(t)^{-1} A_0(t) R(t)$, $P(t) = R(t)^{-1} P_0 R(t)$, and $R(t) = e^{C t}$ are given as follows in $X := \ell^p(I_d) \times L^p(\R)$ (where $p \in (1,\infty)$ and $d \in \N$): 
\begin{align*}
A_0(t) := 
\begin{pmatrix} \lambda(t) + \alpha(t) N  & 0 \\ 0 & B \end{pmatrix} 
\quad \text{and} \quad
P_0 := \begin{pmatrix} 1 & 0 \\ 0 & 0 \end{pmatrix},
\end{align*} 
where $\lambda(t) \in \{ \Re z \le 0 \}$, $\alpha(t)$, $N$ are such that Condition~\ref{cond: baustein mit nicht-halbeinfachem ew} is satisfied and where $B$ is the differentiation operator on $L^p(\R)$ defined above, so that, in particular, 
\begin{align*}
D(A_0(t)) = D := \ell^p(I_d) \times W^{1,p}(\R).
\end{align*}
Additionally, suppose $t \mapsto \lambda(t)$, $\alpha(t)$ are continuously differentiable 
and $C$ is the bounded linear map in $\ell^p(I_d) \times L^p(\R)$ given by 
\begin{align*}
C := \begin{pmatrix} 0 & 0 \\ C_0 & 0 \end{pmatrix}
\quad \text{with} \quad
C_0 (x_1, \dots, x_d) := x_d f_0 
\end{align*}
for an arbitrary fixed $0 \ne f_0 \in L^p(\R)$.
Since (by $p \ne 1$) the spectrum of $A_0(t)|_{(1-P_0)D} = B$ is purely continuous 
for every $t \in I$, 
$P_0$ is weakly associated with $A_0(t)$ and $P_0$, and hence the same is true for $A(t)$ and $P(t)$ instead of $A_0(t)$, $P_0$. 
Since, moreover, $B$ 
generates a contraction group (not only a semigroup) in $L^p(\R)$, the resolvent estimates of Theorem~\ref{thm: erw adsatz ohne sl, zeitabh} are satisfied with $\vartheta(t) := \pi$. 
$\blacktriangleleft$
\end{ex}

It follows in the same way as after 
Example~\ref{ex: A_2(t) nicht diagb} that $A$, $\lambda$, $P$ of the above example cannot be reduced to a finite-dimensional subspace and that our adiabatic theorem without spectral gap condition does not apply if $p = 1$ in the example above. 
%
%
%
%
%
Also, it should be noticed that the domains $D(A(t)) = e^{-C t} D$ of the above $A(t)$ really are time-dependent -- more precisely, one 
has $D(A(t_1)) \ne D(A(t_2))$ for $t_1 \ne t_2$ -- if only $f_0$ is chosen to lie not in $W^{1,p}(\R)$.
Indeed, 
if under this condition on $f_0$ one has the twofold representation
\begin{align*}
(x, f - t_1 x_d \, f_0) = e^{-C t_1} (x,f) = e^{-C t_2} (y,g) = (y, g - t_2 y_d \, f_0)
\end{align*} 
for some $(x,f)$, $(y,g) \in \ell^p(I_d) \times W^{1,p}(\R) = D$ with $x_d \ne 0$, 
then 
$t_1$ must be equal to $t_2$.


\section{Adiabatic theorems for operators defined by symmetric sesquilinear forms}  \label{sect: adsätze für A(t) von sesquilinearformen}

After having established general adiabatic theorems for time-dependent domains in Section~5, we now 
apply these theorems to obtain -- as simple corollaries -- 
adiabatic theorems for operators $A(t) = i A_{a(t)}$ defined by densely defined closed symmetric sesquilinear forms $a(t)$ with time-independent form domain -- such as, for instance, Schrödinger operators~$A(t)$ with time-dependent potentials $V(t)$ belonging to the Rollnik class. In particular, the theorem of Section~6.3 contains the adiabatic theorem of Bornemann from~\cite{Bornemann98} as a special case.

\subsection{Some notation and preliminaries}

We start by recording 
the basic conditions (depending on a regularity parameter $n \in \N \cup \{\infty\}$) that shall be imposed on the sesquilinear forms $a(t)$ in the adiabatic theorems of this section.

\begin{cond} \label{cond: vor an a(t)}
$a(t)$ for every $t \in I$ is a symmetric 
sesquilinear form on the Hilbert space $H^+$ (with norm $\norm{\,.\,}^+$ and scalar product $\scprd{\,.\,,\,..\,}^+$) which is densely and continuously embedded into $H$ (with norm $\norm{\,.\,}$ and scalar product $\scprd{\,.\,,\,..\,}$). There is a number $m \in (0,\infty)$ such that
\begin{align*}
\scprd{ \,.\, , \,..\, }_t^+ := a(t)(\,.\, , \,..\, ) + m \scprd{ \,.\, , \,..\, } 
\end{align*} 
is a scalar product on $H^+$ and such that the induced norm $\norm{\,.\,}_t^+$ is equivalent to $\norm{\,.\,}^+$ for every $t \in I$. 
And finally, $t \mapsto a(t)(x,y)$ is $n$ times continuously differentiable for all $x, y \in H^+$. 
\end{cond} 

In Condition~\ref{cond: vor an a(t)}, the requirement that $\scprd{ \,.\, , \,..\, }_t^+$ be a scalar product on $H^+$ whose norm $\norm{\,.\,}_t^+$ is equivalent to $\norm{\,.\,}^+$ for every $t \in I$ could be reformulated in an equivalent way by saying that there is $m \in (0,\infty)$ such that $a(t)(\,.\, , \,..\, ) + m \scprd{ \,.\, , \,..\, }$ is $\norm{\,.\,}^+$-bounded and $\norm{\,.\,}^+$-coercive. 
It is well-known that under Condition~\ref{cond: vor an a(t)} there is, for every $t \in I$, a unique self-adjoint operator $A_{a(t)}: D(A_{a(t)}) \subset H \to H$ such that 
\begin{align*}
D(A_{a(t)}) \subset H^+ \quad \text{and} \quad \scprd{x,A_{a(t)}y} = a(t)(x,y)
\end{align*}
for every $x \in H^+$ and $y \in D(A_{a(t)})$ (Theorem~VI.2.1 and Theorem~VI.2.6 of~\cite{KatoPerturbation80} or Theorem~10.1.2 of~\cite{BirmanSolomjak}).
As usual, we denote -- in the situation of Condition~\ref{cond: vor an a(t)} -- by $H^-$ the space of $\norm{\,.\,}^+$-continuous conjugate linear functionals $H^+ \to \C$, which obviously is a Hilbert space w.r.t.~the norms 
\begin{align*}
f \mapsto \norm{f}^- := \sup_{\norm{x}^+ = 1} |f(x)| 
\quad \text{and} \quad 
f \mapsto \norm{f}_t^- := \sup_{\norm{x}_t^+ = 1} |f(x)| \quad (t \in I).
\end{align*}
And by $j: H \to H^-$ we denote the injective continuous linear map with $j(x) := \scprd{\,.\,,x} \in H^-$ for $x \in H$. 
%
\smallskip

We continue by citing the 
fundamental theorem of Kisy\'{n}ski (Theorem~8.1 of~\cite{Kisynski63}) giving sufficient conditions for the well-posedness of the initial value problems corresponding to $A$ on $D(A(t))$, where $A(t) = i A_{a(t)}$ with symmetric sesquilinear forms $a(t)$ with constant 
domain. 
Similar theorems on well-posedness can be proved for the case 
of operators $A(t) = -A_{a(t)}$ defined by sectorial sesquilinear forms $a(t)$ with time-independent form 
domain. See, for instance, Fujie and Tanabe's article~\cite{FujieTanabe73} (Theorem~3.1) or Kato and Tanabe's article~\cite{KatoTanabe62} (Theorem~7.3). 

\begin{thm} [Kisy\'{n}ski] \label{thm: Kisynski}
Suppose $a(t)$ for every $t \in I$ is a sesquilinear form such that Condition~\ref{cond: vor an a(t)} is satisfied with $n = 2$ and set $A(t) := i A_{a(t)}$ for $t \in I$. Then there is a unique evolution system $U$ for $A$ on $D(A(t))$ and $U(t,s)$ is unitary in $H$ for every $(s,t) \in \Delta$.
\end{thm}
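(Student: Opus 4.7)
The plan is to construct $U$ via a Yosida approximation combined with energy estimates at the form level, and to deduce unitarity from the skew-symmetry of $A(t) = iA_{a(t)}$. First, by Condition~\ref{cond: vor an a(t)} and the representation theorem of Kato (Theorem~VI.2.1 of~\cite{KatoPerturbation80}), each $A_{a(t)}$ is self-adjoint in $H$, so $A(t)$ is skew-self-adjoint and hence, by Stone's theorem, generates a unitary group on $H$. Introduce the Yosida approximants
\begin{align*}
A_n(t) := n A(t)\bigl(n-A(t)\bigr)^{-1} \in L(H),
\end{align*}
which are bounded and skew-adjoint in $H$; the $C^2$-regularity of $t \mapsto a(t)(x,y)$, transferred via the extended map $A(t)+m: H^+ \to H^-$ and the inverses rule of Lemma~\ref{lm: prod- und inversenregel}, shows that $t \mapsto A_n(t)$ lies in $W^{1,\infty}_*(I,L(H))$. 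Theorem~\ref{thm: Kato85} then produces, for each $n$, a unique evolution system $U_n$ for $A_n$ on $H$, with $U_n(t,s)$ unitary.

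The crux is a uniform-in-$n$ energy estimate in the form norm. Working with the triple $H^+ \hookrightarrow H \hookrightarrow H^-$ and differentiating $t \mapsto \scprd{U_n(t,s)x, U_n(t,s)x}_t^+$ for $x \in H^+$, one exploits that (i)~$A_n(t)$ is skew-adjoint in $H$ and commutes with $(n-A(t))^{-1}$, so that the contributions from $A_n(t)$ paired against the form $a(t)$ cancel, and (ii) $(\partial_t a)(t)$ is a bounded sesquilinear form on $H^+$ uniformly in $t$. This yields a differential inequality
\begin{align*}
\ddt{\scprd{U_n(t,s)x, U_n(t,s)x}_t^+} \le C\, \scprd{U_n(t,s)x, U_n(t,s)x}_t^+
\end{align*}
with $C$ independent of $n$; Gronwall's lemma then bounds $\norm{U_n(t,s)x}^+$ uniformly in $n$ and in $(s,t) \in \Delta$. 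Combining this with a Duhamel identity for $U_n(t,s)x - U_m(t,s)x$ and the fact that $(A_n(\tau)-A_m(\tau))y \to 0$ in $H$ for $y \in H^+$, one concludes that $(U_n(\cdot,s)x)_n$ is Cauchy in $C(I,H)$ for every $x \in H^+$. Since $\norm{U_n(t,s)} = 1$ and $H^+$ is dense in $H$, the limit extends to a unitary $U(t,s)$ on $H$.

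It remains to verify that $U$ is the unique evolution system for $A$ on $D(A(t))$. The product property, joint strong continuity, and unitarity all pass to the limit. For the differential property, given $y \in D(A(s))$ one approximates by $y_n := n(n-A(s))^{-1}y$, transfers the identity $\partial_t U_n(t,s)y_n = A_n(t)U_n(t,s)y_n$ to the limit using the uniform $H^+$-bound together with Lemma~\ref{lm: prodregel rechtsseit db} and Lemma~\ref{lm: rechtsseit db und W^{1,infty}}, and recovers continuous differentiability on $[s,1]$. Uniqueness follows by the standard argument applied to $\tau \mapsto U(t,\tau)\tilde U(\tau,s)y$ for any second evolution system $\tilde U$. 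The main obstacle is the uniform energy estimate: the cancellation between the skew-adjoint part and the form-differentiation has to be set up with care in the $t$-dependent inner product $\scprd{\,.\,,\,..\,}_t^+$, and this is precisely where the second-order regularity of $t \mapsto a(t)(x,y)$ enters — once via $(\partial_t a)$ in the energy inequality itself, and once to guarantee that $t \mapsto A_n(t)$ is regular enough for Theorem~\ref{thm: Kato85} to be applicable at the approximation step.
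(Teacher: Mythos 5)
The paper itself offers no proof of this theorem: it is cited verbatim as Theorem~8.1 of Kisy\'{n}ski~\cite{Kisynski63}, so there is no in-paper argument against which to measure your sketch. Taken on its own terms, your Yosida-plus-energy scheme is the right genre, but it contains a concrete error at its centre. The Yosida approximant $A_n(t) = nA(t)(n-A(t))^{-1}$ of the skew-self-adjoint $A(t) = iA_{a(t)}$ is \emph{not} skew-adjoint: in the spectral representation of $A_{a(t)}$, the eigenvalue at spectral parameter $\lambda \in \R$ is $in^2\lambda/(n^2+\lambda^2) - n\lambda^2/(n^2+\lambda^2)$, whose real part is strictly negative for $\lambda \ne 0$. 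Thus $A_n(t)$ is merely dissipative, $U_n(t,s)$ for $s \le t$ is a contraction and not unitary, and the cancellation you assert in step~(i) -- which you explicitly base on skew-adjointness -- fails as stated. The error propagates to the last paragraph: $\norm{U_n(t,s)}=1$ is unavailable, and even if the approximants were unitary, a strong limit of unitaries is a priori only an isometry, so unitarity of $U(t,s)$ would in any case require a separate two-sided-inverse argument via $U(s,t)$.

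The energy inequality can be repaired without the false skew-adjointness claim: $A_n(t)$ is a bounded Borel function of $A_{a(t)}$, hence commutes with $(A_{a(t)}+m)^{1/2}$, so that $\Re\scprd{u, A_n(t)u}_t^+ = \Re\scprd{(A_{a(t)}+m)^{1/2}u, A_n(t)(A_{a(t)}+m)^{1/2}u} \le 0$ by dissipativity alone, and the only surviving term in $\tfrac{d}{dt}\bigl(\norm{U_n(t,s)x}_t^+\bigr)^2$ is $(\partial_t a)(t)(U_n x, U_n x)$, which is controlled by $\bigl(\norm{U_n x}_t^+\bigr)^2$. If you want the approximating evolutions to be genuinely unitary, which also makes the limit argument cleaner, replace the one-sided Yosida regularisation by a symmetric one, e.g.~$A_n(t) := iA_{a(t)}\bigl(1+A_{a(t)}^2/n^2\bigr)^{-1}$, which \emph{is} bounded and skew-adjoint. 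You should also justify, before differentiating $\bigl(\norm{U_n(t,s)x}_t^+\bigr)^2$, that $U_n(t,s)$ leaves $H^+$ invariant and is $H^+$-differentiable in $t$; this needs regularity of $t \mapsto A_n(t)$ as an $L(H^+)$-valued map, not merely $L(H)$-valued, which is a small but non-negotiable supplement to your application of Theorem~\ref{thm: Kato85}.
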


In particular, this theorem guarantees that the basic Condition~\ref{cond: U_eps existiert und beschränkt} of the general adiabatic theorems for time-dependent domains is satisfied if Condition~\ref{cond: vor an a(t)} is with $n = 2$. 
When it comes to verifying the other conditions of the general adiabatic theorems discussed in Section~\ref{sect: adsätze für zeitabh domains}, the following lemma will be important.

\begin{lm} \label{lm: allg lm}
Suppose that Condition~\ref{cond: vor an a(t)} is satisfied for a certain $n \in \N$ and, for every $t \in I$, denote by $\tilde{A}_0(t)$ the bounded linear map $H^+ \to H^-$ extending $A_0(t):= A_{a(t)}$, that is, $\tilde{A}_0(t)x := a(t)(\,.\,,x)$ for $x \in H^+$.
Then the following holds true:
\begin{itemize}
\item[(i)] $t \mapsto \tilde{A}_0(t)$ is $n$ times WOT-continuously differentiable.
\item[(ii)] If for a certain $z \in \C$ the operator $A_0(t)-z: D(A_0(t)) \subset H \to H$ is bijective for all $t \in J_0$ (a non-trivial subinterval of $I$), then so is $\tilde{A}_0(t)-z j: H^+ \to H^-$ and
\begin{align*}
(A_0(t)-z)^{-1} x = (\tilde{A}_0(t)- z j)^{-1} j(x)
\end{align*} 
for all $t \in J_0$ and $x \in H$.
In particular, $J_0 \ni t \mapsto (A_0(t)-z)^{-1}$ is $n$ times WOT-continuously differentiable.
\end{itemize}
\end{lm}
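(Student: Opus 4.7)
For part~(i), my plan is to use the uniform boundedness principle to transfer pointwise differentiability of scalar sesquilinear quantities to WOT-regularity of an operator family. Concretely, by Condition~\ref{cond: vor an a(t)} the iterated scalar derivatives $b_k(t)(y,x) := \partial_t^k a(t)(y,x)$ (for $0 \le k \le n$) are separately sesquilinear in $(y,x) \in H^+ \times H^+$, and the corresponding difference quotients are uniformly pointwise bounded in $t$ on each closed subinterval; Banach--Steinhaus then yields that each $b_k(t)$ is bounded jointly on $H^+ \times H^+$, hence comes from a unique $\tilde{B}_k(t) \in L(H^+,H^-)$ via $(\tilde{B}_k(t)x)(y) = b_k(t)(y,x)$. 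A standard induction (integrating a difference quotient against a fixed $y \in H^+$ and passing to the weak limit) then identifies $\tilde{B}_k$ with the $k$-th WOT-derivative of $\tilde{A}_0$, and continuity of $t \mapsto b_k(t)(y,x)$ gives the asserted WOT-continuity -- note that, under the identification of $H^-$ with the conjugate dual of $H^+$, the WOT on $L(H^+,H^-)$ is precisely the topology of pointwise convergence in the sesquilinear pairing.

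For part~(ii) I would first settle the case $z = -m$. There, $\tilde{A}_0(t) + mj: H^+ \to H^-$ is exactly the Riesz isomorphism corresponding to the scalar product $\scprd{\,.\,,\,..\,}_t^+$ and is hence bijective, and the identity $(\tilde{A}_0(t)+mj)^{-1}j(x) = (A_0(t)+m)^{-1}x$ for $x \in H$ drops out of a direct check using the defining relation $a(t)(\phi,\psi) = \scprd{\phi,A_0(t)\psi}$ valid for $\phi \in H^+$ and $\psi \in D(A_0(t))$. For general $z$ with $A_0(t)-z: D(A_0(t)) \to H$ bijective, I would factor
\begin{align*}
\tilde{A}_0(t) - zj = \bigl(\tilde{A}_0(t)+mj\bigr)\bigl(I - (z+m) S(t)\bigr),
\end{align*}
where $S(t) := (\tilde{A}_0(t)+mj)^{-1} j \iota : H^+ \to H^+$ (with $\iota: H^+ \hookrightarrow H$ the continuous embedding), and observe via the case $z=-m$ that $S(t) = (A_0(t)+m)^{-1}\big|_{H^+}$. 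Bijectivity of $\tilde{A}_0(t) - zj$ thereby reduces to the bijectivity of $I - (z+m)S(t)$ on $H^+$, which I would deduce from the assumed bijectivity of $I - (z+m)(A_0(t)+m)^{-1}$ on $H$: injectivity on $H^+$ is inherited from injectivity on $H$, and for surjectivity, given $y \in H^+$ any preimage $x \in H$ automatically lies in $H^+$, since $(A_0(t)+m)^{-1}$ maps $H$ into $D(A_0(t)) \subset H^+$ and hence $x = y + (z+m)(A_0(t)+m)^{-1}x \in H^+$. The resolvent formula $(\tilde{A}_0(t)-zj)^{-1}j(x) = (A_0(t)-z)^{-1}x$ then follows verbatim by the same direct computation as in the case $z=-m$.

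The concluding WOT-regularity statement for $J_0 \ni t \mapsto (A_0(t)-z)^{-1} \in L(H)$ will follow by combining part~(i) with the WOT-versions of the product and inverse rules alluded to in the remark preceding Lemma~\ref{lm: prod- und inversenregel}: $t \mapsto \tilde{A}_0(t) - zj$ inherits its $n$-fold WOT-continuous differentiability from~(i) and is pointwise invertible with values in $L(H^+,H^-)$, so $t \mapsto (\tilde{A}_0(t)-zj)^{-1}$ is $n$ times WOT-continuously differentiable as a map into $L(H^-,H^+)$; composing on the right with the fixed bounded $j: H \to H^-$ and on the left with $\iota: H^+ \hookrightarrow H$ preserves this regularity and yields, via the resolvent formula, the claim in $L(H)$. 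The main obstacle is part~(ii): not the resolvent identity once bijectivity is known, but rather the careful translation of the invertibility hypothesis at the level of the Hilbert space operator $A_0(t)$ on $H$ to the level of the extended operator $\tilde{A}_0(t)-zj$ on the Gelfand triple $H^+ \subset H \subset H^-$; once this translation is in place, the passage to WOT-differentiability is essentially formal.
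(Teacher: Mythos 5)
Your proof is correct, and for part~(ii) it takes a genuinely different route from the paper's. Both approaches to~(i) reduce WOT-differentiability of $\tilde{A}_0$ to scalar differentiability of $t \mapsto a(t)(y,x)$ via the identification of $(H^-)^*$ with $H^+$ through $i$; the paper leaves implicit the uniform-boundedness step needed to realize the scalar weak limits as bounded operators $H^+ \to H^-$, whereas you make it explicit -- that is a useful clarification but not a different idea. For~(ii), however, the paper proceeds spectral-theoretically: it introduces the operator $A_0^-(t)$ in $H^-$, notes it is self-adjoint with respect to $\norm{\,.\,}_t^-$ (which immediately gives $\C \setminus \R \subset \rho(A_0^-(t))$ and the resolvent identity there), and then uses Stone's formula plus the density of $j(H)$ in $H^-$ to transfer real resolvent points of $A_0(t)$ to $A_0^-(t)$. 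You instead observe that $\tilde{A}_0(t) + mj$ is exactly the Riesz isomorphism $j_t^+$, verify the resolvent identity directly at $z=-m$, and then factor $\tilde{A}_0(t) - zj = (\tilde{A}_0(t)+mj)(I-(z+m)S(t))$ and bootstrap bijectivity of $I-(z+m)S(t)$ on $H^+$ from that of $I-(z+m)(A_0(t)+m)^{-1}$ on $H$, using that preimages automatically land in $H^+$ because the resolvent maps $H$ into $D(A_0(t)) \subset H^+$. Your argument is more elementary -- it avoids the spectral theorem and Stone's formula entirely, treats real and non-real $z$ uniformly, and is purely algebraic once the Riesz isomorphism and the defining relation of $A_{a(t)}$ are in place -- at the cost of having to carry the factorization and the bootstrap explicitly. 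The paper's proof is shorter to state once one accepts self-adjointness of $A_0^-(t)$, but it invokes heavier machinery and its passage from $\C \setminus \R$ to $\R$ is less transparent. The concluding WOT-regularity step via the product and inverse rules is the same in both.
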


\begin{proof}
(i) We have only to show that $t \mapsto F(\tilde{A}_0(t)x)$ is $n$ times continuously differentiable for every $x \in H^+$ and every $F \in (H^-)^*$. 
Since the canonical conjugate linear map 
\begin{align*}
H^+ \ni y \mapsto i(y) \in (H^-)^* 
\text{ \, with \, } i(y)(f) := f(y) \text{ \, for } f \in H^- 
\end{align*}
is surjective by the reflexivity of $H^+$, the claim 
is obvious from the continuous differentiability requirement in Condition~\ref{cond: vor an a(t)}. 
\smallskip   

(ii) We fix $t \in I$ and show that
\begin{align} \label{eq: beh}
\rho(A_0(t)) \subset \rho(A_0^-(t)), 
\end{align}
where $A_0^-(t): j(H^+) \subset H^- \to H^-$ is defined by $A_0^-(t) j(x) = \tilde{A}_0(t)x$ for $x \in H^+$. 
Since $A_0^-(t)$ is self-adjoint in  $(H^-, \norm{\,.\,}_t^-)$, it follows that $\C \setminus \R \subset \rho(A_0^-(t))$ and that
\begin{align} \label{eq: formel für resolv}
(A_0^-(t)-z)^{-1} j(x) = j\big( (A_0(t)-z)^{-1} x \big)
\end{align}
for $z \in \C \setminus \R$ and $x \in H$. 
It therefore remains to prove that $\rho(A_0(t)) \cap \R \subset \rho(A_0^-(t))$. 
So let $z \in \rho(A_0(t)) \cap \R$. Then there is $\delta > 0$ such that 
$(z- 2 \delta, z+ 2 \delta) \subset \rho(A_0(t))$, from which it follows by Stone's formula 
(applied to both $A_0(t)$ and $A_0^-(t)$) and by~\eqref{eq: formel für resolv} that
\begin{align*}
0 = j\Big( P_{(z-\delta,z+\delta)}x + \frac{1}{2} P_{ \{ z-\delta, z+\delta\} }x \Big) = \Big( P_{(z-\delta,z+\delta)}^- + \frac{1}{2} P_{ \{ z-\delta, z+\delta\} }^- \Big) j(x)
\end{align*}
for all $x \in H$, where $P$ and $P^-$ denote the spectral measure of $A_0(t)$ and $A_0^-(t)$, respectively. It follows (by the density of $j(H)$ in $H^-$) that $P_{(z-\delta,z+\delta)}^- = 0$ and hence $z \in \rho(A_0^-(t))$.
So~\eqref{eq: beh} is established and the desired conclusion ensues. 
\end{proof}

\subsection{Adiabatic theorems with spectral gap condition}

We will need the following condition depending on a parameter $m \in \{0\} \cup \N \cup \{ \infty \}$ for the adiabatic theorem with spectral gap condition below.

\begin{cond} \label{cond: vor adsatz mit sl für A(t)=iA_{a(t)}}
$A(t) = i A_{a(t)}$ for $t \in I$, where the sesquilinear forms $a(t)$ satisfy Condition~\ref{cond: vor an a(t)} with $n = 2$. 
$\sigma(t)$ for every $t \in I$ is a compact subset of $\sigma(A(t))$, 
$\sigma(\,.\,)$ falls into $\sigma(A(\,.\,)) \setminus \sigma(\,.\,)$ at exactly $m$ points that accumulate at only finitely many points, and $I \setminus N \ni t \mapsto \sigma(t)$ is continuous, where $N$ denotes the set of those $m$ points at which $\sigma(\,.\,)$ falls into $\sigma(A(\,.\,)) \setminus \sigma(\,.\,)$.
$P(t)$ for every $t \in I \setminus N$ is the projection associated with $A(t)$ and $\sigma(t)$ and $I \setminus N \ni t \mapsto P(t)$ extends to a twice SOT-continuously differentiable map (again denoted by $P$) on the whole of $I$.
\end{cond}


In view of Lemma~\ref{lm: allg lm} it is now very easy to derive the following adiabatic theorem with uniform ($m =0$) or non-uniform ($m \in \N \cup \{ \infty \}$) spectral gap condition from the corresponding general adiabatic theorem with spectral gap condition 
(Theorem~\ref{thm: adsatz mit sl, zeitabh}).

\begin{thm} \label{thm: adsatz mit sl}
Suppose $A(t)$, $\sigma(t)$, $P(t)$ for $t \in I$ are as in Condition~\ref{cond: vor adsatz mit sl für A(t)=iA_{a(t)}} with $m = 0$ or $m \in \N \cup \{\infty\}$, respectively. Then
\begin{align*}
\sup_{t \in I} \norm{ U_{\eps}(t) - V_{\eps}(t) } = O(\eps)  \text{ resp. } o(1) \quad (\eps \searrow 0), 
\end{align*}
whenever the evolution system $V_{\eps}$ for $\frac 1 \eps A + [P',P]$ exists on $D(A(t))$ for every $\eps \in (0, \infty)$.
\end{thm}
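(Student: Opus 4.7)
The plan is to verify the hypotheses of the general adiabatic theorem for time-dependent domains with spectral gap (Theorem~\ref{thm: adsatz mit sl, zeitabh}) and invoke it; I must check its underlying condition on existence and uniform boundedness of $U_\eps$ and its spectral / resolvent-regularity condition.

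First, since the forms $a(\cdot)$ satisfy Condition~\ref{cond: vor an a(t)} with $n=2$, Kisy\'{n}ski's Theorem~\ref{thm: Kisynski} produces a unique unitary evolution system $U_\eps$ for $\frac{1}{\eps}A$ on $D(A(t))$ for every $\eps \in (0,\infty)$, so $\norm{U_\eps(t,s)} = 1$ throughout $\Delta \times (0,\infty)$; this gives the existence-and-boundedness condition with $M=1$. The spectral-subset and projection parts of the spectral-regularity condition coincide exactly with the corresponding parts of the present theorem's hypothesis and need no further work.

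The main task is to verify the resolvent-regularity part. For $t_0 \in I \setminus N$, as in the proof of Theorem~\ref{thm: handl adsatz mit glm sl}, I would pick a cycle $\gamma_{t_0}$ in $\rho(A(t_0))$ separating $\sigma(t_0)$ from $\sigma(A(t_0)) \setminus \sigma(t_0)$ and, using the continuity of $\sigma(\cdot)$ together with the norm-continuity of $t \mapsto (z-A(t))^{-1}$ (established below), a non-trivial closed interval $J_{t_0} \ni t_0$ on which $\rg \gamma_{t_0} \subset \rho(A(t))$ with the winding numbers preserved. Applying Lemma~\ref{lm: allg lm} to $A_0(t) := A_{a(t)}$ and writing $\tilde A(t) := i \tilde A_0(t) \in L(H^+, H^-)$ for the bounded extension of $A(t)$, one obtains the factorisation
\begin{align*}
(z - A(t))^{-1} = -(\tilde A(t) - z j)^{-1} \circ j
\end{align*}
together with twice WOT-continuous differentiability of $t \mapsto \tilde A(t)$ and of $t \mapsto (\tilde A(t) - z j)^{-1}$. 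The resolvent identity in $L(H^+, H^-)$ then yields the expected formula
\begin{align*}
\tfrac{d}{dt}(z - A(t))^{-1} x = -(\tilde A(t) - z j)^{-1}\, \tilde A'(t)\, (\tilde A(t) - z j)^{-1} j(x) \qquad (x \in H),
\end{align*}
and SOT-continuity in $z$ together with uniform boundedness on $J_{t_0} \times \rg \gamma_{t_0}$ are immediate from this formula and the compactness of $J_{t_0} \times \rg \gamma_{t_0}$.

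The one technical obstacle is to upgrade from the WOT-continuous differentiability afforded directly by Lemma~\ref{lm: allg lm} to the SOT-continuous differentiability actually required. Since Condition~\ref{cond: vor an a(t)} holds with $n=2$, the uniform boundedness principle applied twice -- to the pointwise-convergent difference quotients of $a$ and of $a'$ -- gives $\sup_{t \in I} \norm{\tilde A'(t)}_{L(H^+, H^-)}, \sup_{t \in I} \norm{\tilde A''(t)}_{L(H^+, H^-)} < \infty$. A second-order Taylor expansion of $s \mapsto a(s)(y,h)$ about $t$ with integral remainder, followed by taking the supremum over $\norm{y}^+ = 1$, then yields
\begin{align*}
\Big\| \tfrac{\tilde A(s) h - \tilde A(t) h}{s - t} - \tilde A'(t) h \Big\|_{H^-} = O(|s - t|)\, \norm{h}^+ \qquad (h \in H^+),
\end{align*}
so that $s \mapsto \tilde A(s) h$ is norm-differentiable at $t$ in $H^-$. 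Combined with the norm-Lipschitz continuity of $s \mapsto (\tilde A(s) - z j)^{-1}$ (a consequence of Lemma~\ref{lm: prod- und inversenregel} together with the remarks following Proposition~\ref{prop: WOT-stet db impl W^{1,infty}-reg}), this establishes the SOT-continuous differentiability of the resolvent, and Theorem~\ref{thm: adsatz mit sl, zeitabh} yields the asserted convergence rate.
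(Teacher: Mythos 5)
Your proposal is correct and follows essentially the same route as the paper's own (very terse) proof: verify Condition~\ref{cond: U_eps existiert und beschr�nkt} via Theorem~\ref{thm: Kisynski}, verify the resolvent-regularity part of Condition~\ref{cond: vor adsatz mit sl} via Lemma~\ref{lm: allg lm}, and invoke Theorem~\ref{thm: adsatz mit sl, zeitabh}. The paper asserts the upgrade from WOT- to SOT-continuous differentiability of the resolvent with a bare ``it is then clear'', so your Taylor-remainder elaboration usefully fills in the one step the paper leaves implicit -- note only a harmless sign slip, as $\frac{d}{dt}(z-A(t))^{-1}x$ should be $+\big(\tilde{A}(t)-zj\big)^{-1}\tilde{A}'(t)\big(\tilde{A}(t)-zj\big)^{-1}j(x)$.
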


\begin{proof}
Choose, for every $t_0 \in I \setminus N$, non-trivial closed intervals $J_{t_0}$ and cycles $\gamma_{t_0}$ as in Condition~\ref{cond: vor adsatz mit sl} (which is possible by the relative openness of $I \setminus N$ in $I$). 
In virtue of Lemma~\ref{lm: allg lm} it is then clear that Condition~\ref{cond: vor adsatz mit sl} is fulfilled, and the assertion follows from Theorem~\ref{thm: adsatz mit sl, zeitabh}.
\end{proof}

If the existence of the evolution $V_{\eps}$ for $\frac 1 \eps A +[P',P]$ cannot be ensured, 
one still has 
the remark after Theorem~\ref{thm: adsatz mit sl, zeitabh}.
In the case of uniform spectral gap, 
the existence of $V_{\eps}$ is guaranteed if, for instance, Condition~\ref{cond: vor an a(t)} is 
fulfilled with $n = 3$, since then $I \ni t \mapsto P(t)$ is thrice WOT-continuously differentiable (by Lemma~\ref{lm: allg lm}~(ii)) 
so that the symmetric sesquilinear forms $\frac 1 \eps a(t) + b(t) = \frac 1 \eps a(t) -i \scprd{\,.\,,[P'(t),P(t)]\,..\,}$ corresponding to $\frac{1}{\eps}A(t) + [P'(t),P(t)]$ satisfy Condition~\ref{cond: vor an a(t)} with $n = 2$ and Theorem~\ref{thm: Kisynski} can be applied. 
\smallskip

We finally note conditions under which the general adiabatic theorem of higher order (Theorem~\ref{thm: höherer adsatz}) can be applied to the case  of operators $A(t)$ defined by symmetric sesquilinear forms.

\begin{cond} \label{cond: adsatz höherer ord}
Suppose that $A(t) = iA_{a(t)}$ for $t \in I$ where the sesquilinear forms $a(t)$ satisfy Condition~\ref{cond: vor an a(t)} with a certain $n \in \N \setminus \{1\}$ 
or with $n = \infty$, respectively. 
In the latter case suppose further that there is an open neighbourhood $U_I$ of $I$ in $\C$ and for each $w \in U_I$ there is a $\norm{\,.\,}^+$-bounded sesquilinear form $\tilde{a}(w)$ on $H^+$ such that $\tilde{a}(t) = a(t)$ for $t \in I$ and that $U_I \ni w \mapsto \tilde{a}(w)(x,y)$ is holomorphic for every $x,y \in H^+$. 
Suppose moreover that $\sigma(t)$ for every $t \in I$ is an isolated compact subset of $\sigma(A(t))$, that $\sigma(\,.\,)$ at no point falls into $\sigma(A(\,.\,)) \setminus \sigma(\,.\,)$, 
and that $t \mapsto \sigma(t)$ is continuous. 
And finally, suppose $P(t)$ for every $t \in I$ is the projection associated with $A(t)$ and $\sigma(t)$ and $t \mapsto P(t)$ is $n+1$ times 
times SOT-continuously differentiable.
\end{cond}



It is not difficult (albeit a bit technical) to show that  under Condition~\ref{cond: adsatz höherer ord} the hypotheses of Theorem~\ref{thm: höherer adsatz} are really satisfied.
(In the case $n = \infty$ define $\tilde{A}_0(w)$ by $\tilde{A}_0(w)x := \tilde{a}(w)(\,.\, ,x)$ for $x \in H^+$. Then $\tilde{A}_0(w)$ is a bounded linear map $H^+ \to H^-$ and  
$U_I \ni w \mapsto \tilde{A}_0(w) \in L(H^+,H^-)$ is WOT-holomorphic and hence NOT-holomorphic. A simple perturbation argument and Cauchy's inequality (in conjunction with the formula in Lemma~\ref{lm: allg lm}~(ii)) then yield estimates of the desired kind.)

%

\subsection{An adiabatic theorem without spectral gap condition}

In the adiabtic theorem without spectral gap condition below, the following condition will be used. 

\begin{cond} \label{cond: vor adsatz ohne sl}
$A(t) = i A_{a(t)}$ for $t \in I$ where the sesquilinear forms $a(t)$ satisfy Condition~\ref{cond: vor an a(t)} with $n = 2$. 
$\lambda(t)$ for every $t \in I$ is an eigenvalue of $A(t)$ such that $t \mapsto \lambda(t)$ is continuous. 
And $P(t)$ for every $t \in I$ is an orthogonal projection in $H$ such that $P(t)$ is weakly associated with $A(t)$ and $\lambda(t)$ for almost every $t \in I$, $\rk P(0) < \infty$ and $t \mapsto P(t)$ is SOT-continuously differentiable.
\end{cond}

While in the case with spectral gap Lemma~\ref{lm: allg lm} was sufficient, 
we need another -- well-expected -- lemma in the case without spectral gap. 

\begin{lm} \label{lm: lm für adsatz ohne sl}
Suppose that Condition~\ref{cond: vor adsatz ohne sl} is satisfied and that, in addition, $t \mapsto \lambda(t)$ is continuously differentiable. 
Then $t \mapsto (\lambda(t) + \delta - A(t))^{-1}$ is SOT-continuously differentiable for every $\delta \in (0,\infty)$ and there is an $M_0' \in (0,\infty)$ such that
\begin{align*}
\norm{ \ddt{ (\lambda(t) + \delta - A(t))^{-1} }  } \le \frac{M_0'}{\delta^2}
\end{align*}
for all $t \in I$ and $\delta \in (0,1]$.
\end{lm}

\begin{proof}
Set $A_0(t) := A_{a(t)} = -i A(t)$ and $\lambda_0(t) := -i \lambda(t)$ and let $\tilde{A}_0(t): H^+ \to H^-$ be the bounded extension of $A_0(t)$. 
Since by Lemma~\ref{lm: allg lm} $t \mapsto \tilde{A}_0(t)$ is twice WOT- and, in particular, once SOT-continuously differentiable and $t \mapsto \lambda_0(t)$ is continuously differentiable, it follows that 
\begin{align*}
t \mapsto \big( A_0(t) - (\lambda_0(t) - i\delta) \big)^{-1} = \big( \tilde{A}_0(t) - (\lambda_0(t) - i \delta) j \big)^{-1} j
\end{align*}
is SOT-continuously differentiable for every $\delta \in (0,\infty)$ and that
\begin{align} \label{eq: darst der abl}
&\ddt{ \big( A_0(t) - (\lambda_0(t) - i\delta) \big)^{-1} } \notag \\
&\qquad \quad = \big( \tilde{A}_0(t) - (\lambda_0(t) - i \delta) j \big)^{-1} \, \big( \lambda_0'(t)j -\tilde{A}_0'(t) \big) \, \big( \tilde{A}_0(t) - (\lambda_0(t) - i \delta) j \big)^{-1} j
\end{align}
for $t \in I$ and $\delta \in (0,\infty)$. 
We therefore show 
that there is a constant $c_0' \in (0,\infty)$ such that 
\begin{align} \label{eq: wesentl zwbeh}
\norm{ \big( \tilde{A}_0(t) - (\lambda_0(t) - i \delta)j \big) x }_t^- \ge \frac{ \delta}{c_0'} \norm{x}_t^+
\end{align}
for all $x \in H^+$, $t \in I$ and $\delta \in (0,1]$. In order to do so we observe the following simple fact: if instead of $j$ the natural isometric isomorphism 
\begin{align*}
j_t^+: (H^+, \norm{\,.\,}_t^+) \to (H^-, \norm{\,.\,}_t^-) \text{\, with \,} j_t^+(x) := \scprd{\,.\,,x}_t^+ \text{ for } x \in H^+ 
\end{align*}
occurred in~\eqref{eq: wesentl zwbeh}, this assertion would be trivial. We are therefore led to express $j$ in terms of $j_t^+$: by the definition of the scalar product $\scprd{ \,.\,,\,..\, }_t^+$ in Condition~\ref{cond: vor an a(t)}, we have 
\begin{align*}
j = \frac{1}{m} \big( \tilde{A}_0(t) - j_t^+ \big)
\end{align*} 
for all $t \in I$, so that 
\begin{align*}
\tilde{A}_0(t) - (\lambda_0(t) - i \delta)j = \frac{m+\lambda_0(t)-i \delta}{m} \Big( \tilde{A}_0(t) - \frac{ \lambda_0(t) -i\delta}{m + \lambda_0(t) -i\delta} j_t^+ \Big).
\end{align*}
Since for all $x \in H^+$ with $\norm{x}_t^+ = 1$
\begin{align*}
\norm{  \Big( \tilde{A}_0(t) - \frac{ \lambda_0(t) -i\delta}{m + \lambda_0(t) -i\delta} j_t^+ \Big)x  }_t^- &\ge \Big| a(t)(x,x) - \frac{ \lambda_0(t) -i\delta}{m + \lambda_0(t) -i\delta} \big( j_t^+(x) \big)(x) \Big| \\
&\ge \Big| \Im \Big( \frac{ \lambda_0(t) -i\delta}{m + \lambda_0(t) -i\delta} \Big) \Big| = \frac{m \delta}{ |m + \lambda_0(t) -i\delta|^2 },
\end{align*}
it follows that
\begin{align*}
\norm{ \big( \tilde{A}_0(t) - (\lambda_0(t) - i \delta)j \big)x }_t^- \ge \Big| \frac{m+\lambda_0(t)-i \delta}{m} \Big| \, \frac{m \delta}{ |m + \lambda_0(t) -i\delta|^2 } \norm{x}_t^+ \ge \frac{\delta}{c_0'} \norm{x}_t^+  
\end{align*}
for all $x \in H^+$ and all $t \in I$, $\delta \in (0,1]$, where $c_0' := m + \norm{\lambda}_{\infty} + 1$. So \eqref{eq: wesentl zwbeh} is proven and it follows that
\begin{align} \label{eq: absch inverse}
\norm{ \big( \tilde{A}_0(t) - (\lambda_0(t) - i \delta)j \big)^{-1} }_{H^- \to H^+} \le \frac{c_0'}{ \delta}
\end{align}
for all $t \in I$ and $\delta \in (0,1]$, because the equivalence of the norms $\norm{\,.\,}_t^+$ with $\norm{\,.\,}$ required in Condition~\ref{cond: vor an a(t)} is uniform w.r.t.~$t$ by Lemma~7.3 of~\cite{Kisynski63}.  
In view of~\eqref{eq: darst der abl} and~\eqref{eq: absch inverse} 
the asserted estimate is now clear. 
\end{proof}

With this lemma at hand, it is now simple to derive the following adiabatic theorem without spectral gap condition which 
generalizes an adiabatic theorem of Bornemann (Theorem~IV.1 of~\cite{Bornemann98}).

\begin{thm} \label{thm: adsatz ohne sl für A(t)=iA_{a(t)}}
Suppose $A(t)$, $\lambda(t)$, $P(t)$ for $t \in I$ are such that Condition~\ref{cond: vor adsatz ohne sl} is satisfied.
Then
\begin{align*}
\sup_{t \in I} \norm{ \big( U_{\eps}(t) - V_{0\,\eps}(t) \big) P(0) } \longrightarrow 0 \quad \text{and} \quad \sup_{t \in I} \norm{ P(t) \big( U_{\eps}(t) - V_{0\,\eps}(t) \big) } \longrightarrow 0 
\end{align*}
as $\eps \searrow 0$, where $V_{0\,\eps}$ denotes the evolution system for $\frac{1}{\eps}AP + [P',P] = \frac 1 \eps \lambda P + [P',P]$ for every $\eps \in (0,\infty)$. 
If, in addition, $t \mapsto P(t)$ is thrice WOT-continuously differentiable, then the evolution system $V_{\eps}$ for $\frac 1 \eps A + [P',P]$ exists on $D(A(t))$ for every $\eps \in (0,\infty)$ and
\begin{align*}
\sup_{t \in I} \norm{U_{\eps}(t) - V_{\eps}(t) } \longrightarrow 0 \quad (\eps \searrow 0).
\end{align*}
\end{thm}

\begin{proof}
We have to verify the hypotheses of the general adiabatic theorem without spectral gap condition for time-dependent domains (Theorem~\ref{thm: erw adsatz ohne sl, zeitabh}) with $m_0 = 1$. In view of Lemma~\ref{lm: lm für adsatz ohne sl} it remains to establish two small things, namely the continuous differentiability of $t \mapsto \lambda(t)$ (from Theorem~\ref{thm: erw adsatz ohne sl, zeitabh} and from Lemma~\ref{lm: lm für adsatz ohne sl}) and the inclusion $P(t)H \subset \ker(A(t)-\lambda(t))$ for every $t \in I$ (from Theorem~5.4). 
We know by assumption that $P(t)H = \ker(A(t)-\lambda(t)) = \ker(A_0(t)-\lambda_0(t))$ for almost every $t \in I$ so that $P(t)H \subset D(A_0(t)) \subset H^+$ and
\begin{align*}
0 = j\big( (A_0(t)-\lambda_0(t))P(t)x \big) 
= \big( A_0^-(t)-\lambda_0(t) \big) j(P(t)x)
\end{align*} 
for all $x \in H$ and almost every $t \in I$ (where $A_0(t)$, $\lambda_0(t)$ are defined as in the proof of Lemma~\ref{lm: lm für adsatz ohne sl} and where $A_0^-(t)$ is the self-adjoint operator in $(H^-, \norm{\,.\,}_t^-)$ from the proof of Lemma~\ref{lm: allg lm}). Applying the  closedness argument 
after Theorem~\ref{thm: handl adsatz mit nichtglm sl}
to the closed 
operator $i A_0^-(t): j(H^+) \subset H^- \to H^-$ (with time-independent domain!), 
we see that $j(P(t)H) \subset j(H^+)$ 
and 
\begin{align*}
0 = \big( A_0^-(t)-\lambda_0(t) \big) j(P(t)x) = a(t)(\,.\,,P(t)x) - \lambda_0(t) \scprd{\,.\,,P(t)x}
\end{align*}  
for all $x \in H$ and every (not only almost every) $t \in I$. In particular, for every $t \in I$, 
\begin{align*}
0= a(t)(y,P(t)x) - \lambda_0(t) \scprd{y,P(t)x} = \scprd{(A_0(t)-\lambda_0(t))y, P(t)x}
\end{align*}
for $y \in D(A_0(t))$ and $x \in H$, so that 
\begin{align} \label{eq: inklusion für alle t}
P(t)H \subset \ker(A_0(t)-\lambda_0(t))^* = \ker(A(t)-\lambda(t))
\end{align}
for every $t \in I$, as desired. Since for every $t_0 \in I$ there is a neighbourhood $J_{t_0} \subset I$ and an $x_0 \in H$ such that $P(t)x_0 \ne 0$ for $t \in J_{t_0}$, 
it follows from~\eqref{eq: inklusion für alle t} that
\begin{align*}
\frac{1}{\lambda(t)-1} = \frac{   \scprd{P(t)x_0, (A(t)-1)^{-1} P(t)x_0}  }{   \scprd{P(t)x_0, P(t)x_0}   } 
\end{align*} 
for every $t \in J_{t_0}$, from which in turn it follows (by Lemma~\ref{lm: allg lm}) 
that $t \mapsto \lambda(t)$ is continuously differentiable, as desired.
\smallskip

According to what has been said at the beginning of the proof, it is now clear that Lemma~\ref{lm: lm für adsatz ohne sl} can be applied and that the hypotheses of the first part of Theorem~\ref{thm: erw adsatz ohne sl, zeitabh} are satisfied. 
Since the evolution system $U_{\eps}$ is unitary (by Theorem~\ref{thm: Kisynski}) and $V_{0\,\eps}$ is unitary as well, we see by obviously modifying the proof of Theorem~\ref{thm: erw adsatz ohne sl, zeitabh} that
\begin{align} \label{eq: aussage adsatz für unitäre zeitentw}
\sup_{(s,t)\in I^2} \norm{ \big( U_{\eps}(t,s) - V_{0\,\eps}(t,s) \big) P(s) } \longrightarrow 0 \quad (\eps \searrow 0),
\end{align} 
where $U_{\eps}(t,s) := U_{\eps}(s,t)^{-1} = U_{\eps}(s,t)^*$ and $V_{0\,\eps}(t,s) := V_{0\,\eps}(s,t)^{-1} = V_{0\,\eps}(s,t)^*$ for $(s,t) \in I^2$ with $s > t$. Since 
\begin{align*}
\norm{ P(t) \big( U_{\eps}(t) -V_{0\,\eps}(t) \big) } = \norm{ \big(U_{\eps}(0,t)-V_{0\,\eps}(0,t)\big)P(t) }
\end{align*}
for $t \in I$ (take adjoints), the first two of the asserted convergences follow from~\eqref{eq: aussage adsatz für unitäre zeitentw}. 
\smallskip

Suppose finally that $t \mapsto P(t)$ is thrice WOT-continuously differentiable. Then the symmetric sesquilinear forms $\frac 1 \eps a(t) + b(t) = \frac 1 \eps a(t) -i \scprd{\,.\,,[P'(t),P(t)]\,..\,}$ corresponding to the operators $\frac{1}{\eps}A(t) + [P'(t),P(t)]$ satisfy Condition~\ref{cond: vor an a(t)} with $n = 2$ and therefore the evolution system $V_{\eps}$ for $\frac 1 \eps A + [P,P]$ exists on $D(A(t))$ for every $\eps \in (0,\infty)$ by Theorem~\ref{thm: Kisynski}. Also, $t \mapsto P(t)$ is obviously norm continuously differentiable and so the hypotheses of the second part of Theorem~\ref{thm: erw adsatz ohne sl, zeitabh} are satisfied, which gives the last convergence.
\end{proof}

What are the differences between the above theorem and Bornemann's adiabatic theorem of~\cite{Bornemann98}? 
While in Theorem~IV.1 of~\cite{Bornemann98} $\lambda(t)$ is required to belong to the discrete spectrum of $A(t)$ (and hence to be an isolated eigenvalue) for every $t \in I$, 
in the above theorem it is only required that $\lambda(t)$ has finite multiplicity for almost every $t \in I$: the eigenvalues $\lambda(t)$ are allowed to have infinite multiplicity on a set of measure zero and, moreover, they are allowed to be non-isolated in $\sigma(A(t))$ for every $t \in I$.
Also, the regularity conditions on $A$ and $P$ of the above theorem are slightly weaker than those of Theorem~IV.1: for instance, $t \mapsto \tilde{A}_0(t)$ is required to be twice NOT-continuously differentiable in~\cite{Bornemann98} whereas above it is only required that $t \mapsto a(t)(x,y)$ be twice continuously differentiable for $x, y \in H^+$ (or equivalently (Lemma~\ref{lm: allg lm}), that $t \mapsto \tilde{A}_0(t)$ be twice WOT-continuously differentiable). 
%
And finally, the statement of the theorem above is 
a bit more general than the conclusion of Theorem~IV.1 in~\cite{Bornemann98} which says that, for all $x \in H^+$ (and hence for all $x \in H$)  
and uniformly in $t \in I$,
\begin{align*}
& \scprd{ U_{\eps}(t)x, P(t)U_{\eps}(t)x } 
 = \scprd{ U_{\eps}(t)x, P(t)U_{\eps}(t)x - U_{\eps}(t)P(0)x } + \scprd{ x, P(0)x } \\
& \qquad \qquad \qquad \qquad \quad \longrightarrow  \scprd{x, P(0)x} \quad (\eps \searrow 0). 
\end{align*} 

\vspace{2,2cm}

Acknowledgement: I would like to thank Professor Marcel Griesemer very much for useful discussions and for having directed my interest towards adiabatic theory in the first place.

\vspace{1,2cm}

\end{document}